\synctex=1
\documentclass{article}
\usepackage{amsmath, amsthm, amssymb,color,fullpage,url,booktabs,enumitem}  
\urlstyle{sf}
\usepackage{graphicx}
\synctex=1

\usepackage{prettyref}
\newrefformat{cor}{Corollary~\ref{#1}}
\newrefformat{prop}{Proposition~\ref{#1}}


\newcommand{\MYstore}[2]{%
  \global\expandafter \def \csname MYMEMORY #1 \endcsname{#2}%
}

\newcommand{\MYload}[1]{%
  \csname MYMEMORY #1 \endcsname%
}

\newcommand{\MYnewlabel}[1]{%
  \newcommand\MYcurrentlabel{#1}%
  \MYoldlabel{#1}%
}

\newcommand{\MYdummylabel}[1]{}

\newcommand{\torestate}[1]{%
  \let\MYoldlabel\label%
  \let\label\MYnewlabel%
  #1%
  \MYstore{\MYcurrentlabel}{#1}%
  \let\label\MYoldlabel%
}

\newcommand{\restatetheorem}[1]{%
  \let\MYoldlabel\label
  \let\label\MYdummylabel
  \begin{theorem*}[Restatement of \prettyref{#1}]
    \MYload{#1}
  \end{theorem*}
  \let\label\MYoldlabel
}

\newcommand{\restatelemma}[1]{%
  \let\MYoldlabel\label
  \let\label\MYdummylabel
  \begin{lemma*}[Restatement of \prettyref{#1}]
    \MYload{#1}
  \end{lemma*}
  \let\label\MYoldlabel
}

\newcommand{\restateprop}[1]{%
  \let\MYoldlabel\label
  \let\label\MYdummylabel
  \begin{proposition*}[Restatement of \prettyref{#1}]
    \MYload{#1}
  \end{proposition*}
  \let\label\MYoldlabel
}

\newcommand{\restatefact}[1]{%
  \let\MYoldlabel\label
  \let\label\MYdummylabel
  \begin{fact*}[Restatement of \prettyref{#1}]
    \MYload{#1}
  \end{fact*}
  \let\label\MYoldlabel
}

\newcommand{\restatecorollary}[1]{%
  \let\MYoldlabel\label
  \let\label\MYdummylabel
  \begin{corollary*}[Restatement of \prettyref{#1}]
    \MYload{#1}
  \end{corollary*}
  \let\label\MYoldlabel
}

\newcommand{\restate}[1]{%
  \let\MYoldlabel\label
  \let\label\MYdummylabel
  \MYload{#1}
  \let\label\MYoldlabel
}

\ifdefined\blackandwhite
\usepackage[pdftex]{hyperref} 
\else
\usepackage[colorlinks,pdftex]{hyperref} 
\hypersetup{citecolor = blue, linkcolor = red!70!black}
\fi

\usepackage{braket}
\usepackage{mathtools}
\usepackage{mdframed}

\usepackage{tikz}
\usepackage{authblk}

\usepackage[ruled,vlined,linesnumbered]{algorithm2e}

\newcommand{\nc}{\newcommand}
\nc{\rnc}{\renewcommand}

\newcommand{\proj}[1]{\left|#1\right\rangle\left\langle #1\right|}

\newcommand{\Channel}{\text{Ch}}
\newcommand{\Coll}{\text{Coll}}

\newcommand{\Haar}{\text{Haar}}
\newcommand{\lattice}{\text{lattice}}

\newcommand{\tensor}{\otimes}
\newcommand{\ot}{\otimes}

\newcommand{\Id}{\mathrm{id}}
\newcommand{\C}{\mathbb{C}}
\newcommand{\R}{\mathbb{R}}

\newcommand{\dist}{\mathsf{dist}}
\newcommand*{\vv}[1]{\vec{\mkern0mu#1}}
\newcommand{\multiprod}{\mathsf{multiprod}}
\newcommand{\vvec}{\mathsf{vec}}

\DeclareMathOperator{\Bin}{Bin}
\DeclareMathOperator{\conv}{conv}
\DeclareMathOperator{\id}{id}

\DeclareMathOperator{\Pois}{\mathsf{Pois}}
\DeclareMathOperator{\poly}{poly}

\DeclareMathOperator{\rank}{rank}

\DeclareMathOperator{\Wg}{Wg}

\newcommand{\Tr}{\operatorname*{Tr}}

\DeclareMathOperator{\BPP}{\mathsf{BPP}}
\DeclareMathOperator{\FBPP}{\mathsf{FBPP}}
\renewcommand{\P}{\mathsf{P}}

\DeclareMathOperator{\NP}{\mathsf{NP}}

\DeclareMathOperator{\PH}{\mathsf{PH}}

\def\be#1\ee{\begin{equation}#1\end{equation}}
\def\bea#1\eea{\begin{eqnarray}#1\end{eqnarray}}
\def\beas#1\eeas{\begin{eqnarray*}#1\end{eqnarray*}}
\def\ba#1\ea{\begin{align}#1\end{align}}
\def\bas#1\eas{\begin{align*}#1\end{align*}}
\def\bpm#1\epm{\begin{pmatrix}#1\end{pmatrix}}
\nc{\non}{\nonumber}
\nc{\nn}{\nonumber}
\nc{\eq}[1]{(\ref{eq:#1})}
\nc{\eqs}[2]{(\ref{eq:#1}) and (\ref{eq:#2})}
\rnc{\L}{\left} 
\nc{\ra}{\rightarrow}
\nc{\grad}{{\vec{\nabla}}}
\newtheorem{theorem}{Theorem}
\newtheorem*{theorem*}{Theorem}
\newtheorem{remark}{Remark}
\newtheorem{conjecture}{Conjecture}
\newtheorem{lemma}[theorem]{Lemma}
\newtheorem*{lemma*}{Lemma}

\newtheorem{proposition}[theorem]{Proposition}
\newtheorem*{proposition*}{Proposition}
\newtheorem{definition}[theorem]{Definition}
\newtheorem{corollary}[theorem]{Corollary}
\newtheorem*{corollary*}{Corollary}

\newenvironment{claim}[1]{\par\noindent \textbf{Claim:}\space#1}{}

\newtheorem{fact}[theorem]{Fact}
\newtheorem{coupling}[theorem]{Coupling}
\newtheorem*{fact*}{Fact}

\nc\eps{\epsilon}

\nc\cA{\mathcal{A}}
\nc\cB{\mathcal{B}}
\nc\cC{\mathcal{C}}
\nc\cD{\mathcal{D}}
\nc\cE{\mathcal{E}}
\nc\cF{\mathcal{F}}
\nc\cG{\mathcal{G}}
\nc\cH{\mathcal{H}}
\nc\cI{\mathcal{I}}
\nc\cJ{\mathcal{J}}
\nc\cK{\mathcal{K}}
\nc\cL{\mathcal{L}}
\nc\cM{\mathcal{M}}
\nc\cN{\mathcal{N}}
\nc\cO{\mathcal{O}}
\nc\cP{\mathcal{P}}
\nc\cQ{\mathcal{Q}}
\nc\cR{\mathcal{R}}
\nc\cS{\mathcal{S}}
\nc\cT{\mathcal{T}}
\nc\cU{\mathcal{U}}
\nc\cV{\mathcal{V}}
\nc\cW{\mathcal{W}}
\nc\cX{\mathcal{X}}
\nc\cY{\mathcal{Y}}
\nc\cZ{\mathcal{Z}}

\DeclareMathOperator*{\E}{\mathbb{E}}
\DeclareMathOperator*{\bbE}{\mathbb{E}}
\nc\bbF{\mathbb{F}}
\nc\bbM{\mathbb{M}}
\nc\bbN{\mathbb{N}}
\nc\bbR{\mathbb{R}}
\nc\bbZ{\mathbb{Z}}

\nc\benum{\begin{enumerate}}
\nc\eenum{\end{enumerate}}
\nc\bit{\begin{itemize}}
\nc\eit{\end{itemize}}

\newcommand{\secref}[1]{Section~\ref{sec:#1}}

\newcommand{\lemref}[1]{Lemma~\ref{lem:#1}}
\newcommand{\thmref}[1]{Theorem~\ref{thm:#1}}
\newcommand{\propref}[1]{Proposition~\ref{prop:#1}}

\newcommand{\defref}[1]{Definition~\ref{def:#1}}

\nc{\todo}[1]{\textcolor{red}{todo: #1}}
\nc{\Anote}[1]{\textcolor{red}{Aram note: #1}}


\def\begsub#1#2\endsub{\begin{subequations}\label{eq:#1}\begin{align}#2\end{align}\end{subequations}}
\nc\qand{\qquad\text{and}\qquad}
\nc\mnb[1]{\medskip\noindent{\bf #1}}

\nc{\pder}[2]{\frac{\partial {#1}}{\partial {#2}}}
\nc{\p}{\partial}
\usepackage{wasysym}
\usepackage[pagewise]{lineno}

\definecolor{orange}{rgb}{1,0.5,0}
\newcommand{\newsentence}[1]{ { \color{blue}  (New sentence: #1)}}

\newcommand{\OldNormalFont}{}

\usepackage{hyperref,tabu, verbatim}
\usepackage{mdframed}
\hypersetup{%
  colorlinks = true,
  linkcolor  = black
}
\newcommand{\Hsquare}{%
  \text{\fboxsep=-.2pt\fbox{\rule{0pt}{1ex}\rule{1ex}{0pt}}}%
}

\title{
		\LARGE Approximate unitary $t$-designs by short random quantum circuits using nearest-neighbor and long-range gates. \\
}
\author{Aram W.~Harrow \thanks{MIT Center for Theoretical
    Physics. aram@mit.edu}\hspace{2cm} Saeed Mehraban \thanks{Tufts CS. Saeed.Mehraban@tufts.edu}}

\date{\today}

\begin{document}
\maketitle
\thispagestyle{empty}

We prove that $\poly(t) \cdot n^{1/D}$-depth local random quantum
circuits with two qudit nearest-neighbor gates on a $D$-dimensional
lattice with $n$ qudits are approximate $t$-designs in various
measures.  These include the ``monomial'' measure, meaning that the
monomials of a random circuit from this family have expectation close
to the value that would result from the Haar measure. Previously, the
best bound was $\poly(t)\cdot n$ due to Brand\~ao-Harrow-Horodecki
\cite{BHH-designs} for $D=1$.  We also improve the ``scrambling'' and
``decoupling'' bounds for spatially local random circuits due to Brown
and Fawzi~\cite{BF13-2}.

One consequence of our result is that assuming the polynomial
hierarchy ($\PH$) is infinite and that certain counting problems are
$\#\P$-hard ``on average'', sampling within total variation distance
from these circuits is hard for classical computers.  Previously,
exact sampling from the outputs of even constant-depth quantum
circuits was known to be hard for classical computers under these
assumptions. However the standard strategy for extending this hardness result to approximate
sampling requires the quantum circuits to have a property called
``anti-concentration'', meaning roughly that the output has
near-maximal entropy.  Unitary 2-designs have the desired
anti-concentration property.  Our result improves the required depth
for this level of anti-concentration from linear depth to a sub-linear
value, depending on the geometry of the interactions.  This is
relevant to a recent experiment by the Google Quantum AI group to
perform such a sampling task with 53 qubits on a two-dimensional
lattice \cite{Google19, BISBDJMN16} (and related experiments by USTC),
and confirms their conjecture that $O(\sqrt{n})$ depth suffices for
anti-concentration.

The proof is based on a previous construction of $t$-designs by \cite{BHH-designs}, an
analysis of how approximate designs behave under composition, and an extension of the
quasi-orthogonality of permutation operators developed by \cite{BHH-designs}.  Different
versions of the approximate design condition correspond to different norms, and part of
our contribution is to introduce the norm corresponding to anti-concentration and to
establish equivalence between these various norms for low-depth circuits.

For random circuits with long-range gates, we use different methods to show that
anti-concentration happens at circuit size $O(n\ln^2 n)$ corresponding to depth $O(\ln^3 n)$.  We also show a
lower bound of $\Omega(n \ln n)$ for the size of such circuit in this case. We also prove that
anti-concentration is possible in depth $O(\ln n \ln \ln n)$ (size
$O(n \ln n \ln \ln n)$) using a different model.
\newpage

\thispagestyle{empty}
\setcounter{tocdepth}{2}
\tableofcontents

\hypersetup{%
  colorlinks = true,
  linkcolor  = red!70!black
}
\newpage
\setcounter{page}{1}
\section{Introduction}
Random unitaries are central resources in quantum information science. They appear in many applications including algorithms, cryptography, and communication. Moreover, they are important toy models for random chaotic systems, capturing phenomena like thermalization or scrambling of quantum information. 

An idealized model of a random unitary is the uniform distribution
over the unitary group, also known as the Haar measure. However, the
Haar measure is an unrealistic model for large systems because the number of random
coins and gates needed to generate an element of the Haar distribution
scale exponentially with the size of the system (i.e.~polynomially
with dimension, meaning exponentially in the number of qubits or
independent degrees of freedom). To resolve this dilemma, approximate $t$-designs have been proposed as physically and computationally realistic alternatives to the Haar measure. They approximate the behavior of the Haar measure if one only cares about up to the first $t$ moments.

Several constructions of $t$-designs have been proposed based on either random or structured circuits.  While structured circuits can in some cases be more efficient~\cite{DCEL09, CLLW16, NHMW17}, random quantum circuits have other advantages.  They are plausible models for chaotic random processes in nature, such as scrambling in black holes~\cite{BF13-2, S14}, or the spread of entanglement in condensed matter systems \cite{NRVH16,NRVH17}, {growth of quantum complexity \cite{BCHJKP19}}, and decoupling \cite{BF13}. Moreover, they are practical candidates to benchmark computational advantage for quantum computation over classical models, since they seem to capture the power of a generic polynomial-size unitary circuit.  {Indeed, the Google quantum AI group has recently run a random unitary circuit on a 53-qubit superconducting device and has argued that this should be hard to simulate classically~\cite{Google19, BISBDJMN16} (see Figure \ref{fig:google} for a demonstration of their proposal)
.} Here the random gates are useful not only for the 2-design property, specifically ``anti-concentration'', but also for evading the sort of structure which would lend itself to easy simulation, such as being made of Clifford gates.

All previous random circuit based constructions for $t$-designs required the circuits to have linear depth. In this paper, we show that certain random circuit models with small depth are approximate $t$-designs. We consider two models of random circuits. The first one is nearest-neighbor local interactions on a $D$-dimensional lattice. 
 In this model, we apply random $\text{U}(d^2)$ gates on neighboring qudits of a
 $D$-dimensional lattice in a certain order.

Depending on the application we want, we can define convergence to the Haar measure in
different ways. For example, for scrambling \cite{BF13-2} we measure convergence w.r.t. the
norm $\E_C \|\rho_S(s) - \frac{1}{2^{|S|}} \|^2_1$, where $\rho_S(s)$ is the density
matrix $\rho(s)$ reduced to a subset $S$ of qudits and $\rho(s)$ is the quantum state
that results from $s$ steps of the random process. But for anti-concentration, which
corresponds loosely to a claim that typical circuit outputs have nearly maximal entropy,
we use a norm related to $\E_C\sum_x |\braket {x|C|0}|^4$. For other measures of convergence to the
Haar measure see \cite{L10} or \secref{norms}. In general, these measures are equivalent but moving between
them involves factors that are exponentially large in the number of qudits, i.e., if one
norm converges to $\eps$ the translation implies that another norm converges to $2^{O(n)}
\eps$.  Some of the known size/depth bounds for designs are of the form $O(f(n,t)(n + \ln
1/\eps))$ (e.g. \cite{BHH-designs}) and in 1-D simple arguments yield an $\Omega(n +
\ln(1/\eps))$ lower bound~\cite{BF13-2}. 
In this case, replacing $\eps$ with $2^{-O(n)}\eps$
will not change the asymptotic scaling. \cite{BHH-designs} defined a strong notion of convergence which implies all the mentioned definitions.

However, in $D$ dimensional lattices the natural lower bound is
$\Omega(n^{1/D}+\ln(1/\eps))$.  Our main challenge in this work is to show that this
depth bound is asymptotically achievable, and along the way, we need to deal with the fact
that we can no longer freely pay norm-conversion costs of $2^{O(n)}$.  We are able to
achieve the desired $\poly(t)(n^{1/D} + \ln(1/\eps))$ in many operationally relevant
norms, but due in part to the difficulty of converting between norms, we do not establish
it in all cases. The asymptotic dependency on $t$ in our result for $D =2$ is $O(t \ln t)$ times the best asymptotic dependency on 
$t$ for the $D=1$ architecture, according to the strong measure defined in \cite{BHH-designs}. 
\cite{BHH-designs} gave a bound of $t^{10.5}$. Recently this bound was improved to $t^{5 + o_t(1)}$ by Haferkamp \cite{haferkamp2022random}. 
The dependency on $t$ in our result is hence $t^{6 + o_t(1)} \ln t$.

\paragraph{Approximate unitary designs.} 
We will consider several notions of approximate designs in this paper.  First, we will
introduce some notation.  A degree-$(t,t)$ monomial in $C\in \text U((\C^d)^{\tensor n})$ is
degree $t$ in the entries of $C$ and degree $t$ in the entries of $C^*$.  We can collect
all these monomials into a single matrix of dimension $d^{2nt}$ by defining
$C^{\tensor t,t} := C^{\ot t} \ot C^{\ast \tensor t}$.  We say that $\mu$ is an exact
[unitary] $t$-design if expectations of all $t,t$ moments of $\mu$ match those of the Haar
measure.  We can express this succinctly in terms of the operator \be G_\mu^{(t)} = \E_{C
  \sim \mu}  \left [C^{\tensor t} \tensor C^{\ast \tensor t} \right].\ee Then $\mu$ is an exact
$t$-design iff $G_\mu^{(t)} = G_{\Haar}^{(t)}$.
Since $G_{\Haar}^{(t)}$ is a projector, we sometimes call $G_\mu^{(t)}$ a quasi-projector
operator and we will later use the fact that it can sometimes be shown to be very close
to a projector.

Most definitions of approximate designs demand that some norm of $G_\mu^{(t)} -
G_{\Haar}^{(t)}$ be small.  Three norms that we will consider are based on viewing
$G_{\mu}^{(t)}$ as either a vector of length $d^{4nt}$, a matrix of dimension $d^{2nt}$
or a quantum operation acting on a space of dimension $d^{nt}$.  In each case, one can show that the $t$-design property implies the $t'$-design property for
$1\leq t'\leq t$.

\begin{definition}[Monomial definition of $t$-designs]\OldNormalFont{} $\mu$ is a monomial-based
  $\eps$-approximate $t$-design if any monomial has expectation within $\eps d^{-nt}$ of
that resulting from the Haar measure.  In other words, 
\be
\left \|\vvec\left[G^{(t)}_\mu\right] - \vvec\left [G^{(t)}_\mu\right] \right\|_\infty \leq \frac{\eps}{d^{nt}}.
\ee
$\vvec(A)$ is a vector consisting of the elements of matrix $A$ (in the computational
basis) and $\|\cdot\|_\infty$ refers to the vector $\ell_\infty$ norm.
\label{def:monomialdesigns}
\end{definition}
The monomial measure is natural when studying anti-concentration, since a sufficient
condition for anti-concentration is that  $\E_C |\braket{0|C|0}|^4$ is close to the
quantity that arises from the Haar measure, namely $ \frac{2}{2^n(2^n+1)}$. This is
achieved by [monomial measure] $2$-designs.

If the operator-norm distance between $G_\mu^{(t)}$ and $G_\Haar^{(t)}$ is small then
instead of calling $\mu$ an approximate design we call it a $t$-tensor product
expander~\cite{HH09}.   This controls the rate at which certain nonlinear (i.e. degree-$t$
polynomial) functions of the state converge to the average value they would have under the
Haar measure. We can also measure the distance between $G_\mu^{(t)}$ and $G_\Haar^{(t)}$
in the 1-norm (i.e. trace norm) and this notion of approximate designs has been considered
before \cite{AS04, S05}, although it does not have direct operational meaning.  We
will show $\poly(t)(n^{1/D}+\ln(1/\eps))$-depth convergence in each of these measures.

Finally, we can consider $G_\mu^{(t)}$ to be a superoperator using the
following canonical map. Define $\Channel \left [\sum_i X_i \ot Y_i^T\right ]$ by
$ \Channel\left [\sum_i X_i \ot Y_i^T\right](Z) := \sum_i X_i Z Y_i$.
Thus
\be
\Channel\left [G^{(t)}_{\mu}\right] (Z) = \E_{C \sim \mu}  \left [ C^{\tensor t} Z  C^{\dagger \tensor t} \right ].
\ee
Note that $\Channel\left [G_{\mu}\right]$ is completely positive and trace preserving, i.e., a quantum
channel.   For superoperators $\cM,\cN$ we say that $\cM\preceq \cN$ if $\cN-\cM$ is a
completely positive (cp) map.  Based on this ordering, a strong notion of being an
approximate design was proposed by Andreas Winter and first appeared in ~\cite{BHH-designs}.  

\begin{definition}[Strong definition of $t$-designs]
\OldNormalFont{} A distribution $\mu$ is a strong $\eps$-approximate $t$-design if
\be
(1-\eps) \Channel\left [G^{(t)}_\Haar\right] \preceq \Channel\left[G^{(t)}_\mu\right] \preceq (1+\eps) \Channel \left[G^{(t)}_\Haar\right].
\ee
\label{def:strongdesigns}
\end{definition}

\paragraph{Circuit models.}
The result of \cite{BHH-designs} constructs $t$-designs in the strong measure (Definition
\ref{def:strongdesigns}) for $D=1$ and linear depth, and we generalize this result to construct weak monomial designs for arbitrary $D$ and $O(n^{1/D})$ depth. We also show that the
same construction converges to the Haar measure in other norms: diamond, infinity and
trace norm. Our proof techniques do not seem to yield $t$-designs in the strong
measure. We do not even know whether the construction of ``strong'' $t$-designs in
sub-linear depth is possible.  

The second model we consider is circuits with long-range two-qubit interactions. In this
model, at each step, we pick a pair of qubits uniformly at random and apply a random
$\text{U}(4)$ gate on them.  This model is the standard one when considering bounded-depth
circuit classes, such as $\mathsf{QNC}$.  Physically, it could model chaotic systems with
long-range interactions.  Following Oliveira, Dahlsten and Plenio~\cite{ODP06} (see also \cite{BF13-2, BF13, HL09}), we can map the $t=2$ moments
of this process onto a simple random walk on the points $\{1,2,\ldots,n\}$.  We map this
random walk to the classical (and exactly solvable) Ehrenfest model, meaning a 
random walk with a linear bias towards the origin.  Further challenges are that this
mapping introduces random and heavy-tailed delays and that the norm used for
anti-concentration is exponentially sensitive to some of the probabilities.  However, we
are able to show (in Section \ref{sec:all-to-all-anti-conc}) that after $O(n \ln^2 n)$ rounds of this process the resulting distribution
over the unitary group converges to the Haar measure in the mentioned norm.

For a distribution $p$ its collision probability is defined as
$\Coll(p) = \sum_x p_x^2$. If $\Coll (p)$ is large ($\Omega(1)$)
then the support of $p$ is concentrated around a constant number of outcomes, and if it is small
($\approx 1/2^n$) then it is anti-concentrated. The norm that we consider for
anti-concentration is basically the expected collision probability of the output
distribution of a random circuit. The expected 
collision probability for the Haar measure is $\frac{2}{2^n+1}$ and our result shows that a typical circuit of size $O(n \ln^2 n)$ outputs a distribution 
with expected collision probability $\frac{2}{2^n} \left(1+\frac 1 {\poly(n)}\right)$.
Along with the
Paley-Zygmund anti-concentration inequality this result proves that these circuits have
the following anti-concentration property: 
\be
\min_x \Pr_{C\sim \mu} \left [|\braket{x|C|0}|^2 \geq \frac{1}{2^{n+1}} \right] \geq \text{constant}.
\label{eq:zyg0}
\ee
Here $\mu$ is the distribution of random circuits we consider, and $x$ is any $n$-bit string. This bound is related to the hardness of classical simulation for random circuits. We furthermore show that sub-logarithmic depth
quantum circuits in this model have expected collision probability $\frac{2}{2^n+1}
\omega(1)$. The best anti-concentration depth bound we get from this model is $O(\ln^2 n)$. However, we are able to construct a natural family of random circuits with depth $O(\ln n \ln \ln n)$ that are anti-concentrated. 

\paragraph{The organization of this paper.}  The rest of this introductory section states the basic results, ideas and implications related to the main results. In particular, in Section \ref{Intro:supremacy} we explain the connections between our result and the result experiments performed by groups such as the Google AI group aiming towards demonstrating the superiority of quantum computing compared with classical computers on specific tasks. In Section \ref{Intro:models}, we describe the models we consider in this paper. In Section \ref{Intro:results}, we express the main results of this paper including proof sketches and basic ideas. We then give a brief overview of the previous works related to this paper in Section \ref{Intro:prev} and several open questions in Section \ref{Intro:open}.

The organization of the rest of this paper is as follows. In Section \ref{sec:prelim} we introduce the preliminary concepts, definitions and tools needed for our proofs. In Section \ref{sec:lattices-designs} we give detailed proofs about how we get approximate $t$-designs on $D$-dimensional lattices. In Section \ref{sec:all-to-all-anti-conc} we give detailed proofs related to anti-concentration bounds from circuits with all-to-all interactions. In Section \ref{sec:2Dmain} we provide alternative proofs for anti-concentration via low-depth $D$-dimensional lattices and in Section \ref{sec:scr} we provide improvements on the existing scrambling and decoupling bounds. Appendix \ref{sec:supremacy} gives a proof of Theorem \ref{thm:supremacy} about the implications of anti-concentration bound we obtain on computational difficulty of simulating low-depth random quantum circuits. Finally Appendix \ref{section:krawtchuk} gives a background about the basic properties of Krawtchouk polynomials which we use in Section \ref{sec:all-to-all-anti-conc}.

\subsection{Connections with quantum computational supremacy experiments}
\label{Intro:supremacy}

Outperforming classical computers, even for an artificial problem such as sampling from
the output of an ideal quantum circuit would be a significant milestone for early quantum
computers which has recently been called ``quantum computational supremacy'' \cite{HM17,P12}.  The reason to
study quantum computational supremacy in its own right (as opposed to general quantum algorithms) is
that it appears to be a distinctly easier task than full-scale quantum computing and even
various non-universal forms \cite{ABKM16, AC16, BISBDJMN16, BGK17, BJS10, FH16} of quantum
computing can be shown to be hard to simulate classically. For example, the outputs of
constant-depth quantum circuits cannot be simulated exactly by classical computers unless
the $\PH$ collapses \cite{TD02}.  In general, families of quantum circuits have this property
if they are universal under postselection, meaning that after measuring all the qubits at
the end of the circuit and producing a string of bits, we condition on the values of some
of these bits and use the other bits for the output.

However, these hardness results are not robust under noise and error
in measurements. A central open question in the theory of quantum
computational supremacy is whether simulating these distributions to
within constant or $1/\poly(n)$ variational distance would still be
hard. It is plausible to conjecture that if such a robust hardness of
sampling is true, it would also hold for generic circuits \cite{AC16,
  AA11} (although see \cite{2dsim20} for a counterexample). A standard
approach to proving such a robust hardness result for generic
circuits has been to prove that ``anti-concentration'' holds, and to use this to relate additive error approximation to average-case relative error approximation; see e.g. \cite{BMS16}.
Here ``anti-concentration'' means having near-maximal
entropy in the output of a quantum circuit, which implies that any fixed
amplitude of a quantum circuit is likely to be
$\geq \frac {\Omega(1)}{2^n}$.
 This property implies that the complexity of
estimating the amplitudes additively (within
$\pm \frac 1 {\poly(n) \cdot 2^n})$ is on-average as hard as computing
them within inverse polynomial relative error.  This lets us turn an
assumption about the average-case hardness of relative-error
approximation of the amplitudes into a hardness result for the
sampling problems.  Approximate $t$-designs (and even approximate
$2$-designs) have the desired ``anti-concentration'' property.

For experimental verification of quantum computational supremacy we can consider the following sampling task: let $\mu$ be a distribution over random circuits that satisfies
\be
\Pr_{C\sim \mu} \left[|\braket{0|C|0}|^2 \geq \frac{1}{2^{n+1}} \right] \geq 1/8 - 1/\poly(n).
\label{eq:zyg2}
\ee
(which we call the anti-concentration property). Let $\mathcal{C}_x$ be the family of circuits constructed by first applying a circuit $C \sim \mu$ and then an $X$ gate to each qubit with probability $1/2$ (and identity with probability $1/2$). A similar line of reasoning as in Bremner-Montanaro-Shepherd (see Theorem 6 and 7 of \cite{BMS16}) implies that
\begin{theorem} \OldNormalFont{}
Fix $\epsilon>0$ and $0<\delta<1/8$. Let $\mu$ be a $\frac 1{\poly(n)}$-approximate 2-design. If there exists a $\BPP$ machine which takes $C \sim \mu$ as input and for at least $1-\delta$ fraction of such inputs outputs a probability distribution that is within $\epsilon$ total variation distance from the probability distribution $p_x = |\braket {x | C| 0}|^2$, then 
 there exists an 
$\FBPP^{\NP}$
 algorithm that succeeds with probability $1-\delta$ and computes the value 
 $|\braket{0|C'|0}|^2$ 
  within multiplicative error $\frac{2(\epsilon + 1/{\poly(n)})}{\delta}$ for $1/8 -\frac1 {\poly(n)}$ fraction of circuits $C' \sim \mathcal{C}_x$.
\label{thm:supremacy}
\end{theorem}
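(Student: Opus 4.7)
The plan is to follow the template of Bremner--Montanaro--Shepherd \cite{BMS16}, combining Stockmeyer approximate counting, the anti-concentration hypothesis \eqref{eq:zyg2}, and a hiding reduction provided by the random Pauli-$X$ layer in the definition of $\mathcal{C}_x$. Let $A$ be the hypothetical $\BPP$ sampler; for each input $C$ it produces samples from some distribution $q_C$ on $\{0,1\}^n$, and by assumption $\|q_C - p_C\|_1 \le \epsilon$ for a $1-\delta$ fraction of $C\sim\mu$, where $p_C(y) := |\braket{y|C|0}|^2$. Since $A$ runs in probabilistic polynomial time, Stockmeyer's theorem yields an $\FBPP^{\NP}$ procedure that, on input $(C,y)$, returns an estimate $\tilde q_C(y)$ with multiplicative error $1 \pm 1/\poly(n)$, succeeding with probability at least $1 - \delta$ over its internal randomness.

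The hiding step writes $C' = X^x C$ for $C \sim \mu$ and $x$ uniform in $\{0,1\}^n$; then $|\braket{0|C'|0}|^2 = |\braket{x|C|0}|^2 = p_C(x)$, and given a sample $y\sim q_C$ we may output $y\oplus x$ to effectively sample from $q_{C'}(\cdot) := q_C(\cdot \oplus x)$, which satisfies $\|q_{C'}-p_{C'}\|_1 = \|q_C-p_C\|_1 \le \epsilon$ on the same $(1-\delta)$ fraction of $(C,x)$ pairs. Thus estimating $|\braket{0|C'|0}|^2$ for $C'\sim \mathcal{C}_x$ reduces to estimating $p_C(x)$ for $(C,x)$ drawn from $\mu$ and the uniform distribution. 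I would then convert total-variation closeness to pointwise closeness: for each good $C$, $\mathbb{E}_{x}|q_C(x)-p_C(x)| = \|q_C-p_C\|_1/2^n \le \epsilon/2^n$, so Markov's inequality gives $|q_C(x)-p_C(x)| \le \epsilon/(\delta\cdot 2^n)$ with probability at least $1-\delta$ over uniform $x$. Combining this bound with Stockmeyer's multiplicative estimate shows that $|\tilde q_C(x) - p_C(x)|$ is at most $\epsilon/(\delta\cdot 2^n)$ plus a $1/\poly(n)$ multiplicative slack on $p_C(x)$.

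Finally, by the anti-concentration assumption \eqref{eq:zyg2}, $p_C(x)\geq 1/2^{n+1}$ for at least a $1/8 - 1/\poly(n)$ fraction of the joint draws $(C,x)$. On the intersection of this event with the sampler-success and Markov events, dividing the additive error by $p_C(x)\geq 1/2^{n+1}$ converts it into the claimed multiplicative error $\frac{2(\epsilon + 1/\poly(n))}{\delta}$. The main obstacle is the probabilistic bookkeeping: one must use the union bound carefully so that the sampler-success, Markov, and anti-concentration events simultaneously hold on a $1/8 - 1/\poly(n)$ fraction of $C'\sim\mathcal{C}_x$, and one must thread the Stockmeyer multiplicative slack through the additive-to-multiplicative conversion so that it contributes only the $1/\poly(n)$ term in the final bound rather than a constant one. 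All remaining ingredients are standard and follow the argument of \cite{BMS16}.
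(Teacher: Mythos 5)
Your proposal is correct and follows essentially the same route as the paper's proof: Stockmeyer approximate counting applied to the sampler, a triangle-inequality/Markov argument to pass from total-variation to pointwise additive error $\frac{\epsilon+1/\poly(n)}{2^n\delta}$, hiding via the random Pauli-$X$ layer in $\mathcal{C}_x$, and the Paley--Zygmund anti-concentration bound $p \geq 1/2^{n+1}$ on a $1/8-1/\poly(n)$ fraction to convert additive into multiplicative error. The only difference is cosmetic (you apply Markov to $|q_C(x)-p_C(x)|$ before folding in the Stockmeyer slack, whereas the paper first combines the two error sources in $\ell_1$ and then applies Markov), and the bookkeeping caveats you flag are glossed over in the paper as well.
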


This theorem is proved in Appendix \ref{sec:supremacy}.  If we further
conjecture the $\PH$ is infinite and that amplitudes of the random
circuits in \thmref{supremacy} are $\#\P$-hard to approximate on
average, then this implies that classical computers cannot efficiently
sample from any distribution close to the ones generated by these
circuits.  At the moment, it is only known that nearly exact
computation of these amplitudes is hard for $\# \P$ \cite{BFNV19,
  M19-1, M19-2}. It is an open question whether
average-case hardness for the approximation task remains
$\#\P$-hard.

The linear to sub-linear improvement of the depth required for anti-concentration provided
in this paper is likely to be significant for near-term quantum computers that will be
constrained both in terms of the number of qubits ($n$) and noise rate per gate ($\delta$).
Due to the constraints in the number of qubits (say 50-100), quantum computational supremacy will only be
possible without the overhead of error correction, since even the most efficient known
schemes for fault-tolerant quantum computation reduce the number of qubits by more than a
factor of two~\cite{ChaoR17}.  Thus a quantum circuit with $S$ gates will have an expected
$S\delta$ errors.   Recent work due to Yung and Gao~\cite{YG17} and the Google
group~\cite{BoixoSN17} states that noisy random quantum circuits with $O(\ln n)$ random
errors output distributions that are nearly uniform, and thus are trivially classically
simulable.  Thus $S$ can be at most $\ln(n)/\delta$.  In proposed near-term quantum
devices~\cite{Barends14,Debnath16,Ofek16,BISBDJMN16} we can expect $n\sim 10^2$ and
$\delta\sim 10^{-2}$.  Thus the $S=O(n \ln^2 n)$ for long-range interactions or
$ S = O(n \sqrt{n})$ bound for 2-D lattices from our work is much closer to being practical
than the previous $S=O(n^2)$.  (This assumes that the constants are reasonable.  We have
not made an effort to calculate them rigorously but for the case of long-range
interactions we do present a heuristic that suggests that in fact
$\approx \frac 56 n\ln n$ gates are necessary and sufficient.)

\subsection{Our models}
\label{Intro:models}
We consider two models of random quantum circuits. The first involves nearest-neighbor local interactions on a $D$-dimensional lattice and the second involves long-range random two-qubit gates. The order of gates in the first model has some structure but in the second model it is chosen at random. Hence, we can view the second model as the natural dynamics of an $n$-qubit system, connected as a complete graph.

We first define the following random circuit model for $D=1$ which was also considered in ~\cite{BHH-designs}: 

\begin{definition}[Random circuits on one-dimensional lattices]\OldNormalFont{}
$\mu^{\lattice, n}_{1, s}$ is the distribution over unitary circuits resulting from the
following random process.
\begin{mdframed}[nobreak=true]
 For $j = 1 : s$ \hfill \% for $t$-designs, view $s$ as $\poly(t) {n}$
\begin{itemize}
\item Apply independent random gates from $\text U(d^2)$ on qudits $(1,2), (3,4), \ldots , (n-1, n)$.
\item Apply independent random gates from $\text U(d^2)$ on qudits $(2,3), (4,5), \ldots , (n-2, n-1)$.
\end{itemize}

\end{mdframed}
\label{def:models2}
\end{definition}
This definition assumes that $n$ is even but we modify it in the obvious way when $n$ is
odd.  Another modification which would not change our results would be to put the qudits on a
ring so that sites $n$ and 1 are connected.

Building on this, we define the following distribution of random circuits on a two-dimensional lattice.
\begin{definition}[Random circuits on two-dimensional lattices] \OldNormalFont{}
  Consider a two-dimensional lattice with $n$ qudits. Let $r_{\alpha,i}$ be the
  i$^{\text{th}}$ row of the lattice in direction $\alpha \in \{1,2\}$, for
  $1 \leq i \leq \sqrt n$.  For each $\alpha\in\{1,2\}$ let $\text{SampleAllRows}(\alpha)$
  denote the following procedure (see Figure \ref{fig:ourmodels}):
\begin{mdframed}[nobreak=true]  
  For each $i\in [\sqrt n]$, sample a random circuit from
  $\mu^{\lattice, \sqrt{n}}_{1,s}$ and apply it to $r_{\alpha,i}$.
\end{mdframed}

Now define $\mu^{\lattice, n}_{2, c, s}$ to be the distribution over unitary
  circuits resulting from the following random process:
\begin{mdframed}[nobreak=true]  
\begin{itemize}
\item Repeat these steps $c$ times: apply SampleAllRows(1) and then SampleAllRows(2).
\item Apply SampleAllRows(1) a final time.
\end{itemize}
\end{mdframed}
\label{def:modelsD}
\end{definition}

This distribution has depth $(2c+1)2s$ and is related but not identical to the Google AI
group's {experiment~\cite{Google19, BISBDJMN16}}, see Figure \ref{fig:google}.  For our results on
$t$-designs, we will take $c$ to be $\poly(t)$ and $s$ to be $\poly(t) \cdot \sqrt{n}$.
We believe that our result can be extended to any natural family of circuits with
nearest-neighbor interactions.  We also assume for convenience that $\sqrt{n}$ is an
integer, but believe that this assumption is not fundamentally necessary.

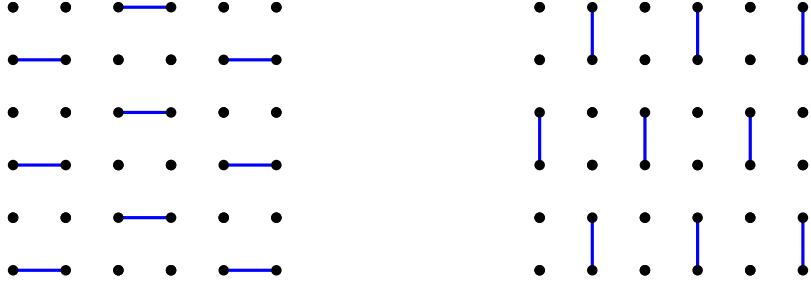
\begin{figure}[h]
\begin{center}
\begin{tikzpicture}[very thick,scale=0.7]
\draw[blue] (2,1) -- (3,1); \draw[blue] (2,3) -- (3,3); \draw[blue] (2,5) -- (3,5);
\draw[blue] (0,0) -- (1,0); \draw[blue] (0,2) -- (1,2); \draw[blue] (0,4) -- (1,4);
\draw[blue] (4,0) -- (5,0); \draw[blue] (4,2) -- (5,2); \draw[blue] (4,4) -- (5,4);
\foreach \x in {0,...,5}{
\foreach \y in {0,...,5}{
\fill (\x,\y) circle (0.1);
}}
\foreach \x in {0,1,4,5}{
\foreach \y in {1,3,5}{
\fill[black] (\x,\y) circle (0.1);
}}
\foreach \y in {0,2,4}{
\fill[black] (2,\y) circle (0.1);
\fill[black] (3,\y) circle (0.1);
}
\begin{scope}[xshift=15cm]
\begin{scope}[rotate=90]
\draw[blue] (2,1) -- (3,1); \draw[blue] (2,3) -- (3,3); \draw[blue] (2,5) -- (3,5);
\draw[blue] (0,0) -- (1,0); \draw[blue] (0,2) -- (1,2); \draw[blue] (0,4) -- (1,4);
\draw[blue] (4,0) -- (5,0); \draw[blue] (4,2) -- (5,2); \draw[blue] (4,4) -- (5,4);
\foreach \x in {0,...,5}{
\foreach \y in {0,...,5}{
\fill (\x,\y) circle (0.1);
}}
\foreach \x in {0,1,4,5}{
\foreach \y in {1,3,5}{
\fill[black] (\x,\y) circle (0.1);
}}
\foreach \y in {0,2,4}{
\fill[black] (2,\y) circle (0.1);
\fill[black] (3,\y) circle (0.1);
}
\end{scope}
\end{scope}
\end{tikzpicture}
\end{center}
\caption{ The architecture proposed by the quantum AI group at Google to
  demonstrate quantum supremacy consists of a 2D lattice of
  superconducting qubits.  This figure depicts two illustrative
  timesteps in this proposal.  At each timestep, 2-qubit gates (blue)
  are applied across some pairs of neighboring qubits.
\label{fig:google}}
\end{figure}

Next, we give a recursive definition for our random circuits model on arbitrary
$D$-dimensional lattices. We view a $D$-dimensional lattice as a collection of $n^{1/D}$
sub-lattices of size $n^{1-1/D}$, labeled as $\xi_{1}, \ldots, \xi_{n^{1-1/D}}$. We label
the rows of the lattice in the $D$-th direction by $r_1,\ldots, r_{n^{1/D}}$. 

\begin{definition}[Random circuits on $D$-dimensional lattices] \OldNormalFont{}$\mu^{\lattice,
    n}_{D,c,s}$ is the distribution resulting from the following random process. 
\begin{mdframed}[nobreak=true]
\begin{enumerate}
\item Repeat these steps $c$ times.
\begin{enumerate}
\item For each $i\in [n^{1/D}]$,
\begin{itemize}
\item Sample a random circuit from $\mu^{\lattice, n^{1-1/D}}_{D-1,c,s}$ and apply it to $\xi_i$.
\end{itemize}

\item For each $j \in [n^{1-1/D}]$
\begin{itemize}
\item Sample a random circuit from $\mu^{\lattice, n^{1/D}}_{1,s}$ and apply it to $r_j$.
\end{itemize}
\end{enumerate}
\item For each $i\in [n^{1/D}]$,
\begin{enumerate}
\item Sample a random circuit from $\mu^{\lattice, n^{1-1/D}}_{D-1,c,s}$ and apply it to $\xi_i$.
\end{enumerate}
\end{enumerate}
\end{mdframed}
\end{definition}

Next, we define the model with long-range interactions on a complete graph.
\begin{definition}[Random circuit models on complete graphs] \OldNormalFont{}
$\mu^{\text{CG}}_s$ is the distribution over unitary circuits resulting from the
following random process.
\begin{mdframed}[nobreak=true]
Repeat this step $s$ times \hfill \% view $s$ as $O(n \ln^2 n)$.
\begin{itemize}
\item Pick a random pair of qudits $(i,j)$ and apply a random $\text{U}(d^2)$ gate between them.
\end{itemize}

\end{mdframed}
\label{def:modelcg}
\end{definition}

The size of the circuits in this ensemble is $s$.

\subsection{Our results}
\label{Intro:results}

Our first result is the following.

\begin{theorem} \torestate{\OldNormalFont{}
Let $s, c,n > 0$ be positive integers
with  $\mu^{\lattice,n}_{2,c,s}$ defined as in \defref{modelsD}.
\begin{enumerate}
\item $s = \poly(t)\left (\sqrt n + \ln \frac{1}{\delta} \right), c = O\left(t \ln t + \frac{\ln (1/\delta)}{\sqrt  n}\right) \implies \left \| \vvec\left[G_{\mu^{\lattice,n}_{2,c,s}}^{(t)}-G_\Haar^{(t)}\right] \right\|_\infty \leq \frac{\delta}{d^{nt}}$.
\label{p1}
\item $s = \poly(t) \left(\sqrt n + \ln \frac{1}{\delta}\right ), c = O\left (t \ln t + \frac{\ln (1/\delta)}{\sqrt  n}\right) \implies \left \| \Channel\left[G_{\mu^{\lattice,n}_{2,c,s}}^{(t)}-G_\Haar^{(t)}\right]\right \|_\diamond \leq \delta$.
\item $s = \poly(t) \left(\sqrt n + \ln \frac{1}{\delta} \right), c = O\left(t \ln t + \frac{\ln (1/\delta)}{\sqrt n}\right) \implies \left \| G_{\mu^{\lattice,n}_{2,c,s}}^{(t)} - G_\Haar^{(t)} \right\|_1 \leq \delta$. 
\item $\left \| G_{\mu^{\lattice,n}_{2,c,s}}^{(t)} - G_\Haar^{(t)} \right \|_\infty \leq c \cdot \sqrt n \cdot e^{-s/\poly(t)} + \frac{1}{d^{O( c \sqrt n )}}$.
\end{enumerate}
\label{thm:grid}
}
\end{theorem}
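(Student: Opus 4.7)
The plan is to lift BHH's one-dimensional result to two dimensions by applying it alternately along rows and columns, and to analyze the composition via a spectral gap between the row- and column-Haar projectors. Let $P := G_\Haar^{(t)}$ denote the global Haar moment projector, and let $P_R$, $P_C$ denote the moment operators of the products of independent Haar measures on rows (direction~1) and columns (direction~2), respectively. Both are projectors that dominate $P$, in the sense that $P P_R = P_R P = P$ and similarly for $P_C$, since the row subgroup (resp.\ column subgroup) sits inside the global unitary group.

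I would first invoke BHH to conclude that $\mu^{\lattice,\sqrt n}_{1,s}$ is a strong $\eps_1$-approximate $t$-design on $\sqrt n$ qudits with $\eps_1 \leq e^{-s/\poly(t)}$ once $s \geq \poly(t)\sqrt n$. Tensoring over $\sqrt n$ independent rows, the moment operator $A_\alpha := G^{(t)}_{\mathrm{SampleAllRows}(\alpha)}$ satisfies $\|A_\alpha - P_\alpha\|_\infty \leq \sqrt n\,\eps_1$, where $P_1 = P_R$ and $P_2 = P_C$. The full 2D moment operator is $(A_1 A_2)^c A_1$, and writing $A_\alpha = P_\alpha + E_\alpha$ with $\|E_\alpha\|_\infty \leq \sqrt n\,\eps_1$ and expanding, a telescoping estimate gives $\|(A_1 A_2)^c A_1 - (P_R P_C)^c P_R\|_\infty \leq O(c \sqrt n\,\eps_1) = c \sqrt n\, e^{-s/\poly(t)}$.

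The central technical step is a gap lemma for the Haar projectors: $\|(P_R P_C)^c P_R - P\|_\infty \leq d^{-\Omega(c \sqrt n)}$. The underlying claim is that any tensor simultaneously in the ranges of $P_R$ and $P_C$ lies, up to an exponentially small gap, in the range of $P$. I would prove this by extending BHH's quasi-orthogonality of permutation operators: on each row or column of length $\sqrt n$, the commutant of the $t$-fold Haar measure is spanned by the permutations in $S_t$ with nearly-identity Gram matrix, whose corrections scale like $d^{-\Omega(\sqrt n)}$. Intersecting row- and column-invariants then collapses to the global permutation commutant (the range of $P$), with the per-round contraction inheriting the $d^{-\Omega(\sqrt n)}$ factor. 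Combined with the previous step, this yields Part~3.

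For Parts~1, 2, and~4 the operator-norm bound of Part~3 is converted to the respective norms via the structure of $G^{(t)}_\mu - P$, not by generic norm inequalities. The monomial norm (Part~1) is the max matrix entry, bounded directly by the operator norm, and the stated parameters of $s, c$ make this at most $\delta/d^{nt}$. The diamond norm (Part~2) is that of the superoperator $\Channel[G^{(t)}_\mu - P]$ and can be controlled by the trace norm of its Choi operator, which in turn is handled using the channel-level structure of the $A_\alpha$'s. For the trace norm (Part~4), one uses that $G^{(t)}_\mu - P$ concentrates near the permutation commutant of dimension $\leq (t!)^2$, so its trace norm is bounded by $\poly(t)\cdot\|G^{(t)}_\mu-P\|_\infty$ up to corrections absorbed in the stated parameters. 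The main obstacle throughout is the gap lemma and the extension of BHH's quasi-orthogonality to the 2D setting where row- and column-permutation overlaps interact, together with the norm accounting that avoids paying $d^{O(n)}$ conversion factors that would otherwise destroy the parameters.
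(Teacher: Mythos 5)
Your Part 3 argument---the hybrid/telescoping comparison of $(A_1A_2)^cA_1$ against $(P_RP_C)^cP_R$ combined with a quasi-orthogonality gap lemma $\|G_CG_R-G_\Haar^{(t)}\|_\infty\leq d^{-\Omega(\sqrt n)}$---is essentially the paper's route for the operator-norm bound, and that part is fine. The genuine gap is in Part 1. You assert that the max matrix entry is ``bounded directly by the operator norm'' and that the stated $s,c$ make it at most $\delta/d^{nt}$. That conversion loses a factor of roughly $d^{nt}$: with $s=\poly(t)(\sqrt n+\ln(1/\delta))$ and $c=O(t\ln t+\ln(1/\delta)/\sqrt n)$, the operator-norm bound from Part 3 is only of order $\delta\cdot d^{-\Theta(\sqrt n\, t\ln t)}$, which is exponentially larger than the target $\delta/d^{nt}$; forcing the operator norm itself down to $\delta/d^{nt}$ would require $s=\Omega(nt)$, i.e.\ linear depth, defeating the theorem. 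This is exactly the norm-conversion cost the result is designed to avoid. The paper's Part 1 instead (i) uses the \emph{strong} (cp-ordering) design property of each row/column layer to sandwich $\Channel[G^{(t)}_{\mu}]$ between $(1\pm\delta/4t!)\,\Channel[G_R(G_CG_R)^cG_R]$, which controls diagonal monomials; (ii) exploits that sandwiching a matrix unit $x$ between the projectors $G_R$ already compresses it to $\|G_RxG_R\|_1\leq (t^{O(t)}/d^{t\sqrt n})^{\sqrt n}=t^{O(t\sqrt n)}/d^{nt}$, so the $d^{-\Omega(c\sqrt n)}$ gap only needs to beat $t^{O(t\sqrt n)}$ rather than $d^{nt}$---this is where the $c=O(t\ln t+\cdots)$ choice comes from; and (iii) handles off-diagonal monomials by a separate positivity argument (off-diagonal entries of a psd matrix are dominated by diagonal ones), since cp-ordering only controls diagonal entries. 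None of these steps is present in your proposal, and without them Part 1 does not go through.

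Two smaller problems. In Part 4 you bound the rank of $G^{(t)}_{\mu}-G^{(t)}_\Haar$ by the dimension $(t!)^2$ of the global permutation commutant; but the relevant operator lives (approximately) in the span of \emph{row-wise} tensor products of permutation operators, so the correct rank bound is $t!^{O(\sqrt n)}$, not $\poly(t)$. The conclusion survives because the $t!^{O(\sqrt n)}$ factor can still be absorbed by the $d^{-\Omega(c\sqrt n)}$ gap, but the reasoning as written is wrong. Part 2 is under-specified: passing from the operator norm of $(G_RG_CG_R)^c-G_\Haar^{(t)}$ to the diamond norm of $\Channel[(G_RG_CG_R)^c-G_\Haar^{(t)}]$ is not routine (a naive Choi-operator bound again risks dimension factors of order $d^{nt}$); the paper proves it by expanding the operator in the non-orthogonal basis of row-permutation operators, using that each normalized $V(a)V^{\ast}(b)$ has diamond norm at most $1$, and controlling the coefficient matrix via the Gram matrix of that basis.
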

\begin{figure}
\begin{center}
\begin{tikzpicture}[thick,scale=0.4]
\foreach \y in {0,..., 5}{
\draw[blue] (0,\y) -- (1,\y); \draw[blue] (2,\y) -- (3,\y); \draw[blue] (4,\y) -- (5,\y); 
}
\foreach \x in {0,...,5}{
\foreach \y in {0,...,5}{
\fill[black] (\x,\y) circle (0.1);
}}

\foreach \x in {0,1,4,5}{
\foreach \y in {1,3,5}{
\fill[black] (\x,\y) circle (0.1);
}}
\foreach \y in {0,2,4}{
\fill[black] (2,\y) circle (0.1);
\fill[black] (3,\y) circle (0.1);
}

\node[inner sep=0,anchor=east,text width=1cm] (note1) at (-3,2) {
   (1) };

\begin{scope}[xshift=20cm]

\foreach \y in {0,..., 5}{
\draw[blue] (1,\y) -- (2,\y); \draw[blue] (3,\y) -- (4,\y); 
}
\foreach \x in {0,...,5}{
\foreach \y in {0,...,5}{
\fill[black] (\x,\y) circle (0.1);
}}

\foreach \x in {0,1,4,5}{
\foreach \y in {1,3,5}{
\fill[black] (\x,\y) circle (0.1);
}}
\foreach \y in {0,2,4}{
\fill[black] (2,\y) circle (0.1);
\fill[black] (3,\y) circle (0.1);
}

\node[inner sep=0,anchor=east,text width=1cm] (note1) at (-3,2) {
   (2) };

\end{scope}
\end{tikzpicture}
\end{center}
\end{figure}

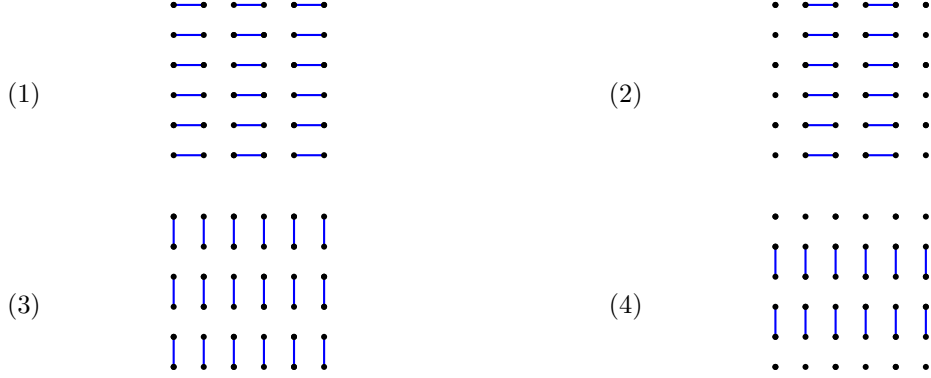
\begin{figure}
\begin{center}
\begin{tikzpicture}[thick,scale=0.4]
\begin{scope}[rotate=90]

\foreach \y in {0,..., 5}{
\draw[blue] (0,\y) -- (1,\y); \draw[blue] (2,\y) -- (3,\y); \draw[blue] (4,\y) -- (5,\y); 
}

\foreach \x in {0,...,5}{
\foreach \y in {0,...,5}{
\fill [black] (\x,\y) circle (0.1);
}}

\foreach \x in {0,1,4,5}{
\foreach \y in {1,3,5}{
\fill[black] (\x,\y) circle (0.1);
}}
\foreach \y in {0,2,4}{
\fill[black] (2,\y) circle (0.1);
\fill[black] (3,\y) circle (0.1);
}

\node[inner sep=0,anchor=east,text width=1cm] (note1) at (2,8) {
   (3) };

\end{scope}

\begin{scope}[xshift=20cm]
\begin{scope}[rotate=90]

\foreach \y in {0,..., 5}{
\draw[blue] (1,\y) -- (2,\y); \draw[blue] (3,\y) -- (4,\y); 
}
\foreach \x in {0,...,5}{
\foreach \y in {0,...,5}{
\fill[black] (\x,\y) circle (0.1);
}}

\foreach \x in {0,1,4,5}{
\foreach \y in {1,3,5}{
\fill[black] (\x,\y) circle (0.1);
}}
\foreach \y in {0,2,4}{
\fill[black] (2,\y) circle (0.1);
\fill[black] (3,\y) circle (0.1);
}

\node[inner sep=0,anchor=east,text width=1cm] (note1) at (2,8) {
   (4) };

\end{scope}
\end{scope}
\end{tikzpicture}
\end{center}
\caption{ The random circuit model in definition \ref{def:modelsD}. Each black circle is a
  qudit and each blue link is a random $\text{SU}(d^2)$ gate. The model does $O(\sqrt n\poly(t))$
  rounds alternating between applying (1) and (2). Then for $O(\sqrt n\poly(t))$ rounds it
  alternates between (3)
  and then (4).  This entire loop is then repeated $O(\poly(t))$ times.
\label{fig:ourmodels}}
\end{figure}
The three norms in the above theorem refer to the vector $\ell_\infty$ norm, the superoperator diamond norm
$\|\cdot\|_\diamond$ (see \secref{definitions}) and the operator
$S_\infty$ norm, also known simply as the operator norm.

\begin{proof} [\OldNormalFont{} Proof sketch for part \ref{p1}]
 We first give a brief overview of the proof in \cite {BHH-designs} and explain why
 their construction requires a circuit to have linear depth. Let $G_{i,i+1}$ be the
 projector operator for a random two-qudit gate applied to qudits $i$ and $i+1$, and let
 $G = \frac{1}{n-1} \sum_i G_{i,i+1}$. Therefore $G_s = G^s$ is the quasi-projector
 corresponding to a 1-D random circuit with size $s$.  \cite{BHH-designs} observed that
 $G -  G_{\Haar}$ corresponds to a certain local Hamiltonian and $\eps = 1- \|G -
 G_{\Haar}\|_\infty$ is its spectral gap. The central technical result of \cite{BHH-designs}
 is the bound $\eps \geq  \frac{1}{n\cdot \poly(t)}$. As a result, $\|G_s - G_{\Haar}\|_\infty = (1 -
 \frac 1{n\cdot\poly(t)})^s$. In general $G -  G_{\Haar}$ has rank $e^{O(n)}$, and in order to construct a
 strong approximate $t$-design (Definition \ref{def:strongdesigns}), one needs to apply a
 sequence of expensive changes of norm that lose factors polynomial in the overall
 dimension of $G$, i.e.,
 $e^{O(nt)}$. Thereby in order to compensate for such exponentially large factors one needs
 to choose $s = O(n^2\cdot \poly(t))$, meaning depth growing linearly with $n$. Brown and Fawzi \cite{BF13-2} furthermore
 observed that if $G$ is the projector corresponding to one step of a random circuit on a
 2-D lattice, the spectral gap still remains $1-\|G - G_{\Haar}\|_\infty = O(\frac 1{n\cdot \poly(t)})$, and
 using the same proof strategy one needs linear depth.

The new ingredient we contribute is to show that if
 $s = O(\sqrt n)$ one can replace $G^{(t)}_{\mu^{\lattice,n}_{2,1,s}}$ with a
 certain quasi-projector $G'$, such that 
\begin{itemize}
 \item[(1)] $G'-G_{\Haar}$ has rank $t!^{O(\sqrt{n})}$ and
\item [(2)] $\|G' -G_{\Haar}\|_\infty \approx 1/e^{\Omega(\sqrt{n})}$,
 \item[(3)] $G^{(t)}_{\mu^{\lattice,n}_{2,1,s}} \approx G'$ in various norms.
 \end{itemize}
 
 We first use (1) to relate the monomials definition of $t$-designs to the infinity norm and then use (2) to bound the infinity norm
 \be
 \left \| \vvec\left [G^{(t)}_{\mu^{\lattice,n}_{2,c,s}}\right] - \vvec\left[G^{(t)}_{\Haar}\right]\right\|_\infty \approx t!^{O(\sqrt{n})} \left \|G'^c -G^{(t)}_{\Haar}\right\|_\infty \cdot \frac{t!}{d^{nt}} \approx \frac {t!^{O(\sqrt n)}}{e^{\Omega(c \cdot \sqrt{n})}} \cdot \frac{1}{d^{nt}}.
 \ee 
For $c = t \ln t$ the error bound is $1/e^{\Omega(\sqrt n)} \frac 1 {d^{nt}}$. As a result using (3)
\be
\left\| \vvec\left[G^{(t)}_{\mu^{\lattice,n}_{2,c,s}}\right] - \vvec\left [G^{(t)}_{\Haar}\right]\right \|_\infty \approx \left \| \vvec\left[G'^c\Big] - \vvec\Big[G^{(t)}_{\Haar}\right]\right\|_\infty \approx 1/e^{\Omega(\sqrt{n})} \cdot \frac{1}{d^{nt}}.
\ee
 This step requires a certain change of norm for which we only have to pay a factor like
 $e^{O(\sqrt{n})}$, which we justify by bounding the ranks of the right intermediate
 operators.  The factor of $1/d^{nt}$ comes from the fact that the Haar measure itself has
 monomial expectation values on this order (in fact as large as $t!/d^{nt}$ but we
 suppressing the $t$-dependence in this proof sketch.)

We now briefly describe the construction of $G'$.
Let $G_R$ (and $G_C$) be the projector operators corresponding to applying a Haar unitary
to each row (and column) independently. Then $G' = G_R G_C$. $G'$ has rank $t!^{O(\sqrt n)}$ because $G_R$ and $G_C$ are each tensor products of $\sqrt n$ Haar projectors each with rank $t!$. 
Let $V_R$, $V_C$, and
$V_\Haar$ be respectively the subspaces that $G_R$, $G_C$ and $G_{\Haar}$ project onto. In
order to prove (1) in Section \ref{overlap}
 we first use the fact that our circuits are computationally universal to argue that $V_C \cap V_R = V_\Haar$. We then prove that the angle between 
$V_R \cap V_\Haar^\perp$ and $V_C \cap V_\Haar^\perp$ 
is very close to $\pi/2$, i.e., $\approx \pi/2 \pm \frac{1}{d^{\sqrt{n}}}$.
 This implies that $G_C G_R = G_{\Haar} + P$,
 where $P$ is a small matrix in the sense that $\|P\|_{\infty} \approx
 1/d^{\sqrt{n}}$. Choosing $c = \poly(t)$ we obtain (1). To show (2) it is not hard to see
 that the rank of $G' - G_{\Haar}$ is indeed $e^{O(\sqrt n)}$. For (3) we use the
 construction of $t$-designs from \cite{BHH-designs}. In particular, our random
 circuits model first applies an $O(\sqrt{n})$ depth circuit to each row and then
 an $O(\sqrt{n})$ depth circuit to each column and repeats this for $\poly(t)$ rounds. The result
 \cite{BHH-designs} implies that each of these rounds is effectively the same as
 applying a strong approximate $t$-design to the rows or columns of the lattice. We then
 analyze how these designs behave under composition in various norms and prove (3). 
\end{proof}


Our second result generalizes Theorem \ref{thm:grid} to arbitrary dimensions.

\begin{theorem} \torestate{\OldNormalFont{}
There exists a value $\delta = 1/d^{\Omega(n^{1/D})}$ such that for some large enough $c$ depending on $D$ and $t$:
\begin{enumerate}
\item $s > c \cdot n^{1/D} \implies \left \| \vvec\left[G_{\mu^{\lattice,n}_{D,c,s}}^{(t)}-G_\Haar^{(t)}\right] \right\|_\infty \leq \frac{\delta}{d^{nt}}$.
\item $s > c \cdot n^{1/D} \implies \left\| \Channel\left[G_{\mu^{\lattice,n}_{D,c,s}}^{(t)}-G_\Haar^{(t)}\right] \right\|_\diamond \leq \delta$.
\item $s > c \cdot n^{1/D} \implies \left\| G_{\mu^{\lattice,n}_{D,c,s}}^{(t)} - G_\Haar^{(t)} \right\|_\infty \leq \delta$.
\item $s > c \cdot n^{1/D} \implies \left\| G_{\mu^{\lattice,n}_{D,c,s}}^{(t)} - G_\Haar^{(t)} \right\|_1 \leq \delta$. 
\end{enumerate}
\label{thm:lattice}}
\end{theorem}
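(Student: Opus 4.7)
The plan is induction on $D$, with base case $D=2$ given by Theorem~\ref{thm:grid} (and $D=1$ provided by \cite{BHH-designs}). For the inductive step, I follow the template of the proof sketch of Theorem~\ref{thm:grid}: view the $D$-dimensional lattice as $n^{1/D}$ parallel $(D-1)$-dimensional sub-lattices $\xi_1,\ldots,\xi_{n^{1/D}}$ of size $n^{1-1/D}$, connected by $n^{1-1/D}$ one-dimensional rows $r_1,\ldots,r_{n^{1-1/D}}$ of length $n^{1/D}$ in the $D$-th direction. Introduce the quasi-projector
\[
G' := \bigl(G_S^{(t)}\, G_R^{(t)}\bigr)^c,
\]
where $G_S^{(t)}$ is the projector corresponding to applying independent Haar unitaries on each sub-lattice $\xi_i$ and $G_R^{(t)}$ is the analogous projector for each row $r_j$. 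Let $V_S$, $V_R$, $V_\Haar$ denote the ranges of $G_S^{(t)}$, $G_R^{(t)}$, $G_\Haar^{(t)}$.

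The first step is a geometric bound in the spirit of Section~\ref{overlap}: show that $V_R \cap V_S = V_\Haar$, which holds because sub-lattice and row unitaries together generate the full unitary group on $(\C^d)^{\otimes n}$, so any state invariant under both collections is already Haar-invariant. Then bound the angle between $V_R \cap V_\Haar^\perp$ and $V_S \cap V_\Haar^\perp$ to within $1/d^{\Omega(n^{1/D})}$ of $\pi/2$, which gives $\|G_S^{(t)} G_R^{(t)} - G_\Haar^{(t)}\|_\infty \leq 1/d^{\Omega(n^{1/D})}$. Because $V_\Haar \subseteq V_R \cap V_S$, one has $G_\Haar^{(t)} G_R^{(t)} = G_R^{(t)} G_\Haar^{(t)} = G_\Haar^{(t)} G_S^{(t)} = G_S^{(t)} G_\Haar^{(t)} = G_\Haar^{(t)}$, so all cross-terms vanish and $G' - G_\Haar^{(t)} = (G_S^{(t)} G_R^{(t)} - G_\Haar^{(t)})^c$, giving $\|G' - G_\Haar^{(t)}\|_\infty \leq 1/d^{\Omega(c\,n^{1/D})}$. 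Moreover $\rank(G' - G_\Haar^{(t)}) \leq \dim V_S \leq (t!)^{n^{1/D}} = e^{O(n^{1/D}\log t)}$, with one permutation-factor per sub-lattice; this low rank is the single fact that controls every norm-conversion cost below.

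The second step is to approximate $G^{(t)}_{\mu^{\lattice,n}_{D,c,s}}$ by $G'$ in each of the four norms listed. With $s > c\,n^{1/D} = c\,(n^{1-1/D})^{1/(D-1)}$, the induction hypothesis says each sub-lattice circuit drawn from $\mu^{\lattice,n^{1-1/D}}_{D-1,c,s}$ is a $1/d^{\Omega(n^{1/D})}$-approximate design in all four norms, while \cite{BHH-designs} says each 1-D row circuit of size $s$ is a strong approximate $t$-design. Using the composition bounds developed for Theorem~\ref{thm:grid} — which replace each layer of inner designs by its Haar counterpart and track the accumulated error layer-by-layer — the cumulative replacement error across the $O(c)$ alternating sub-lattice/row layers remains $1/d^{\Omega(n^{1/D})}$ once $c$ is taken large enough relative to $D$ and $t$. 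Combining with the bound on $\|G' - G_\Haar^{(t)}\|_\infty$ and the rank bound then yields all four conclusions: the rank bound permits conversion between $\|\cdot\|_\infty$, $\|\Channel(\cdot)\|_\diamond$, $\|\vvec(\cdot)\|_\infty$ and $\|\cdot\|_1$ while losing only factors of $e^{O(n^{1/D}\log t)}$, which are absorbed into $\delta = 1/d^{\Omega(n^{1/D})}$ for a suitably large constant $c$, and the extra factor $1/d^{nt}$ on the monomial norm comes from the baseline size of Haar $(t,t)$-moments.

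The main obstacle is the inductive bookkeeping for norm-conversion errors: the recursive call must supply its design guarantee in a norm compatible with the outer composition (typically operator or diamond norm), yet the final statement asks for all four norms simultaneously at every level. The saving grace is that the rank of every intermediate $G' - G_\Haar^{(t)}$ encountered during the recursion is at most $e^{O(n^{1/D})}$ rather than $e^{O(n)}$, so conversion between norms never costs more than a factor we can absorb into the target $\delta$. Verifying that the inductive hypothesis can be stated as a single tuple of four simultaneous norm bounds, and that a uniform constant $c = c(D,t)$ closes the recursion across all four, is the delicate bookkeeping that makes the generalization from $D=2$ to arbitrary $D$ work.
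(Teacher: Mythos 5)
Your overall architecture --- induction on $D$, the decomposition into $(D-1)$-dimensional sub-lattices (``planes'') and rows in the $D$-th direction, the quasi-projector $G'=(G_S^{(t)}G_R^{(t)})^c$, the angle/quasi-orthogonality bound, the rank bound $\dim V_S\le t!^{\,n^{1/D}}$, and the hybrid argument replacing each layer by its Haar idealization --- is the same as the paper's (Sections \ref{sec:proof2} and \ref{sec:overlap-proofs}). The identity $G'-G_\Haar^{(t)}=(G_S^{(t)}G_R^{(t)}-G_\Haar^{(t)})^c$, the $t!/d^{nt}$ Haar-moment scale for the monomial norm, and the observation that the induction hypothesis at scale $n^{1-1/D}$ delivers exactly $1/d^{\Omega(n^{1/D})}$ are all correct and match the paper.

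The gap is in the diamond-norm item. There is no generic conversion from $\|X\|_\infty$ and $\rank(X)$ to $\|\Channel[X]\|_\diamond$: the map $\Channel$ does not interact with the singular-value decomposition of $X$, so ``rank times operator norm'' does not bound the diamond norm. The paper's Lemma \ref{lem:diamondD} instead expands $(G_{\text{Rows}(D,n)}G_{\text{Planes}(D)}G_{\text{Rows}(D,n)})^c-G_\Haar^{(t)}$ in the non-orthogonal basis $\ket{D_a}\bra{D_b}$ labeled by tuples of permutations, uses $\|\tfrac{1}{d^{nt}}V(a)V^\ast(b)\|_\diamond\le 1$, and controls the coefficient matrix through the Gram matrix of that basis; the resulting conversion factor is $t^{O(t\,n^{1-1/D})}$, not $e^{O(n^{1/D}\ln t)}$, because the number of permutation labels is the number of rows, $n^{1-1/D}$. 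Consequently the angle bound you assert, $\|G_S^{(t)}G_R^{(t)}-G_\Haar^{(t)}\|_\infty\le 1/d^{\Omega(n^{1/D})}$, is too weak to absorb that factor once $D\ge 3$: you would need $c$ to grow like $n^{1-2/D}$, contradicting the requirement that $c$ depend only on $D$ and $t$. The paper's Lemma \ref{lem:generalDproj} proves the stronger bound $\|G_{\text{Planes}(D)}G_{\text{Rows}(D,n)}-G_\Haar\|_\infty\le 1/d^{\Omega(n^{1-1/D})}$ --- the exponent is the size of a sub-lattice, not of a row --- and it is this stronger exponent that makes $t^{O(t\,n^{1-1/D})}/d^{\Omega(c\,n^{1-1/D})}$ small for $c=O(t\ln t)$. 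The monomial, operator-norm, and trace-norm items go through essentially as you describe.
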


In order to understand the implication of this result for anti-concentration, let's first define 
\begin{definition}[Anti-concentration] We say a family of circuits $\mu$ satisfy the $(\alpha,\beta)$ anti-concentration property if for any $x \in \{0,1\}^n$
\be
\Pr_{U \sim \mu} (|\braket{x|U|0}|^2 \geq \frac \alpha{2^n}) \geq \beta
\ee
\label{def:anti-concentration}
\end{definition}
As mentioned before, unitary $2$-designs imply strong anti-concentration bound. In particular
\begin{theorem}
Let $\mu$ be a an $\epsilon$-approximate $2$-design in the monomial measure. Then $\mu$ satisfies the $(\alpha, \beta)$ anti-concentration property for $\alpha = \delta (1-\eps)$, $\beta = \frac{(1-\delta)^2 (1-\eps)^2}{2 (1+\eps)}$ and $0 \leq \delta \leq 1$.
\label{thm:anti-concentration}
\end{theorem}

\begin{proof}  (See Appendix \ref{sec:supremacy} and also Theorem 5 of \cite{HBSE18}). The proof is based on the Paley-Zigmond anti-concentration inequality: for a non-negative random variable $X$ and $\delta >0$ we have
\be
\Pr (X \geq \delta \cdot \E X) \geq (1-\delta)^2 \frac {\E(X)^2}{\E(X^2)}.
\label{eq:PZ}
\ee

\end{proof}
We remark that based on the result of \cite{DHJB20}, while sufficient, the $2$-design property is not necessary for anti-concentration. In Section \ref{sec:2Dmain} we give an alternative proof for anti-concentration in $O(D \cdot n^{\frac 1D})$ depth based on different ideas. The method directly implies anti-concentration and not the approximate $2$-design property.

For these spatially local circuits we also improve on some bounds in \cite {BF13-2} and
\cite{BF13} about scrambling and decoupling, removing polylogarithmic factors. 
Here we give an informal statement of the result with full details and definitions found
in Section \ref{sec:scr}. 
\begin{theorem}[Informal]
Random quantum circuits acting on $D$-dimensional lattices composed of $n$ qubits are scramblers and decoupler in the sense of \cite {BF13-2} and \cite{BF13} after $O(D \cdot n^{1/D})$ number of steps.
\end{theorem}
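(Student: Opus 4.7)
The plan is to derive both claims directly from the approximate $t=2$ design property established in \thmref{lattice}. Scrambling in the sense of \cite{BF13-2}, namely the $C$-expectation of $\|\rho_S(s)-2^{-|S|}I_S\|_1^2$, and the one-shot Renyi-$2$ decoupling inequality used in \cite{BF13}, both reduce to controlling a single second-moment functional of $C$, so the diamond-norm bound in part 2 of \thmref{lattice} specialised to $t=2$ will do all the work.

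For scrambling, I would use the swap trick to write
\be
\E_C \tr(\rho_S(s)^2) \;=\; \tr\!\L(F_S \cdot \Channel\!\L[G^{(2)}_\mu\right]\!(\rho_0\ot\rho_0)\right),
\ee
where $F_S$ swaps two copies of $S$. Subtracting the Haar value and using $\|F_S\|_\infty = 1$ with the definition of the diamond norm then gives
\be
\L|\E_C \tr(\rho_S^2) - \E_\Haar \tr(\rho_S^2)\right| \;\le\; \L\|\Channel\!\L[G^{(2)}_\mu - G^{(2)}_\Haar\right]\right\|_\diamond,
\ee
which by part 2 of \thmref{lattice} at $t=2$ is at most $\delta = 1/d^{\Omega(n^{1/D})}$ once the depth exceeds $c \cdot n^{1/D}$. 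Converting from the collision (Renyi-$2$) distance of $\rho_S$ from maximally mixed to the squared trace distance costs a factor $2^{|S|}$ via Cauchy--Schwarz, and this loss is dominated by the inverse-exponential $\delta$. Decoupling proceeds identically: the standard proof expresses $\E_U \|\rho_{BR}-\pi_B\ot\rho_R\|_1^2$ as a linear functional $\tr\!\L(M\cdot \Channel\!\L[G^{(2)}_\mu\right]\!(\sigma)\right)$ with bounded $M$ and input-dependent $\sigma$, and the same diamond-norm inequality closes the argument.

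Finally, the depth $O(D\cdot n^{1/D})$ emerges by unrolling the recursion in the $D$-dimensional model: with $t=2$ the constants $c$ in \defref{modelsD} and its higher-dimensional analogue can be taken absolute, and the recurrence $\mathrm{depth}(D,n) = O(\mathrm{depth}(D-1, n^{1-1/D}) + n^{1/D})$ solves by induction on $D$ to $O(D \cdot n^{1/D})$. The main obstacle I anticipate is careful bookkeeping rather than any new idea: the $2^{|S|}$ blow-up in the $\ell_2$-to-$\ell_1$ conversion must not swamp the polylogarithmic savings over \cite{BF13-2,BF13}, and the constants in the design recursion must be shown not to explode with $D$. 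Both are manageable because \thmref{lattice} delivers a $\delta$ that is already inverse exponential in $n^{1/D}$ at the fixed cost $c \cdot n^{1/D}$ in depth.
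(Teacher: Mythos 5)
Your first step (the swap trick reducing $\E_C\Tr\rho_S(C)^2$ to a linear functional of $\Channel[G^{(2)}_\mu]$) matches the paper's, but the error-control step fails, and it fails at exactly the norm-conversion issue this paper is organized around. A diamond-norm bound gives only an \emph{additive} error: $\E_\mu\Tr\rho_S^2 \leq \E_\Haar\Tr\rho_S^2 + \delta$ with $\delta = d^{-\Omega(n^{1/D})}$. The scrambling definition requires $\E_C\|\rho_S - I/2^{|S|}\|_1^2\leq \eps$ for $|S|$ as large as $n/3$, and the Cauchy--Schwarz conversion gives $\E_C\|\rho_S - I/2^{|S|}\|_1^2 \leq 2^{|S|}\E_C\Tr\rho_S^2 - 1$, so the additive $\delta$ is multiplied by $2^{|S|} = 2^{\Theta(n)}$. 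Your claim that ``this loss is dominated by the inverse-exponential $\delta$'' is false for every $D\geq 2$: $2^{n/3}\cdot d^{-\Omega(n^{1/D})}$ is exponentially large, not small. This is precisely the $2^{O(n)}$ norm-conversion cost that the introduction warns cannot be paid in sublinear depth.

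The paper avoids this by never comparing to global Haar in an additive norm. Instead it uses the cp-ordering comparison (\lemref{comparison} / \corref{overlappingdesigns}): since the functional is $\Tr\left(\rho\ot\rho\,\Channel[G^{(2)}_\mu](A)\right)$ with both $\rho\ot\rho$ and $A$ psd, the strong-design property of each row/column block (from \cite{BHH-designs}) yields the \emph{multiplicative} bound
\be
\Tr\left(\rho\ot\rho\,\Channel[G^{(2)}_\mu](A)\right) \leq (1+\eps)^{D}\,\Tr\Bigl(\rho\ot\rho\,\prod_{i}\Channel[G_i](A)\Bigr),
\ee
so the relative error is $(1+\eps)^D - 1$ regardless of $|S|$. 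This leaves a second substantive task your proposal skips entirely: the idealized product $\prod_i\Channel[G_i]$ of Haar projectors on rows is \emph{not} the global Haar projector, and one must compute its value directly. The paper does this via the Pauli-string Markov chain (showing the probability that the string has weight $\leq n/3$ after applying $\prod_i\Channel[G_i]$ is $\poly(n)/2^n$, reusing the machinery of Section \ref{sec:2Dmain}); the decoupling claim requires the analogous hitting estimate that a weight-$\leq m$ string reaches weight $\geq n/2$ with probability $1-1/4^m$. Without replacing your diamond-norm step by this multiplicative comparison plus the explicit Markov-chain computation, the argument does not go through.
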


Our last result concerns the fully connected model.  If $s = O(n \ln^2 n)$ and $d=2$ then $\mu^{\text{CG}}_{s}$ satisfies the
anti-concentration criterion according to Definition \ref{def:anti-concentration} for constant $\alpha$ and $\beta$, i.e., \eqref{eq:zyg0}. We phrase our result in terms
of the expected ``collision probability'' of the output distribution of $C \sim
\mu^{\text{CG}}_s$ from which a bound similar to the one in theorem \ref{thm:anti-concentration} will follow using the Paley-Zygmond
inequality \eqref{eq:PZ}. 
In particular, if $C$ is a quantum circuit on $n$ qubits, starting from $\ket{0^n}$  the collision probability is  
\be
\Coll(C) := \sum_{x \in \{0,1\}^n}  |\braket{x|C|0} |^4.
\ee
For the Haar measure $\E_{C \sim \mathrm{ Haar}}\Coll(C) = \frac{2}{2^n+1}$,
and for the uniform distribution this value is $1/2^n$. In contrast, a depth-$1$ random
circuit has expected collision probability $ (\sqrt{\frac{2}{5}} )^n$, which is
exponentially larger than what we expect from the Haar measure.

\begin{theorem}\torestate{\OldNormalFont{}
There exists a  $c$ such that when $s > c n \ln^2 n$,
\be
\E_{C \sim \mu^{\text{CG}}_s} \Coll(C) \leq \frac{29}{2^n}.
\ee
Moreover if $t \leq \frac 1 {3 c'}  n \ln n$ for some large enough  $c'$, then 
\bea
\E_{C \sim \mu^{\text{CG}}_s} \Coll(C) \geq \frac {1.6 ^{n^{1-1/c'}}} {2^n}.
\eea
\label{thm:cg}}
\end{theorem}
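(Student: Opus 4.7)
The plan is to reduce the 2-moment calculation to a classical Markov chain via the Oliveira--Dahlsten--Plenio framework, then analyze the resulting biased birth-death chain in the exponentially-weighted norm demanded by $\Coll$. Expanding the state $\E_C(C|0^n\rangle\langle 0^n|C^\dagger)^{\otimes 2}$ in the basis $\{P_x := \bigotimes_i P_{x_i}\}_{x \in \{0,1\}^n}$ with $P_0 = I$ and $P_1 = F$ (swap between the two copies of one qubit), a short calculation shows that a two-qubit Haar average sends mixed pairs $\{(I,F),(F,I)\}$ to $\tfrac{2}{5}(I\otimes I + F \otimes F)$ while fixing $\{(I,I),(F,F)\}$. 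After the rescaling $q_x := 2^{2n-|x|}\,a_x$, the $q_x$ form a genuine probability distribution on $\{0,1\}^n$ preserved by the dynamics, and a trace calculation gives
\begin{equation*}
\E_C \Coll(C) = 2^{-n}\, \E_{x \sim q^{(s)}}\bigl[2^{|x|}\bigr].
\end{equation*}
Since gates act on uniformly random pairs, the chain is permutation-invariant and projects to a birth-death chain on the Hamming weight $w = |x|$: from weight $w$, with probability $p_w := 2w(n-w)/(n(n-1))$ a mixed pair is chosen and the weight moves to $w-1$ w.p.\ $4/5$ or $w+1$ w.p.\ $1/5$; otherwise $w$ is unchanged. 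The chain is absorbing at $w=0$ and $w=n$, and the gambler's-ruin formula with ratio $4$ reproduces $\E_\Haar \Coll = 2/(2^n+1)$. The initial distribution is $q^{(0)}_w = \binom{n}{w}(2/3)^n 2^{-w}$ with $\E_{q^{(0)}} 2^w = (4/3)^n$.

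For the upper bound, the quantity $M_s := \E 2^{|y^{(s)}|}$ is a supermartingale: $\E[2^{|y^{(s+1)}|} \mid y^{(s)} = w] = 2^w(1 - p_w/5)$, monotonically decreasing from $(4/3)^n$ toward $\approx 2$. The naive per-step factor $(1 - \min_w p_w/5) = 1-\Theta(1/n)$ only yields $O(n^2)$ steps to mix; the additional $\log n$ factor in the $O(n \ln^2 n)$ bound arises because $M_s$ is exponentially sensitive to atypical trajectories, so classical total-variation (Ehrenfest-style) mixing bounds do not suffice. The approach is to track each initial mass element separately: conditioned on starting at weight $w_0 \leq n/2$, the mean hitting time to $\{0,n\}$ is $\sum_{j \leq w_0} n^2/(j(n-j)) = O(n \ln w_0)$, obtained by combining the $4{:}1$ drift (which needs $O(w_0)$ useful moves) with the $\Theta(j(n-j)/n^2)$-rate of useful moves at weight $j$. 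A Bernstein-type tail bound on the hitting-time distribution then shows that after $s = Cn \ln^2 n$ steps the probability of non-absorption is $\leq 1/\poly(n)$, so the aggregate contribution of unabsorbed trajectories to $M_s$ is $\leq 2^{w_0}/\poly(n) = O(1)$, while the absorbed trajectories contribute another $O(1)$ by the limiting absorption probabilities. Summing yields $M_s \leq 29$. The hardest step is establishing this Bernstein-type hitting-time concentration under the heavy-tailed $\Theta(w(n-w)/n^2)$ waiting times, which is precisely the ``heavy-tailed delay'' issue flagged in the introduction.

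For the lower bound, for $s \leq \tfrac{1}{3c'} n \ln n$ the random \emph{interaction graph} $H_s$ (an edge joining two qubits whenever some gate has touched that pair) is an Erd\H{o}s--R\'enyi-type graph with $s$ edges; since the circuit factorizes as a tensor product across connected components of $H_s$, collision probability factorizes as $\Coll(C) = \prod_T \Coll(C|_T)$. A Chernoff estimate on the degree sequence shows that with high probability $\Omega(n^{1-1/c'})$ qubits are either isolated or lie in very small components of $H_s$ (a qubit is isolated with probability $((n-2)/n)^s \approx n^{-2/(3c')}$). Running the two-moment analysis of the previous paragraph in reverse on such a small component gives $\E[\Coll(C|_T)] \geq \alpha^{|T|}/2^{|T|}$ with an explicit $\alpha$ that, after bookkeeping the initial coefficient $4/3$ per untouched qubit and the $6/5$ contribution per singly-touched qubit, can be arranged to exceed $1.6 = 8/5$. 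Multiplying these excess factors over the $\Omega(n^{1-1/c'})$ small-component qubits yields $\E_C \Coll(C) \cdot 2^n \geq 1.6^{n^{1-1/c'}}$. The delicate combinatorial accounting to track the excess factor across different small-component topologies, and to pin down the exponent $1-1/c'$, is the remaining technical work.
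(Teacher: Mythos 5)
Your swap-basis ($P_0=I$, $P_1=F$) reduction is a legitimate and genuinely different parametrization from the paper's, which works in the Pauli basis and tracks the Pauli Hamming weight $k$ with the exponentially weighted norm $\sum_k \Pr[X_t=k]/3^k$; your chain (4:1 downward bias on mixed pairs, observable $2^{|x|}$, absorbing at $0$ and $n$) is the correct dual picture, and your verification against the Haar value $2/(2^n+1)$ checks out. However, the upper bound has a genuine gap at exactly the step the paper identifies as the hard one. Your claim that the unabsorbed trajectories contribute at most $2^{w_0}/\poly(n) = O(1)$ fails twice over. First, the initial distribution $q^{(0)}_w = \binom{n}{w}(2/3)^n 2^{-w}$ concentrates at $w_0 = \Theta(n)$, where $2^{w_0}/\poly(n)$ is exponentially large. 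Second, and more fundamentally, what must be bounded is $\E\bigl[2^{|y^{(s)}|}\,\mathbf{1}[\text{not absorbed}]\bigr]$, and an unabsorbed trajectory can sit at any $w \in \{1,\dots,n-1\}$, including $w = n - O(1)$ where $2^w \approx 2^n$ swamps any polynomial bound on $\Pr[\text{not absorbed}]$. A hitting-time tail bound (Bernstein-type or otherwise) gives you only $\Pr[\text{not absorbed by } s] \le 1/\poly(n)$, which multiplied by the worst-case value $2^{n-1}$ is useless. What is actually needed is pointwise-in-$w$ control of the unabsorbed mass against the exponential weight, i.e.\ $\Pr[|y^{(s)}|=w,\ \text{not absorbed}] \ll 2^{-w}$ uniformly in $w$. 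This is precisely why the paper does not stop at a hitting-time estimate: it exactly diagonalizes a partially accelerated chain (via Kac's method and Krawtchouk polynomials) to get the full distribution at time $s$, and then separately controls the heavy-tailed wait times with Janson's geometric-sum tail bound and a monotonicity argument to convert the randomly-delayed time back to a fixed time. Nothing in your sketch substitutes for that pointwise control.

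Two further issues. Your assertion that ``the $q_x$ form a genuine probability distribution preserved by the dynamics'' is false at time $0$: the initial state $\ket{0^n}\bra{0^n}^{\ot 2}$ is not in $\mathrm{span}\{P_x\}$, and a qubit only gets projected into $\mathrm{span}(I,F)$ once a gate has touched it. An untouched qubit contributes a factor $1$ to the trace rather than the $2/3$ your $q^{(0)}$ assigns it, and at depth $O(n\ln^2 n)$ the probability of missing a qubit is $n^{-O(\ln n)}$, not exponentially small, so this must be tracked explicitly (the paper does this with the hit-set $H_t$ and a coupon-collector bound). For the lower bound, your interaction-graph/small-component argument is plausibly completable but much heavier than needed, and your exponent bookkeeping ($n^{1-2/(3c')}$ isolated qubits versus the target $n^{1-1/c'}$) does not yet close; the paper simply lower-bounds by the probability that the initial Pauli string never changes, $\Pr[S_t=p\mid S_0=p]\ge e^{-3|p|t/n}$, and sums the resulting binomial series.
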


\begin{proof}[\OldNormalFont{} Proof Sketch]
  For the upper bound, we translate the convergence time of the
  expected collision probability to the mixing time of a certain
  classical Markov chain (which we call $X_0,X_1,\ldots$). This Markov chain 
  has also been considered in previous work \cite{ODP06,HL09,BF13}. Part of our contribution is to analyze this Markov chain in a new norm.
  The Markov chain has $n$
  sites labeled as $1,\ldots,n$, and at each site $x$ it will move
  only to $x-1$, $x$ or $x+1$.  Such chains are known as ``birth and
  death'' chains, and in our case it results from representing the
  state of the system by a Pauli operator and then taking $x$ to be
  the Hamming weight of that Pauli operator.  It is known~\cite{ODP06}
  that the probability of moving to site $x+1$ is
  $\approx \frac 65 \frac{x(n-x)}{n^2}$ and the probability of moving
  to site $x-1$ is $\approx\frac 25 \frac{x(x-1)}{n^2}$.  The major
  difficulty in proving mixing for this Markov chain is that the norm
  which we have to prove mixing in is exponentially sensitive to small
  fluctuations (measured in either the 1-norm or the 2-norm).  Indeed, given starting condition 
\be
\Pr[X_0 = k] = \frac{{n \choose k}}{2^n-1}.
\ee
we would like to show that
 \be
\E_C[\Coll(C)] \approx  \sum_{k=1}^n\frac{\Pr\left[X_t = k\right]}{3^k},
\label{inio}
\ee
is  $\leq O(2^{-n})$.  We can think of \eqref{inio} as a
weighted 1-norm on probability distributions.

Our proof will compute the 
distribution of $X_t$ for
$t = O(n \ln^2 n)$ nearly exactly.  One distinctive feature of this
chain is that when $k/n \ll 1$, the probability of moving is $O(k/n)$
and the chain is strongly biased to move towards the right. When $k/n$
reaches $O(n)$, the chain becomes more like the standard discrete
Ehrenfest chain, which is a random walk with a linear bias towards (in
this case) $k=\frac 34 n$.  Thus the small-$k$ region needs to be
handled separately.  This is especially true for anti-concentration
thanks to the $1/3^k$ term in \eqref{inio}, so that even a small probability of waiting
for a long time in this region can have a large effect on the collision probability.

The approach of \cite{HL09,DJ10,BF13} has been to relate the original Markov chain to an
``accelerated'' chain which is conditioned on moving at each step.  The status of the
original chain can be recovered from the accelerated chain by adding a geometrically
distributed ``wait time'' at each step.  Then standard tools from the analysis of Markov
chains, such as comparison theorems and log-Sobolev inequalities, can be used to bound the
convergence rate of the accelerated chain.  Finally, it can be related back to the original
chain by arguing that the accelerated chain is unlikely to spend too long on small values
of $k$, allowing us to bound the wait time.  For our purposes, this process does not
produce sharp enough bounds, due to the heavy-tailed wait times combined with fairly weak
bounds on how quickly the accelerated chain converges and leaves the small-$k$ region.

We will sharpen this approach by incompletely accelerating; i.e.,
we will couple the original chain to a chain that moves with a
carefully chosen (but always $\Omega(1)$) probability.  In particular,
we will introduce a chain where the probabilities of moving from $x$
to $x-1$, $x$ or $x+1$ are each affine functions of $x$.  In fact our
new ``accelerated'' chain is only accelerated for $x< \frac 56n$
and is actually more likely to stand still for $x\geq \frac 56n$.
This will
allow us to exactly solve for the probability distribution of the
accelerated chain after any number of steps, using a method of Kac to relate this
distribution to the solution of a differential equation.   Our solution can be expressed
simply in terms of Krawtchouk polynomials, which have appeared in other exact solutions to
random processes on the hypercube.
  We relate
this back to the original chain with careful estimates of the mean and
large-deviation properties of the wait time.  This ends up showing
only that the collision probability is small for $t$ in some interval
$[t_1,t_2]$, and to show that it is small for a specific time, we need
to prove that the collision probability decreases monotonically when
we start in the state $\ket{0^n}$.  A further subtlety is that
\eqref{inio} technically only applies when all qubits have been hit by
gates and we need to extend this analysis to include the
non-negligible probability that some qubits have never been acted on
by a gate.

Because previous work achieved quantitatively less sharp bounds, they could omit some
of these steps.  For example, \cite{DJ10, HL09} used $O(n^2)$ gates, which meant
that the probability of most bad events was exponentially small.  By
contrast, in depth $O(n\ln^2(n))$, there is probability $n^{-O(\ln
  n)}$ of missing at least one qubit and so we cannot afford to let
this be an additive correction to our target collision probability of
$\text{constant} \cdot 2^{-n}$.  Likewise, \cite{BF13} used only $O(n
\ln^2(n))$ gates but achieved a collision probability of
$2^{\eps n-n}$ for small constant $\eps$, which allowed them to use a simpler version of
the accelerated chain whose convergence they bounded using generic
tools from the theory of Markov chains.

For the lower bound we just consider the event that the initial Hamming
 weight does not change throughout the process. 
The initial state with Hamming
weight $k$ has probability mass $\Pr[X_0=k]=\frac{{n\choose k}}{2^n-1}$.  Starting with Hamming weight $k$, the probability
of not moving in each step is $e^{-O(k/n)}$, so if  $t= c n \ln n$ for $c \ll 1$ then
we have $\Pr[X_t=k | X_0=k] \geq e^{-O(k t/n)}$. Hence 
\be
\E_{C \sim \mu_t} \Coll(C) \geq \sum_{k = 1}^n \frac{{n \choose k}}{2^n-1}\frac{\Pr[X_t=k | X_0=k]}{3^k} \geq\sum_{k = 1}^n \frac{{n \choose k}}{2^n-1}\frac{e^{-O(kt/n)}}{3^k} \approx \frac 1 {2^n}(1 + e^{- 3 t/n})^n \geq \frac {2^{n^{1-O(1)}}}{2^n}
\ee

\end{proof}

A natural question is whether there is a common generalization of our Theorems
\ref{thm:lattice} and \ref{thm:cg}.
In physics, the $D\ra\infty$ limit is often considered a good proxy
for the fully connected model. This raises the question of whether we
needed \thmref{cg} to handle the fully connected case, or whether it
would be enough to use \thmref{lattice} in the large $D$ limit.
However, Theorem \ref{thm:lattice} works only for $D = O(\ln n/ \ln \ln
n)$, and the best depth bound we can get from this theorem is
$e^{O(\ln n/\ln \ln n)}$, which is far above the $O(\ln^2(n))$
achievable by \thmref{lattice}.  
However, in Section \ref{sec:2Dmain} we give an alternative proof for anti-concentration
of outputs via circuits on $D$-dimensional circuits with $t=2$ and $D = O(\ln n)$. Using that approach we can make the depth as small as $O(\ln n \ln \ln n)$. We conjecture that $O(\ln n)$ depth should also possible.

In order to establish rigorous bounds, our results involve some inequalities that are not
always tight. As a result, the upper bound on collision probability in \thmref{cg} has a
factor of 29 rather than the $2+o(1)$ that we would expect and the bound on the number of
gates required may be too high by a factor of $\ln(n)$.  Since determining the precise
number of gates needed for anti-concentration may have utility in near-term quantum
hardware, we also undertake a heuristic analysis of what depth seems to be required to
achieve anti-concentration.  Here we ignore the possibility of large fluctuations in the
wait time, for example, and simply set it equal to its expected value.  We also freely
make the continuum approximation for the biased random walk that ignores wait time,
obtaining the Ornstein-Uhlenbeck process.  The resulting analysis (found in
\secref{constants}) suggests that $\frac 56 n \ln n + o(n \ln n)$ gates are needed to
achieve anti-concentration comparable to the Haar measure.

This result can also be useful for understanding the near-term power
  of certain variational quantum algorithms, such as VQE and QAOA.
\cite{MBSBN18, CSVCC20} show that when a gate sequence is
 drawn from a 2-design, the gradients used for optimizing VQE and
 other algorithms become exponentially small.    This is called the
 ``barren plateau'' phenomenon.  Our result would suggest that this
 occurs in 2-D circuits once the depth is $\gtrsim \sqrt n$.

\subsection{Previous work}
\label{Intro:prev}

The time evolution of the 2nd moments of random quantum circuits was
first studied by Oliveira, Dahlsten and Plenio~\cite{ODP06}, who
investigated their entanglement properties.  This was extended by
\cite{HL09,DJ10} to show that after linear depth, arndom circuits on
the complete graph yield approximate 2-designs.  In \cite{BHH-designs}
Brand\~ao-Harrow-Horodecki (BHH) extended this result and showed that
for a $1D$-lattice after depth
$t^{10.5} \cdot O(n + \ln \frac{1}{\epsilon})$ these random quantum
circuits become $\eps$-approximate $t$-designs.  This result was
subsequently improved to
$t^{5 + o_t(1)} O(n + \ln \frac{1}{\epsilon})$ by Haferkamp
\cite{haferkamp2022random}. All of these results (except \cite{ODP06}) directly imply
anti-concentration after the mentioned depths.  The construction of
$t$-designs in \cite{BHH-designs} is in a stronger measure than the
one in HL \cite{HL09}. The gap of the second-moment operator was
calculated exactly for $D=1$ and fully connected circuits by
{\v{Z}}nidari{\v{c}} \cite{Z08} and a heuristic estimate for the
$t^{\text{th}}$ moment operator was given by Brown and Viola for fully
connected circuits~\cite{BV10}.

In \cite{BF13,BF13-2} Brown and Fawzi considered ``scrambling'' and ``decoupling'' with random quantum circuits. In particular, they showed for a $D$-dimensional lattice scrambling occurs in depth $O(n^{1/D} \operatorname{polylog}(n))$, and for complete graphs, they showed that after polylogarithmic depth these circuits demonstrate both decoupling and scrambling. For the case of $D$-dimensional lattices they showed that for the Markov chain $K$, after depth $n^{1/D} \operatorname*{polylog}(n)$, a string of Hamming weight $1$ gets mapped to a string with linear Hamming weight with probability $1-1/\operatorname*{poly}(n)$. 
While this result is related to ours, it does not seem to yield the results we need
e.g.~for anti-concentration, due to the powers of Hilbert space dimension that are lost when
changing norms.

In \cite{NRVH17,NRVH16} Nahum, Ruhman, Vijay and Haah considered operator spreading for random quantum
circuits on $D$-dimensional lattices. They considered the case when a single Pauli operator
starts from a certain point on the lattice and they analyze the probability that after a certain time
a non-identity Pauli operator appears at an arbitrary point on the lattice. For $D=1$ they
showed that this probability function satisfies a biased diffusion equation. Their result in this case is exact. For $D=2$ they explained, both numerically and theoretically, that this
probability function spreads as an almost circular wave whose front satisfies the one
dimensional Kardar-Parisi-Zhang equation. They moreover explained: 1) the bulk of the occupied region is in
equilibrium, 2) fluctuations appear at the boundary of this region with  $\sim
t^{1/3}$, and 3) the area of the occupied region grows like $t^2$, where $t$ is the depth
of the circuit. As far as we understand this result does not directly lead to the
construction of $t$-designs and rigorous bounds on the quality of the approximations made
in that paper are not known.

If we assume that qudits have infinite local dimension ($d  \rightarrow \infty$) then the
evolution of  Pauli strings
on a 2-D lattice is closely related to Eden's model \cite{E61}. Here, Eden has
found certain explicit solutions. However, apart from the $d\ra\infty$ limit, his model
differs from ours also in that his considers only starting with a single occupied site and
running for a time much less than the graph diameter (or equivalently, considering an
infinitely large 2-D lattice), while we consider the initial distribution obtained by
starting in the $\ket{0^n}$ state.

After the first preprint version of this paper was posted online, \cite{DHJB20} improved on our results in several ways. Unlike what we expected, they proved that random quantum circuits acting on linear chains or complete graphs anti-concentrated after depth $\Theta (\ln n)$. It is left as an important open question whether the same bound holds for $D = 2,3, \ldots$. They also proved one of the conjectures of this paper that the constant factor for the depth bound for the complete graph model is $5/6$. The initial presentation of this result had a mistake in the heuristic reasoning and predicted the constant factor to be $5/3$. This was pointed out and corrected in \cite{DHJB20}.


\subsection{Open questions}
\label{Intro:open}
\begin{enumerate}
\item Is it possible to construct ``strong'' $t$-designs (Definition
  \ref{def:strongdesigns}) using sub-linear depth random circuits? If we can show that the
  off-diagonal moments (see Definition \ref{def:offdiagonal}) of the distribution, which
  have expectation zero according to the Haar measure, become smaller than $1/d^{3nt}$ in
  sub-linear depth, then our construction of monomial designs implies the construction of
  strong designs. On the other hand, we cannot rule out the possibility that strong designs require linear depth.

\item How large are the constant factors in bounds reported in this
  paper? Based on a heuristic argument in Section \ref{sec:constants}
  for the complete graph architecture we conjecture that such random
  circuits of size $s = \frac{5}{6} n (\ln n + \epsilon)$ are
  $O(\eps)$-approximate $2$-designs. See Conjecture
  \ref{conj:exact-constant} for a precise conjectured bound for obtaining $2$-designs.
 In work appearing after the first version of our paper, Ref.~\cite{DHJB20} proved this
  conjecture for anti-concentration. Our result had achieved an
  upper-bound of $O(n \ln^2 n)$.

\item We believe our dependence on $n$ is essentially optimal. But our
  depth scales with $t$ as $t^\alpha$ for some $\alpha\gtrsim 5$ that is almost
  certainly not optimal.  At the moment the
  best lower bound is $\Omega(t\ln n)$ depth for any circuit, or
  $\Omega(n^{1/D})$ in $D$ dimensions.  Indeed, very recently
  \cite{HJ19} provided strong analytical evidence that for the
  one-dimensional architecture, $\alpha=1$ for $D=1$.  The argument, however,
  contains uncontrolled approximations and is not known to extend even
  to $D=2$, although such an extension seems plausible. Intriguingly, also for constant $n$ and with a
  different gate model, some results are known that are completely independent of $t$~\cite{BG11}.

\item If we pick an arbitrary graph and apply random gates on the edges of this graph,
  after what depth do these circuits become $t$-designs? We conjecture that if the graph
  has large expansion and diameter $l$, then the answer is $O(l)$. However, if the graph
  has a tight bottleneck (like a binary tree), then even though the graph has small
  diameter, we suspect that certain measures of $t$-designs (including the monomial
  measure) require linear depth.  Ideally, the $t$-design time for any graph could be
  related to other properties of the graph such as mixing time, cover time, etc.

\item Can we prove a comparison lemma for random circuits, i.e., can we show that if two
  random circuits are close to each other, then they become $t$-designs after roughly the
  same amount of time? Such comparison lemma may imply that other natural families of
  low-depth circuits are approximate $t$-designs.  A related question is whether deleting
  random gates from a circuit family can ever speed up convergence to being a $t$-design.
  Such a bound has been called a   ``censoring'' inequality in the Markov-chain literature.

\item Our results do not say much about the actual constants that appear in the asymptotic
  bounds for the required size for anti-concentration. We conjecture the leading term in
  the anti-concentration time for random circuits on complete graphs is $\frac 56 n \ln
  n$.   

For the $D$-dimensional case our bounds inherit constant factors from \cite{BHH-designs}. Simple numerical simulation and also the analysis of \cite{NRVH17,NRVH16, BISBDJMN16} suggest that the constant should be $\approx 1$.

\item For the case of $D$-dimensional circuits, our result does not say much about the dynamics of the distribution when depth is $\ll n^{1/D}$. Such a result may explain the dynamics of entanglement in random circuits. \cite{NRVH17,NRVH16} consider this problem for the case when a single Pauli operator starts at the middle of the lattice; however, their result does not apply to arbitrary initial operators.

\item The best anti-concentration lower bound we are able to prove is $\Omega (\ln n)$ For $D$-dimensional lattices one would expect a lower-bound of $\Omega(n^{1/D})$ based on the following intuition for circuits of depth $s <
n^{1/D}$:  For $s \ll n^{1/D}$, we expect any two non-overlapping clusters of $s^D$ qubits will be close to Haar random. Hence, a crude model for such circuits would be $n/s^D$
copies of Haar-random unitaries each on $s^D$ qubits.  In this case we would expect the collision
probability to be $ \approx \frac{2^{n/s^D}}{2^n}$. Very
interestingly, the recent result \cite{DHJB20} refutes this intuition
for $D=1$ and showed an upper bound of $O(\ln n)$ for the depth at
which anti-concentration is achieved. It seems plausible that at $D =
2,3, \ldots$ we would also have anti-concentration in depth $O(\ln n)$
since it holds both for $D=1$ and for fully connected circuits.

\end{enumerate}


\section{Preliminaries}
\label{sec:prelim}
\subsection {Basic definitions}
\label{sec:definitions}

We need the following norms:
\begin{definition} \OldNormalFont{}
For a superoperator
$\mathcal{E}$  the diamond norm~\cite{KSV02} is defined as
$\|\mathcal{E}\|_\diamond := \sup_d \|\mathcal{E} \tensor \Id_d\|_{1  \rightarrow 1}$,
where for a superoperator $A$ the $1  \rightarrow 1$ norm is defined as$
\|A\|_{1  \rightarrow 1} := \sup_{X \neq 0} \frac{\|A (X)\|_1}{\|X\|_1}$.
\label{def:norms}
\end{definition}

A matrix is called positive semi-definite (psd) if it is Hermitian and has all non-negative eigenvalues. A superoperator $\mathcal A$ is called completely positive (cp) if for any $d \geq 0$, $\mathcal A \tensor \text{id}_d$ maps psd matrices to psd matrices. A superoperator is called trace-preserving completely positive (tpcp) if it maps if it preserves the trace and is furthermore cp.

Let $S$ be a set of qudits, then
\begin{definition} \OldNormalFont{} $\Haar (S)$ is the Haar measure on $U ((\C^d)^{\tensor |S|} )$. We refer to $\Haar(i,j)$ as the two qudit Haar measure on qudits indexed by $i$ and $j$ and also if $m$ is an integer, the notation $\Haar(m)$ means Haar measure on $m$ qudits.
\label{def:haar}
\end{definition}

We now define expected monomials, moment superoperators and quasi-projectors for a distribution $\mu$ over the unitary group:
\begin{definition} \OldNormalFont{}
Let $n,t >0$ be positive integers and $\mu$ be any distribution over $n$-qudit unitary group $\text U((\C^d)^{\tensor n})$. Then $G_\mu^{(t)} := \E_{C\sim \mu}  \left[C^{\tensor t,t} \right ]$ is the quasi-projector of $\mu$. Here $C^{\tensor t,t} = C^{\tensor t} \otimes C^{\ast \tensor t}$. Also $G^{(t)}_{(i,j)} = G^{(t)}_{\Haar(i,j)}$.
Using this Definition we will also use the following quantities:
\begin{enumerate}
\item Let $i_1, j_1, \ldots, i_t, j_t, k_1, l_1, \ldots, k_t, l_t \in [d]^n$ be any $2t$-tuple of words $\in [d]^n$. Then the $i_1,\ldots,l_t$ monomial 
is the expected value of a balanced monomial of $\mu$ defined as
\be
 \E_{C \sim \mu}  \left [ C_{i_1, j_1} \ldots C_{i_t, j_t} C^\ast_{k_1, l_1} \ldots C^\ast_{k_t l_t} \right ] = \bra{i_1, \ldots,  j_t}G_\mu^{(t)}\ket{k_1, \ldots,  l_t}
\ee
$C_{a,b}$ is the $a, b$ entry of the unitary matrix $C$.


\item Let $\mathrm{ad}_X (\cdot) := X (\cdot) X^\dagger$. Then $\Channel\left[G^{(t)}_\mu\right ] := \E_{C \sim \mu} \left [\mathrm{ ad}_{C^{\tensor t}} \right ]$ is the $t^{\text{th}}$ moment superoperator of $\mu$.
\end{enumerate}
\label{def:moments}
\end{definition}


Next, we define the building blocks of our $t$-design constructions. 

\begin{definition}[Rows of a lattice] \OldNormalFont{}
For $1 \leq i \leq
n^{1-1/D}$, $r_{\alpha,i}$ is the $i$-th row of a $D$-dimensional lattice in the
$\alpha$-th direction.  We will label the qubits in row $i$ by
$(\alpha,i,1),\ldots,(\alpha,i,n^{1/D})$.  
Assume for convenience that $n^{1/D}$
is an  even integer and define the sets of pairs
$E_{\alpha,i} := \{((\alpha,i,1),(\alpha,i,2)),\ldots,
((\alpha,i,n^{1/D}-1), (\alpha,i,n^{1/D}))\}$ and
$O_{\alpha,i} := \{((\alpha,i,2),(\alpha,i,3)),\ldots,
((\alpha,i,n^{1/D}-2), (\alpha,i,n^{1/D}-1))\}$.
\label{def:rows}
\end{definition}

\begin{definition} [Elementary random circuits]\OldNormalFont{}
The elementary quasi-projector in direction $\alpha$ is
\be
g_{\text{Rows} (\alpha, n)} :=\prod_{1\leq l \leq n^{1-1/D}} \bigotimes_{(i,j) \in E_{\alpha,l}} G^{(t)}_{(i,j)} \cdot 
 \bigotimes_{(i,j) \in O_{\alpha,l}} G^{(t)}_{(i,j)}=: \prod_{1\leq l \leq n^{1-1/D}} g_{r_{\alpha,l}} .
\ee
For the 2-D lattice
$g_R$ and $g_C$ for $g_1$ and $g_2$, respectively.
\label{def:elementarycircuit}
\end{definition}

The following defines the moment superoperator and quasi-projector of the Haar measure on the rows of a $D$-dimensional lattice in a specific direction. 

\begin{definition} [Idealized model with Haar projectors on rows]
Let $1\leq \alpha \leq D$ be one of the directions of a $D$-dimensional lattice then
\be
G_{\text{Rows}(\alpha,n)} := \prod_{ 1 \leq i \leq n^{1-1/D}} G^{(t)}_{\Haar(r_{\alpha,i})}=:\prod_{ 1 \leq i \leq n^{1-1/D}} G_{r_{\alpha,i}}.
\ee
For a 2-D lattice we use
$G_R$ and $G_C$ for $G_1$ and $G_2$, respectively.
\label{def:mm}
\end{definition}

 Next, we define moment operators and projectors corresponding to the Haar measure on the sub-lattices of a $D$-dimensional lattice. We view a $D$-dimensional lattice as a collection of $n^{1/D}$ smaller lattices each with dimension $D-1$, composed of $n^{1-1/D}$ qudits. We label these sub-lattices with $\text {Planes}(D) := \{p_1, \ldots, p_{n^{1/D}}\}$.

\begin{definition}[Haar measure on sub-lattices] \OldNormalFont{}
$G_{\text{Planes}(D)} = \bigotimes_{p \in \text{Planes}(D)} G^{(t)}_{\Haar(p)}\equiv G^{(t)\tensor n^{1/D}}_{\Haar(n^{1-1/D})}$,.
\label{def:planes}
\end{definition}

\begin{definition}\OldNormalFont{}
For $d=2$, $t=2$ and a superoperator $\cA$ define 
\be
\Coll (\cA) := \Tr   \left( \sum_{x \in \{0,1\}^n} \ket{x}\bra{x} \tensor \ket{x}\bra{x} \cA  (\ket{0^n}\bra{0^n} \tensor \ket{0^n}\bra{0^n} )  \right).
\label{rho}
\ee
\label{def:coll}
\end{definition}

In particular, for a distribution $\mu$ over circuits of size $s$ the expected collision probability is defined as
\be
\Coll_s := \Coll\left (\Channel \left [G^{(2)}_{\mu}\right ]\right ).
\ee

\begin{remark}\OldNormalFont{}
For $d=2$, $t=2$ and when $\nu$ is the Haar measure on $\text{U}(4)$, $\Channel\left [G^{(2)}_{(i,j)}\right]$ is the following map in the Pauli basis:
\be
\Channel \left [G^{(2)}_{(i,j)}\right] (\sigma_p \tensor \sigma_q) = 
\begin{cases}
\sigma_0 \tensor \sigma_0 & p q = 00\\
\frac{1}{15}\sum_{s\in \{0,1,2,3\}^2 \backslash 0} \sigma_s \tensor \sigma_s& p = q \neq 00\\
0 & \text{otherwise}
\end{cases}
\ee
More generally, if $S$ is a collection of qubits, and $p, q \in \{0,1,2,3\}^S$, then
\be
\Channel\left [G^{(2)}_S\right] (\sigma_p \tensor \sigma_q) = 
\begin{cases}
\sigma_0 \tensor \sigma_0 & p q = 00\\
\frac{1}{4^{|S|}-1}\sum_{s\in \{0,1,2,3\}^{|S|} \backslash 0} \sigma_s \tensor \sigma_s& p = q \neq 00\\
0 & \text{otherwise}
\end{cases}
\ee
when $p,q \in  \{0,1,2,3 \}^S$.
\end{remark}
See \cite{ODP06,HL09} for the proof of these remarks.

\subsection{Operator definitions of the models}

\begin{definition}[Random circuits on a two-dimensional lattice]
\OldNormalFont{} The 
quasi-projector of $\mu^{\lattice,n}_{2,c,s}$ 
is $G_{{\mu^{\lattice,n}_{2,c,s}}}^{(t)} = g^s_R  (g^s_C g^s_{R} )^c$.
\label{definition:model}
\end{definition}

The generalization of this definition to arbitrary $D$ dimensions is according to:
\noindent
\begin{definition} \OldNormalFont{}
[Recursive definition for random circuits on $D$-dimensional lattices] $ $
The
quasi-projector of $\mu^{\lattice,n}_{D,c,s}$ is specified by the recursive formula: 
\be
G^{(t)}_{ \mu^{\lattice,n}_{D,c,s}} = G^{(t) \tensor n^{1/D}}_{\mu^{\lattice,n^{1-1/D}}_{D-1,c,s}}  \left(g_{\text{Rows}(D,n)}^s G^{(t)\tensor n^{1/D}}_{\mu^{\lattice,n^{1-1/D}}_{D-1,c,s}} \right)^c.
\ee 
It will be useful to our proofs to also define:
\begin{enumerate}
\item $\tilde{G}_{n,D,c} =  \left (\tilde{G}^{\tensor n^{1/D}}_{n^{1-1/D}, D-1 , c} G_{\text{Rows}(D,n)} \tilde{G}^{\tensor n^{1/D}}_{n^{1-1/D}, D-1 , c} \right)^c$
\item ${\hat {G}}_{n,D,c,s} = G_{\text{Rows}(D,n)} \tilde{G}^{\tensor n^{1/D}}_{n^{1-1/D}, D-1, c,s} G_{\text{Rows}(D,n)}$
\end{enumerate}  
\label{def:recursive}
\end{definition}
In particular, $\tilde{G}_{n,D,c,s}$ is the same as $G_{\mu^{\lattice,n}_{D,c,s}}$ except that we have replaced $g_{\text{Rows}(D,n)}^s$ with $G_{\text{Rows}(D,n)}$.
Definition \ref{definition:model} is a special case of Definition \ref{def:recursive}, but we included both of them for convenience.

\begin{definition}\OldNormalFont{}
$G^{(t)}_{\mu^{\text{CG}}_{s}} =  \left(\frac{1}{{n \choose 2}} \sum_{i \neq j} G^{(t)}_{(i,j)}
 \right)^s$.
 \label{def:cg}
\end{definition}

\subsubsection{Summary of the definitions}

 See below for a summary of the definitions:
\begin{center}
\begin{tabular} 
{  l l l}
\toprule
\textbf{Notation} & \textbf{Definition} & \textbf{Reference}\\
\midrule
$\|\cdot \|_\diamond$ & superoperator diamond norm& Definition \ref{def:norms}\\
\hline
$\|\cdot \|_p$ & matrix $p$-norm for $p \in [0, \infty]$&Definition \ref{def:norms}\\
\hline
$\Haar$ & the Haar measure& Definition \ref{def:haar}\\ 
\hline
$\Haar(S)$ & Haar measure on subset $S$ of qudits & Definition \ref{def:haar}\\
\hline
$\Haar(i,j)$ & Haar measure on qudits $i$ and $j$ & Definition \ref{def:haar}\\
\hline
$U^{\ot t,t}$ & $C^{\otimes t} \ot C^{\ast, \otimes t}$ & Definition \ref{def:moments}\\
\hline
  $G^{(t)}_\mu$ & average of $C^{\ot t,t}$ over $C\sim \mu$ &Definition \ref{def:moments}\\
\hline
$G^{(t)}_\Haar$ & Projects onto vectors invariant under $C^{\ot t,t}$&Definition \ref{def:moments}\\
\hline
$G^{(t)}_{i,j}$ & Haar projector of order $t$ on qudit $i$ and $j$&Definition \ref{def:moments}\\
\hline
$\braket{i,j| G^{(t)}_\mu|k,l}$ & moment of order $t$: $\E_{C\sim \mu} [C_{i_1, j_1} \ldots C_{i_t, j_t} C^\ast_{i_1, j_1} \ldots C^\ast_{i_t, j_t}]$&Definition \ref{def:moments}\\
\hline
$\Channel[G^{(t)}_\mu]$ & moment superoperator,  equal to $\E_{C \sim \mu}  [\mathrm{ ad}_{C^{\tensor t}} ]$&Definition \ref{def:moments}\\
\hline
$r_{\alpha,i}$ & $i$-th row in the $\alpha$ direction with
                 $i\in[n^{1/D}], \alpha\in[D]$& Definition \ref{def:rows}\\
\hline 
$\text{Rows}(\alpha,n)$ & the collection of rows of a lattice (with $n$ points) in the $\alpha$ direction & Definition \ref{def:rows}\\
\hline
$g_{\text{Rows}(\alpha,n)}$ &  \parbox[t]{0.6\textwidth}{two-qudit gates applied to even then
                              odd neighbors in each row in the $\alpha$ direction\vspace{1.5mm}} &Definition \ref{def:elementarycircuit}\\
\hline
$g_{r(\alpha,i)}$ &  \parbox[t]{0.6\textwidth}{two-qudit gates applied to even then
                              odd neighbors in the $i$-th row in the $\alpha$ direction\vspace{1.5mm}} &Definition \ref{def:elementarycircuit}\\
\hline
$g_{R}$ and $g_C$ &  $g_{\text{Rows}(1,n)}$ and $g_{\text{Rows}(2,n)}$
                    when $D=2$.&Definition \ref{def:elementarycircuit}\\
\hline
$G_{\text{Rows}(\alpha,n)}$ &  Haar projector applied  to  each row in
                              the$\alpha^{\text{th}}$ direction &Definition \ref{def:mm}\\
\hline
$G_R(G_C)$ &  Haar projector applied  to each row (column) of a $2$D lattice &Definition \ref{def:mm}\\
\hline
$G_{\text{Planes}(\alpha)}$ & Haar projector applied  to each  plane perpendicular to the direction $\alpha$ & Definition \ref{def:planes}\\
\hline
$\Coll (\cA)$ & collision probability from superoperator $\cA$&Definition \ref{def:coll}\\
\hline
$\Coll_s$ & the expected collision probability of a random circuit after $s$ steps&Definition \ref{def:coll} \\
\hline
$\mu^{\lattice,n}_{D,c,s}$ & the distribution over $D$-dimensional
                             circuits with $n$ qudits&Definition \ref{def:recursive} \\
\hline
${\tilde G}_{n,D,c}$ & same as $G^{(t)}_{\mu^{\lattice,n}_{D,c,s}}$ except that we replace $g^s_{\text{Rows} (\alpha,n)}$ with $G_{\text{Rows} (\alpha,n)}$ &Definition \ref{def:recursive}\\
\hline
${\hat G}_{n,D,c,s}$ & one block of ${\tilde G}_{n,D,c}$ defined as $G_{\text{Rows}(D,n)} \tilde{G}^{\tensor n^{1/D}}_{n^{1-1/D}, D-1, c,s} G_{\text{Rows}(D,n)}$ &Definition \ref{def:recursive}\\
\hline
$\mu^{\text{CG}}_{s}$ & the distribution over circuits with $s$ random
                        two-qubit gates &Definition \ref{def:cg}\\
\hline
$\measuredangle (A,B)$ & $\cos^{-1}\max_{x\in A, y \in B} \braket{x,y}$ is the angle between two vector spaces $A$ and $B$&Section \ref{overlap}\\
\hline

\bottomrule\end{tabular}
\end{center}


\subsection{Elementary tools}
\label{sec:elementarytools}

If $A$ is a matrix and $\sigma_i$ are the singular values of $A$, then for $p \in [1,\infty)$ the Schatten $p$-norm of $A$ is defined as $\|A\|_p := (\sum_i \sigma^p_i)^{1/p}$. The $\infty$-norm of $A$ is $\|A\|_\infty := \max(i) \sigma_i$. The $1$-norm is related to the $\infty$-norm by $\|A\|_1 \leq \rank(A) \cdot \|A\|_\infty$. Moreover, for $p \in [1, \infty]$ and any two matrices $A$ and $B$, $\|A \tensor B\|_p = \|A\|_p\cdot \|B\|_p$.

If $\cA$ and $\cB$ are superoperators, then $\|\cA \tensor \cB\|_\diamond = \|\cA\|_\diamond \cdot \|\cB\|_\diamond$.

$\Channel\left[\cdot\right]$ is the linear map from matrices to superoperators such that for any two equally sized matrices $A$ and $B$, $\Channel \left[ A\tensor B^\ast\right] = A [\cdot ] B^\dagger $. Note that $\Channel \left[\cdot\right]$ is associative in the sense that $\Channel \left[A \tensor B^\ast\right] \circ \Channel\left [C \tensor D^\ast\right] = \Channel \left[A C  \tensor B^\ast C^\ast\right]$, for any equally sized matrices $A,B,C,D$.

Consider the Haar measure over $\text U(d)$. $\Channel\left[G^{(t)}_\Haar\right]$ (defined in the previous section) is the projector onto the matrix vector space of permutation operators (permuting length $t$ words over the alphabet $[d]$). In particular, for any matrix $X \in \C^{d^t \times d^t}$ we can write 
\be
\Channel\left[G^{(t)}_\Haar\right] [X] = \sum_{\pi \in S_t} \Tr (V(\pi) X) Wg(\pi),
\ee
where $V(\pi)$ is the permutation matrix $\sum_{(i_1, \ldots, i_t) \in [d]^t} \ket{i_1,
  \ldots, i_t} \bra{i_{\pi(1)}, \ldots, i_{\pi(t)}}$, and $Wg(\pi)$ is a linear combination
of permutations.  Specifically
\be
Wg(\pi) = \sum_{\sigma \in S_t} \alpha(\pi^{-1} \sigma) V(\sigma).
\ee
Here the coefficients $\alpha(\cdot)$ are known as Weingarten functions (see \cite{C06}).  
If $\mu,\nu\in S_t$ then let $\dist(\mu,\nu)$ denote the number of transpositions needed
to generate $\mu^{-1}\nu$ from the identity permutation.  Then we can define
$\alpha(\cdot)$ by the following relation.
\be \sum_{\mu,\nu\in S_t} \alpha(\mu^{-1}\nu)\ket{\mu}\bra{\nu}
= \left(\sum_{\mu,\nu\in S_t} \dist(\mu,\nu)\ket{\mu}\bra{\nu}\right)^{-1}.\ee
Note that $\alpha(\pi)$ is always real and $|\alpha (\lambda)| = O(1/d^{t+\dist
  (\lambda)})$. Thus for large $d$, $Wg(\pi) \approx V(\pi)/d^t$.

Furthermore,
\be
\Channel\left[G^{(t)}_\Haar\right] [X] = \sum_{\pi \in S_t} \Tr_M \left ((V(\pi)_M \tensor I_N) X_{MN}\right) \tensor \Wg(\pi)_M.
\ee

Let $A, B$ be matrices. For the superoperator $\cD \equiv B \Tr[A \cdot]$ we use the
notation $\cD = B A^\ast$.
We need the following observation:
\be
V(\pi) V^\ast(\sigma) = \Channel\left[\ket{\psi_{\pi}} \bra{\psi_\sigma}\right],
\ee
where $\ket{\psi_\pi} = (I \tensor V(\pi)) \frac{1}{\sqrt{d^t}} \sum_{i \in [d]^t} \ket i \ket i$.

We need the following lemma:
\begin{lemma}\OldNormalFont{}
If $A$ is a (possibly rectangular) matrix, then $A A^\dagger$ and $A^\dagger A$ have the same spectra.
\label{lem:spectrum}
\end{lemma}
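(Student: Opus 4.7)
The plan is to prove that the nonzero eigenvalues (with multiplicity) of $AA^\dagger$ and $A^\dagger A$ coincide, interpreting the word ``spectra'' in this standard sense when $A$ is rectangular. I will give two short approaches and pick the cleaner one for the actual write-up.

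The quickest approach is via the singular value decomposition. Write $A = U\Sigma V^\dagger$, where $U$ and $V$ are unitaries of the appropriate sizes and $\Sigma$ is a rectangular matrix with the singular values $\sigma_1,\sigma_2,\ldots$ on its main diagonal and zeros elsewhere. Then
\begin{equation}
AA^\dagger = U\Sigma\Sigma^\dagger U^\dagger, \qquad A^\dagger A = V\Sigma^\dagger\Sigma V^\dagger.
\end{equation}
Both $\Sigma\Sigma^\dagger$ and $\Sigma^\dagger\Sigma$ are diagonal with entries $\sigma_i^2$ (padded with additional zeros on whichever side is larger), so $AA^\dagger$ and $A^\dagger A$ are unitarily similar to diagonal matrices with identical nonzero eigenvalues.

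For a more self-contained argument that avoids citing the SVD, I would instead exhibit an explicit bijection between the nonzero eigenspaces. Suppose $AA^\dagger v = \lambda v$ with $\lambda \neq 0$ and $v \neq 0$. Let $w := A^\dagger v$. Then
\begin{equation}
A^\dagger A\, w = A^\dagger A A^\dagger v = A^\dagger(\lambda v) = \lambda w,
\end{equation}
and $w \neq 0$ because $\|w\|^2 = v^\dagger AA^\dagger v = \lambda\|v\|^2 \neq 0$. The map $v \mapsto A^\dagger v$ is linear, and it is injective on the $\lambda$-eigenspace of $AA^\dagger$, since $A^\dagger v = 0$ would give $AA^\dagger v = 0 = \lambda v$, forcing $v = 0$. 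By symmetry (swapping $A \leftrightarrow A^\dagger$), the map $u \mapsto A u$ sends the $\lambda$-eigenspace of $A^\dagger A$ injectively into that of $AA^\dagger$. Hence the two eigenspaces have equal dimensions for every $\lambda \neq 0$, which is the claim.

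No step here is an obstacle; the only subtlety is the convention about zero eigenvalues, which I would handle by stating the lemma as ``the same nonzero eigenvalues with the same multiplicities'' or by noting that $AA^\dagger$ and $A^\dagger A$ have the same spectrum as multisets of nonzero eigenvalues. I would use the SVD version in the final write-up since it is one line and matches the conventions already used elsewhere in the paper.
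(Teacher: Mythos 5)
Your proof is correct. The paper itself states Lemma \ref{lem:spectrum} without proof, treating it as a standard linear-algebra fact, so there is no argument in the text to compare against; either of your two routes (SVD, or the explicit intertwining $v \mapsto A^\dagger v$ between nonzero eigenspaces) is a perfectly good way to fill in the omitted justification. Your caveat about the zero eigenvalue is the right one to flag: for rectangular $A$ the two products have different sizes, so "same spectra" can only mean the same nonzero eigenvalues with multiplicity. This reading is also all the paper ever uses — in the proof of Lemma \ref{lem:diamondD} the lemma is invoked only to conclude $\|TT^\dagger\|_\infty = \|T^\dagger T\|_\infty$, which depends only on the largest (hence nonzero, or trivially zero on both sides) eigenvalue — so the convention issue is harmless in context.
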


\begin{lemma}\OldNormalFont{}
If $A$ and $B$ are matrices and $\|\cdot\|_\ast$ is a unitarily invariant norm, then 
$\|A B\|_\ast \leq \|A\|_\ast \|B\|_\infty$.
\label{lem:infinitypalace}
\end{lemma}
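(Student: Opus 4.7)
The plan is to reduce the claim to the standard fact that every unitarily invariant norm is a symmetric gauge function of the singular values, and then to show that right-multiplication by $B$ cannot increase any singular value by more than a factor of $\|B\|_\infty$. The key inequality to establish is
\be
\sigma_i(AB) \leq \|B\|_\infty\,\sigma_i(A) \qquad \text{for all } i,
\ee
where $\sigma_i(\cdot)$ denotes the $i$-th largest singular value. Given this, since $\|\cdot\|_\ast$ is unitarily invariant, its value on a matrix depends only on the vector of singular values, and it is monotone in the sense that if two nonnegative nonincreasing singular-value sequences satisfy a pointwise (and hence weak majorization) inequality, then the norm respects that inequality. Combining these two facts immediately gives $\|AB\|_\ast \leq \|B\|_\infty \|A\|_\ast$.

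To prove the singular-value inequality I would use \lemref{spectrum}, which says that $XX^\dagger$ and $X^\dagger X$ have the same nonzero spectrum. Apply it to $X=AB$: the squared singular values of $AB$ are the eigenvalues of $ABB^\dagger A^\dagger$. Since $BB^\dagger \preceq \|B\|_\infty^2 \cdot I$ as positive semi-definite operators, we have
\be
ABB^\dagger A^\dagger \preceq \|B\|_\infty^2\, AA^\dagger,
\ee
so by Weyl's monotonicity principle for Hermitian matrices, $\lambda_i(ABB^\dagger A^\dagger) \leq \|B\|_\infty^2 \lambda_i(AA^\dagger)$ for each $i$. Taking square roots yields the desired bound $\sigma_i(AB) \leq \|B\|_\infty \sigma_i(A)$.

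An alternative (and essentially equivalent) route is via the Russo--Dye theorem: writing $B = \|B\|_\infty \sum_j p_j U_j$ as a convex combination of unitaries scaled by $\|B\|_\infty$, and then applying the triangle inequality together with unitary invariance,
\be
\|AB\|_\ast \leq \|B\|_\infty \sum_j p_j \|AU_j\|_\ast = \|B\|_\infty \sum_j p_j \|A\|_\ast = \|B\|_\infty \|A\|_\ast.
\ee
This avoids any explicit manipulation of singular values but relies on a nontrivial operator-algebraic fact. There is really no serious obstacle here; the only subtle point is citing (or justifying in a line) the standard monotonicity principle for unitarily invariant norms under pointwise domination of singular values, which in the Schatten-$p$ case is immediate and in general follows from Ky Fan dominance. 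Given the paper's level of exposition, the $AA^\dagger \preceq \|B\|_\infty^2 AA^\dagger$ approach using \lemref{spectrum} seems the most natural and self-contained, and is what I would write.
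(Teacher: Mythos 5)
Your proposal is correct, and your primary argument takes a genuinely different route from the paper's. The paper's proof is exactly your ``alternative'' route: it invokes Russo--Dye to write $B = \|B\|_\infty \sum_i p_i U_i$ as a scaled convex combination of unitaries, then applies the triangle inequality and unitary invariance. Your main argument instead establishes the pointwise singular-value bound $\sigma_i(AB) \leq \|B\|_\infty \sigma_i(A)$ via $BB^\dagger \preceq \|B\|_\infty^2 I$ and Weyl monotonicity, and then appeals to the monotonicity of unitarily invariant norms under singular-value domination (Ky Fan dominance / the symmetric gauge function characterization). Both are standard and sound. The trade-off: the Russo--Dye route is shorter on the page and matches the paper's companion superoperator lemma (\lemref{diamondpalace}), but it quietly assumes $B$ is square (a convex combination of unitaries is square), whereas your singular-value argument works verbatim for rectangular $B$ and uses only \lemref{spectrum} plus facts the paper already relies on implicitly; the price is that you must cite the nontrivial fact that unitarily invariant norms are monotone under pointwise domination of singular values, which the Russo--Dye route avoids. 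Either write-up would be acceptable here.
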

\begin{proof}
This lemma can be viewed as a consequence of Russo-Dye theorem, which states that the
extreme points of the unit ball for $\|\cdot \|_\infty$ are the unitary matrices.  Thus we
can write $B = \|B\|_\infty \sum_i p_i U_i$ for $\{p_i\}$ a probability distribution and
$\{U_i\}$ a set of unitary matrices.  We use this fact along with the triangle inequality and
then unitary invariance to obtain
\be 
\|A B\|_\ast 
=    \| A \cdot \|B\|_\infty \sum_i p_i U_i   \|_\ast
\leq \|B\|_\infty \sum_i p_i \|A U_i \|_\ast
= \|B\|_\infty \sum_i p_i \|A  \|_\ast
= \|A\|_\ast \|B\|_\infty.\ee
\end{proof}

A similar argument applies to superoperators.
\begin{lemma}\OldNormalFont{}
If $\cA$ is a superoperator and $\cB $ is a tpcp superoperator
 then $\|\cA \cB\|_\diamond \leq \|\cA\|_\diamond$.
\label{lem:diamondpalace}
\end{lemma}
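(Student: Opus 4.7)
The plan is to unfold the definition of the diamond norm and reduce the statement to the trace-norm contractivity of tpcp maps. By definition,
\[
\|\cA\cB\|_\diamond \;=\; \sup_d \sup_{X \neq 0} \frac{\|(\cA \tensor \Id_d)\bigl((\cB \tensor \Id_d)(X)\bigr)\|_1}{\|X\|_1}.
\]
Since $\cB$ is tpcp, so is $\cB \tensor \Id_d$ for every $d$ (tensoring with the identity preserves both complete positivity and trace preservation). The fact I really need is that any tpcp map $\cE$ is a trace-norm contraction on all matrices (not just Hermitian ones), i.e.\ $\|\cE(Y)\|_1 \leq \|Y\|_1$ for every $Y$.

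Granting that fact, the rest is immediate: applying the definition of $\|\cA \tensor \Id_d\|_{1 \to 1}$ to $(\cB \tensor \Id_d)(X)$ and then using the trace-norm contraction of $\cB \tensor \Id_d$ gives
\[
\|(\cA \tensor \Id_d)\bigl((\cB \tensor \Id_d)(X)\bigr)\|_1 \;\leq\; \|\cA \tensor \Id_d\|_{1 \to 1}\, \|(\cB \tensor \Id_d)(X)\|_1 \;\leq\; \|\cA \tensor \Id_d\|_{1 \to 1}\, \|X\|_1.
\]
Dividing by $\|X\|_1$, taking $\sup$ over $X$, and then $\sup$ over $d$ yields $\|\cA\cB\|_\diamond \leq \|\cA\|_\diamond$.

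The one step that requires care is the trace-norm contractivity of a tpcp map on arbitrary, possibly non-Hermitian, matrices. For Hermitian $Y$ this is classical: decompose $Y = Y_+ - Y_-$ with $Y_\pm \geq 0$ of orthogonal support, and use positivity plus trace preservation to get $\|\cE(Y)\|_1 \leq \tr \cE(Y_+) + \tr \cE(Y_-) = \tr Y_+ + \tr Y_- = \|Y\|_1$. For a general $Y$ I would argue by duality: $\|\cE(Y)\|_1 = \sup_{\|Z\|_\infty \leq 1} |\tr(\cE(Y) Z)| = \sup_{\|Z\|_\infty \leq 1} |\tr(Y \cE^\ast(Z))|$, where $\cE^\ast$ is the Heisenberg-picture adjoint. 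Since $\cE$ is tpcp, $\cE^\ast$ is unital and completely positive, and a Stinespring representation $\cE^\ast(Z) = V^\dagger (Z \tensor I_E) V$ with $V$ an isometry shows $\|\cE^\ast(Z)\|_\infty \leq \|Z\|_\infty$. Hence $\|\cE(Y)\|_1 \leq \|Y\|_1$, as required. This non-Hermitian case is the only real obstacle; it plays the role of the Russo--Dye step used in the preceding matrix lemma.
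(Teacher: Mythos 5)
Your proof is correct and follows essentially the same route as the paper's: both reduce the claim to the fact that $\cB\tensor\Id_d$ is a trace-norm contraction, so the optimization defining $\|\cA\cB\|_\diamond$ runs over a subset of the feasible set for $\|\cA\|_\diamond$. The only difference is that you explicitly justify contractivity on non-Hermitian inputs (via duality and Stinespring), a step the paper asserts without proof.
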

\begin{proof}
Let $d$ be $\geq$ the input dimensions of both $\cA$ and $\cB$.  Then
$\|\cA\|_\diamond = \max_{\|X\|_1 \leq 1} \|(\cA \ot \id_d)(X)\|_1$ and 
$\|\cA\cB\|_\diamond = \max_{\|X\|_1 \leq 1} \|(\cA \ot \id_d)(\cB \ot \id_d)(X)\|_1$.
Since $\cB$ is a tpcp superoperator
$\|(\cB \ot \id_d)(X)\|_1 \leq 1$ and so $\|\cA\cB\|_\diamond$ is maximizing over a set
which is contained in the set maximized over by $\|\cA\|_\diamond$.
\end{proof}

These give rise to the following well-known bound, which often is called ``the hybrid argument.''
\begin{lemma}\OldNormalFont{}
Let $\| \cdot \|_\ast$ be a unitarily invariant norm. If $A_1, \ldots,A_t$ and $B_1,
\ldots, B_t$ have $\infty$-norm $\leq 1$. Then
\be
\|A_1\ldots A_t - B_1 \ldots B_t\|_\ast \leq \sum_{i}\|A_i- B_i\|_\ast.
\ee
This is also true for superoperators and the diamond norm, if each superoperator is a
tpcp map.
\label{lem:bv}
\end{lemma}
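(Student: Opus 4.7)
The plan is to use a standard telescoping decomposition and then bound each term by the appropriate multiplicativity lemma. For the matrix case, I would write
\be
A_1 \cdots A_t - B_1 \cdots B_t = \sum_{i=1}^t B_1 \cdots B_{i-1} (A_i - B_i) A_{i+1} \cdots A_t,
\ee
which is easily verified by observing that successive terms cancel in pairs. Then apply the triangle inequality, followed by Lemma \ref{lem:infinitypalace} applied twice (once to peel off the $B_1 \cdots B_{i-1}$ factor on the left and once to peel off the $A_{i+1} \cdots A_t$ factor on the right, the latter requiring either unitary invariance of $\|\cdot\|_\ast$ applied to $(A_i-B_i)A_{i+1}\cdots A_t = ((A_{i+1}\cdots A_t)^\dagger (A_i - B_i)^\dagger)^\dagger$ so that the adjoint argument goes through, or noting that Lemma \ref{lem:infinitypalace} works symmetrically on either side). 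This yields
\be
\|B_1 \cdots B_{i-1} (A_i - B_i) A_{i+1} \cdots A_t\|_\ast \leq \|B_1\|_\infty \cdots \|B_{i-1}\|_\infty \cdot \|A_i - B_i\|_\ast \cdot \|A_{i+1}\|_\infty \cdots \|A_t\|_\infty,
\ee
and submultiplicativity of $\|\cdot\|_\infty$ together with the hypothesis that each $\|A_j\|_\infty, \|B_j\|_\infty \leq 1$ bounds every product of $\infty$-norms by $1$. Summing over $i$ gives the desired inequality.

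For the superoperator version, the plan is identical: decompose
\be
\cA_1 \cdots \cA_t - \cB_1 \cdots \cB_t = \sum_{i=1}^t \cB_1 \cdots \cB_{i-1}(\cA_i - \cB_i)\cA_{i+1} \cdots \cA_t,
\ee
apply the triangle inequality under $\|\cdot\|_\diamond$, and then use Lemma \ref{lem:diamondpalace} on both the prefix $\cB_1 \cdots \cB_{i-1}$ (applied from the left, after using the fact that composition of tpcp maps is tpcp) and the suffix $\cA_{i+1} \cdots \cA_t$ (applied from the right). Since all $\cA_j, \cB_j$ are tpcp and tpcp maps are closed under composition, each peel-off contributes no extra factor, leaving $\|\cA_i - \cB_i\|_\diamond$ for each term.

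The only real subtlety is that Lemma \ref{lem:infinitypalace} as stated handles post-multiplication but not pre-multiplication, so one must verify by a symmetric argument (or by taking adjoints, noting that $\|X^\dagger\|_\ast = \|X\|_\ast$ for unitarily invariant norms) that $\|BA\|_\ast \leq \|B\|_\infty \|A\|_\ast$ also holds; this is immediate from Russo--Dye by the same proof. Similarly, for the diamond norm one needs $\|\cB\cA\|_\diamond \leq \|\cA\|_\diamond$ when $\cB$ is tpcp, which again follows by an argument parallel to Lemma \ref{lem:diamondpalace}: the tpcp map $\cB$ sends the unit trace-ball into itself. This is the only technical point; the rest is a routine telescoping plus triangle inequality, so there is no substantive obstacle.
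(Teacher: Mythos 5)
Your proof is correct and is exactly the standard telescoping (``hybrid'') argument that the paper invokes but leaves unproven, so it fills the gap in the intended way: telescope, apply the triangle inequality, and peel off the unit-$\infty$-norm prefixes and suffixes via Lemma \ref{lem:infinitypalace} (and its left-sided analogue, which as you note follows from $\|X^\dagger\|_\ast=\|X\|_\ast$ or from rerunning Russo--Dye on the left) and via Lemma \ref{lem:diamondpalace} together with trace-norm contractivity of tpcp maps for the diamond-norm case. One trivial slip: in your first paragraph it is the \emph{left} factor $B_1\cdots B_{i-1}$, not the right factor $A_{i+1}\cdots A_t$, that requires the adjoint/symmetric version of Lemma \ref{lem:infinitypalace}, since that lemma as stated peels off the right-hand factor; you correct this implicitly in your closing paragraph, so nothing substantive is affected.
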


We will need a similar bound for tensor products.  
\begin{lemma}\OldNormalFont{}
Suppose $\|A - B\|_\ast \leq \eps$ for some norm $\|\cdot \|_\ast$ that is multiplicative
under tensor product. Then for any integer $M > 0$
\be
\left \|A^{\tensor M} - B^{\tensor M}\right\|_\ast \leq (\|B\|_\ast +\eps)^M -\|B\|_\ast.
\ee
The same holds for superoperators and the diamond norm. In particular $\|A^{\tensor M} - B^{\tensor M}\|_\ast \leq 2 M \|B\|_\ast^M  \eps$ for $\eps \leq \frac{1}{2 M}$.
\label{lem:tensorerror}
\end{lemma}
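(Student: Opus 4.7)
The plan is to prove the lemma by a telescoping tensor-product identity, essentially the tensor-product analogue of the hybrid argument in Lemma~\ref{lem:bv}. First, I would write
\[
A^{\otimes M} - B^{\otimes M} \;=\; \sum_{k=0}^{M-1} A^{\otimes k} \otimes (A-B) \otimes B^{\otimes (M-1-k)},
\]
which is checked by collapsing the telescoping sum (the $k$-th term minus the $(k{+}1)$-th term leaves only the two extreme factors). Applying the triangle inequality for $\|\cdot\|_\ast$ and then the multiplicativity under tensor product that is assumed of $\|\cdot\|_\ast$, each summand has norm $\|A\|_\ast^k \cdot \|A-B\|_\ast \cdot \|B\|_\ast^{M-1-k} \leq \|A\|_\ast^k\, \eps\, \|B\|_\ast^{M-1-k}$.

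Since the triangle inequality gives $\|A\|_\ast \leq \|B\|_\ast + \eps$, summing over $k$ yields
\[
\|A^{\otimes M} - B^{\otimes M}\|_\ast \;\leq\; \eps \sum_{k=0}^{M-1} (\|B\|_\ast+\eps)^k \|B\|_\ast^{M-1-k} \;=\; (\|B\|_\ast+\eps)^M - \|B\|_\ast^M,
\]
where the last equality uses the algebraic identity $a^M - b^M = (a-b)\sum_{k=0}^{M-1} a^k b^{M-1-k}$ with $a = \|B\|_\ast+\eps$ and $b = \|B\|_\ast$. This establishes the first bound (modulo the evident typo: the right-hand side should read $-\|B\|_\ast^M$, not $-\|B\|_\ast$).

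For the small-$\eps$ refinement, I would expand binomially, writing $(\|B\|_\ast+\eps)^M - \|B\|_\ast^M = \sum_{k=1}^{M} \binom{M}{k} \|B\|_\ast^{M-k}\eps^k$. Using $\binom{M}{k} \leq M^k/k!$ and the hypothesis $M\eps \leq 1/2$ (so that the geometric series in $M\eps$ is dominated by twice its leading term), the sum is at most $2 M \|B\|_\ast^{M-1} \eps$; when $\|B\|_\ast \geq 1$ (the regime in which this bound will be used) this is no larger than $2M\|B\|_\ast^M \eps$ as claimed.

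For the superoperator/diamond-norm statement, the same telescoping decomposition applies verbatim, because $\|\cdot\|_\diamond$ is multiplicative under tensor products (as recalled at the top of Section~\ref{sec:elementarytools}) and the triangle inequality obviously holds. The main ``obstacle'' is really just cosmetic — there is no substantial difficulty — namely being careful about operator ordering in the tensor decomposition, which is trivially resolved since every summand is a pure tensor of three factors, and tracking whether the final binomial estimate is stated with $\|B\|_\ast^{M-1}$ or $\|B\|_\ast^M$ on the right-hand side.
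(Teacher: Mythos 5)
Your proof is correct, and it is the standard telescoping (tensor-product hybrid) argument that the paper clearly intends — the lemma is stated as an elementary tool without proof, immediately after the hybrid argument of Lemma~\ref{lem:bv}, of which this is the tensor-product analogue. Your two side observations are also right: the right-hand side of the first bound should indeed read $\|B\|_\ast^M$ rather than $\|B\|_\ast$ (a typo that is harmless in the paper's applications, where $B$ is a projector or a tpcp map with $\|B\|_\ast\ge 1$), and the final estimate $2M\|B\|_\ast^M\eps$ does require $\|B\|_\ast\ge 1$, which holds in every use of the lemma.
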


We need the following definition and lemma:
\begin{definition}\OldNormalFont{}
Let $X$ and $Y$ be two real valued random variables on the same totally ordered sample space $\Omega$. Then we say $X$ is stochastically dominated by $Y$, if for all $x \leq y \in \Omega$, $\Pr [X \geq x] \leq \Pr [Y \geq y]$. We represent this by $X \preceq Y$.
\end{definition}

\begin{lemma}[Coupling]\OldNormalFont{}
$X \preceq Y$ if and only if there exists a coupling (a joint probability distribution) between $X$ and $Y$ such that the marginals of this coupling are exactly $X$ and $Y$ and that with probability $1$, $X\leq Y$.
\end{lemma}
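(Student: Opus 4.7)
The plan is to prove this classical fact (sometimes called Strassen's theorem in the real-valued setting) by handling the two directions separately, with the forward implication being immediate and the converse relying on an explicit quantile coupling.

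For the easy direction (``if''), assume a joint distribution on $(X,Y)$ with the correct marginals and satisfying $X \leq Y$ almost surely. Then for any $t \in \Omega$, the event $\{X \geq t\}$ is contained (up to a null set) in $\{Y \geq t\}$, since $X \geq t$ together with $Y \geq X$ forces $Y \geq t$. Taking probabilities yields $\Pr[X \geq t] \leq \Pr[Y \geq t]$, which is exactly the stochastic dominance condition.

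For the nontrivial direction, I would construct the coupling explicitly via the quantile transform. Let $F_X(t) = \Pr[X \leq t]$ and $F_Y(t) = \Pr[Y \leq t]$ be the CDFs, and define the generalized inverses
\be
F_X^{-1}(u) := \inf\{x : F_X(x) \geq u\}, \qquad F_Y^{-1}(u) := \inf\{y : F_Y(y) \geq u\}, \qquad u \in (0,1).
\ee
Let $U$ be uniform on $[0,1]$ on some auxiliary probability space, and set $\tilde X := F_X^{-1}(U)$, $\tilde Y := F_Y^{-1}(U)$. A standard computation (using $\{F_X^{-1}(U) \leq x\} = \{U \leq F_X(x)\}$) shows that $\tilde X$ has distribution $F_X$ and $\tilde Y$ has distribution $F_Y$, so $(\tilde X, \tilde Y)$ is a valid coupling with the prescribed marginals.

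It remains to verify $\tilde X \leq \tilde Y$ pointwise. The hypothesis $X \preceq Y$ is equivalent to $F_X(t) \geq F_Y(t)$ for every $t$. But this pointwise inequality of CDFs implies the reverse inequality for the quantile functions: indeed, $\{x : F_X(x) \geq u\} \supseteq \{x : F_Y(x) \geq u\}$, so the infimum of the first set is no larger than the infimum of the second, giving $F_X^{-1}(u) \leq F_Y^{-1}(u)$ for every $u \in (0,1)$. Specializing to $u = U$ yields $\tilde X \leq \tilde Y$ surely, completing the construction.

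The only real subtleties are measure-theoretic: the generalized inverse must be defined carefully so it behaves correctly at jumps and flat regions of the CDFs, and the identity $\{F_X^{-1}(U) \leq x\} = \{U \leq F_X(x)\}$ needs the ``inf'' rather than ``sup'' convention. These are routine and present no real conceptual obstacle, so the proof is essentially just the quantile coupling construction plus the monotonicity argument above.
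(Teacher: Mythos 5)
The paper states this lemma as a standard fact and gives no proof of it, so there is nothing to compare against line by line; your quantile-coupling argument is the standard proof and it is correct. The easy direction is handled properly, and in the converse direction the chain of implications (stochastic dominance $\Rightarrow$ $F_X \geq F_Y$ pointwise $\Rightarrow$ $F_X^{-1} \leq F_Y^{-1}$ pointwise $\Rightarrow$ the monotone coupling $(F_X^{-1}(U), F_Y^{-1}(U))$ works) is exactly right, and the marginal computation via $\{F_X^{-1}(U) \leq x\} = \{U \leq F_X(x)\}$ is the correct way to verify the coupling has the prescribed marginals. Two small remarks. First, the paper's written definition of $X \preceq Y$ (``for all $x \leq y$, $\Pr[X \geq x] \leq \Pr[Y \geq y]$'') is evidently a typo for the standard condition $\Pr[X \geq x] \leq \Pr[Y \geq x]$ for all $x$ (as written it would fail even for $X = Y$); you have sensibly read it the standard way, and your equivalence with $F_X \geq F_Y$ then holds. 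Second, the sample spaces where this lemma is actually invoked in the paper are finite totally ordered sets (Hamming weights, wait times), so the measure-theoretic caveats about jumps and flat regions of the CDF that you flag at the end are genuinely vacuous there; one could alternatively give a purely combinatorial version of the same coupling by matching cumulative masses, but the quantile construction subsumes it.
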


\subsection{Various measures of convergence to the Haar measure}
\label{sec:norms}

\begin{definition}\OldNormalFont{}
Let $\mu$ be a distribution over $n$-qudit gates. 
 Let $\eps$ be a positive real number.
\begin{enumerate}
\item (Strong designs) $\mu$ is a strong $\epsilon$-approximate $t$-design if
\be
(1-\epsilon) \cdot \Channel\left[G^{(t)}_{\Haar}\right]\preceq \Channel\left[G^{(t)}_{\mu}\right] \preceq (1+\epsilon) \cdot \Channel\left[G^{(t)}_{\Haar}\right],
\ee
or equivalently if
\be
(1-\epsilon)\cdot   \left (\Channel\left[G^{(t)}_{\Haar}\right]\tensor \mathrm{id} \right ) \Phi^{\tensor t}_{d^n}\preceq  \left(\Channel\left[G^{(t)}_{\mu}\right] \tensor \mathrm{id}  \right)\phi_{d^n}^{\tensor t}\preceq (1+\epsilon) \cdot  \left (\Channel\left[G^{(t)}_{\Haar}\right]\tensor \mathrm{id}  \right) \Phi_{d^n}^{\tensor t}.
\ee
The first $\preceq$ is cp ordering and the second $\preceq$ is psd ordering.

 \item (Monomial definition) $\mu$ is a monomial based $\epsilon$-approximate $t$-design if for any balanced monomial $m(C)$ of degree at most $t$
\be
  \left \|\vvec \left[G^{(t)}_\mu\right]-\vvec \left[G^{(t)}_\Haar\right] \right\|_\infty \leq \frac{\epsilon}{d^{nt}}.
\ee
Here for a matrix $A$, $\vvec(A)$ is a vector consisting of the entries of $A$ (in the computational basis).

\item (Diamond definition) $\mu$ is an $\epsilon$-approximate $t$-design in the diamond measure if
\be
\left  \| \Channel\left[G^{(t)}_{\mu}\right] - \Channel\left[G^{(t)}_{\Haar}\right]\right \|_\diamond \leq \epsilon.
\ee

\item (Trace definition) $\mu$ is an $\epsilon$-approximate $t$-design in the trace measure if
\be
 \left \| G_{\mu}^{(t)} - G_{\Haar}^{(t)}\right \|_1 \leq \eps.
\ee
 
 \item (TPE) $\mu$ is a $(d,\eps,t)$ $t$-copy tensor product expander (TPE) if
\be
\left  \| G_{\mu}^{(t)} - G_{\Haar}^{(t)}\right \|_\infty \leq \eps.
\ee

\item (Anti-concentration) $\mu$ is an $\epsilon$ approximate anti-concentration design if
\be
\E_{C\sim \mu} |\braket{0|C|0}|^4 \leq  \E_{C\sim \Haar} |\braket{0|C|0}|^4 \cdot (1+\eps).
\ee

\item (Approximate scramblers) $\mu$ is an $\eps$-approximate scrambler if for any density matrix $\rho$ and subset $S$ of qubits with $|S| \leq n/3 $
\be
\E_{C \sim \mu} \left \|\rho_S(C) - \frac{I}{2^{|S|}} \right \|^2_1 \leq \eps.
\ee
where $\rho_S(C) = \Tr_{\backslash S}C \rho C^\dagger$ and $\Tr_{\backslash S}$ is trace over the subset of qubits that is complimentary to $S$.

\item (Weak approximate decouplers) Let $M, M', A, A'$ be systems composed of $m, m, n-m$ and $n-m$, and let $\phi_{MM'}$, $\phi_{AA'}$ and $\psi_{A'}$ be respectively maximally entangled states along $M, M'$, maximally entangled state along $AA'$ and a pure state along $A'$.  $\mu$ is an $(m,\alpha,\eps)$-approximate weak decoupler if for any subsystem $S$ of $M'A'$ with size $\leq \alpha \cdot n$, when $\mu$ applies to $M'A'$,
\be
\E_{C \sim \mu}\left \|\rho_{MS}(C) - \frac{I}{2^{m}} \tensor \frac{I}{2^{|S|}}\right\|_1\leq \eps.
\ee
We consider two definitions. In the first definition the initial state is $\phi_{MM'} \tensor \phi_{AA'}$ and in the second model it is $\phi_{MM'} \tensor \psi_{A'}$. Here $\rho_{MS}(C)$ is the reduced density matrix along $MS$ after the application of $C \sim \mu$.

\end{enumerate}
\end{definition}

\section{Approximate $t$-designs by random circuits with nearest-neighbor gates on $D$-dimensional lattices}
\label{sec:lattices-designs}

In this section we prove theorems \ref{thm:grid} and
\ref{thm:lattice}, which state that our random circuit models defined
for $D$-dimensional lattices (definitions \ref{def:modelsD}) form
approximate $t$-designs in several measures. 

We begin in \secref{basic-lemma} by outlining some basic utility
lemmas.  The technical core of the proof is contained in the lemmas in
\secref{overlap-lemmas} in which we bound various norms of products of
Haar projectors onto overlapping sets of qubits.  These are proved in
Sections \ref{sec:lemma-proofs} and \ref{sec:overlap-proofs}
respectively.  We show how to use these lemmas to prove our main
theorems in \secref{proofs} (for a 2-D grid) and in \secref{proof2}
(for a lattice in $D>2$ dimensions).

\subsection{Basic lemmas}
\label{sec:basic-lemma}

In this section we state some utilities lemmas which are largely
independent of the details of our circuit models.

\subsubsection{Comparison lemma for random quantum circuits}
\label{sec:comparison}

\begin{definition}\OldNormalFont{}
A superoperator $\cC$ is completely positive (cp) if for any psd
matrix $X$, $(\cC \tensor \Id )(X)$ is also psd.
 For superoperators $\cA$ and $\cB$, $\cA \preceq \cB$ if $\cB-\cA$ is cp.
\label{def:cpmap}
\end{definition}

Our comparison lemma is simply the following:
\def\LemComparison{}
\begin{lemma}[Comparison]
\torestate{\OldNormalFont{} Suppose we have the following cp ordering between superoperators $\cA_1 \preceq \cB_1, \ldots,\cA_t \preceq \cB_t$. Then $\cA_t \ldots \cA_1 \preceq \cB_t \ldots \cB_1$.
\label{lem:comparison}}
\end{lemma}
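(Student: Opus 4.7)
The plan is to prove this by a standard telescoping (``hybrid'') argument, relying on the fact that composition and sums of cp maps are cp. The assumption throughout is that all of $\cA_1,\dots,\cA_t,\cB_1,\dots,\cB_t$ are themselves cp (which is automatic in our setting since each $\cA_i,\cB_i$ arises as a moment superoperator $\Channel[G^{(t)}_\mu]$ of a distribution over unitaries, and such maps are cp).

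First, I would rewrite the difference of the two products as a telescoping sum in which we swap $\cA_i$ for $\cB_i$ one factor at a time:
\be
\cB_t\cdots\cB_1 - \cA_t\cdots\cA_1 \;=\; \sum_{i=1}^{t} \cB_t\cdots\cB_{i+1}\,(\cB_i-\cA_i)\,\cA_{i-1}\cdots\cA_1,
\ee
with the convention that an empty product of superoperators denotes the identity map. This is verified by a direct one-factor-at-a-time cancellation: the $i$-th term and the $(i+1)$-th term share the block $\cB_t\cdots\cB_{i+1}\cA_i\cA_{i-1}\cdots\cA_1$ with opposite signs.

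Next, I would argue that each summand is a cp map. The factor $\cB_t\cdots\cB_{i+1}$ is a composition of cp maps, hence cp; likewise $\cA_{i-1}\cdots\cA_1$ is cp. By hypothesis $\cB_i-\cA_i$ is cp. Composition of cp maps is cp (this is immediate from \defref{cpmap}: if $\cM$ and $\cN$ are cp, then for any psd $X$, $(\cM\cN\ot\Id)(X)=(\cM\ot\Id)((\cN\ot\Id)(X))$ is psd, since the inner expression is psd and $\cM\ot\Id$ preserves this). Therefore each of the $t$ summands is cp.

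Finally, since the cone of cp maps is closed under addition, the sum on the right-hand side is cp, which is exactly the statement that $\cA_t\cdots\cA_1 \preceq \cB_t\cdots\cB_1$. The only place one has to be slightly careful is the implicit assumption that every $\cA_i$ and $\cB_i$ is cp; without it the hybrid terms need not be cp and the conclusion can fail. I do not anticipate any real obstacle here beyond spelling out this hypothesis cleanly.
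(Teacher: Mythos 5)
Your proof is correct and is essentially the same as the paper's: the paper proves the two-factor case via the decomposition $\cB\cD-\cA\cC=(\cB-\cA)\cD+\cA(\cD-\cC)$ and then inducts, which is exactly your telescoping sum written recursively. Your explicit remark that the individual $\cA_i,\cB_i$ must themselves be cp is also needed (and used) in the paper's argument.
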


\def\CorOverlapDesign{
}
\begin{corollary}[Overlapping designs] 
\torestate{\OldNormalFont{}
If $K_1, \ldots, K_t$ are respectively the moments superoperators of $\eps_1, \ldots, \eps_t$-approximate strong $k$-designs each on a potentially different subset of qudits, then
\be
\Channel\left[G^{(t)}_{\Haar(S_1)} \ldots G^{(t)}_{\Haar(S_t)}\right] (1-\eps_1)\ldots (1-\eps_t) \preceq K_1 \ldots K_t \preceq \Channel\left [G^{(t)}_{\Haar(S_1)} \ldots G^{(t)}_{\Haar(S_t)}\right] (1+\eps_1)\ldots (1+\eps_t).
\ee
\label{cor:overlappingdesigns}}
\end{corollary}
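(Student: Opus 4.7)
The plan is to apply the comparison Lemma \ref{lem:comparison} to the two $t$-fold chains of cp inequalities supplied by the strong-design hypothesis, and then to rewrite the resulting composition of Haar-projector channels as a single $\Channel[\cdot]$ of a product of $G^{(t)}_{\Haar(S_i)}$'s using the associativity property recorded in \secref{elementarytools}.

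First I would unpack the hypothesis: by \defref{strongdesigns}, each $K_i$ satisfies $(1-\epsilon_i)\,\Channel[G^{(t)}_{\Haar(S_i)}] \preceq K_i \preceq (1+\epsilon_i)\,\Channel[G^{(t)}_{\Haar(S_i)}]$. Both $K_i$ and $\Channel[G^{(t)}_{\Haar(S_i)}]$ are convex combinations of unitary-conjugation maps, hence tpcp, and the scaled upper and lower bounds above are still cp. These are precisely the preconditions that allow \lemref{comparison} to chain the inequalities across different subsets $S_i$.

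Next I would apply \lemref{comparison} separately to the lower and upper inequalities along the chain $i=1,\ldots,t$, obtaining
\begin{equation}
\prod_{i=1}^t(1-\epsilon_i) \cdot \prod_{i=1}^t \Channel\!\left[G^{(t)}_{\Haar(S_i)}\right] \;\preceq\; K_1\cdots K_t \;\preceq\; \prod_{i=1}^t(1+\epsilon_i) \cdot \prod_{i=1}^t \Channel\!\left[G^{(t)}_{\Haar(S_i)}\right],
\end{equation}
where the superoperator products are composed in the same left-to-right order on both sides.

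Finally, using $G^{(t)}_{\Haar(S_i)} = \E_{U_i}[U_i^{\otimes t} \otimes U_i^{\ast \otimes t}]$ together with the associativity identity $\Channel[A\otimes B^\ast]\circ \Channel[C\otimes D^\ast] = \Channel[AC \otimes B^\ast D^\ast]$ from \secref{elementarytools} and linearity of $\Channel[\cdot]$, I would pull the expectations outside to identify $\prod_{i=1}^t \Channel[G^{(t)}_{\Haar(S_i)}] = \Channel\!\left[\prod_{i=1}^t G^{(t)}_{\Haar(S_i)}\right]$. Substituting this into the previous display yields exactly the statement of the corollary. There is no genuine technical obstacle: once \lemref{comparison} is in hand the proof is essentially formal bookkeeping, and the only point requiring any care is verifying that each intermediate operator is cp so that composition preserves the cp ordering.
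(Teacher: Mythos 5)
Your proposal is correct and follows essentially the same route as the paper, whose proof is the one-line observation that the result is immediate from Lemma \ref{lem:comparison}, the definition of the moment superoperators, and the fact that cp ordering is preserved under tensoring with the identity; your write-up simply fills in those steps, including correctly stating the composition identity $\Channel[A\otimes B^\ast]\circ\Channel[C\otimes D^\ast]=\Channel[AC\otimes (BD)^\ast]$ (which the paper's text contains with a typo). The only ingredient you leave implicit is that each inequality $(1\pm\eps_i)\Channel[G^{(t)}_{\Haar(S_i)}]$ versus $K_i$ must first be tensored with $\id$ on the complement of $S_i$ before the comparison lemma is applied on the full space, but this is exactly the "observation" the paper cites and is covered by your remark about verifying complete positivity of the intermediate maps.
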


\subsubsection{Bound on the value of off-diagonal monomials}

 We first formally define an off-diagonal monomial.
\begin{definition}[Off-diagonal monomials] \OldNormalFont{}
A diagonal monomial of balanced degree $t$ of a unitary matrix $C$ is a balanced monomial that can be written as product of absolute square of terms, i.e., $|C_{a_1,b_1}|^2 \ldots |C_{a_t,b_t}|^2$. A monomial is off-diagonal if it is balanced and not diagonal.
\label{def:offdiagonal}
\end{definition}

We now define the set of diagonal indices as $\cD = \{\ket{i,j}\bra{i',j'} : i=i',j=j', i,i',j,j' \in
  [d]^{nt}\}$ and the set of off-diagonal indices as $\cO = \{\ket{i,j}\bra{i',j'} : i\neq i' \text{ or } j\neq j', i,i',j,j' \in
  [d]^{nt}\}$. We note that a diagonal monomial can be written as $\Tr (C^{\tensor t,t} x)$ for some $x \in \cD$ and similarly,  an off-diagonal monomial can be written as $\Tr (C^{\tensor t,t} x)$ for some $x \in \cO$.

We relate the strong definition of designs to the monomial definiton
via the following lemma.

\begin{lemma}
\torestate{\label{lem:offdiag}
Let $\delta > 0$. Assume that $\Channel\left[G_\mu^{(t)}\right]$ and $\Channel\left[G_\nu^{(t)}\right]$ are two moment superoperators that satisfy the following completely positive ordering
\be
(1 - \delta)  \cdot \Channel\left[G_\nu^{(t)}\right] \preceq \Channel\left[G_\mu^{(t)}\right] \preceq (1 + \delta)  \cdot \Channel\left[G_\nu^{(t)}\right].
\ee
Let $\mathcal{O}$ and $\mathcal{D}$ be respectively the set of off-diagonal and diagonal indices for monomials. Then
\be
\max_{x \in \cO} |\Tr \left(x G^{(t)}_\mu\right)|
\leq \max_{x \in \cO} |\Tr \left(x G^{(t)}_\nu\right)| (1+\delta) + 2\delta \cdot \max_{y \in \cD} |\Tr \left(y G^{(t)}_\nu\right)|.
\ee
}
\end{lemma}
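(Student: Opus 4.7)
The plan is to exploit the cp ordering hypothesis to derive a Cauchy--Schwarz-type inequality that controls off-diagonal monomials of $G_\mu^{(t)}$ in terms of diagonal monomials of $G_\nu^{(t)}$. First I will use the upper inequality $\Channel[G_\mu^{(t)}] \preceq (1+\delta)\Channel[G_\nu^{(t)}]$ to conclude that the superoperator $\cR := \Channel[(1+\delta)G_\nu^{(t)} - G_\mu^{(t)}]$ is completely positive. By the definition of $\Channel[\cdot]$, this is equivalent to saying that the matrix $R := (1+\delta)G_\nu^{(t)} - G_\mu^{(t)}$ admits a Kraus-like decomposition $R = \sum_k B_k \otimes B_k^\ast$ for some matrices $\{B_k\}$.

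For any off-diagonal index $x = \ket{i,j}\bra{i',j'} \in \cO$, I will then expand
\[
\bra{i',j'} R \ket{i,j} \;=\; \sum_k (B_k)_{i',i}\,\overline{(B_k)_{j',j}},
\]
and apply the Cauchy--Schwarz inequality to the two sequences $\{(B_k)_{i',i}\}_k$ and $\{(B_k)_{j',j}\}_k$. Recognizing that $\sum_k |(B_k)_{a,b}|^2 = \bra{a,a} R \ket{b,b}$, this yields
\[
|\bra{i',j'} R \ket{i,j}|^2 \;\leq\; \bra{i',i'} R \ket{i,i}\cdot \bra{j',j'} R \ket{j,j},
\]
the right-hand side being a product of $|C|^2$-type diagonal monomials of $R$ in the sense of \defref{offdiagonal}.

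To bound these diagonal entries in terms of $G_\nu^{(t)}$, I will invoke the other side of the cp ordering: $(1-\delta)\Channel[G_\nu^{(t)}] \preceq \Channel[G_\mu^{(t)}]$ applied to the psd input $\ket{b}\bra{b}$ and sandwiched by $\bra{a}$ gives $\bra{a,a}G_\mu^{(t)}\ket{b,b} \geq (1-\delta)\bra{a,a}G_\nu^{(t)}\ket{b,b}$; both sides are non-negative since they equal $\mathbb{E}\,|C^{\otimes t}_{a,b}|^2$ under the respective distribution. Consequently $\bra{a,a} R \ket{b,b} \leq 2\delta\, \bra{a,a} G_\nu^{(t)}\ket{b,b}$, and using $\sqrt{xy} \leq \max(x,y)$ for non-negative $x,y$ produces $|\bra{i',j'} R \ket{i,j}| \leq 2\delta \max_{y \in \cD}|\Tr(y G_\nu^{(t)})|$.

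The final step is the triangle inequality on
\[
\bra{i',j'} G_\mu^{(t)}\ket{i,j} \;=\; (1+\delta)\bra{i',j'} G_\nu^{(t)}\ket{i,j} - \bra{i',j'} R \ket{i,j},
\]
followed by taking the maximum over $x \in \cO$, which delivers the claimed bound. I do not anticipate a serious obstacle: the whole argument is Cauchy--Schwarz plus the Kraus/cp structure. The only bookkeeping subtlety is making sure the diagonal entries appearing on the right of the Cauchy--Schwarz step are precisely the $|C|^2$-type monomials captured by $\cD$, which they are after the identification $\bra{a,a}G_\nu^{(t)}\ket{b,b} = \mathbb{E}_{C\sim\nu}|C^{\otimes t}_{a,b}|^2$.
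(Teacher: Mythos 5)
Your proposal is correct and is essentially the paper's own argument in different packaging: the paper passes to the Choi matrix of $(1+\delta)\Channel[G_\nu^{(t)}]-\Channel[G_\mu^{(t)}]$, notes it is psd, and invokes the fact that a psd matrix's largest off-diagonal entry is bounded by its largest diagonal entry — which is exactly the Cauchy--Schwarz step you carry out explicitly via the Kraus decomposition. The remaining steps (bounding the diagonal entries of $R$ by $2\delta$ times diagonal monomials of $G_\nu^{(t)}$ using the lower cp ordering, then the triangle inequality) coincide with the paper's.
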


\subsubsection{Bound on the moments of the Haar measure}

We need the following bound on the $t$-th monomial moment of the Haar
measure. Assume we have $m$ qudits.
\begin{lemma}[Moments of the Haar measure] 
\torestate{\OldNormalFont{}
Let $G_{\Haar(m)}^{(t)}$ be
  the quasi-projector operator for the Haar measure on $m$ qudits. Then
\be
\max_y \left \|G_{\Haar(m)}^{(t)} y G_{\Haar(m)}^{(t)}\right\|_1 \leq \frac{t^{O(t)}}{d^{mt}}.
\ee
Here the maximization is taken over matrix elements in the computational basis like 
\\
$y= \ket{i_1,\ldots, i_t, i'_1,\ldots, i'_t}\bra{j_1,\ldots, j_t, j'_1,\ldots, j'_t}$. Each label (e.g. $i_j$) is in $[d]^m$.
\label{lem:Haarmoment}}
\end{lemma}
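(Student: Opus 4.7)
The plan is to exploit the rank-one structure of $y$ together with the Weingarten decomposition of $G_\Haar^{(t)}$. Since $y = \ket{a}\bra{b}$ is rank-one (for computational-basis vectors $\ket{a} = \ket{i_1,\ldots,i_t,i'_1,\ldots,i'_t}$ and $\ket{b} = \ket{j_1,\ldots,j_t,j'_1,\ldots,j'_t}$) and $G_\Haar^{(t)}$ is Hermitian, the conjugated operator factors as
\be
G_\Haar^{(t)}\, y\, G_\Haar^{(t)} = \bigl(G_\Haar^{(t)}\ket{a}\bigr)\bigl(G_\Haar^{(t)}\ket{b}\bigr)^\dagger,
\ee
which is again of rank at most one, so all its Schatten norms coincide:
\be
\bigl\|G_\Haar^{(t)} y G_\Haar^{(t)}\bigr\|_1 = \bigl\|G_\Haar^{(t)}\ket{a}\bigr\|\cdot\bigl\|G_\Haar^{(t)}\ket{b}\bigr\|.
\ee
Using that $G_\Haar^{(t)}$ is a projector, $\|G_\Haar^{(t)}\ket{v}\|^2 = \bra{v}G_\Haar^{(t)}\ket{v}$ for any vector $\ket{v}$, so it will suffice to prove that for every computational-basis vector $\ket{v} = \ket{k_1,\ldots,k_t,k'_1,\ldots,k'_t}$ one has $\bra{v}G_\Haar^{(t)}\ket{v} \le t^{O(t)}/d^{mt}$; applying this to both $\ket{a}$ and $\ket{b}$ yields the claimed bound on $\|G_\Haar^{(t)} y G_\Haar^{(t)}\|_1$.

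For this diagonal element, I will expand
\be
\bra{v}G_\Haar^{(t)}\ket{v} = \E_{C\sim\Haar(m)}\, \prod_{r=1}^{t} C_{k_r,k_r}\,C^{*}_{k'_r,k'_r},
\ee
a balanced monomial of degree $(t,t)$, and apply the Weingarten formula for $\Haar(\mathrm{U}(d^m))$, which gives
\be
\bra{v}G_\Haar^{(t)}\ket{v} = \sum_{\sigma,\tau\in S_v}\alpha(\sigma^{-1}\tau),
\ee
where $S_v := \{\pi\in S_t : k_r = k'_{\pi(r)} \text{ for all } r\}$ is the set of permutations consistent with the index pattern. Elementary combinatorics bounds $|S_v|\le t!$ (with equality when all the $k_r$ and $k'_r$ coincide), and, for $d^m \ge t$, the Weingarten coefficients satisfy $|\alpha(\lambda)| = O(d^{-m(t+\dist(\lambda))}) \le O(d^{-mt})$ as noted in the excerpt. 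Summing the at most $(t!)^2$ nonzero terms then gives $\bra{v}G_\Haar^{(t)}\ket{v} \le (t!)^2\cdot O(d^{-mt}) = t^{O(t)}/d^{mt}$, finishing the proof in the regime $d^m \ge t$.

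The main subtlety is the regime $d^m < t$, in which the permutation operators become linearly dependent and the Weingarten coefficients must be interpreted via a pseudo-inverse so that the bound $|\alpha(\lambda)| \le O(d^{-mt})$ is no longer immediate. However, in this regime the right-hand side $t^{O(t)}/d^{mt}$ is already $\Omega(1)$, while $\|G_\Haar^{(t)} y G_\Haar^{(t)}\|_1 \le \|G_\Haar^{(t)}\|_\infty^2\,\|y\|_1 = 1$ (since $G_\Haar^{(t)}$ is a projector and $y$ is rank-one with unit operator norm), so the desired estimate is automatic. Combining the two regimes establishes the lemma in full generality.
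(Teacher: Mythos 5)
Your proof is correct, and its first half — factoring $G_{\Haar(m)}^{(t)} y G_{\Haar(m)}^{(t)}$ as a rank-one operator so that its trace norm equals $\sqrt{\braket{a|G_{\Haar(m)}^{(t)}|a}\braket{b|G_{\Haar(m)}^{(t)}|b}}$, hence reducing everything to the largest diagonal matrix element — is exactly the paper's reduction. Where you diverge is in bounding that diagonal element. The paper treats it as a balanced monomial and applies the generalized H\"older inequality, $\E\left|C_{a_1,b_1}\cdots C_{a_t,b_t}\right|^2 \leq \prod_i \left(\E|C_{a_i,b_i}|^{2t}\right)^{1/t}$, then plugs in the closed form $\E|C_{a,b}|^{2t} \leq t!/d^{mt}$ from Collins--\'Sniady; this is elementary, yields the sharper constant $t!$ rather than $(t!)^2$, and needs no case analysis on the relative size of $d^m$ and $t$. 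Your route instead expands the diagonal element in the Weingarten sum $\sum_{\sigma,\tau\in S_v}\alpha(\sigma^{-1}\tau)$ and counts terms; this is also fine for a $t^{O(t)}$-type bound, handles arbitrary index patterns (including non-diagonal ones) without a separate "unbalanced moments vanish" remark, and you correctly flag and dispose of the only real pitfall — the regime $d^m < t$ where the Weingarten asymptotics are delicate — by observing that there the target bound exceeds the trivial estimate $\|G_{\Haar(m)}^{(t)} y G_{\Haar(m)}^{(t)}\|_1 \leq 1$. Both arguments are valid; the paper's buys a cleaner constant, yours buys uniformity over index patterns at the cost of the extra regime check.
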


\subsection{Gap bounds for the product of overlapping Haar projectors}
\label{sec:overlap-lemmas}

We will later need the following results, with proofs deferred until
\secref{overlap-proofs}.

\begin{lemma} 
\torestate{\OldNormalFont{} $\|G_C G_R -G_\Haar^{(t)}\|_\infty \leq 1/d^{\Omega(\sqrt{ n})}$. 
\label{lem:2Dproj}}
\end{lemma}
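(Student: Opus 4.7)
I will follow the strategy outlined in the proof sketch of \thmref{grid}. Since $V_\Haar := \Img(G_\Haar^{(t)}) \subseteq \Img(G_R) =: V_R$ and similarly $V_\Haar \subseteq V_C := \Img(G_C)$, I decompose $G_R = G_\Haar^{(t)} + P_{W_R}$ and $G_C = G_\Haar^{(t)} + P_{W_C}$, where $W_R = V_R \ominus V_\Haar$ and $W_C = V_C \ominus V_\Haar$. Expanding the product and using $(G_\Haar^{(t)})^2 = G_\Haar^{(t)}$ together with $G_\Haar^{(t)} P_{W_R} = P_{W_C} G_\Haar^{(t)} = 0$ collapses all cross terms to $G_C G_R - G_\Haar^{(t)} = P_{W_C} P_{W_R}$. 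Its operator norm equals the cosine of the minimum principal angle between $W_R$ and $W_C$, so it suffices to show this cosine is at most $d^{-\Omega(\sqrt n)}$.

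I first verify $V_R \cap V_C = V_\Haar$. Here $V_R$ is the joint commutant of all row-product unitaries $(\bigotimes_r U_r)^{\ot t, t}$, and likewise $V_C$ for columns. Because a row unitary can be any unitary on the row's $d^{\sqrt n}$-dimensional Hilbert space -- in particular any 2-qudit gate within that row -- and similarly for columns, the group generated by all row and column unitaries is dense in $\text U(d^n)$ via the standard 2-D nearest-neighbor universality argument. Hence $V_R \cap V_C$ is the joint commutant of $\{C^{\ot t, t} : C \in \text U(d^n)\}$, which by Schur--Weyl duality is exactly $V_\Haar$.

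For the angle estimate I use the natural bases $\ket{\vec\pi_R} := \bigotimes_r \ket{\psi_{\pi_r}^{(r)}}$ of $V_R$ (and analogously $\ket{\vec\sigma_C}$ of $V_C$), indexed by $\vec\pi, \vec\sigma \in S_t^{\sqrt n}$. The preliminary identity $\braket{\psi_\tau | \psi_\sigma} = D^{-\dist(\tau,\sigma)}$ on a $D$-dimensional underlying space gives
\[
\braket{\vec\pi_R | \vec\pi'_R} = \prod_r d^{-\sqrt n \cdot \dist(\pi_r, \pi'_r)},
\qquad
\braket{\vec\pi_R | \vec\sigma_C} = \prod_{r,c} d^{-\dist(\pi_r, \sigma_c)}.
\]
The self-Gram $G_{V_R}$ factors as a tensor product over rows of $t! \times t!$ matrices whose off-diagonal entries are $\leq d^{-\sqrt n}$, so by Gershgorin plus the tensor-product spectrum $G_{V_R}, G_{V_C} \succeq (1 - O(\sqrt n \cdot t! \cdot d^{-\sqrt n})) I$, making the bases approximately orthonormal. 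Moreover $V_\Haar$ is spanned by the constant assignments $\vec\pi = (\tau, \ldots, \tau)$ for $\tau \in S_t$, and the cross-Gram entry $\prod_{r,c} d^{-\dist(\pi_r, \sigma_c)}$ depends on $\vec\pi, \vec\sigma$ only through their histograms $n_\tau(\cdot) := |\{r : \pi_r = \tau\}|$, since the exponent equals $\sum_{\tau, \tau'} n_{\tau'}(\vec\pi) n_\tau(\vec\sigma) \dist(\tau', \tau)$. Thus the cross Gram has rank at most $\binom{\sqrt n + t! - 1}{t! - 1} = \poly(\sqrt n)$ for fixed $t$.

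The key combinatorial estimate is that for any non-constant $\vec\pi$ and any $\vec\sigma$, $\sum_r \dist(\pi_r, \sigma_c) \geq \min_\tau \sum_r \dist(\pi_r, \tau) \geq 1$ for every column $c$, so $\sum_{r,c} \dist(\pi_r, \sigma_c) \geq \sqrt n$, giving $|\braket{\vec\pi_R | \vec\sigma_C}| \leq d^{-\sqrt n}$ whenever at least one of $\vec\pi, \vec\sigma$ is non-constant. Combined with the $\poly(\sqrt n)$ rank bound, the cross Gram restricted to the orthogonal complement of $V_\Haar$ has operator norm at most $\poly(\sqrt n) \cdot d^{-\sqrt n} = d^{-\Omega(\sqrt n)}$; after correcting for the approximate-orthonormality perturbation this yields $\|P_{W_C} P_{W_R}\|_\infty \leq d^{-\Omega(\sqrt n)}$. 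The main obstacle I anticipate is precisely this final bookkeeping: translating entry-wise bounds on a non-orthonormal basis of size $(t!)^{\sqrt n}$ into an operator-norm bound on $P_{W_C} P_{W_R}$ without losing the exponential gain. Restricting to the permutation-histogram sub-basis, in which the cross Gram is already block-diagonal with polynomial-sized blocks, should provide the right vehicle since each block can be diagonalized explicitly and its non-$V_\Haar$ eigenvalues controlled directly via the $d^{-\sqrt n}$ entry bound.
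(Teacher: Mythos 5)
Your setup is sound and matches the paper's architecture: the identity $G_CG_R-G_\Haar^{(t)}=P_{W_C}P_{W_R}$ (the paper gets the same conclusion via Jordan's two-projector lemma), the reduction to the principal angle between $\tilde V_R$ and $\tilde V_C$, the Gram-matrix framework with the bases $\ket{R_{\vec\pi}},\ket{C_{\vec\sigma}}$, and the lower bound $J\succeq (1-O(\sqrt n\, t^2 d^{-\sqrt n}))I$ on the self-Grams are all exactly what the paper does. The gap is in the one step that carries all the quantitative content: bounding the operator norm of the cross-Gram $Q$ on the non-constant tuples. Your inference ``each entry is at most $d^{-\sqrt n}$ and the rank is $\poly(\sqrt n)$, hence $\|Q\|_\infty\le\poly(\sqrt n)\cdot d^{-\sqrt n}$'' is false: a rank-one $N\times N$ matrix with entries equal to $\eps$ has operator norm $N\eps$. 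Here the histogram classes have multiplicities as large as $\binom{\sqrt n}{\mu}\sim (t!)^{\sqrt n}$, so a unit vector of $W_R$ spread uniformly over one class pairs with the cross-Gram to produce a factor $\sqrt{N_\mu N_\nu}$ that must be beaten by the distance exponent; the uniform entry bound $d^{-\sqrt n}$ is far too weak for this (summing it over a row gives $(t!/d)^{\sqrt n}$, which already exceeds $1$ for $d=2$, $t\ge 3$). The fallback you propose is also not accurate: $Q$ is not block-diagonal in the histogram basis (every pair of histograms has a nonzero cross entry); it merely factors through histograms, which gives the rank bound but not a norm bound.

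What the paper does instead is apply the Schur/row-sum test (its Lemma \ref{sumofrows}) and prove the sharp estimate
\begin{equation*}
\sum_{\vec\pi\in S_t^{\sqrt n}} \braket{R_{\vec\pi}|C_{\vec\sigma}}
=\Bigl(\sum_{\pi\in S_t} d^{-\sum_{c}\dist(\pi,\sigma_c)}\Bigr)^{\sqrt n}
\le \Bigl(\tfrac 1d+\tfrac{1}{d^{\sqrt n-1}}+\tfrac{2t^2}{d^{\sqrt n}}\Bigr)^{\sqrt n},
\end{equation*}
valid for every non-constant $\vec\sigma$ (Propositions \ref{Qissmall} and \ref{maininequality}), using a convexity argument over histogram types together with a majorization/rearrangement bound and the Weingarten-type sum $f_t(\alpha)\le 1+2t^2/\alpha$. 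This shows that the entries of $Q$ decay fast enough with $\sum_{r,c}\dist(\pi_r,\sigma_c)$ that the full sum over all $(t!)^{\sqrt n}$ tuples is still only $\approx d^{-\sqrt n}$ — this ``extended quasi-orthogonality'' is the heart of the lemma, and your proposal does not contain it or a substitute for it.
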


\begin{lemma}
\torestate{\OldNormalFont{}
Let $D = O(\ln n / \ln \ln n)$ with small enough constant factor, then $\| G_{\text{Planes}(D)} G_{\text{Rows}(D,n)}  - G_{\Haar} \|_\infty \leq 1/d^{\Omega(n^{1-1/D})}$.
\label{lem:generalDproj}}
\end{lemma}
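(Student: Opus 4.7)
My plan is to mimic the approach of Lemma \ref{lem:2Dproj}, viewing the $D$-dimensional lattice as an asymmetric two-dimensional super-grid: one axis indexes the $n^{1/D}$ planes perpendicular to direction $D$ (each of size $n^{1-1/D}$) and the other indexes the $n^{1-1/D}$ rows in direction $D$ (each of size $n^{1/D}$), with every cell a single qudit. Under this view, $G_{\text{Planes}(D)}$ and $G_{\text{Rows}(D,n)}$ play the roles of $G_C$ and $G_R$ in the 2D argument.

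The first step is to exploit that $V_\Haar$ lies in the range of both projectors, since any global permutation $V(\tau)$ factors as $\bigotimes_i V_{p_i}(\tau) = \bigotimes_j V_{r_j}(\tau)$. This gives $G_\Haar G_{\text{Planes}(D)} = G_\Haar = G_{\text{Rows}(D,n)} G_\Haar$, whence the identity
\[ G_{\text{Planes}(D)} G_{\text{Rows}(D,n)} - G_\Haar = (G_{\text{Planes}(D)} - G_\Haar)(G_{\text{Rows}(D,n)} - G_\Haar). \]
Both factors are orthogonal projectors in the Hilbert-Schmidt inner product, so the left side has operator norm $\cos\theta_{\min}$, where $\theta_{\min}$ is the minimum Jordan principal angle between $\text{range}(G_{\text{Planes}(D)}) \cap V_\Haar^\perp$ and $\text{range}(G_{\text{Rows}(D,n)}) \cap V_\Haar^\perp$. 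To ensure $\theta_{\min} > 0$, I would invoke Schur-Weyl duality: each range is the commutant of $U^{\otimes t,t}$ as $U$ ranges over unitaries on a single plane or row, so the intersection commutes with the group generated by all plane and row gates. Since these generate $\text{U}(d^n)$ by connectivity of the lattice, the intersection collapses to the commutant of $\text{U}(d^n)^{\otimes t,t}$, namely the span of global permutations $V_\Haar$.

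The crux is then to bound $\cos\theta_{\min}$. I would expand vectors in the $V_\Haar^\perp$-restricted ranges in the permutation spanning sets $|W^P_{\vec\sigma}\rangle = |\bigotimes_i V_{p_i}(\sigma_i)\rangle$ and $|W^R_{\vec\pi}\rangle = |\bigotimes_j V_{r_j}(\pi_j)\rangle$, and compute the Hilbert-Schmidt overlaps
\[ \bigl\langle W^P_{\vec\sigma}, W^R_{\vec\pi}\bigr\rangle_{\mathrm{HS}} = \prod_{i,j} d^{\,\#\mathrm{cycles}(\sigma_i^{-1}\pi_j)} = d^{\,nt - \sum_{i,j} \dist(\sigma_i, \pi_j)}, \]
with $\dist$ the Cayley distance on $S_t$ and the product running over the $n$ cells. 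A short combinatorial minimization -- dominated by the case where $\vec\sigma$ and $\vec\pi$ each differ from a common permutation in a single entry -- shows that $\sum_{i,j} \dist(\sigma_i, \pi_j) = \Omega(n^{1-1/D})$ whenever both are non-constant, giving the claimed $d^{-\Omega(n^{1-1/D})}$ decay on the normalized entry-wise overlap.

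The main obstacle I anticipate is converting this entry-wise Gram bound into an operator-norm bound on the product of projectors, because the permutation spanning sets are far from orthonormal. Inverting the Gram matrix on a subsystem of dimension $d^m$ introduces Weingarten coefficients of size $O(d^{-m(t + \dist(\lambda))})$, which must be controlled both on planes ($m = n^{1-1/D}$) and on rows ($m = n^{1/D}$). This is exactly where the constraint $D = O(\ln n / \ln \ln n)$ enters: we need $n^{1/D}$ to remain at least a sufficiently large constant multiple of $t$ so that the Weingarten corrections on a row stay subdominant to the $d^{-\Omega(n^{1-1/D})}$ gap extracted above. Once these corrections are tamed, the argument reduces to the same computation as in the symmetric 2D setting of Lemma \ref{lem:2Dproj} and delivers the stated bound.
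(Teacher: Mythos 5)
Your overall architecture --- treating the $D$-dimensional lattice as an asymmetric planes-versus-rows bipartition, reducing $\|G_{\text{Planes}(D)}G_{\text{Rows}(D,n)} - G_\Haar\|_\infty$ to the principal angle between $\tilde V_{\text{Planes}(D)}$ and $\tilde V_{\text{Rows}(D,n)}$, and computing the pairwise overlaps $d^{-\sum_{i,j}\dist(\pi_i,\sigma_j)}$ --- is exactly the paper's; your identity $G_PG_R - G_\Haar = (G_P-G_\Haar)(G_R-G_\Haar)$ is in fact a cleaner route to the analogue of Proposition \ref{prop:angle} than the Jordan-decomposition computation used there. The gap sits at the step you yourself flag as the main obstacle, and your proposed fix does not close it. An entrywise bound $|\braket{F_{\vec\pi}|D_{\vec\sigma}}| \leq d^{-\Omega(n^{1-1/D})}$ over non-constant tuples cannot by itself control the angle: the spanning sets have $t!^{n^{1/D}}$ and $t!^{n^{1-1/D}}$ elements, so any estimate that pays (number of entries) times (largest entry), or the square root thereof, is useless once $t!$ exceeds a fixed power of $d$. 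What the paper actually proves (the generalizations of Propositions \ref{gapandQ} and \ref{Qissmall}, resting on Proposition \ref{maininequality}) is a bound on the \emph{row and column sums} of the overlap matrix $Q$: for a fixed non-constant tuple, the full sum $\sum_{\sigma\in S_t} d^{-\sum_i \dist(\pi_i,\sigma)}$ is at most $\frac{1}{d}+\frac{1}{d^{M-1}}+\frac{2t^2}{d^{M}}$, this factorizes over the rows, and the Schur/Perron--Frobenius test (Lemma \ref{sumofrows}) then yields $\|Q\|_\infty \leq d^{-\Omega(n^{1-1/D})}$ with the $t$-dependence under control. Your proposal contains no analogue of this summation bound, and the Weingarten expansion you gesture at would still have to carry out essentially the same sum over $t!^{n^{1-1/D}}$ tuples to control its coefficients, so it does not sidestep the issue; the paper instead handles the non-orthonormality by lower-bounding $\lambda_{\min}$ of the two Gram matrices, which only costs a factor $c_{D,d,n,t}$ close to $1$.

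A secondary imprecision concerns where the hypothesis $D = O(\ln n/\ln\ln n)$ enters. It is not enough that $n^{1/D}$ exceed a constant multiple of $t$: the normalization correction must be controlled simultaneously over all $n^{1-1/D}$ rows, and the paper's bound $\lambda_{\min}(\tilde{J}_{\text{Rows}(D,n)}) \geq 1-\frac{n^{1-1/D}t(t-1)}{2d^{n^{1/D}}}$ requires $d^{n^{1/D}} \gg n^{1-1/D}t^2$, i.e.\ $n^{1/D}$ growing at least like $\log_d n$. This is precisely what $D \leq c\ln n/\ln\ln n$ with $c$ small buys, since then $n^{1/D} \geq (\ln n)^{1/c}$ with $1/c>1$; your weaker reading would admit $D = \Theta(\ln n)$, for which $n^{1/D}=O(1)$ and the bound fails.
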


\begin{lemma}
\torestate{\OldNormalFont{}
Let $\ket{x} $ and $\ket{y}$ be two computational basis states. For small enough $D = O(\ln n / \ln \ln n)$ and large enough  $c$, $|\braket{x | \tilde{G}_{n,D,c}  -G_\Haar | y}| \leq \frac{\eps}{d^{nt}}$ for some $\eps = 1/d^{\Omega(n^{1/D})}$.
\label{lem:basicDmonomial}}
\end{lemma}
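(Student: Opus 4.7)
I'll prove \lemref{basicDmonomial} by induction on $D$, simultaneously carrying two auxiliary statements at each level: the operator-norm bound $\|\tilde{G}_{n,D,c} - G_\Haar\|_\infty \leq \delta_D$ (essentially part (3) of \prettyref{thm:lattice}, used inside a hybrid argument below) and an $\ell_2$-on-computational-basis bound $\|(\tilde{G}_{n,D,c} - G_\Haar)\ket{y}\|_2 \leq \delta_D / d^{nt/2}$. All three statements hold with $\delta_D = 1/d^{\Omega(n^{1/D})}$ and close under the induction.

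Set $\tilde{G} := \tilde{G}^{\tensor n^{1/D}}_{n^{1-1/D},\,D-1,\,c}$ and $M' := G_{\text{Rows}(D,n)}\bigl[(\tilde{G})^2 G_{\text{Rows}(D,n)}\bigr]^{c-1}$, so $\tilde{G}_{n,D,c} = \tilde{G}\, M' \,\tilde{G}$. Every factor here is a Haar projector on some subsystem and therefore acts as the identity on the global Haar-invariant subspace; the identities $\tilde{G}\,G_\Haar\,\tilde{G} = G_\Haar$ and $M' G_\Haar = G_\Haar M' = G_\Haar$ give the central simplification
\[
\tilde{G}_{n,D,c} - G_\Haar \;=\; \tilde{G}\,(M' - G_\Haar)\,\tilde{G}.
\]
A hybrid replaces each $\tilde{G}$ in $M'$ by $G_{\text{Planes}(D)}$ at a cost controlled by the inductive operator-norm bound, and then \lemref{generalDproj}, combined with the orthogonality $G_\Haar E = E G_\Haar = 0$ for $E := G_{\text{Rows}(D,n)}G_{\text{Planes}(D)} - G_\Haar$ (which implies $(G_{\text{Rows}(D,n)}G_{\text{Planes}(D)})^c = G_\Haar + E^c$), yields $\|M' - G_\Haar\|_\infty \leq 1/d^{\Omega(n^{1/D})}$ once $c$ is a large enough constant depending on $D$ and $t$.

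Cauchy--Schwarz applied to the central identity gives
\[
|\braket{x | \tilde{G}_{n,D,c} - G_\Haar | y}| \;\leq\; \|\tilde{G}\ket{x}\|_2 \cdot \|M' - G_\Haar\|_\infty \cdot \|\tilde{G}\ket{y}\|_2,
\]
so the needed factor of $1/d^{nt}$ must come from the outer $\ell_2$ norms. Writing $\tilde{G} = G_{\text{Planes}(D)} + F$, the identity $F\, G_{\text{Planes}(D)} = G_{\text{Planes}(D)} F = 0$ holds because each $\tilde{G}_{D-1,i}$ fixes the Haar-invariant subspace of plane $i$. The $G_{\text{Planes}(D)}$ part is controlled by \lemref{Haarmoment}: $\|G_{\text{Planes}(D)}\ket{y}\|_2^2 = \prod_i \braket{y_i | G_{\Haar(p_i)} | y_i} \leq t^{O(tn^{1/D})}/d^{nt}$. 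For $F$ I telescope $F = \sum_j A_j \otimes F_j \otimes B_j$ across the $n^{1/D}$ planes (with $F_j := \tilde{G}_{D-1,j} - G_{\Haar(p_j)}$). The summands are \emph{mutually orthogonal}: for $j \neq j'$, the plane-$\min(j,j')$ tensor factor of one summand is $F_{\min(j,j')}\ket{y_{\min(j,j')}}$ while the other involves $G_\Haar\ket{y_{\min(j,j')}}$, and $G_\Haar F_j = 0$ forces the per-plane inner product to vanish. Thus $\|F\ket{y}\|_2^2$ decomposes into a sum of products of per-plane $\ell_2$-norms, and combining the inductive bound $\|F_j\ket{y_j}\|_2 \leq \delta_{D-1}/d^{n^{1-1/D}t/2}$ on the pivot plane with the Haar diagonal moments on the other planes yields $\|F\ket{y}\|_2 \leq 1/d^{nt/2 + \Omega(n^{1/D})}$. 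Assembling the three factors produces $\leq t^{O(tn^{1/D})}/d^{nt + \Omega(n^{1/D})} \leq \eps/d^{nt}$ with $\eps = 1/d^{\Omega(n^{1/D})}$ once the constants absorb the $t$-prefactor.

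The base case $D = 2$ is particularly clean: $\tilde{G}_{\sqrt n, 1, c}$ is literally $G_{\Haar(\sqrt n)}$, so $F = 0$ and one only combines $\|G_{\text{Planes}(2)}\ket{y}\|_2 \leq t^{O(t\sqrt n)/2}/d^{nt/2}$ from \lemref{Haarmoment} with $\|G_{\text{Planes}(2)}G_{\text{Rows}(2)} - G_\Haar\|_\infty \leq 1/d^{\Omega(\sqrt n)}$ from \lemref{2Dproj}. The principal obstacle throughout is the tension between the Haar-moment overhead $t^{O(tn^{1/D})}$—aggregated across the $n^{1/D}$ planes—and the inductive savings $d^{\Omega(n^{1/D})}$ at each level. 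Keeping the savings ahead of the overhead is exactly what forces both $D = O(\ln n / \ln \ln n)$ and the need for $c$ to grow with $D$ and $t$: each inductive step requires the constant hidden in $\Omega(n^{1/D})$ at level $D-1$ to be large enough to absorb the $O(tn^{1/D} \log t)$ contributions introduced when moving to level $D$.
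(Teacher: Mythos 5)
Your proof is correct and follows essentially the same route as the paper's: the same central identity $\tilde{G}_{n,D,c}-G_\Haar = \tilde{G}(M'-G_\Haar)\tilde{G}$, the same H\"older/Cauchy--Schwarz split into a middle operator-norm factor (controlled via \lemref{generalDproj} and a hybrid argument) times outer $\ell_2$ factors, and the same induction on $D$ carrying $\|(\tilde{G}_{n,D,c}-G_\Haar)\ket{y}\|_2^2\leq \eps/d^{nt}$. The only cosmetic difference is that you bound $\|\tilde{G}^{\tensor n^{1/D}}\ket{y}\|_2$ via the orthogonal decomposition $G_{\text{Planes}(D)}+F$ with a telescoping sum, where the paper uses the per-plane identity $\tilde{G}^2 = G_\Haar + (\tilde{G}-G_\Haar)^2$ and factorizes the resulting trace norm over planes; these are interchangeable.
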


\begin{lemma}\OldNormalFont{}
For large enough $c$, $\left \|\Channel\left[( G_R G_C G_R )^{c} -G_\Haar^{(t)}\right] \right \|_\diamond = \frac{t^{O(\sqrt n t)}}{d^{\Omega(c \sqrt n)}}$.
\label{lem:diamond2}
\end{lemma}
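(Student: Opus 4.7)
The plan is to reduce $(G_R G_C G_R)^c - G_\Haar^{(t)}$ to an operator of the form $G_R P^c$ with $P := G_C G_R - G_\Haar^{(t)}$, extract an operator-norm bound from Lemma~\ref{lem:2Dproj}, and then convert to a diamond-norm bound via a standard comparison inequality, absorbing the unavoidable dimension loss into the hypothesis that $c$ is large.

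For the algebraic reduction, I would use $G_R^2 = G_R$ together with $G_\Haar^{(t)} \preceq G_R, G_C$ (which forces $G_R G_\Haar^{(t)} = G_\Haar^{(t)} G_R = G_\Haar^{(t)}$ and similarly for $G_C$) to deduce $P G_\Haar^{(t)} = G_\Haar^{(t)} P = 0$ and $P G_R = P$. A short induction then yields $(G_\Haar^{(t)} + P)^c = G_\Haar^{(t)} + P^c$, and combined with $(G_R G_C G_R)^c = G_R (G_C G_R)^c$ this gives
\[
(G_R G_C G_R)^c - G_\Haar^{(t)} = G_R \bigl( (G_C G_R)^c - G_\Haar^{(t)} \bigr) = G_R P^c =: M.
\]
By Lemma~\ref{lem:2Dproj}, $\|P\|_\infty \leq 1/d^{\Omega(\sqrt n)}$, so $\|M\|_\infty \leq \|G_R\|_\infty \cdot \|P\|_\infty^c \leq 1/d^{\Omega(c\sqrt n)}$. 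Furthermore, because $M = G_R M G_R$, one has $\operatorname{rank}(M) \leq \operatorname{rank}(G_R) \leq (t!)^{\sqrt n} = t^{O(\sqrt n t)}$, and hence $\|M\|_1 \leq t^{O(\sqrt n t)}/d^{\Omega(c\sqrt n)}$.

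The last step converts the operator-norm bound on $M$ into a diamond-norm bound on $\Channel[M]$. The Liouville-space matrix of $\Channel[M]$ coincides with $M$ itself (since $\Channel[A \otimes B^T](Z) = A Z B$ corresponds in vectorized form to multiplication by $A \otimes B^T$), giving $\|\Channel[M]\|_{2\to 2} = \|M\|_\infty$. Combining this with the standard inequality $\|\Phi\|_\diamond \leq D \cdot \|\Phi\|_{2\to 2}$ for superoperators on $D \times D$ matrices (here $D = d^{nt}$, obtained from $\|X\|_1 \leq \sqrt{\operatorname{rank}(X)}\,\|X\|_2$ applied to the tensor-with-ancilla output) yields $\|\Channel[M]\|_\diamond \leq d^{nt}/d^{\Omega(c\sqrt n)}$. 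Once $c$ is chosen to be a sufficiently large multiple of $\sqrt n \cdot t$, the $d^{nt}$ factor is absorbed by $d^{\Omega(c\sqrt n)}$, producing the claimed $t^{O(\sqrt n t)}/d^{O(c\sqrt n)}$.

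The main obstacle is this diamond-norm conversion: passing from $\|\cdot\|_\infty$ (or even $\|\cdot\|_1$) to $\|\cdot\|_\diamond$ inherently costs a factor that scales with the ambient dimension $d^{nt}$, and only the ``large enough $c$'' hypothesis allows the bound to settle into the desired asymptotic form. The rank bound $(t!)^{\sqrt n}$, inherited from $M$ being supported in $\operatorname{Image}(G_R)$, is what controls the $t^{O(\sqrt n t)}$ prefactor in the final estimate and prevents it from being replaced by a worse polynomial in $d^{n}$.
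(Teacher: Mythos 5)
Your algebraic reduction is sound: the identities $P G_\Haar^{(t)} = G_\Haar^{(t)} P = 0$ and $P G_R = P$ do give $(G_R G_C G_R)^c - G_\Haar^{(t)} = G_R P^c =: M$, and \lemref{2Dproj} then yields $\|M\|_\infty \leq 1/d^{\Omega(c\sqrt n)}$ together with $\rank(M) \leq (t!)^{\sqrt n}$. The gap is in the final norm conversion. Paying the ambient dimension $D = d^{nt}$ to pass from $\|\cdot\|_{2\to2}$ to $\|\cdot\|_\diamond$ gives $\|\Channel[M]\|_\diamond \leq d^{nt}/d^{\Omega(c\sqrt n)}$, and this matches the claimed form $t^{O(\sqrt n t)}/d^{O(c\sqrt n)}$ only once $c = \Omega(t\sqrt n)$. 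But the lemma is invoked in the proof of Theorem~\ref{thm:grid} with $c = O(t\ln t + \ln(1/\delta)/\sqrt n)$; with $c = \Omega(t\sqrt n)$ the total depth $(2c+1)2s$ becomes $\poly(t)\cdot n$, i.e.\ linear in $n$, which is exactly the regime the paper is trying to escape. So your argument proves a strictly weaker statement that cannot support the intended application. The rank bound on $M$ does not rescue this on its own, because a low-rank operator in the computational basis does not by itself yield a low-cost passage to the diamond norm.

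The missing idea (this lemma is proved in the paper as the special case $D=2$ of \lemref{diamondD}) is to expand $M$ not in the computational basis but in the permutation basis: write $M = \sum_{a,b} \ket{R_a}\Lambda_{a,b}\bra{R_b}$ with $a,b$ ranging over tuples of permutations, and use the fact that $\Channel\left[\ket{\psi_\pi}\bra{\psi_\sigma}\right] = \frac{1}{d^{nt}}V(\pi)V^\ast(\sigma)$ has diamond norm at most $1$. The triangle inequality then costs only the number of basis elements, $t^{O(t\sqrt n)}$, times $\max_{a,b}|\Lambda_{a,b}|$, and the latter is controlled by $\|M\|_\infty$ after accounting for the near-orthogonality of the basis (via the Gram matrix $J$ and the bound $\|J-\id\|_\infty \leq 1/2$, which is where Proposition~\ref{SpectrumJ} enters). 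This substitutes the subspace-dimension factor $t^{O(t\sqrt n)}$ for your $d^{nt}$, and is what lets $c$ depend only on $t$ rather than on $n$.
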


\begin{lemma}
\torestate{\OldNormalFont{} For small enough $D = O(\ln n / \ln \ln n)$ and large enough  $c$, 
\be
\left \|\Channel\left[( G_{\text{Rows}(D,n)} G_{\text{Planes}(D)} G_{\text{Rows}(D,n)})^{c} - G_\Haar^{(t)}\right] \right\|_\diamond = \frac{t^{O(t n^{1-1/D})}}{d^{\Omega(c n^{1-1/D})}}.
\ee
\label{lem:diamondD}}
\end{lemma}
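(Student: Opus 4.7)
The plan is to reduce to the operator-norm bound of Lemma~\ref{lem:generalDproj} and then pay a rank factor to lift it first to a trace norm and then to a diamond norm. Set $M := G_{\text{Rows}(D,n)} G_{\text{Planes}(D)} G_{\text{Rows}(D,n)}$. The key observation is that $G_\Haar^{(t)}$ is a sub-projector of both $G_{\text{Rows}(D,n)}$ and $G_{\text{Planes}(D)}$ (a globally Haar-invariant vector is also invariant under the Haar measure on any subset of qudits), so each absorbs $G_\Haar^{(t)}$ on both sides: $G_{\text{Rows}(D,n)} G_\Haar^{(t)} = G_{\text{Planes}(D)} G_\Haar^{(t)} = G_\Haar^{(t)}$ and symmetrically. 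Consequently $M G_\Haar^{(t)} = G_\Haar^{(t)} M = G_\Haar^{(t)}$, which via induction on $c$ yields the collapsing identity
\be
M^c - G_\Haar^{(t)} = (M - G_\Haar^{(t)})^c =: Q^c.
\ee

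Next I would bound the operator norm. Factor $Q = G_{\text{Rows}(D,n)}\bigl(G_{\text{Planes}(D)} G_{\text{Rows}(D,n)} - G_\Haar^{(t)}\bigr)$. Since $\|G_{\text{Rows}(D,n)}\|_\infty \leq 1$ (a projector), Lemma~\ref{lem:generalDproj} directly gives $\|Q\|_\infty \leq 1/d^{\Omega(n^{1-1/D})}$, and therefore $\|Q^c\|_\infty \leq 1/d^{\Omega(c\, n^{1-1/D})}$.

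To go from operator norm to diamond norm I use a rank bound. Because $Q = G_{\text{Rows}(D,n)}(\cdots)$, the image of $Q$, and hence of every power $Q^c$, lies inside $\Img(G_{\text{Rows}(D,n)})$. The latter is spanned by tensor products of permutation operators $V(\pi_r)$, one per row; there are $n^{1-1/D}$ rows, so $\rank(Q^c) \leq (t!)^{n^{1-1/D}} = t^{O(t\, n^{1-1/D})}$. Combining with the operator-norm bound gives
\be
\|Q^c\|_1 \leq \rank(Q^c)\,\|Q^c\|_\infty \leq \frac{t^{O(t\, n^{1-1/D})}}{d^{\Omega(c\, n^{1-1/D})}}.
\ee
Finally, using the SVD $Q^c = \sum_i s_i \ket{a_i}\bra{b_i}$ and noting that each rank-one $\Channel[\ket{a_i}\bra{b_i}]$ has diamond norm at most one (each summand takes the form $Z \mapsto A_i Z B_i^\dagger$ with $\|A_i\|_\infty, \|B_i\|_\infty \leq 1$ after the standard vectorization identification), we conclude $\|\Channel[Q^c]\|_\diamond \leq \|Q^c\|_1$, which is the claimed bound.

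The main obstacle is the rank bookkeeping in step three. The collapsing identity in step one is essential: without it, iterated products of $M$ would not obviously stay inside $\Img(G_{\text{Rows}(D,n)})$, and the rank could blow up with $c$. The hypothesis $D = O(\ln n/\ln\ln n)$ enters here too, since one needs $n^{1-1/D}$ sufficiently smaller than $n$ so that for large enough $c$ the denominator $d^{c\, n^{1-1/D}}$ dominates the numerator $t^{O(t\, n^{1-1/D})}$; for larger $D$ the rank factor would overwhelm the gain from Lemma~\ref{lem:generalDproj}.
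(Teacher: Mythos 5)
Your steps 1--4 are sound, and they in fact track the matrix-norm part of the paper's own argument: the paper likewise uses $\left\|(G_{\text{Rows}(D,n)}G_{\text{Planes}(D)}G_{\text{Rows}(D,n)})^c - G_\Haar^{(t)}\right\|_\infty \leq \left\|G_{\text{Planes}(D)}G_{\text{Rows}(D,n)} - G_\Haar^{(t)}\right\|_\infty^c$ together with Lemma~\ref{lem:generalDproj}, and the $t^{O(t n^{1-1/D})}$ factor comes from counting permutation labels over the $n^{1-1/D}$ rows. The genuine gap is in your final conversion from trace norm to diamond norm. Under the paper's convention $\Channel[X\ot Y^T](Z) = XZY$, a rank-one matrix $\ket{a}\bra{b}$ on the doubled space is generically \emph{not} of the form $X \ot Y^T$, so it does not correspond to a superoperator $Z\mapsto A Z B^\dagger$. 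Writing $\ket{a}=\vvec(A)$ and $\ket{b}=\vvec(B)$, it corresponds to $Z\mapsto A\,\Tr(B^\dagger Z)$, whose diamond norm is $\|A\|_1\|B\|_\infty$. For unit vectors $\ket a,\ket b$ this can be as large as $d^{nt/2}$: take $A = d^{-nt/2}U$ with $U$ unitary and $B$ rank one, so that $\left\|\vvec(A)\vvec(B)^\dagger\right\|_1 = 1$ while $\left\|\Channel\left[\vvec(A)\vvec(B)^\dagger\right]\right\|_\diamond = d^{-nt/2}\|U\|_1\|B\|_\infty = d^{nt/2}$. Hence the inequality $\|\Channel[A]\|_\diamond\leq\|A\|_1$ is false in general, and the loss of up to $d^{nt/2}$ (polynomial in the full Hilbert-space dimension, not in $d^{n^{1-1/D}}$) swamps the gain $d^{-\Omega(c\,n^{1-1/D})}$ for any $c$ that does not grow with $n$.

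The paper's proof is built precisely to dodge this. Instead of the SVD, it expands $(G_{\text{Rows}(D,n)}G_{\text{Planes}(D)}G_{\text{Rows}(D,n)})^c - G_\Haar^{(t)} = \sum_{a,b}\ket{D_a}\Lambda_{a,b}\bra{D_b}$ in the non-orthogonal basis of permutation states $\ket{D_a}$, whose vectorizations are (normalized) unitaries $V(a)$; each resulting term $\frac{1}{d^{nt}}V(a)\Lambda_{a,b}V^\ast(b)$ then has diamond norm exactly $|\Lambda_{a,b}|$, since $\frac{1}{d^{nt}}\|V(a)\|_1\|V(b)\|_\infty = 1$. The price of the non-orthogonality is that the $\Lambda_{a,b}$ are no longer matrix elements: the paper bounds $\|\Lambda\|_\infty$ by $\left\|(G_{\text{Rows}(D,n)}G_{\text{Planes}(D)}G_{\text{Rows}(D,n)})^c - G_\Haar^{(t)}\right\|_\infty / (1-\|J-\id\|_\infty)$, where $J$ is the Gram matrix of the $\ket{D_a}$, and shows $\|J-\id\|_\infty\leq 1/2$ via Proposition~\ref{SpectrumJ}. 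To repair your argument you would need to replace the SVD by a decomposition into rank-one terms whose left and right vectors vectorize to (multiples of) unitaries, which is essentially what the permutation-basis expansion provides.
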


In these last two lemmas, we see that $c$ will need to grow with $t$.
We believe that a sharper analysis could reduce this dependence, but
since we already have a $\poly(t)$ dependence in $s$, improving Lemmas
\ref{lem:diamond2} and \ref{lem:diamondD} would not make a big
difference.  In fact, even in 1-D, \cite{BHH-designs} found a sharp
$n$ dependence but their factor of $\poly(t)$ (which we inherit) is
probably not optimal.


\subsection{Proof of Theorem \ref{thm:grid}; $t$-designs on two-dimensional lattices}
\label{sec:proofs}
\restatetheorem{thm:grid}

\begin{proof}
\begin{enumerate}
\item This item corresponds to convergence of the individual moments of the Haar measure. A balanced moment of a distribution $\mu$ can be written as
\be
\E_{C \sim \mu}  [C_{i_1, j_1} \ldots C_{i_t,j_t} C^{\ast}_{i'_1,
  j'_1} \ldots C^{\ast}_{i'_t,j'_t} ] 
=\bra{i, i'} G_\mu^{(t)} \ket{ j,j'}
= \Tr [G_\mu^{(t)} \cdot \ket{j,j'}\bra{i,i'}]
\label{eq:balanced-mom}
\ee 
where $\ket i := \ket{i_1,\ldots,i_t}$ and so on for $\ket{i'},\ket
j,\ket{j'}$.   The same moment can also be written as
\be
 \Tr  \left(\ket{j}\bra{j'} \Channel\left[G_\mu^{(t)}\right](\ket{i}\bra{i'})\right)
\label{eq:channel-mom}
\ee
We will see that the strong design condition established by gives
us strong bounds first for the ``diagonal'' case ($i=i',j=j'$) then the
off-diagonal case.  This is because when we interpret $G_\mu^{(t)}$ as
a quantum operation, the diagonal monomials correspond to $\Tr Y
G_\mu^{(t)} X$ for psd matrices $X,Y$, and so the strong design
condition applies directly.  For off-diagonal moments we need to do a
bit more work.

For each the diagonal and off-diagonal monomials, our strategy will be
to first compare with the entries of $G_R (G_C G_R)^c G_R$ and then to
compare to $G_{\Haar}$.

First observe that 
since $\Channel\left[G_{\mu^{\lattice,n}_{2,c,s}}^{(t)}\right] = (g^s_R g^s_C)^c
g^s_R$ and  $s = \poly(t) \cdot (\sqrt n + \ln (1/\delta))$ then
corollary 6 of \cite{BHH-designs}  implies that each $g^s_i$ for $i \in \{R,C\}$ is an $\delta$-approximate $t$-design. Hence, using corollary \ref{cor:overlappingdesigns},
\be
\Channel\left[G_R \left( G_C G_R\right]\right)^c G_R (1 - \frac{\delta}{4t!})] \preceq \Channel\left[G_{\mu^{\lattice,n}_{2,c,s}}^{(t)}\right] \preceq \Channel\left[G_R ( G_C G_R\right])^c G_R (1 + \frac{\delta}{4t!}).
\label{eq:gRCR-ineq}
\ee
Note that we chose $\poly(t)$ large enough so that the error is as small as $\frac{\delta}{4t!}$. This choice will be helpful later.

%
%
Focusing first on diagonal monomials $\proj i,\proj j$ we  can bound
\bea
&&(1+ \frac{\delta}{4t!}) \Tr \left(\proj j \Channel\left[G_R (G_C G_R])^c G_R\right]
  (\proj i)\right) - 
\bra{i,j}G_{\mu^{\lattice,n}_{2,c,s}}^{(t)}\ket{i,j}\nonumber \\
 &=& \Tr \left (\proj j  [\Channel\left[G_R ( G_C G_R\right]\right)^c G_R (1+\frac{\delta}{4t!})-
 G_{\mu^{\lattice,n}_{2,c,s}}^{(t)}] ](\proj i)) \geq 0.
\eea
In other words, for diagonal monomials
\ba
\Tr \left (\proj j
  \Channel\left [G_{{\mu^{\lattice,n}_{2,c,s}}^{(t)}}^{(t)}\right](\proj i)\right)
&\leq (1+ \frac{\delta}{4t!}) \Tr \left (\proj j \Channel\left [G_R \left( G_C G_R\right]\right)^c G_R] (\proj i)\right)\nonumber \\ & =  (1+ \frac{\delta}{4t!}) \Tr \left (G_R ( G_C G_R)^c G_R \proj{i,j}\right).
\ea

Similarly, using the first inequality in \eq{gRCR-ineq}
\be
\Tr \left (\proj j
  \Channel\left [G_{{\mu^{\lattice,n}_{2,c,s}}^{(t)}}^{(t)}\right](\proj i)\right)
  \geq (1- \frac{\delta}{4t!}) \Tr \left(G_R ( G_C G_R)^c G_R \proj{i,j}\right).
\ee

The next step is to bound $ \Tr \left( y G_R (G_C G_R)^c G_R x\right)$:
\bea
\left | \Tr \left (G_R (G_C G_R)^c G_R x\right) -  \Tr \left (G_\Haar^{(t)} x\right)  \right | &=&   \left | \Tr  \left( (G_R (G_C G_R)^c G_R - G_\Haar^{(t)} ) x \right)  \right|\nonumber \\
 &=&  \left | \Tr \left  ( ((G_C G_R)^c - G_\Haar^{(t)} ) G_R x G_R \right) \right|\nonumber \\
  &\leq&   \left \| ((G_C G_R)^c - G_\Haar^{(t)} ) \|_\infty\cdot  \|G_R x G_R \right\|_1\nonumber \\
   &\leq&   \left \|G_C G_R- G_\Haar^{( t)} \right\|^c_\infty \cdot  \left (\max_{y \in [d]^{2 \sqrt n t}} \|G_{r_{1,1}} y G_{r_{1,1}}\|_1 \right)^{\sqrt n}.
\eea

In the third line we have used the H\"older's inequality. In the last inequality we have used the fact that $G_1$ is a tensor product of $G_{r_{1,i}}$ across each column in the first direction; by symmetry we can just consider $G_{r_{1,1}}$.

Using Lemma \ref{lem:Haarmoment}
\be
\max_{y\in [d]^{2 \sqrt n t} } \left \|G_{r_{1,1}} y G_{r_{1,1}}\right \|_1 = \frac{t^{O(t)}}{d^{t \sqrt n}}.
\ee
Furthermore, using Lemma \ref{lem:2Dproj}
\be
 \left\|G_C G_{R}- G_\Haar^{( t)} \right\|_\infty \leq \frac{1}{d^{\Omega(\sqrt n)}}.
\ee
therefore
\be
 \left\|G_C G_R- G_\Haar^{( t)} \right \|^c_\infty \cdot  (\max_{y \in [d]^{2 \sqrt n t}} \|G_{r_{1,1}} y G_{r_{1,1}}\|_1 )^{\sqrt n} \leq \frac{1}{d^{O(c \cdot \sqrt n)}} \cdot  (\frac{t^{O(t)}}{d^{t \sqrt n}} )^{\sqrt n}.
\ee
As a result, for some large enough $c = O(t \ln t + \frac{\ln 1/\delta}{\sqrt n})$ we conclude
\be
 | \Tr \left(G_R (G_C G_R)^c G_R x\right) -  M^{(\Haar,t)}_x   | \leq  \|G_C G_R- G_\Haar^{(t)} \|^c_\infty \cdot  (\max_{y\in[d]^{2 \sqrt n t}} \|G_{r_{1,1}} y G_{r_{1,1}}\|_1 )^{\sqrt n} \leq \frac{\delta}{4 d^{nt}}.
\label{eq:83}
\ee
As a result, using Lemma \ref{lem:Haarmoment} any diagonal monomial satisfies 
\bea
|
\Tr \left(G^{(t)}_{\mu^{\lattice,n}_{2,c,s}}\right) -  \Tr \left(G^{(t)}_{\Haar}\right)| &\leq& |\Tr \left(G_R ( G_C G_R)^c G_R x\right )-\Tr \left(G_R ( G_C G_R)^c G_R x\right)| \nonumber\\
 &&+ \frac{\delta}{4t!} |\Tr \left(G_R ( G_C G_R)^c G_R x\right)|\nonumber\\
 &\leq& \frac{\delta}{4d^{nt}}+ \frac{\delta}{4t!} (M^{(\Haar,t)}_x+ \frac{\delta}{4d^{nt}})\nonumber\\
  &\leq& \frac{\delta}{4d^{nt}}+ \frac{\delta}{4t!} (t!/d^{nt}+ \frac{\delta}{4d^{nt}})\nonumber\\
&\leq& \frac{\delta}{d^{nt}}.
\label{eq:84}
\eea


Next, we bound the expected off-diagonal monomials of the distribution. The value of the off-diagonal monomials according to the Haar measure is zero. So it is enough to bound $\max_{x \in \cO} | \Tr \left( G^{(t)}_\mu x \right) |$, where $\cO$ is the set of off-diagonal indices for moments. In order to do this we use Lemma \ref{lem:offdiag} for $\mu = \mu^{\lattice,n}_{2,c,s}$ and $\nu$ being a distribution with moment superoperator $\Channel[G_R] (\Channel[G_R] \Channel[G_C])^c \Channel[G_R]$.
\be
\max_{x \in \cO} |\Tr (G^{(t)}_{\mu^{\lattice,n}_{2,c,s}) x }| \leq \max_{x \in \cO}  \Tr (G_R (G_C G_R)^c G_R x) (1+\frac{\delta}{4t!}) + \delta/t! \cdot \max_{y \in \cD} \Tr (G_R (G_C G_R)^c G_R y).
\label{eq:85}
\ee
Here $\cD$ is the set of diagonal monomials. Using \eqref{eq:83}
\be
\max_{y \in \cD} \Tr (G_R (G_C G_R)^c G_R y) \leq \max_{y \in \cD} \Tr \left (G_\Haar^{(t)} y\right) + \frac{\delta}{4 d^{nt}} \leq \frac{t!}{d^{nt}} + \frac{\delta}{4 d^{nt}}.
\ee

In order to bound $\max_{x \in \cO}  \Tr (G_R (G_C G_R)^c G_R x)$, we first make the observation that since $x \in \cO$, $\Tr (G_\Haar^{(t)} x) =0$. Therefore
\bea
\max_{x \in \cO}  |\Tr (G_R (G_C G_R)^c G_R x)| &=& \max_{x \in \cO}  |\Tr ((G_R (G_C G_R)^c G_R - G_\Haar^{(t)})x)|\nonumber\nonumber\\
 &\leq& \max_{x \in \cO}  |\Tr ((G_R G_C)^c- G_\Haar^{(t)}) G_R x G_R)|\nonumber\nonumber\nonumber\\
  &\leq& \frac{\delta}{4 d^{nt}}.
\label{eq:87}
\eea
therefore using \eqref{eq:83}, \eqref{eq:85} and \eqref{eq:87} we conclude
\bea
\max_{x \in \cO} |\Tr (G^{(t)}_{\mu^{\lattice,n}_{2,c,s}}x )| &\leq& \frac{\delta}{4 d^{nt}} (1+\delta/ (4t!)) + \frac{\delta}{4t!} \cdot (\frac{t!}{d^{nt}} + \frac{\delta}{4 d^{nt}})\nonumber\\
&\leq& \frac{\delta}{2 d^{nt}} + 2 \delta/(4d^{nt}) \leq \frac{\delta}{d^{nt}}.
\eea

\item 
\bea
\left \| \Channel\left[G_{\mu^{\lattice,n}_{2,c,s}}^{(t)}-G_\Haar^{(t)}\right] \right\|_\diamond &\leq& \left\|  \Channel\left[g^s_R (g^s_C g^s_R)^c - (G_R G_C G_R\right])^c \right\|_\diamond+ \left\| \left(\Channel\left[G_R G_C G_R\right]\right)^c - G_\Haar^{(t)}] \right\|_\diamond\nonumber\\
&\leq& 4c \cdot \left \|  \Channel\left[g^s_{r_{1,1}}\right]^{\tensor \sqrt n} - \left[G_{r_{1,1}}\right]^{\tensor \sqrt n} \right\|_\diamond + (\frac{t^t}{d^{c}})^{O(\sqrt n)}\nonumber\\
&\leq& 4c \cdot \sqrt n \cdot \left \|  \Channel\left[g^s_{r_{1,1}} - G_{r_{1,1}}\right] \right\|_\diamond + (\frac{t^t}{d^{c}})^{O(\sqrt n)}\nonumber\\
&\leq& \delta/2 + \delta/2\nonumber\\
&\leq& \delta.
\eea

In the first line we have used triangle inequality and the definition $K_{\mu^{\lattice,n}_{2,c,s}}^{(t)} = (\prod_\alpha g^s_{\text{Rows}(\alpha,n)})^c$. In the second line, for the first term we have used Lemma \ref{lem:bv} and that all operators are compositions of moment superoperators. For the second part we have used Lemma \ref{lem:diamond2}.
In the third inequality we have used Lemma \ref{lem:tensorerror}. In fourth inequality, the first term ($\delta/2$) comes from lemma 3 and corollary 6 of \cite{BHH-designs} for  $s= \poly(t) \cdot (\sqrt n + \ln \frac{1}{\delta} )$, and the second $\delta/2$ is by the choice $c = O(t \ln t + \frac{\ln (1/\delta)}{\sqrt  n})$.


\item Let $Q_0 := G_{r_{1,1}}$ and $Q_1 := G_{r_{1,1}} - g^s_{r_{1,1}}$, and for $x \in \{0,1\}^{\sqrt n}$ let $Q_x = Q_{x_1} \ldots Q_{x_{\sqrt n}}$. Here, $\|Q_0\|_1 = t!$ and $\|Q_x\|_1 = t!^{\sqrt n - |x|} \cdot \|G_{r_{1,1}} - g^s_{r_{1,1}}\|^{|x|}$. 
\bea
\| G_{\mu^{\lattice,n}_{2,c,s}}^{(t)} - G_\Haar^{(t)} \|_1 &\leq& \|  (g^s_C g^s_R)^c - (G_C G_R)^c \|_1 + \| (G_C G_R)^c - G_\Haar^{(t)} \|_1\nonumber\\
&\leq& 4 c \cdot \|  (g^s_{r_{1,1}})^{\tensor \sqrt n} -
G_{r_{1,1}}^{\tensor \sqrt n} \|_1 + t^{O(t) \sqrt n} \|G_C G_R -
G_\Haar^{(t)} \|^c_\infty\label{eq:two-terms}
\eea
We bound the two terms separately.
First
\ba
4 c \|  (g^s_{r_{1,1}})^{\tensor \sqrt n} - G_{r_{1,1}}^{\tensor
  \sqrt n} \|_1 
& \leq 4 c \cdot \sum_{x \in \{0,1\}^{\sqrt n} : x \neq 0} \|Q_x\|_1
\nn \\
&\leq 4 c \cdot [(t! + \|g^s_{r_{1,1}} - G_{r_{1,1}}\|_1) ^{\sqrt n} -
t!^{\sqrt n}] \nn \\ 
& = 4c t! ((1 + \|g^s_{r_{1,1}} - G_{r_{1,1}}\|_1 / t!)^{\sqrt n}-1)
\nn \\ & \leq 4c \cdot 2 \sqrt n \|g^s_{r_{1,1}} - G_{r_{1,1}}\|_1 \nn
\\
\intertext{
\hfill The last line needs $s$ to be large enough
that $\sqrt n \|g^s_{r_{1,1}} - G_{r_{1,1}}\|_1 \leq 1/(2\sqrt
n)$.  }
& \leq 8c\sqrt n (dt!)^{\sqrt n}\cdot \|g^s_{r_{1,1}} - G_{r_{1,1}}
\|_\infty
\nn \\ & \leq
8c\sqrt n (dt!)^{\sqrt n} (1-1/\poly(t))^s
\nn \\ & \leq \delta /2 \
\ea

Now we bound the second term of \eq{two-terms}.
\ba  t^{O(t) \sqrt n} \|G_C G_R -
G_\Haar^{(t)} \|^c_\infty
&  \leq
t^{O(t) \sqrt n} (d^{-\Omega(\sqrt n)})^c & \text{using \lemref{2Dproj}} \\
& \leq t^{C_1t\sqrt n} d^{-cC_2\sqrt n} & \text{for some universal constants
  $C_1,C_2>0$} \\
& = (t^{C_1t}/d^{cC_2})^{\sqrt{n}} \nn\\
& \leq \delta/2.\ea
In the last step we need to choose the implicit constant in the
definition of $c$ based on $C_1,C_2$.

\item
\bea
\| G_{\mu^{\lattice,n}_{2,c,s}}^{(t)} - G_\Haar^{(t)} \|_\infty &\leq& \|  (g^s_C g^s_R )^c - (G_C G_R)^c \|_\infty + \| (G_C G_R)^c - G_\Haar^{(t)} \|_\infty\nonumber\\
&\leq&4 c \cdot \|  (g^s_{r_{1,1}})^{\tensor \sqrt n} - G_{r_{1,1}}^{\tensor \sqrt n} \|_\infty +  \|G_C G_R - G_\Haar^{(t)} \|^c_\infty\nonumber\\
&\leq&4 c \cdot \sqrt n \cdot \| g^s_{r_{1,1}} - G_{r_{1,1}} \|_\infty + \frac{1}{d^{\Omega( c \sqrt n )}}\nonumber\\
&\leq& 4 c \cdot \sqrt n \cdot e^{-s/\poly(t)} + \frac{1}{d^{\Omega( c \sqrt n )}}.
\eea
These steps follow from the proof of part 1.

\end{enumerate}
\end{proof}

\newpage

\subsection{Proof of Theorem \ref{thm:lattice}; $t$-designs on $D$-dimensional lattices}
\label{sec:proof2}

Throughout this section we treat $D$ and $t$ as constants. 

\restatetheorem{thm:lattice}

\begin{proof}
\begin{enumerate}
\item 
Consider the moment superoperator for the $D$-dimensional random circuit distribution $\Channel \left[G_{\mu^{\lattice,n}_{D,c,s}}\right]$, where for $3\leq \alpha \leq D$, $\kappa_{\text{Rows}(\alpha,n)}$ is defined according to the recursive formula $\kappa_{\alpha} =  \kappa^{\otimes n^{1/\alpha}}_{\alpha-1} ((\Channel[g_i])^s \kappa^{\otimes n^{1/\alpha}}_{\alpha-1})^c $.

Using corollary 6 of \cite{BHH-designs}, if $s = O (n^{1/D})$ then each $g^s_{\text{Rows}(\alpha,n)}$ for $1\leq \alpha \leq D$ satisfies a $1/d^{\Omega(n^{1/D})}$-approximate $t$-design property. Hence, using corollary \ref{cor:overlappingdesigns}
\be
\Channel[\tilde{G}_{n,D,c}] (1 - 1/d^{\Omega(n^{1/D})}) \preceq \Channel\left[G_{\mu^{\lattice,n}_{2,c,s}}^{(t)}\right] \preceq \Channel\left[\tilde{G}_{n,D,c}\right] (1 + 1/d^{\Omega(n^{1/D})}).
\ee

Therefore,
\be
(1 - 1/d^{\Omega(n^{1/D})}) \Tr (\tilde{G}_{n,D,c}  x) \leq
\Tr \left(G^{(t)}_{\mu^{\lattice,n}_{D,c,s}}  x\right)
 \leq  (1+ 1/d^{\Omega(n^{1/D})}) \Tr (\tilde{G}_{n,D,c}  x).
\label{eq:diag3}
\ee
Where $x$ is a matrix $\ket {i , j}\bra{i',j'}$ for $i, j , i' ,j' \in [d]^{nt}$.

Next, we use Lemma \ref{lem:basicDmonomial}.
This lemma along with the bound in \eqref{eq:diag3} and Lemma \ref{lem:Haarmoment} proves the stated bound for diagonal monomials:
\bea
|\Tr \left(G^{(t)}_{\mu^{\lattice,n}_{D,c,s}}  x\right) -\Tr (G^{(t)}_{\Haar}  x)|&\leq& |\Tr (\tilde{G}_{n,D,c}  x) - \Tr (G^{(t)}_{\Haar}  x)| +|\Tr (\tilde{G}_{n,D,c}  x)|1/d^{\Omega(n^{1/D})}\nonumber\\
&\leq&  \frac{1/d^{\Omega(n^{1/D})}}{d^{nt}} + (|\Tr (G^{(t)}_{\Haar}  x)| + \frac{1/d^{\Omega(n^{1/D})}}{d^{nt}})1/d^{\Omega(n^{1/D})}\nonumber\\
&\leq&  \frac{1/d^{\Omega(n^{1/D})}}{d^{nt}} + (t!/d^{nt} + \frac{1/d^{\Omega(n^{1/D})}}{d^{nt}})1/d^{\Omega(n^{1/D})}\nonumber\\
&\leq&   \frac{1/d^{\Omega(n^{1/D})}}{d^{nt}}.
\eea

Next, we bound off-diagonal monomials $\max_{x \in \cO} | \Tr \left( G^{(t)}_\mu x \right) |$. Again, we use Lemma \ref{lem:offdiag} for $\mu = \mu^{\lattice,n}_{D,c,s}$ and $\nu$ being a distribution with moment superoperator $K_{\mu^{\lattice,n}_{D,c,s}}$ (or the quasi-projector $\tilde{G}_{n,D,c} $):
\be
\max_{x \in \cO} |\Tr \left(G^{(t)}_{\mu^{\lattice,n}_{D,c,s}}  x\right)| \leq \max_{x \in \cO}  \Tr (\tilde{G}_{n,D,c}  x) (1+1/d^{\Omega(n^{1/D})}) + 1/d^{\Omega(n^{1/D})} \cdot \max_{y \in \cD} \Tr (\tilde{G}_{n,D,c}  y).
\label{eq:85D}
\ee
Using Lemma \ref{lem:basicDmonomial}
\be
\max_{y \in \cD} \Tr (\tilde{G}_{n,D,c}  y) \leq \max_{y \in \cD} \Tr \left(G_\Haar^{(t)} y\right) + \frac{1/d^{\Omega(n^{1/D})}}{ d^{nt}} \leq \frac{t!}{d^{nt}} + \frac{1/d^{\Omega(n^{1/D})}}{ d^{nt}}.
\label{eq:97}
\ee

Similar to \eqref{eq:87} we can show
\bea
\max_{x \in \cO}  |\Tr \left(\tilde{G}_{n,D,c}  x\right)| &=& \max_{x \in \cO}  |\Tr \left((\tilde{G}_{n,D,c}  - G_\Haar^{(t)})x\right)|\nonumber\\
 &\leq& \max_{x \in \cO}  |\Tr \left(({\hat {G}}_{n,D,c} )^c- G_\Haar^{(t)}\right) \tilde{G}_{n^{1-1/D},D-1,c}^{\otimes n^{1/D}} x \tilde{G}_{n^{1-1/D},D-1,c}^{\otimes n^{1/D}})|\nonumber\\
 &\leq& \frac{1/d^{\Omega(n^{1/D})}}{ d^{nt}},
\label{eq:87D}
\eea
therefore using \eqref{eq:85D}, \eqref{eq:97} and \eqref{eq:87D} we conclude that any monomial $M^{(\mu^{\lattice,n}_{D,c,s} ,t)}_x$ satisfies
\bea
\max_{x \in \cO} |\Tr \left(G^{(t)}_{\mu^{\lattice,n}_{D,c,s}}  x\right)| &\leq& \frac{1/d^{\Omega(n^{1/D})}}{d^{nt}}.
\eea

\item Let $\eps_{D,n} := \left \| \Channel\left[G_{\mu^{\lattice,n}_{D,c,s}}\right]- \Channel\left[G_\Haar^{(t)}\right] \right \|_\diamond$. We use induction to show that $\eps_{D,n} = 1/d^{\Omega(n^{1/D})}$ for any integers $n$ and $D$. This is true for $D=2$ by Theorem \ref{thm:grid}. Assuming $\eps_{D-1, n} = 1/d^{\Omega(n^{1/(D-1)})}$ for any $n$, we show that $\eps_{D,n} = 1/d^{\Omega(n^{1/D})}$.
\bea
\eps_{D,n} &:=& \left \| \Channel\left[G_{\mu^{\lattice,n}_{D,c,s}}- G_\Haar^{(t)}\right] \right\|_\diamond\nonumber\\
  &\leq& \left\| \Channel\left[ G_{\mu^{\lattice,n}_{D,c,s}} - \tilde{G}_{n,D,c}\right] \right\|_\diamond + \left\| \Channel\left[\tilde{G}_{n,D,c} - G_\Haar^{(t)}\right] \right\|_\diamond\nonumber\\
&\leq& \poly(n) \cdot \left\|  \Channel\left[(g^s_{r_{1,1}})^{\tensor n^{1-1/D}} - G_{r_{1,1}}^{\tensor n^{1-1/D}}\right] \right\|_\diamond\nonumber\\
 &&+ \left\| \Channel\left[\tilde{G}_{n,D,c}-G_\Haar^{(t)}\right] \right\|_\diamond\nonumber\\
&\leq& O(n^{1-1/D}) \cdot \|  \Channel\left[g^s_{r_{1,1}} - G_{r_{1,1}}\right]\|_\diamond\nonumber\\
&&+\|  \Channel \left[(\tilde{G}^{\tensor n^{1/D}}_{n^{1-1/D},D-1, c, s}G_{\text{Rows}(D,n)}\tilde{G}^{\tensor n^{1/D}}_{n^{1-1/D},D-1, c, s})^c - G_\Haar^{(t)}\right] \|_\diamond\nonumber\\
&\leq& O(n) \cdot 1/d^{\Omega(n^{1/D})}\nonumber\\
 &&+\left\|  \Channel\left[(\tilde{G}^{\tensor n^{1/D}}_{n^{1-1/D},D-1, c, s}G_{\text{Rows}(D,n)}\tilde{G}^{\tensor n^{1/D}}_{n^{1-1/D}, D-1, c,s})^c - G_\Haar^{(t)}\right] \right\|_\diamond\nonumber\\
  &\leq& 1/d^{\Omega(n^{1/D})}+\left\|  \Channel\left[(\tilde{G}^{\tensor n^{1/D}}_{n^{1-1/D},D-1, c, s} G_{\text{Rows}(D,n)} \tilde{G}^{\tensor n^{1/D}}_{n^{1-1/D}, D-1, c,s})^c - G_\Haar^{(t)}\right] \right\|_\diamond
 \eea
The third line is by triangle inequality. The fourth inequality is by Lemma \ref{lem:bv}. The fifth line is by Lemma \ref{lem:tensorerror} and the definition $\tilde{G}_{n,D,c} = ([\tilde{G}^{\tensor n^{1/D}}_{n^{1-1/D},D-1, c, s}G_{\text{Rows}(D,n)}\tilde{G}^{\tensor n^{1/D}}_{n^{1-1/D}, D-1, c,s}])^c$. The sixth line is by lemma 3 and corollary 6 of \cite{BHH-designs}, which assert that after linear depth in the number of qudits ($n^{1/D}$), the random circuit model we consider is $\eps$-approximate $t$-design in the diamond measure, and that $\eps$ can be made exponentially small in $n^{1/D}$. 

Next, we bound $\left\|  \Channel\left[(\tilde{G}^{\tensor n^{1/D}}_{n^{1-1/D},D-1, c, s}G_{\text{Rows}(D,n)}\tilde{G}^{\tensor n^{1/D}}_{n^{1-1/D}, D-1, c,s})^c - G_\Haar^{(t)}\right] \right\|_\diamond$. We first relate this expression to the superoperator $\Channel\left[G_{\text{Planes}(D)}\right]$. Using triangle inequality and Lemma \ref{lem:bv}:
\bea
&&\left\|  \Channel\left[(\tilde{G}^{\tensor n^{1/D}}_{n^{1-1/D},D-1, c, s} G_{\text{Rows}(D,n)}\tilde{G}^{\tensor n^{1/D}}_{n^{1-1/D}, D-1, c,s})^c - G_\Haar^{(t)}\right] \right\|_\diamond \nonumber\\
&\leq& \left\|  \Channel\left[ ( G_{\text{Rows}(D,n)}\tilde{G}^{\tensor n^{1/D}}_{n^{1-1/D}, D-1, c,s}G_{\text{Rows}(D,n)})^{c-1} - G_\Haar^{(t)}\right] \right\|_\diamond\nonumber\nonumber\\
&\leq& \left\|  \Channel\left[( G_{\text{Rows}(D,n)} (\tilde{G}^{\tensor n^{1/D}}_{n^{1-1/D}, D-1, c,s}- G_{\text{Planes({D,n})}}) G_{\text{Rows}(D,n)}
 +G_{\text{Rows}(D,n)} G_{\text{Planes}(D)} G_{\text{Rows}(D,n)} )^{c-1} - G_\Haar^{(t)}\right] \right\|_\diamond\nonumber\nonumber\\
&\leq& O\left (\| \Channel\left[\tilde{G}^{\tensor n^{1/D}}_{n^{1-1/D}, D-1, c,s}- G_{\text{Planes}(D)}\right] \|_\diamond\right)+\left\|\Channel\left[( G_{\text{Rows}(D,n)} G_{\text{Planes}(D)} G_{\text{Rows}(D,n)} )^{c-1} - G_\Haar^{(t)}\right] \right\|_\diamond\nonumber\nonumber\\
&\leq& O(n) \left\| \Channel\left[\tilde{G}_{n^{1-1/D}, D-1, c,s}- G_{\Haar(p_1)}\right] \right\|_\diamond+\left\|\Channel\left [( G_{\text{Rows}(D,n)} G_{\text{Planes}(D)} G_{\text{Rows}(D,n)} )^{c-1} - G_\Haar^{(t)}\right] \right\|_\diamond\nonumber\nonumber\\
&\leq& O(n) \eps_{D-1, n^{1-1/D}} +\left\|\Channel\left [ ( G_{\text{Rows}(D,n)} G_{\text{Planes}(D)} G_{\text{Rows}(D,n)} )^{c-1} - G_\Haar^{(t)}\right] \right\|_\diamond\nonumber\nonumber\\
&\leq& O(n) \frac{1}{d^{n^{1/D}}} +\left\|( \Channel\left[G_{\text{Rows}(D,n)} G_{\text{Planes}(D)} G_{\text{Rows}(D,n)} )^{c-1} - G_\Haar^{(t)}\right] \right\|_\diamond\nonumber\nonumber\\
&\leq& \frac{1}{d^{n^{1/D}}} +\left\|( \Channel\left[G_{\text{Rows}(D,n)} G_{\text{Planes}(D)} G_{\text{Rows}(D,n)})^{c-1} - G_\Haar^{(t)}\right] \right\|_\diamond.\nonumber
\eea
The first line is by Lemma \ref{lem:diamondpalace}. The third line is by triangle inequality and Lemma \ref{lem:diamondpalace}. The fourth line is by Lemma \ref{lem:tensorerror} and that $\tilde{G}_{\text{Planes}(D)}$ is a tensor product of Haar moment operators. Note in the sixth line we have used the induction hypothesis: $\eps_{D-1, n^{1-1/D}} = 1/d^{O(\frac{n^{1-1/D}}{D-1})} = \frac{1}{d^{\Omega(n^{1/D})}}$.

Using Lemma \ref{lem:diamondD} $\left\|( \Channel[G_{\text{Rows}(D,n)}] \tilde{G}_{\text{Planes}(D)} \Channel[G_{\text{Rows}(D,n)}] )^{c-1} - \Channel[G_\Haar^{(t)}] \right\|_\diamond = \frac{1}{d^{\Omega(n^{1/D})}}$ and this completes the proof.

\item Define $\eps_{D,n} := \left\| G_{\mu^{\lattice,n}_{2,c,s}}^{(t)} - G_\Haar^{(t)}
  \right\|_1$. By induction assume $\eps_{D-1, n} = 1/d^{\Omega(n^{1/{D-1}})}$ for all $n$.
 We would like to show that $\eps_{D,n} = 1/d^{\Omega(n^{1/D})}$.

\bea
\eps_{D,n} &:= &\left\| G_{\mu^{\lattice,n}_{D,c,s}}^{(t)} - G_\Haar^{(t)} \right\|_1 \nonumber\\
&=& \left\|G^{(t) \tensor n^{1/D}}_{\mu^{\lattice,n^{1-1/D}}_{D-1,c,s}}  (g_{\text{Rows}(D,n)}^s G^{(t)\tensor n^{1/D}}_{\mu^{\lattice,n^{1-1/D}}_{D-1,c,s}} )^c-G_\Haar^{(t)}\right\|_1
\label{eq:blahld}
\eea
Write $G^{(t) \tensor n^{1/D}}_{\mu^{\lattice,n^{1-1/D}}_{D-1,c,s}} = 
G_{\text{Planes}(D)}  + (G^{(t) \tensor n^{1/D}}_{\mu^{\lattice,n^{1-1/D}}_{D-1,c,s}} -
G_{\text{Planes}(D)})=: Z_0 + Z_1$. Our strategy is to expand \eqref{eq:blahld} in
terms of $G_{\text{Planes}(D)}$: 
\ba 
& \left\|(\delta + G_{\text{Planes}(D)})  (g_{\text{Rows}(D,n)}^s
 (\delta + G_{\text{Planes}(D)}) )^c-G_\Haar^{(t)}\right\|_1 \nonumber\\
= &\sum_{\phi \in \{0,1\}^{c+1}}
\left\|Z_{\phi_0} \prod_{i=1}^c (g_{\text{Rows}(D,n)}^s Z_{\phi_i}) -  G_\Haar^{(t)} \right\|_1
\nonumber\\ \leq &
 \underbrace{\sum_{\phi \in \{0,1\}^{c+1} \backslash 0^{c+1}}
\left\|Z_{\phi_0} \prod_{i=1}^c (g_{\text{Rows}(D,n)}^s Z_{\phi_i}) \right\|_1}_{(1)}
+ \underbrace{\left\|Z_0 (g_{\text{Rows}(D,n)}^s Z_0)^c -  G_\Haar^{(t)} \right\|_1}_{(2)}
\ea
To bound (1), observe that each term contains at least one $Z_1$. We
would like to bound $\|Z_1\|_1$.  
Observe that $G_{\text{Planes}} =G^{(t)\tensor
  n^{1/D}}_{\Haar(n^{1-1/D})}$, so
\ba \|Z_1\|_1 & = 
  \left\|G^{(t) \tensor n^{1/D}}_{\mu^{\lattice,n^{1-1/D}}_{D-1,c,s}} -
G^{(t)\tensor n^{1/D}}_{\Haar(n^{1-1/D})}  \right\|_1\nonumber  \\
&  = \left\| \sum_{i=1}^{n^{1/D}}
G^{(t) \tensor i-1}_{\mu^{\lattice,n^{1-1/D}}_{D-1,c,s}}
(G^{(t) }_{\mu^{\lattice,n^{1-1/D}}_{D-1,c,s}} -
G^{(t)}_{\Haar(n^{1-1/D})})
G^{(t)\tensor n^{1/D}-i}_{\Haar(n^{1-1/D})}
  \right\|_1\nonumber  \\
&  \leq \sum_{i=1}^{n^{1/D}}
\left\|G^{(t)}_{\mu^{\lattice,n^{1-1/D}}_{D-1,c,s}}\right\|_1^{i-1}
\left\| G^{(t) }_{\mu^{\lattice,n^{1-1/D}}_{D-1,c,s}} -
G^{(t)}_{\Haar(n^{1-1/D})}\right\|_1
\left\|G^{(t)}_{\Haar(n^{1-1/D})}\right\|_1^{ n^{1/D}-i}
\label{eq:mult-1-norm}\\
&  \leq \sum_{i=1}^{n^{1/D}}
(t! + \eps_{D-1,n})^{i-1}
\eps_{D-1,n}
t!^{n^{1/D}-i}
\label{eq:induction-1-norm}\\
&  \leq n^{1/D} 
(t! + \eps_{D-1,n})^{n^{1/D}}
d^{-\Omega(n^{1/(D-1)})}.
\ea
This final expression is $\leq d^{-\Omega(n^{1/D})}$ for $n$
sufficiently large relative to $d,t,D$.  Eq.~\eqref{eq:induction-1-norm}
uses the induction hypothesis as well as the fact that 
$G^{(t)}_{\Haar(m)}$ is a projector of rank $\leq
t!$ for any $m$.  (In fact this is an equality when $m \geq \ln(t)$.)
This last fact is standard and can be found in Lemma 17 of
\cite{BHH-designs}, with the relevant math background in \cite{GW98,Har-sym}.

For (2), we observe that $(G_{\text{Planes}(D)} g_{\text{Rows}(D,n)}^s)^c- G_\Haar^{(t)}$ has rank $t!^{O(n^{1/D})}$ so the cost of moving to the infinity norm is moderate:
\bea
 \left\|(G_{\text{Planes}(D)} g_{\text{Rows}(D,n)}^s)^c- G_\Haar^{(t)}\right\|_1 &\leq& t!^{O(n^{1/D})} \left\|(G_{\text{Planes}(D)} g_{\text{Rows}(D,n)}^s)^c- G_\Haar^{(t)}\right\|_\infty\\
  &=& t!^{O(n^{1/D})} \left\|G_{\text{Planes}(D)} g_{\text{Rows}(D,n)}^s- G_\Haar^{(t)}\right\|^c_\infty
\eea
We now bound $\left\|G_{\text{Planes}(D)} g_{\text{Rows}(D,n)}^s-
G_\Haar^{(t)}\right\|_\infty$ using a variant of the proof of part 3 of this theorem.
\be
\left\|G_{\text{Planes}(D)} g_{\text{Rows}(D,n)}^s- G_\Haar^{(t)}\right\|_\infty
\leq \left\|g_{\text{Rows}(D,n)}^s- G_{r_{1,1}}^{\tensor  n^{1-1/D}}\right\|_\infty
+ \left\|G_{\text{Planes}(D)} G_{r_{1,1}}^{\tensor  n^{1-1/D}}-
G_\Haar^{(t)}\right\|^c_\infty
\label{eq:1-norm-bound}
\ee
Using \cite{BHH-designs} and Lemma \ref{lem:tensorerror}
$\left\|g_{\text{Rows}(D,n)}^s- G_{r_{1,1}}^{\tensor  n^{1-1/D}}\right\|_\infty
\leq O(n^{1-1/D})\left\|g_{r_{1,1}}^s- G_{r_{1,1}}\right\|_\infty =
\frac{1}{d^{\Omega(n^{1/D})}}$. Moreover, using lemma
\ref{lem:generalDproj} $\left\|G_{\text{Planes}(D)} G_{r_{1,1}}^{\tensor
  n^{1-1/D}}- G_\Haar^{(t)}\right\|^c_\infty =
\frac{1}{d^{\Omega(n^{1/D})}}$. 

This completes the proof by taking the constant in the
$\Omega(n^{1/D})$ in the last exponent sufficiently larger than the
constant in the $O(n^{1/D})$ exponent in \eqref{eq:1-norm-bound}.
Here we are ignoring the dependence on $d,t,D$.  Taking this into
account properly would yield a depth that scales polynomially with
with $t$, with the degree of the polynomial depending on $D$.

\item Define $\eps_{D,n} := \left\| G_{\mu^{\lattice,n}_{2,c,s}}^{(t)} - G_\Haar^{(t)} \right\|_\infty$. By induction assume $\eps_{D,n} = 1/d^{\Omega(n^{1/D})}$ for any integers $n$ and $D$. Assuming $\eps_{D-1, n} = 1/d^{\Omega(n^{1/{D-1}})}$ for all $n$, we show that $\eps_{D,n} = 1/d^{\Omega(n^{1/D})}$.
\bea
\eps_{D,n} &:= &\left\| G_{\mu^{\lattice,n}_{2,c,s}}^{(t)} - G_\Haar^{(t)} \right\|_\infty \nonumber \\
&\leq& \left\|  G_{\mu^{\lattice,n}_{D,c,s}} - \tilde{G}_{n,D,c}  \right\|_\infty + \left\| \tilde{G}_{n,D,c} - G_\Haar^{(t)} \right\|_\infty\nonumber \\
&\leq&\poly(n) \cdot \left\|  (g^s_{r_{1,1}})^{\tensor n^{1-1/D}} - G_{r_{1,1}}^{\tensor n^{1-1/D}} \right\|_\infty\nonumber \\
 &&+  \left\|G_{\text{Rows}(D,n)} \tilde{G}_{n^{1-1/D},D-1,c} ^{\otimes n^{1/D}} - G_\Haar^{(t)} \right\|^c_\infty\nonumber \\
&\leq&\poly(n) \cdot \left\|  (g^s_{r_{1,1}})^{\tensor n^{1-1/D}} - G_{r_{1,1}}^{\tensor n^{1-1/D}} \right\|_\infty\nonumber \\
&& + \Big\|G_{\text{Rows}(D,n)} (\tilde{G}_{n^{1-1/D},D-1,c} ^{\otimes n^{1/D}} -F_{\text{Rows}(D,n)})\nonumber \\
 &&+ G_{\text{Rows}(D,n)} F_{\text{Rows}(D,n)}- G_\Haar^{(t)} \Big\|^c_\infty\nonumber \\
&\leq&O(n) 1/d^{\Omega(n^{1/D})}\nonumber \\
 &&+  O(n) \eps_{D-1, n^{1-1/D}} + 1/d^{\Omega(n^{1-1/D}) c}\nonumber \\
&\leq& d^{- \Omega(n^{1/D})}+ 1/d^{\Omega(n^{1/D})} + 1/d^{\Omega(n^{1-1/D}) c}\nonumber \\
&\leq& d^{- \Omega(n^{1/D})}.
\eea
These steps follow from the proof of part 2.

\end{enumerate}
\end{proof}


\newpage

\subsection{Proofs of the basic lemmas stated in Section \ref{sec:basic-lemma}}
\label{sec:lemma-proofs}

\subsubsection{Comparison lemma for random quantum circuits}
\label{sec:comparison-proofs}

\restatelemma{lem:comparison}

\begin{proof}
We first prove the following claim
\begin{claim}
If $\cA \preceq \cB$ and $\cC \preceq D$ are cp maps, then $\cA \cC \preceq \cB\cD$.
\end{claim}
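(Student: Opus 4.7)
The plan is to reduce the claim to the identity
\[
\cB\cD - \cA\cC \;=\; (\cB - \cA)\,\cD \;+\; \cA\,(\cD - \cC),
\]
which is just the standard ``add and subtract $\cA\cD$'' manipulation. Once this is in hand, I would argue that each summand on the right-hand side is cp: the first factor $\cB - \cA$ is cp by the hypothesis $\cA \preceq \cB$, and $\cD$ is cp by assumption, so their composition is cp; symmetrically, $\cA$ is cp and $\cD - \cC$ is cp by $\cC \preceq \cD$, so the second summand is cp. The sum of two cp maps is cp, hence $\cB\cD - \cA\cC$ is cp, which is exactly $\cA\cC \preceq \cB\cD$.

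The only auxiliary fact used beyond the definitions is that compositions of cp maps are cp. This follows immediately from \defref{cpmap}: for any ancilla dimension $d$ and any psd matrix $X$, the output $(\cE \otimes \id_d)(X)$ is psd whenever $\cE$ is cp, so a further application of $\cF \otimes \id_d$ with $\cF$ cp preserves positivity, giving $((\cF\cE) \otimes \id_d)(X)$ psd. I do not anticipate any subtle obstacle here; the argument is purely algebraic and mirrors the analogous fact for the psd ordering of operators.

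With the claim in hand, the full Comparison Lemma follows by a one-line induction on $t$. The base case $t=1$ is the hypothesis $\cA_1 \preceq \cB_1$. For the inductive step, assuming $\cA_{t-1}\cdots\cA_1 \preceq \cB_{t-1}\cdots\cB_1$, I would apply the claim with $\cA \leftarrow \cA_t$, $\cB \leftarrow \cB_t$, $\cC \leftarrow \cA_{t-1}\cdots\cA_1$, and $\cD \leftarrow \cB_{t-1}\cdots\cB_1$. Cp-ness of the iterated products is inherited from cp-ness of each $\cA_i$ and $\cB_i$ via the composition rule just noted, so the hypotheses of the claim are satisfied and we conclude $\cA_t\cdots\cA_1 \preceq \cB_t\cdots\cB_1$.
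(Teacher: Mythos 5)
Your proof is correct and coincides with the paper's own argument: the same decomposition $\cB\cD - \cA\cC = (\cB-\cA)\cD + \cA(\cD-\cC)$ together with closure of cp maps under composition and addition, followed by the same one-line induction for the full Comparison Lemma. The only difference is that you spell out why composition of cp maps is cp, which the paper takes for granted.
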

\proof{The class of cp maps is closed under composition and addition. Therefore $\cB \cD - \cA\cC =  (\cB-\cA) D + \cA (\cD-\cC)$ is cp.}

The proof (of Lemma \ref{lem:comparison}) is by induction. We show for all $1 \leq i \leq t$
\be
\cA_i \ldots \cA_1 \preceq \cB_i \ldots \cB_1.
\ee
Clearly this is true for $i=1$. Suppose also this is true for $1 < k < t$. So $\cA_i \ldots \cA_1 \preceq \cB_i \ldots \cB_1$ and $\cA_{i+1} \preceq \cB_{i+1}$, and using the claim $\cA_{i+1} \ldots \cA_1 \preceq \cB_{i+1} \ldots \cB_1$.
\end{proof}

\restatecorollary{cor:overlappingdesigns}

\begin{proof}
This is immediate from Lemma \ref{lem:comparison}, Definition \ref{def:moments}, and the observation that if $A \preceq B$ then $A \tensor \id \preceq B \tensor \id$.
\end{proof}

\subsubsection{Bound on the value of off-diagonal monomials}
\label{sec:equiv}

\restatelemma{lem:offdiag}

\begin{proof}
Let $\phi_N := \ket{\phi_N} \bra{\phi_N}$ for 
\be
\ket{\phi} := \frac{1}{\sqrt{N}} \sum_{x\in [d]^n} \ket{x}\ket{x}
\ee
 be the $n$-qudit maximally entangled state, and $N = d^n$.
 
We use the following standard lemma which we leave without proof (see \cite{BHH-designs} for e.g.)
\begin{lemma} \OldNormalFont{}
Let $\mu$ and $\nu$ be two distributions over the $n$-qudit unitary group then $\Channel[G^{(t)}_\mu] \preceq \Channel[G^{(t)}_\nu]$ if and only if 
\be
 \left (\Channel\left [G^{(t)}_\nu\right]\tensor \mathrm{id} - \Channel\left[G^{(t)}_\mu\right] \tensor \mathrm{id} \right) \phi^{\tensor t}_N
\ee
 is a psd matrix.
 \label{lem:psd1}
\end{lemma}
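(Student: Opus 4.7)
The plan is to recognize this as an instance of the Choi--Jamio{\l}kowski isomorphism. Let $\cE := \Channel[G^{(t)}_\nu] - \Channel[G^{(t)}_\mu]$, a Hermiticity-preserving linear map on $M_{N^t}$ with $N = d^n$; by Definition~\ref{def:cpmap} the ordering $\Channel[G^{(t)}_\mu] \preceq \Channel[G^{(t)}_\nu]$ is by definition the statement that $\cE$ is completely positive. The standard Choi--Jamio{\l}kowski theorem states that a linear map on $M_D$ is cp iff its Choi matrix $(\cE \otimes \mathrm{id}_D)(\phi_D)$ is psd, where $\phi_D$ is the maximally entangled state of dimension $D$. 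Applying this with $D = N^t$ would give the lemma, modulo one piece of bookkeeping: the lemma writes the Choi matrix with $\phi_N^{\otimes t}$, not $\phi_{N^t}$.

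First I would handle that bookkeeping. Both $\phi_N^{\otimes t}$ and $\phi_{N^t}$ are pure states on $(\C^N)^{\otimes 2t}$; they differ only in how the $2t$ tensor factors are partitioned into ``system'' and ``ancilla''. In $\phi_N^{\otimes t}$ the pairing is interleaved (factors $2k{-}1$ and $2k$ form the $k$-th $\phi_N$), while in $\phi_{N^t}$ the first $t$ factors form the system and the last $t$ form the ancilla. A permutation unitary $P$ on $(\C^N)^{\otimes 2t}$ converts one grouping to the other, and under the same relabeling $\cE \otimes \mathrm{id}$ (which acts as $\cE$ on the $t$ system factors and identity on the $t$ ancilla factors) also transports consistently. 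Hence $(\cE \otimes \mathrm{id})(\phi_N^{\otimes t})$ and $(\cE \otimes \mathrm{id})(\phi_{N^t})$ are related by conjugation with $P$, so one is psd iff the other is.

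What remains is the Choi--Jamio{\l}kowski isomorphism itself. The forward direction (cp implies Choi matrix psd) is immediate, since $\phi_{N^t}$ is psd and complete positivity of $\cE$ means that $\cE \otimes \mathrm{id}$ maps psd operators to psd operators. For the converse, if the Choi matrix is psd then its spectral decomposition produces eigenvectors $\ket{v_i} \in \C^{N^t} \otimes \C^{N^t}$ with non-negative eigenvalues $\lambda_i$; identifying each $\ket{v_i}$ with a matrix $K_i \in M_{N^t}$ via the vectorization isomorphism $\ket{i}\ket{j} \mapsto \ket{i}\bra{j}$ yields an operator-sum representation $\cE(X) = \sum_i \lambda_i K_i X K_i^\dagger$, which is manifestly cp. Both steps are textbook facts; the only mild subtlety in the whole argument is the tensor-factor regrouping between $\phi_{N^t}$ and $\phi_N^{\otimes t}$, and this is presumably why the authors label the lemma ``standard'' and leave it without proof.
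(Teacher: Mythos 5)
Your proposal is correct: the paper states Lemma~\ref{lem:psd1} as a standard fact and leaves it without proof (pointing to \cite{BHH-designs}), and the Choi--Jamio{\l}kowski argument you give is exactly the standard argument being invoked. Both directions are handled properly --- complete positivity of $\Channel[G^{(t)}_\nu]-\Channel[G^{(t)}_\mu]$ immediately gives positivity of the Choi matrix, and the spectral/Kraus decomposition gives the converse --- and your permutation-of-tensor-factors step correctly reconciles $\phi_N^{\tensor t}$ with $\phi_{N^t}$, which is the only non-cosmetic point (the $1/N^t$ normalization is irrelevant to positivity). Nothing further is needed.
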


We now adapt \lemref{offdiag} to Lemma \ref{lem:psd1}. First,
\be
\phi_N^{\tensor t} = \frac{1}{N^{t}} \sum \ket{i_1, \ldots , i_t}\bra{j_1, \ldots , j_t}\tensor \ket{i_1, \ldots , i_t}\bra{j_1, \ldots , j_t} \equiv \frac{1}{N^{t}} \sum \ket{i}\bra{j}\tensor \ket{i}\bra{j}.
\ee
For $i,j,k,l \in [d]^{nt}$, if we define 
\be
M^{(\mu,t)}_{k, i,l,j} = \bra{k}\Channel\left [G^{(t)}_\mu\right]  \left(\ket{i}\bra{j} \right) \ket{l}.
\ee
Therefore
\be
 (\Channel[G_\mu^{( t)}] \tensor \mathrm{id} ) \phi_N^{\tensor t} = \frac{1}{N^{t}} \sum M^{(\mu,t)}_{a,b,c,d}  \ket{a}\bra{c}\tensor \ket{b}\bra{d},
\ee
and
\be
 (\Channel[G^{(t)}_\Haar]  \tensor \mathrm{id} ) \phi_N^{\tensor t}= \frac{1}{N^{t}} \sum M^{(\mathrm{Haar},t)}_{a,b,c,d}  \ket{a}\bra{c}\tensor \ket{b}\bra{d}.
\ee
Therefore since $\Channel\left[G_\mu^{(t)}\right] \leq (1+\delta) \Channel[G_\nu^{( t)}]$ the following matrix
\be
A =  ( (1+\delta) \Channel[G_\Haar^{(t)} ]\tensor \mathrm{id}-\Channel\left[G_\mu^{(t)}\right] \tensor \mathrm{id} ) \phi_N^{\tensor t}= \frac{1}{N^{t}} \sum ( (1+\delta) M^{(\mathrm{Haar},t)}_{a,b,c,d} - M^{(\mu,t)}_{a,b,c,d} )  \ket{a} \ket{b}\bra{c}\bra{d}.
\label{eq:sd}
\ee
Is psd. We use the following fact about psd matrices which we leave without proof.

\textbf{Fact}--- if $A$ is psd then the absolute maximum of off-diagonal terms in $A$ is at most the absolute maximum diagonal term. 

Then using the above fact
\be
\max_{x \in \cO} |(1 + \delta) \Tr \left( G^{(t)}_\nu x \right) - \Tr \left( G^{(t)}_\mu x \right)| \leq \max_{y \in \cD} |(1 + \delta) \Tr ( G^{(t)}_\nu y) - \Tr \left( G^{(t)}_\mu y \right)|.
\ee
Hence
\be
\max_{x \in \cO} | \Tr \left( G^{(t)}_\mu x \right) | \leq \max_{x \in \cO} | \Tr \left( G^{(t)}_\nu x \right) | (1+\delta) + \max_{y \in \cD} |(1 + \delta)  \Tr \left( G^{(t)}_\nu y \right)  -  \Tr \left( G^{(t)}_\mu y \right) |.
\label{eq:dsss}
\ee

Now if $y \in \cD$
\be
 \Tr \left( G^{(t)}_\nu x \right) (1-\delta) \leq  \Tr \left( G^{(t)}_\mu x \right)  \leq  \Tr \left( G^{(t)}_\nu x \right) (1+\delta).
\ee
then using this in \eqref{eq:dsss}
\be
\max_{x \in \cO} | \Tr \left( G^{(t)}_\mu x \right) | \leq \max_{x \in \cO} | \Tr \left( G^{(t)}_\nu x \right) | (1+\delta) + 2\delta \cdot \max_{y \in \cD}  \Tr \left( G^{(t)}_\nu y \right) .
\ee
\end{proof}

\subsubsection{Bounds on the moments of the Haar measure}

\restatelemma{lem:Haarmoment}

\begin{proof}
First observe that
\bea
\max_y \|G_\Haar^{(t)} y G_\Haar^{(t)}\|_1 &=& \max_{a , b} \Tr \sqrt{G_\Haar^{(t)} \ket{a}\bra{b} G_\Haar^{(t)} G_\Haar^{(t)} \ket{b}\bra{a} G_\Haar^{(t)}}\\
&=& \max_{a , b} \sqrt{\braket{a| G_\Haar^{(t)} |a}\cdot \braket{b| G_\Haar^{(t)} |b }}\\
&=& \max_{i} \braket{i|G_\Haar^{(t)}|i}.
\eea
The below lemma concludes the proof.
\begin{lemma}[Moments of the Haar measure] \OldNormalFont{} The largest $t$-th monomial moment of the Haar measure is at most $\frac{t!}{d^{tm}}$.
\label{lem:Haarlargest}
\end{lemma}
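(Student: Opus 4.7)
The plan is to identify the diagonal entries of $G^{(t)}_{\Haar(m)}$ with explicit Haar integrals and then collapse the bound to a single known moment via Hölder's inequality. Writing a basis vector of $(\mathbb{C}^{d^m})^{\otimes 2t}$ as $\ket{i_1,\ldots,i_t,i'_1,\ldots,i'_t}$ with each $i_k,i'_k\in[d^m]$, and unpacking $G^{(t)}_{\Haar}=\E_U[U^{\otimes t}\otimes U^{*\otimes t}]$, the diagonal entry equals $\E_U\prod_k U_{i_k,i_k}\overline{U_{i'_k,i'_k}}$. An application of Cauchy--Schwarz to this inner product with $X:=\prod_k U_{i_k,i_k}$ and $Y:=\prod_k U_{i'_k,i'_k}$ reduces the bound to the matching case $i=i'$, yielding
\[
\braket{i|G^{(t)}_{\Haar}|i}\;\le\;\sqrt{\E\textstyle\prod_k|U_{i_k,i_k}|^2}\cdot\sqrt{\E\textstyle\prod_k|U_{i'_k,i'_k}|^2}.
\]

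Next, for a fixed tuple $(i_1,\ldots,i_t)\in[d^m]^t$, I would collect repeated values: if the value $j$ occurs with multiplicity $n_j$ then $\prod_k|U_{i_k,i_k}|^2=\prod_j|U_{jj}|^{2n_j}$ with $\sum_j n_j=t$. Hölder's inequality applied with the weights $p_j:=n_j/t$ (non-negative and summing to one, with conjugate exponents $q_j:=t/n_j$) gives
\[
\E\prod_j|U_{jj}|^{2n_j}\;\le\;\prod_j\bigl(\E|U_{jj}|^{2t}\bigr)^{n_j/t}.
\]
By invariance of the Haar measure under permutations of standard basis elements, $\E|U_{jj}|^{2t}$ is independent of $j$, so the right-hand side collapses to $\E|U_{11}|^{2t}$.

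Finally, I would invoke the standard fact that under the Haar measure on $\mathrm{U}(D)$ with $D:=d^m$, $|U_{11}|^2$ has the $\mathrm{Beta}(1,D-1)$ distribution, which yields the closed form
\[
\E|U_{11}|^{2t}\;=\;\frac{t!\,(D-1)!}{(D+t-1)!}\;=\;\frac{t!}{D(D+1)\cdots(D+t-1)}\;\le\;\frac{t!}{D^t}.
\]
Chaining the Cauchy--Schwarz reduction with the Hölder step and this closed form gives $\braket{i|G^{(t)}_{\Haar}|i}\le t!/D^t=t!/d^{mt}$, which is the claim. The only nontrivial step is the Hölder reduction, and it goes through transparently with the natural multiplicity weights $n_j/t$, so I foresee no real obstacle; the Beta-moment computation is a standard consequence of $|U_{11}|^2$ being the squared norm of the first coordinate of a uniform complex unit vector in $\mathbb{C}^D$.
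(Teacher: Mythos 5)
Your proof is correct and follows essentially the same route as the paper's: reduce via H\"older's inequality (with the multiplicity weights $n_j/t$, which is equivalent to the paper's uniform $1/t$ weights applied factor by factor) to the single-entry moment $\E|U_{11}|^{2t}$, and then use its closed form, which you derive from the $\mathrm{Beta}(1,d^m-1)$ law while the paper cites Collins--\'Sniady. Your explicit Cauchy--Schwarz step is a nice touch that handles balanced monomials with non-matching index pairs, a case the paper glosses over with the remark that unbalanced monomials vanish.
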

\begin{proof}
Consider a particular balanced moment of Haar, using H\"older's inequality
\be
\E_{C\sim\Haar} | C_{a_1, b_1} \ldots C_{a_t,b_t}|^2 \leq \prod_{i \in [k]} (\E |C_{a_i,b_i}|^{2k})^{1/k} \leq k! / d^{km}.
\ee
If the moment is not balanced the expectation is zero and hence the bound still works. Here we have used a closed form expression for $\E |C_{a_i,b_i}|^{2k}$, see corollary 2.4 and proposition 2.6 of \cite{C06} for a reference.
\end{proof}

\end{proof}

\subsection{Proofs of the projector overlap lemmas from section
  \ref{sec:overlap-lemmas}}
\label{sec:overlap-proofs}

\subsubsection{Extended quasi-orthogonality of permutation operators with application to random circuits on $2$-dimensional lattices}
\label{overlap}
In this section we prove Lemma \ref{lem:2Dproj}

\restatelemma{lem:2Dproj}


First, we need a description of the subspaces the projectors $G_R, G_C$ and $G_\Haar$ project onto. Consider a $\sqrt{n}\times \sqrt{n}$ square lattice with $n$ qudits as the collection of points $A := [\sqrt{n}]\times [\sqrt{n}]$. We use the following interpretation of the Hilbert space a quasi-projector acts on. This interpretation is also used in \cite{BHH-designs}. Denote $R_j (C_j)$ as the $j$-th row (column) of $A$ for $j \in [\sqrt{n}]$. Here we assume each point of $A$ consists of $t$ pairs of qudits, each with local dimension $d$. Thereby, the lattice becomes the Hilbert space 
$\mathcal{H}:= \bigotimes_{(x,y) \in A} \C_{(x,y)}^{d^{2 t}}$, 
and has dimension $d^{2 t n}$.

We are interested in a certain subspace of $\cH$, and in order to understand it we need the following notation. For each point $(x,y) \in A$ we assign the quantum state $\ket{\psi_\pi}:= (I\tensor V(\pi))\ket{\Phi_{d,t}}$, for each permutation $\pi \in S_t$. $\ket{\Phi_{d,t}}$ is the maximally entangled state $\frac{1}{\sqrt{d^t}}\sum_{x \in [d]^t} \ket{x,x}$, $V: S_t  \rightarrow GL(\C^{d^{2t}})$, is a representation of $S_t$ with the map $V(\pi): \ket{x(1), x_2, \ldots, x_t} \mapsto \ket{x_{\pi^{-1}(1)}, x_{\pi^{-1}(2)}, \ldots, x_{\pi^{-1}(t)}}$, and $S_t$ is the symmetric group over $t$ elements.

Given these definitions define the following basis states in $\mathcal{H}$:
\be
\ket{R_{\pi_1, \pi_2, \ldots, \pi_{\sqrt{n}}}} :=  \bigotimes_{v_1 \in R_1} \ket{\psi_{\pi_1}}_{v_1} \tensor  \bigotimes_{v_2 \in R_2} \ket{\psi_{\pi_2}}_{v_2} \tensor \ldots \tensor  \bigotimes_{v_{\sqrt{n}} \in R_{\sqrt{n}}} \ket{\psi_{\pi_{\sqrt{n}}}}_{v_{\sqrt{n}}},
\label{wow1}
\ee
\noindent and,
\be
\ket{C_{\pi_1, \pi_2, \ldots, \pi_{\sqrt{n}}}} :=  \bigotimes_{v_1 \in C_1} \ket{\psi_{\pi_1}}_{v_1} \tensor  \bigotimes_{v_2 \in C_2} \ket{\psi_{\pi_2}}_{v_2} \tensor \ldots \tensor  \bigotimes_{v_{\sqrt{n}} \in C_{\sqrt{n}}} \ket{\psi_{\pi_{\sqrt{n}}}}_{v_{\sqrt{n}}},
\label{wow2}
\ee
for each $\sqrt{n}$ tuple of permutations $(\pi_1, \pi_2, \ldots, \pi_{\sqrt{n}}) \in  S_t^{\sqrt{n}}$. Here $S_t^{\sqrt{n}}$ is the $\sqrt{n}$-fold Cartesian product $S_t \times \ldots \times S_t$ of $S_t$ with itself. Denote $H_{t,n}$ as the subset consisting of tuples of permutations in which not all of the permutations are equal. For example, elements like $(\pi, \pi, \ldots, \pi)$ are not contained in this set.  Notice that these basis are not orthogonal to each other and if $t>d^n$ these are not even linearly independent.

Here we define two vector spaces $V_R , V_C \subseteq \mathcal{H}$, with:
\be
V_R:=\operatorname*{span}_\C  \left\{ \ket{R_{\pi_1, \pi_2, \ldots, \pi_{\sqrt{n}}}}: (\pi_1, \pi_2, \ldots, \pi_{\sqrt{n}}) \in S_t^{\sqrt{n}} \right\},
\ee
and,
\be
V_C:=\operatorname*{span}_\C  \left \{ \ket{C_{\pi_1, \pi_2, \ldots, \pi_{\sqrt{n}}}}: (\pi_1, \pi_2, \ldots, \pi_{\sqrt{n}}) \in S_t^{\sqrt{n}} \right\},
\ee
and we call them row and column vector spaces, respectively. Also, denote the intersection between them by $V_\Haar := V_R \cap V_C$. Equivalently:
\be
V_\Haar= \operatorname*{span}_\C  \left \{ \bigotimes_{v\in A} \ket{\psi_\pi}_v: \pi \in S_t \right\}.
\ee
Then define $\tilde{V}_R:= V_R\cap V^\perp_H$ and $\tilde{V}_C:= V_C\cap V^\perp_H$. Define the angle between two vector spaces $A$ and $B$ as 
\be
\cos \measuredangle (A,B):= \max_{x\in A, y \in B} \braket{x,y}.
\ee

We need the following definition of a Gram matrix
\begin{definition}[Gram matrix] \OldNormalFont{}
Let $v_1, \ldots, v_{\text{Rows}(D,n)}$ be normal vectors that are not necessarily orthogonal to each other. Then the Gram matrix corresponding to this set of vectors is defined as $[J_{ij}] = \braket{v_i | v_j}$.
\end{definition}

We also need the following lemma
\begin{lemma}\OldNormalFont{}
(Perron-Frobenius \cite{boaz-PF}) If $A$ is a (not necessarily symmetric) $d$-dimensional matrix, then:

\be
||A||_\infty \leq \sqrt{\max_{i\in [d]} \sum_j |A_{i,j}|\cdot \max_{j\in [d]} \sum_i |A_{i,j}|}.
\ee
\label{sumofrows}
\end{lemma}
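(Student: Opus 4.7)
The plan is to recognize this as Schur's test in its symmetric form, and prove it via a single application of Cauchy–Schwarz to the entries of $A$.  Recall from \secref{elementarytools} that $\|A\|_\infty$ denotes the largest singular value, i.e.\ $\|A\|_\infty = \sup_{\|x\|_2 = 1} \|Ax\|_2$.  Set
\[
R := \max_{i} \sum_{j} |A_{ij}|, \qquad C := \max_{j} \sum_{i} |A_{ij}|,
\]
so the goal is to show $\|A\|_\infty \le \sqrt{RC}$.

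First I would fix an arbitrary unit vector $x \in \C^d$ and estimate
\[
\|Ax\|_2^2 \;=\; \sum_i \Bigl|\sum_j A_{ij} x_j\Bigr|^2 \;\le\; \sum_i \Bigl(\sum_j |A_{ij}|^{1/2}\cdot |A_{ij}|^{1/2}|x_j|\Bigr)^2.
\]
Then I apply Cauchy–Schwarz to the inner sum, splitting the factor $|A_{ij}|$ symmetrically between the two slots, giving
\[
\|Ax\|_2^2 \;\le\; \sum_i \Bigl(\sum_j |A_{ij}|\Bigr)\Bigl(\sum_j |A_{ij}||x_j|^2\Bigr) \;\le\; R \sum_{i,j} |A_{ij}||x_j|^2 \;=\; R \sum_j |x_j|^2 \sum_i |A_{ij}| \;\le\; RC\,\|x\|_2^2.
\]
Taking square roots and a supremum over unit vectors yields $\|A\|_\infty \le \sqrt{RC}$, which is the claim.

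There is no real obstacle here: the argument is the classical Schur test (the $L^2 \to L^2$ case of Riesz–Thorin interpolation between $\|A\|_{\infty\to\infty} = R$ and $\|A\|_{1\to 1} = C$), and the only choice to make is the symmetric splitting of $|A_{ij}|$ into two square roots so that Cauchy–Schwarz produces the geometric mean of $R$ and $C$ rather than either one individually.
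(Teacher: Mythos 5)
Your proof is correct and complete: the symmetric splitting of $|A_{ij}|$ followed by Cauchy--Schwarz is exactly the standard Schur test, and every step checks out. The paper states Lemma \ref{sumofrows} without proof, so there is no in-paper argument to compare against; your derivation is the canonical way to establish it and fully justifies the claim.
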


Let $G_R, G_C$ and $G_{\Haar}$ be the quasi-projectors defined in Section \ref{sec:definitions}. From \cite{BHH-designs} we know that $G_R$, $G_C$ and $G_\Haar$ are indeed projectors onto $V_R, V_C$ and $V_\Haar$, respectively. Define the inner-product matrix between $V_R$ and $V_C$ with matrix $Q$ with entries:

\be
[Q]_{g,h}:=\braket{R_g|C_h}, \hspace{1cm} g,h \in H_{t,n}.
\ee

The goal is to prove $\|G_C G_R -G_\Haar^{(t)}\|_\infty \leq 1/d^{\Omega(\sqrt{ n})}$.  This basically means that the composition of $G_R$ and $G_C$ is close to $G_{\Haar}$.

Also let $c_{d,n,t} =\frac{1}{1-\frac{\sqrt n t(t-1)}{2 d^{\sqrt n}}}$ 
be a number very close to $1$.

The proof is in three main steps. First we relate $\|G_CG_R-G_{\Haar}\|_\infty$ to $\measuredangle (\tilde V_R, \tilde V_C)$:
\begin{proposition}\OldNormalFont{}
$\|G_CG_R-G_{\Haar}\|_\infty \leq \cos^2 \measuredangle (\tilde V_R, \tilde V_C)$.
\label{prop:angle}
\end{proposition}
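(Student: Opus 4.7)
The plan is to exploit the containment $V_{\Haar} = V_R \cap V_C \subseteq V_R$, $V_C$ to perform an orthogonal decomposition of each projector: write $G_R = G_{\Haar} + \tilde G_R$ and $G_C = G_{\Haar} + \tilde G_C$, where $\tilde G_R$ and $\tilde G_C$ are the orthogonal projectors onto $\tilde V_R = V_R \cap V_{\Haar}^\perp$ and $\tilde V_C = V_C \cap V_{\Haar}^\perp$, respectively. These decompositions are truly orthogonal because $V_{\Haar}$ is a subspace of $V_R$ (and of $V_C$), so inside $V_R$ the spaces $V_{\Haar}$ and $\tilde V_R$ are complementary and orthogonal.

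Next I would expand the product and collect terms:
\[
G_C G_R \;=\; (G_{\Haar} + \tilde G_C)(G_{\Haar} + \tilde G_R) \;=\; G_{\Haar}^2 + G_{\Haar}\tilde G_R + \tilde G_C G_{\Haar} + \tilde G_C \tilde G_R.
\]
Here $G_{\Haar}^2 = G_{\Haar}$, while the two cross terms vanish: $G_{\Haar}\tilde G_R = 0$ because the range of $\tilde G_R$ lies in $V_{\Haar}^\perp$ and is therefore annihilated by $G_{\Haar}$, and by the same reasoning $\tilde G_C G_{\Haar} = 0$. This yields the clean identity
\[
G_C G_R - G_{\Haar} \;=\; \tilde G_C \tilde G_R,
\]
reducing the problem to bounding the operator norm of a product of two orthogonal projectors onto $\tilde V_C$ and $\tilde V_R$.

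For this last step, I would use the standard identity $\|A\|_\infty^2 = \|A^\dagger A\|_\infty$ applied to $A = \tilde G_C \tilde G_R$, giving $\|\tilde G_C \tilde G_R\|_\infty^2 = \|\tilde G_R \tilde G_C \tilde G_R\|_\infty$. The right-hand side is the norm of a positive self-adjoint operator supported on $\tilde V_R$, so
\[
\|\tilde G_R \tilde G_C \tilde G_R\|_\infty \;=\; \sup_{x \in \tilde V_R,\ \|x\|=1} \langle x, \tilde G_C x\rangle \;=\; \sup_{x \in \tilde V_R,\ \|x\|=1} \|\tilde G_C x\|^2 \;=\; \sup_{\substack{x \in \tilde V_R,\, y \in \tilde V_C \\ \|x\|=\|y\|=1}} |\langle y, x\rangle|^2,
\]
which is exactly $\cos^2 \measuredangle(\tilde V_R, \tilde V_C)$ by the definition given just before the proposition.

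There is no real obstacle here; the proof is essentially a packaging of the well-known geometric fact that the singular values of a product of two orthogonal projectors are the cosines of the principal angles between their images. The two things that must be checked carefully are that the orthogonal splittings $G_R = G_{\Haar} + \tilde G_R$ and $G_C = G_{\Haar} + \tilde G_C$ are genuinely orthogonal sums of projectors (which uses only that $V_{\Haar}$ sits inside both $V_R$ and $V_C$), and that both cross terms in the expansion of $G_C G_R$ vanish. The remaining work is to convert between $\|\tilde G_C \tilde G_R\|_\infty$ and $\|\tilde G_R \tilde G_C \tilde G_R\|_\infty$ via $\|A\|_\infty^2 = \|A^\dagger A\|_\infty$, after which the bound in terms of $\cos^2 \measuredangle(\tilde V_R, \tilde V_C)$ is immediate from the definition of the angle. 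The real content — showing that $\measuredangle(\tilde V_R, \tilde V_C)$ is close to $\pi/2$, so that this cosine is exponentially small in $\sqrt n$ — is deferred to the subsequent lemmas that compute the Gram matrix between the row and column permutation basis states.
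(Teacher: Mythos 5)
Your decomposition is correct and is a genuinely cleaner route than the paper's (which invokes Jordan's two-projector block decomposition): since $V_\Haar=V_R\cap V_C$ is contained in both $V_R$ and $V_C$, the splittings $G_R=G_\Haar+\tilde G_R$ and $G_C=G_\Haar+\tilde G_C$ are orthogonal, the cross terms vanish, and the identity $G_CG_R-G_\Haar=\tilde G_C\tilde G_R$ holds exactly. Your evaluation of $\|\tilde G_R\tilde G_C\tilde G_R\|_\infty$ as $\cos^2\measuredangle(\tilde V_R,\tilde V_C)$ is also correct. The problem is the very last step: you have shown
$\left\|\tilde G_C\tilde G_R\right\|_\infty^2=\left\|\tilde G_R\tilde G_C\tilde G_R\right\|_\infty=\cos^2\measuredangle(\tilde V_R,\tilde V_C)$,
which upon taking square roots gives $\|G_CG_R-G_\Haar\|_\infty=\cos\measuredangle(\tilde V_R,\tilde V_C)$, \emph{not} $\cos^2\measuredangle(\tilde V_R,\tilde V_C)$. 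Since $\cos\theta\geq\cos^2\theta$ for $\theta\in[0,\pi/2]$, your conclusion is weaker than the stated inequality, and the claim that "the bound in terms of $\cos^2$ is immediate" does not follow. This is a real gap relative to the statement as written.

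That said, you are in good company: the paper's own proof contains the same square. The $2\times 2$ block
$\begin{pmatrix}\cos^2\theta_i & \sin\theta_i\cos\theta_i\\ 0&0\end{pmatrix}$
has largest \emph{eigenvalue} $\cos^2\theta_i$ but largest \emph{singular value} $|\cos\theta_i|$ (compute $MM^\dagger$), and since $\|\cdot\|_\infty$ is defined in Section \ref{sec:elementarytools} via singular values, the correct conclusion of both arguments is $\|G_CG_R-G_\Haar\|_\infty=\cos\measuredangle(\tilde V_R,\tilde V_C)$. The discrepancy is harmless downstream: Propositions \ref{gapandQ} and \ref{Qissmall} show that $\cos\measuredangle(\tilde V_R,\tilde V_C)$ itself is $d^{-\Omega(\sqrt n)}$, which is all that Lemma \ref{lem:2Dproj} and its applications require. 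So the honest fix is to restate the proposition with $\cos$ in place of $\cos^2$ (or to interpret the left-hand side as a spectral radius), after which your argument proves it with equality and without needing Jordan's lemma at all.
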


Next, we relate $\measuredangle (\tilde V_R, \tilde V_C)$ to $||Q||_\infty$
\begin{proposition}\OldNormalFont{}
$|\cos \measuredangle (\tilde V_R, \tilde V_C)| \leq c_{d,n,t} . ||Q||_\infty$
\label{gapandQ}
\end{proposition}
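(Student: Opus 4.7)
The plan is to reduce the statement to a bound on a bilinear form. Any unit vector $u \in \tilde V_R$ can be parametrized, via its orthogonality to $V_\Haar$, by a coefficient vector $\alpha$ supported on the non-constant tuples $H_{t,n}$; similarly $v \in \tilde V_C$ corresponds to $\beta$ supported on $H_{t,n}$. The inner product $\langle u, v\rangle$ equals, up to exponentially small corrections, the bilinear form $\sum_{g,h \in H_{t,n}} \alpha_g^\ast Q_{g,h} \beta_h$, which is bounded by $\|Q\|_\infty \|\alpha\|_2 \|\beta\|_2 \leq c_{d,n,t} \|Q\|_\infty$. The constant $c_{d,n,t}$ absorbs the compounded effect of the non-orthogonality of the spanning families $\{\ket{R_g}\}$ and $\{\ket{C_h}\}$, which is tightly controlled by the identity $\langle \psi_\pi | \psi_{\pi'}\rangle = d^{\#\text{cycles}(\pi^{-1}\pi')-t}$, bounded by $d^{-1}$ for $\pi \neq \pi'$ and hence by $d^{-\sqrt n}$ after being raised to the $\sqrt n$-th power that appears in each row.

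First, I would use that $V_\Haar$ is spanned precisely by the diagonal vectors $\ket{R_{(\pi,\ldots,\pi)}} = \ket{C_{(\pi,\ldots,\pi)}}$ for $\pi \in S_t$, so that $P_{V_\Haar^\perp}\ket{R_{(\pi,\ldots,\pi)}} = 0$ and therefore every $u \in \tilde V_R$ admits the representation $u = P_{V_\Haar^\perp}\sum_{g\in H_{t,n}} \alpha_g \ket{R_g}$ for some $\alpha \in \C^{H_{t,n}}$. Second, I would estimate the Gram matrix $J_R$ with entries $\langle R_g | R_{g'}\rangle$ for $g,g' \in H_{t,n}$. The tensor factorization $\langle R_g | R_{g'}\rangle = \prod_{i=1}^{\sqrt n}\langle \psi_{\pi_i}|\psi_{\pi'_i}\rangle^{\sqrt n}$, together with summing over $g'$, shows that the off-diagonal row sums of $J_R$ are at most $\sqrt n \cdot t(t-1)/(2 d^{\sqrt n})$, since the dominant contribution comes from tuples differing by a single transposition in exactly one coordinate. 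Lemma \ref{sumofrows} then yields $\|J_R - I\|_\infty \leq \sqrt n \cdot t(t-1)/(2 d^{\sqrt n})$, forcing $\|\alpha\|_2 \leq c_{d,n,t}^{1/2} \|u\|$, and the symmetric bound holds for $\beta$.

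Finally, expanding
\begin{equation*}
\langle u, v \rangle = \sum_{g,h \in H_{t,n}} \alpha_g^\ast \beta_h \langle R_g | P_{V_\Haar^\perp} | C_h \rangle = \sum_{g,h} \alpha_g^\ast Q_{g,h} \beta_h \; - \; \sum_{g,h} \alpha_g^\ast \beta_h \langle R_g | P_{V_\Haar} | C_h\rangle,
\end{equation*}
I would bound the correction term by observing that $P_{V_\Haar}\ket{C_h}$, expanded in the nearly orthonormal diagonal family $\{\ket{R_{(\pi,\ldots,\pi)}}\}_\pi$, has coefficients of order $|\langle R_{(\pi,\ldots,\pi)} | C_h\rangle| \leq d^{-\sqrt n}$ whenever $h \in H_{t,n}$ (at least one factor in the corresponding tensor product is suppressed), and similarly on the $R_g$ side. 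This correction is therefore subdominant and fits within the slack of $c_{d,n,t}$, while the principal contribution satisfies $\bigl|\sum_{g,h} \alpha_g^\ast Q_{g,h} \beta_h\bigr| \leq \|Q\|_\infty \|\alpha\|_2 \|\beta\|_2 \leq c_{d,n,t}\|Q\|_\infty$. Taking the supremum over unit $u \in \tilde V_R$ and $v \in \tilde V_C$ finishes the proof.

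The main obstacle will be the careful bookkeeping needed to combine these small errors — the $\ell_2$-norm distortion from the Gram matrix and the $V_\Haar$-projection correction — into precisely the stated constant $c_{d,n,t} = 1/(1 - \sqrt n\, t(t-1)/(2 d^{\sqrt n}))$, rather than some weaker combination. In particular one must verify that the dominant near-identity error really comes from transpositions, so that the sub-leading contributions from longer permutations are swallowed by the $t(t-1)/2$ factor. Beyond this combinatorial accounting, the argument is a straightforward linear-algebraic computation relying on Lemma \ref{sumofrows} and Cauchy--Schwarz.
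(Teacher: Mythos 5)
Your proposal follows essentially the same route as the paper: parametrize vectors of $\tilde V_R$ and $\tilde V_C$ by coefficient vectors over $H_{t,n}$, reduce the overlap to the bilinear form $\bra{x}Q\ket{y}$, control the normalization by the near-identity of the Gram matrices (Proposition \ref{Gramlowerbound}, with the same constant $c_{d,n,t}$ arising from counting transpositions), and finish with Cauchy--Schwarz. The only difference is that you insert $P_{V_\Haar^\perp}$ explicitly and carry a correction term, whereas the paper writes vectors of $\tilde V_R$ directly as normalized combinations of the $\ket{R_{\tilde\pi}}$, $\tilde\pi\in H_{t,n}$, and absorbs all of the non-orthonormality into the Gram-matrix denominators.
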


Then we bound $||Q||_\infty$:
\begin{proposition}\OldNormalFont{}
$||Q||_\infty\leq    (\frac{1}{d}+\frac{1}{d^{\sqrt{n}-1}}+\frac{2 t^2}{d^{\sqrt{n}}}   )^{\sqrt{n}}$. 
\label{Qissmall}
\end{proposition}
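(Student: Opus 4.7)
My plan is to apply Perron--Frobenius (Lemma \ref{sumofrows}) to reduce $\|Q\|_\infty$ to the maximum row sum of $|[Q]_{g,h}|$, and then exploit the tensor-product structure of the basis vectors to factor the row sum into a product of identical one-dimensional sums over $S_t$.

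\emph{Explicit form and factorization.} Since $\ket{R_g}=\bigotimes_{(i,j)\in A}\ket{\psi_{g_i}}_{(i,j)}$ and $\ket{C_h}=\bigotimes_{(i,j)\in A}\ket{\psi_{h_j}}_{(i,j)}$, the inner product factorizes site-by-site. Using $\braket{\psi_\pi|\psi_\sigma}=d^{-t}\operatorname{tr}V(\pi^{-1}\sigma)=d^{c(\pi^{-1}\sigma)-t}$, where $c(\cdot)$ counts cycles, one obtains the nonnegative expression $[Q]_{g,h}=\prod_{i,j\in[\sqrt{n}]}d^{\,c(g_i^{-1}h_j)-t}$. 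Since $c(\tau^{-1})=c(\tau)$, swapping summation indices shows $Q$ is symmetric, so Lemma \ref{sumofrows} yields
\[
\|Q\|_\infty\leq \max_{g\in H_{t,n}}\sum_{h\in H_{t,n}}[Q]_{g,h}\leq \max_g \prod_{j=1}^{\sqrt{n}}\sum_{h_j\in S_t}\prod_{i=1}^{\sqrt{n}}d^{c(g_i^{-1}h_j)-t}=\max_{g\in H_{t,n}} F(g)^{\sqrt{n}},
\]
where $F(g):=\sum_{\tau\in S_t}\prod_i d^{c(g_i^{-1}\tau)-t}$. The remaining task is to show $F(g)\leq \tfrac{1}{d}+\tfrac{1}{d^{\sqrt n-1}}+\tfrac{2t^2}{d^{\sqrt n}}$ uniformly in $g\in H_{t,n}$.

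\emph{Bounding $F(g)$.} Let the distinct values among $g_1,\dots,g_{\sqrt{n}}$ be $\sigma_1,\dots,\sigma_m$ with multiplicities $n_l\geq 1$; since $g\in H_{t,n}$ we have $m\geq 2$ and $\sum_l n_l=\sqrt{n}$. I would split the $\tau$-sum into the diagonal terms ($\tau=\sigma_l$ for some $l$) and off-diagonal terms ($\tau\notin\{\sigma_l\}$). For the diagonal terms, the factors with $g_i=\sigma_l$ contribute $1$, while the rest satisfy $c(\sigma_k^{-1}\sigma_l)\leq t-1$, bounding the product by $d^{-(\sqrt{n}-n_l)}$; summing over $l$ and using convexity of $x\mapsto d^x$ under $\sum_l n_l=\sqrt n$, $n_l\geq 1$, $m\geq 2$ (maximum at $n_1=\sqrt{n}-1$, $n_2=1$) gives $\tfrac{1}{d}+\tfrac{1}{d^{\sqrt{n}-1}}$. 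For the off-diagonal terms, set $k_l(\tau):=t-c(\sigma_l^{-1}\tau)\geq 1$; the product equals $d^{-\sum_l n_l k_l(\tau)}$, which is bounded by $d^{-\sqrt{n}}$ and achieves this bound only when every $k_l(\tau)=1$, i.e., when each $\sigma_l^{-1}\tau$ is a transposition. At most $\binom{t}{2}$ permutations lie at Cayley distance one from $\sigma_1$, and every remaining $\tau$ loses at least one extra factor of $d$, giving a combined contribution bounded by $\binom{t}{2}d^{-\sqrt{n}}+t!\,d^{-\sqrt{n}-1}\leq 2t^2 d^{-\sqrt{n}}$ once $d$ is moderately large. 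Adding the two pieces yields the claimed bound.

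The hard part is the final step: the crude bound $(t!-m)d^{-\sqrt{n}}$ on the off-diagonal sum is much too weak for large $t$, and upgrading it to $O(t^2)d^{-\sqrt{n}}$ requires exploiting the Cayley geometry of $S_t$ --- specifically, the observation that only $\binom{t}{2}$ permutations can simultaneously be at transposition distance one from each $\sigma_l$, with all other off-diagonal $\tau$ contributing a product with at least one ``excess'' factor of $1/d$.
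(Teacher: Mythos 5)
Your overall route is the same as the paper's: Perron--Frobenius (Lemma \ref{sumofrows}) reduces $\|Q\|_\infty$ to a maximal row sum, the row sum factorizes into $h(d,t,g_1,\ldots,g_{\sqrt n})^{\sqrt n}$ exactly as in the paper, and your treatment of the diagonal terms (convexity of $\mu\mapsto\sum_l d^{\mu_l}$ over the polytope with extreme point $(\sqrt n-1,1)$) is identical to the paper's bound on $h_1$ in Proposition \ref{maininequality}. The one place you diverge is the off-diagonal tail, and that is where there is a genuine gap.

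You bound every $\tau$ that fails to be simultaneously at Cayley distance one from all the $\sigma_l$ by $d^{-\sqrt n-1}$ and count at most $t!$ of them, so your off-diagonal total is $\binom{t}{2}d^{-\sqrt n}+t!\,d^{-\sqrt n-1}$, and the inequality $\binom{t}{2}+t!/d\leq 2t^2$ forces $d\gtrsim t!/t^2$. That condition is fatal for the intended application: the proposition must hold for fixed local dimension (e.g.\ $d=2$ for qubits) with $t$ growing polynomially in $n$, and already at $d=2$, $t=6$ your inequality fails ($720/2>2\cdot 36$). The structural problem is that each ``remaining'' $\tau$ only buys you a single extra power of $d$ in your accounting, while the number of permutations at Cayley distance $k$ grows like $t^{2k}$, so a count-times-worst-case bound cannot converge. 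The paper's fix is the majorization step (Proposition \ref{multiprod}): after sorting, each vector $\vec X_{\sigma_l}$ is dominated entrywise by $\vec X_e$, which replaces the off-diagonal sum by $\sum_{\pi\neq e} d^{-M\,\dist(e,\pi)}$ with $M=\sqrt n$. There each additional unit of distance costs a factor of $d^{\sqrt n}$ rather than $d$, the series is geometric in $t^2/d^{\sqrt n}$, and Proposition \ref{ftlowerbound} yields $2t^2/d^{\sqrt n}$ under the mild hypothesis $2t^2\leq d^{\sqrt n}$. To repair your argument you would need a tail estimate of this kind --- e.g.\ grouping $\tau$ by its distance to the most frequent $\sigma_l$ and using that each step of distance costs $d^{n_l}$ for that multiplicity --- rather than the flat $t!\,d^{-\sqrt n-1}$ bound.
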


Propositions \ref{prop:angle}, \ref{gapandQ} and \ref{Qissmall} imply the proof of Lemma \ref{lem:2Dproj}.

\begin{proof}[\OldNormalFont{} Proof of Proposition \ref{prop:angle}]
We use the following result of Jordan
\begin{proposition} (Jordan) if $P$ and $Q$ are two projectors, then the Hilbert space $V$ they act on can be decomposed, as a direct sum, into one-dimensional or two-dimensional subspaces, all of which are invariant under the action of both $P$ and $Q$ at the same time.
\end{proposition}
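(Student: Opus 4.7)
\medskip\noindent\textbf{Proof proposal.}
The plan is to produce the decomposition via the spectral theorem applied to the single self-adjoint operator $T := PQP$, whose spectrum lies in $[0,1]$. Before invoking the spectral theorem, I will peel off the ``trivial'' part of the space where $P$ and $Q$ share common eigenvectors, so that what remains has only two-dimensional invariant subspaces.

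First, define the four mutually orthogonal subspaces
\[
V_{11} = \mathrm{Range}(P)\cap\mathrm{Range}(Q),\quad V_{10} = \mathrm{Range}(P)\cap\ker Q,\quad V_{01} = \ker P\cap\mathrm{Range}(Q),\quad V_{00} = \ker P\cap\ker Q,
\]
and let $W$ be their orthogonal complement in $V$. On each $V_{ij}$ both $P$ and $Q$ act as scalar multiples of the identity, so any orthonormal basis gives a decomposition into jointly invariant one-dimensional subspaces. Using that $P$ and $Q$ are self-adjoint, the relation $\langle Pw,u\rangle = i\langle w,u\rangle = 0$ for $u\in V_{ij}$ shows $P W\subseteq W$, and similarly $Q W\subseteq W$, so $W$ is jointly invariant. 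It remains to decompose $W$ into two-dimensional jointly invariant subspaces.

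Next, consider $T|_W = (PQP)|_W$. Since any eigenvector of $T|_W$ with eigenvalue $\lambda\in\{0,1\}$ would lie in $V_{10}$ or $V_{11}$ and hence be excluded from $W$, the nonzero spectrum of $T|_W$ (restricted to $\mathrm{Range}(P)\cap W$) lies strictly in $(0,1)$. For each such eigenvector $v$ of $T|_W$ with eigenvalue $\lambda\in(0,1)$, define $w := Qv/\sqrt{\lambda}$. A direct calculation yields
\[
Pv=v,\qquad Qv=\sqrt{\lambda}\,w,\qquad Qw=w,\qquad Pw=\sqrt{\lambda}\,v,
\]
so $\mathrm{span}(v,w)$ is two-dimensional (since $\lambda\in(0,1)$ rules out $Qv\propto v$) and invariant under both $P$ and $Q$. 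Orthogonality between different such two-dimensional subspaces follows from the identity $\langle v_1,Qv_2\rangle = \langle v_1,Tv_2\rangle/\sqrt{\lambda_2}\cdot(\cdots)$ (more cleanly, $\langle v_1,Qv_2\rangle = \langle Pv_1,Qv_2\rangle = \langle v_1,PQv_2\rangle = \langle v_1,Tv_2/\sqrt{\lambda_2}\cdot\sqrt{\lambda_2}\rangle$ combined with $Tv_2=\lambda_2 v_2$), which forces $\langle v_1,w_2\rangle = \langle w_1,v_2\rangle = \langle w_1,w_2\rangle = 0$ whenever $v_1\perp v_2$ inside the eigenspace decomposition of $T|_W$.

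The main obstacle will be the spanning step: showing that the $\mathrm{span}(v,w)$ collected over all eigenvectors of $T|_W$ with eigenvalues in $(0,1)$ exhaust $W$, not just $\mathrm{Range}(P)\cap W$. The plan is, given $u\in W$ orthogonal to all these two-dimensional subspaces, to first observe $u\in\ker P\cap W$ (since $u$ is orthogonal to the full eigenspace decomposition of $T|_{\mathrm{Range}(P)\cap W}$), then to use $\langle Qu,v\rangle = \sqrt{\lambda}\,\langle u,w\rangle = 0$ to conclude $PQu\perp\mathrm{Range}(P)\cap W$; since $PQu\in\mathrm{Range}(P)\cap W$, this forces $Qu\in\ker P$, hence $Qu\in V_{01}\cap W=\{0\}$, and then $u\in V_{00}\cap W=\{0\}$. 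Putting these four paragraphs together yields the claimed direct-sum decomposition of $V$ into jointly $P$- and $Q$-invariant subspaces of dimension one or two.
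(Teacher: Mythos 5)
Your proof is correct, but there is nothing in the paper to compare it against: the paper states this proposition as a classical result of Jordan and gives no proof, passing immediately to Corollary \ref{jordan}. Your argument is the standard one (spectral theorem for $T=PQP$ after splitting off $\mathrm{Range}(P)\cap\mathrm{Range}(Q)$, $\mathrm{Range}(P)\cap\ker Q$, $\ker P\cap\mathrm{Range}(Q)$, $\ker P\cap\ker Q$), and it has the added benefit that it directly produces the $2\times 2$ block form the paper actually uses: on $\mathrm{span}(v,w)$ your relations $Pv=v$, $Pw=\sqrt{\lambda}\,v$, $Qw=w$, $Qv=\sqrt{\lambda}\,w$ exhibit $\lambda=\cos^2\theta_i$, i.e.\ the canonical angles appearing in Corollary \ref{jordan} and in the proof of Proposition \ref{prop:angle}. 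Three small points to tighten. First, your proof (correctly) uses self-adjointness of $P$ and $Q$ throughout; this is essential, since the statement fails for oblique idempotents (two idempotents can act indecomposably on a $3$-dimensional space), and it matches the paper's setting where $G_R,G_C,G_\Haar$ are Hermitian projectors. Second, the sentence ``any eigenvector of $T|_W$ with eigenvalue $\lambda\in\{0,1\}$ would lie in $V_{10}$ or $V_{11}$'' is only accurate for eigenvectors lying in $\mathrm{Range}(P)\cap W$ (a vector of $\ker P\cap W$ is also annihilated by $T$); you do add the needed restriction parenthetically, but it should be stated as the main clause, together with the observation that $\mathrm{Range}(P)\cap W$ is $T$-invariant so the spectral theorem applies there. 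Third, the orthogonality display is garbled as written; the clean computation is
\begin{equation*}
\langle v_1, Q v_2\rangle=\langle P v_1, Q P v_2\rangle=\langle v_1, PQP v_2\rangle=\lambda_2\langle v_1,v_2\rangle=0,
\end{equation*}
from which $\langle v_1,w_2\rangle$, $\langle w_1,v_2\rangle$ and $\langle w_1,w_2\rangle$ all vanish. With these cosmetic fixes the argument is complete, including the spanning step, which you handle correctly by showing any $u\in W$ orthogonal to all the two-dimensional blocks satisfies $Pu=0$ and then $Qu\in V_{01}\cap W=\{0\}$, forcing $u\in V_{00}\cap W=\{0\}$.
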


\noindent which implies
\begin{corollary}\OldNormalFont{}
There are orthonormal basis $e_1, \ldots, e_K$, $f_1, \ldots, f_K$, $q_1, \ldots, q_T$, and angles $0\geq \theta_1 \geq \theta_2 \geq \ldots \geq \theta_K \leq \pi/2$ such that:
\be
V_R = \operatorname*{span}_{\C}  \left\{e_1,\ldots, e_K, q_1, \ldots, q_T  \right\},
\ee
and
\be
V_C = \operatorname*{span}_{\C}  \{\cos \theta_1 e_1 + \sin \theta_1 f_1,\ldots, \cos \theta_K e_K + \sin \theta_K f_K, q_1, \ldots, q_T  \},
\ee
and
\be
V_\Haar = \operatorname*{span}_{\C}  \{q_1, \ldots, q_T  \}.
\ee
\label{jordan}
\end{corollary}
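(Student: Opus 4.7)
My proof proposal for this corollary is to apply Jordan's proposition directly to the pair of projectors $(G_R, G_C)$ on $\cH$, and then extract the canonical form block-by-block. Jordan's proposition yields a decomposition $\cH = \bigoplus_j W_j$ with each $W_j$ of dimension one or two and invariant under both $G_R$ and $G_C$ simultaneously. The corollary is just a careful organization of this data into three classes of basis vectors.

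For each 1D block $W_j = \Span\{w\}$, the vector $w$ is a common eigenvector of $G_R$ and $G_C$ with eigenvalues in $\{0,1\}$. There are four cases. If both eigenvalues equal $1$, then $w \in V_R \cap V_C = V_\Haar$, and we collect such vectors as the $q_i$'s, obtaining an orthonormal basis of $V_\Haar$. If both eigenvalues are $0$, then $w$ lies in $V_R^\perp \cap V_C^\perp$ and plays no role in the statement. The mixed cases (one eigenvalue $1$, the other $0$) give rise to ``degenerate'' pieces that will be folded into the 2D analysis below with $\theta_j = \pi/2$.

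For each genuine 2D block $W_j$, observe that neither $G_R|_{W_j}$ nor $G_C|_{W_j}$ can be $0$ or the identity (else $W_j$ would decompose into two 1D blocks), so each is a rank-one projector onto a 1D subspace. Thus $V_R \cap W_j = \Span\{e_j\}$ and $V_C \cap W_j = \Span\{u_j\}$ for unit vectors $e_j, u_j \in W_j$. Writing $u_j$ in the orthonormal basis $\{e_j, f_j\}$ of $W_j$ (where $f_j \perp e_j$ and the phase of $f_j$ is fixed so the expansion coefficients are real and nonnegative) gives $u_j = \cos\theta_j\, e_j + \sin\theta_j\, f_j$ for a unique $\theta_j \in (0,\pi/2)$; the endpoints are excluded by indecomposability of $W_j$. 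For the mixed 1D blocks from the previous paragraph, we pair each $e \in V_R \cap V_C^\perp$ coming from a 1D block with an $f \in V_C \cap V_R^\perp$ coming from another 1D block, setting $\theta = \pi/2$; this pairing is possible because the symmetry of the lattice (rows and columns play interchangeable roles) forces $\dim V_R = \dim V_C$, hence $\dim(V_R \cap V_\Haar^\perp) = \dim(V_C \cap V_\Haar^\perp)$.

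Concatenating the $(e_j, f_j, \theta_j)$ triples from true 2D blocks and the paired 1D blocks, and reindexing in increasing order of $\theta_j$, produces orthonormal sets $\{e_i\}, \{f_i\}, \{q_i\}$ satisfying the three span conditions of the corollary. The only nontrivial step is the pairing in the mixed case, which is the main bookkeeping obstacle; everything else is a direct reading off of Jordan's canonical form for the pair of projectors.
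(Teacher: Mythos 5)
Your proposal is correct and follows essentially the same route as the paper, which simply invokes Jordan's two-projector lemma and leaves the block-by-block bookkeeping implicit. The one step you rightly flag as nontrivial---pairing the mixed one-dimensional blocks (eigenvalue pattern $(1,0)$ with pattern $(0,1)$) using $\dim V_R = \dim V_C$, which here follows from the row--column symmetry of the square lattice and is in any case presupposed by the corollary's statement since both spans have $K+T$ orthonormal generators---is handled correctly, so the argument is complete.
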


In other words, both $G_R$ and $G_C$ can be decomposed into $2\times 2$ blocks, each corresponding to one of the angles $\theta_i$, such that $G_C$ on this block looks like
\be
G^{2 \times 2}_C=
\begin{pmatrix}
1 & 0\\
0& 0\\
\end{pmatrix},
\ee
and $G_R$
\be
G^{2\times 2}_R =
\begin{pmatrix}
\cos^2 \theta_i & \sin \theta_i \cos \theta_i\\
\sin \theta_i \cos \theta_i & \sin^2 \theta_i\\
\end{pmatrix}.
\ee

Hence $G_C G_R$ looks like
\be
G^{2\times 2}_C G^{2\times 2}_R  =
\begin{pmatrix}
\cos^2 \theta_i & \sin \theta_i \cos \theta_i\\
0 & 0\\
\end{pmatrix},
\ee
which has largest singular value $|\cos^2 \theta_i|$. Propositions \ref{gapandQ} and \ref{Qissmall} along with this observation imply that the largest singular value of $G_C G_R - G_\Haar$ is $1/d^{n^{O(n^{1/D})}}$.

\end{proof}

\begin{proof} [\OldNormalFont{} Proof of Proposition \ref{gapandQ}]
%
%
%
%
%
An arbitrary normal vector in $\tilde{V}_R$ can be written as $\ket {\psi_x} = \frac{\sum_{\tilde \pi \in H_{t,n}} x_{\tilde \pi} \ket {R_{\tilde \pi}}}{\sqrt {\sum_{\tilde \pi, \tilde \sigma \in H_{t,n}}     x_{\tilde \pi} x_{\tilde \sigma} \braket {R_{\tilde \pi}| R_{\tilde{\sigma}}    }}}$. Let $\ket{x}$ be a vector with entries corresponding to $x_{\tilde{\pi}_1, \ldots , \tilde \pi_{\sqrt n}}$. Similarly, a typical vector inside $\tilde V_C$ can be represented as $\ket {\psi_y} = \frac{\sum_{\tilde \pi \in H_{t,n}} y_{\tilde \pi} \ket {R_{\tilde \pi}}}{\sqrt {\sum_{\tilde \pi, \tilde \sigma \in H_{t,n}}     y_{\tilde \pi} y_{\tilde \sigma} \braket {C_{\tilde \pi}| C_{\tilde{\sigma}}    }}}$. Also represent the corresponding vector $\ket{y}$ similarly. 

 Let $\tilde{J}$ and $\tilde{J}'$ be the Gram matrices corresponding to the basis described for $\tilde{V}_R$ and $\tilde{V}_C$, respectively. Then:
 \be
 \braket {\psi_x | \psi_y} = \frac{\sum_{\tilde \pi, \tilde \sigma \in H_{t,n}} x_{\tilde \pi} \braket {R_{\tilde \pi}| R_{\tilde \sigma}} y_{\tilde \sigma}}{\sqrt {\sum_{\tilde \pi, \tilde \sigma \in H_{t,n}}     x_{\tilde \pi} x_{\tilde \sigma} \braket {R_{\tilde \pi}| R_{\tilde{\sigma}}    }} \cdot \sqrt {\sum_{\tilde \pi, \tilde \sigma \in H_{t,n}}     y_{\tilde \pi} y_{\tilde \sigma} \braket {C_{\tilde \pi}| C_{\tilde{\sigma}}    }}} = \frac{\bra{x}Q \ket{y}}{\sqrt{\bra{x} \tilde{J}\ket{x}}.\sqrt{\bra{y}\tilde{J}'\ket{y}}}.
 \ee
 
To see the equality we go through the below calculation. 

\begin{eqnarray}
\cos \phi &=& \sup_{\mathclap{\substack{||x||_2=1\nonumber \\
                   ||y||_2=1}}}\hspace{0.3cm} \frac{\bra{x}Q \ket{y}}{\sqrt{\bra{x} \tilde{J}\ket{x}}.\sqrt{\bra{y}\tilde{J}'\ket{y}}}\nonumber \\
\vspace{1mm}&\leq& c_{d,n,t}\hspace{1mm} .\hspace{2mm} \sup_{\mathclap{\substack{||x||_2=1\nonumber \\
                   ||y||_2=1}}}\hspace{0.3cm} \bra{x}Q \ket{y}\nonumber \\
&\leq& c_{d,n,t}\hspace{1mm} .\hspace{2mm} \sup_{\mathclap{\substack{||x||_2=1\nonumber \\
                   ||y||_2=1}}}\hspace{0.3cm} \sqrt{\bra{x}Q^\dagger Q\ket{x}} \hspace{1mm}. \hspace{1mm} ||y||_2\nonumber \\
&\leq& c_{d,n,t} \hspace{1mm}. \hspace{1mm} ||Q||_\infty.
\end{eqnarray}

For the second line we used the following proposition
\begin{proposition}\OldNormalFont{}
If $\tilde{J}$ is the Gram matrix for the basis states, $\ket{R_{\cdot}}$ or $\ket{C_{\cdot}}$ in \eqref{wow1} and \eqref{wow2} for $\tilde{V}_R$ or $\tilde{V}_R$, then for any $|x\rangle$ with $||x||_2=1$:

\be
\bra{x}\tilde{J}\ket{x} \geq\left (1-\frac{\sqrt n t(t-1)}{2 d^{\sqrt n}}\right) =\frac{1}{c_{d,n,t}}.
\label{eq:196}
\ee
\label{Gramlowerbound}
\end{proposition}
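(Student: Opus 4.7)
The plan is to show that $\tilde J$ is close to the identity by bounding its off-diagonal entries and invoking Gershgorin's disk theorem.

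The first step is to evaluate the entries of $\tilde J$. Because each row $R_j$ consists of $\sqrt n$ lattice points all carrying the state $\ket{\psi_{\pi_j}}$, the tensor-product structure of $\ket{R_{\vec\pi}}$ gives
\[
\braket{R_{\vec\pi}\,|\,R_{\vec\sigma}} \;=\; \prod_{j=1}^{\sqrt n} \braket{\psi_{\pi_j}\,|\,\psi_{\sigma_j}}^{\sqrt n}.
\]
A short computation using $\ket{\psi_\pi}=(I\otimes V(\pi))\ket{\Phi_{d,t}}$ yields $\braket{\psi_\pi\,|\,\psi_\sigma} = d^{\,c(\pi^{-1}\sigma)-t}$, where $c(\tau)$ is the number of cycles of $\tau\in S_t$. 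In particular all diagonal entries of $\tilde J$ equal $1$, and for $\pi\neq\sigma$ the factor $\braket{\psi_\pi\,|\,\psi_\sigma}$ has magnitude at most $1/d$, with equality only when $\pi^{-1}\sigma$ is a transposition.

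The second step is to bound the Gershgorin radius $\max_{\vec\pi}\sum_{\vec\sigma\neq\vec\pi}|\tilde J_{\vec\pi,\vec\sigma}|$. By the product structure and the translation-invariance of the sum in $\pi_j$,
\[
\sum_{\vec\sigma\in S_t^{\sqrt n}} \prod_{j=1}^{\sqrt n} \bigl|\braket{\psi_{\pi_j}\,|\,\psi_{\sigma_j}}\bigr|^{\sqrt n} \;=\; (1+\alpha)^{\sqrt n},\qquad \alpha \;:=\; \sum_{\tau\in S_t\setminus\{e\}} d^{\,(c(\tau)-t)\sqrt n}.
\]
The $\binom{t}{2}$ transpositions in $S_t$ each contribute $d^{-\sqrt n}$ and all other terms are at most $d^{-2\sqrt n}$, so $\alpha \leq \tfrac{t(t-1)}{2}\,d^{-\sqrt n}(1+o_n(1))$. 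Subtracting the term with $\vec\sigma=\vec\pi$ and expanding $(1+\alpha)^{\sqrt n}-1 = \sqrt n\,\alpha\,(1+O(\sqrt n\,\alpha))$ (valid since $\sqrt n\,\alpha\to 0$ in the regime of interest) shows that the Gershgorin radius is at most $\tfrac{\sqrt n\, t(t-1)}{2\,d^{\sqrt n}}$ to leading order.

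The final step applies the Gershgorin disk theorem to $\tilde J = I + E$: every eigenvalue of $\tilde J$ lies within distance $\max_{\vec\pi}\sum_{\vec\sigma\neq\vec\pi}|E_{\vec\pi,\vec\sigma}|$ of $1$, so $\lambda_{\min}(\tilde J) \geq 1 - \tfrac{\sqrt n\, t(t-1)}{2\,d^{\sqrt n}}$, giving \eqref{eq:196} for every unit vector $\ket x$. The main delicate point is keeping the bound on $\alpha$ tight at leading order: a naive application of $|\braket{\psi_\pi|\psi_\sigma}|\leq 1/d$ (ignoring that only transpositions saturate this) would inflate the final estimate by factors polynomial in $t$, which would in turn spoil the $1/d^{\Omega(\sqrt n)}$ conclusion of Lemma \ref{lem:2Dproj}.
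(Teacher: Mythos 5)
Your proof is correct in substance and lands on the same underlying fact as the paper: the only off-diagonal entries of the Gram matrix that matter are the $\binom{t}{2}$ pairs whose permutations differ by a transposition in a single row-coordinate, each contributing $d^{-\sqrt n}$. The route differs in one structural respect. You apply Gershgorin directly to the full $|S_t|^{\sqrt n}$-dimensional matrix $\tilde J$ and bound its row sums; the paper instead observes that $\tilde J$ is a principal submatrix of the tensor power $J(\sqrt n)^{\otimes \sqrt n}$ of the single-row Gram matrix, bounds $\lambda_{\min}$ of the single factor (Proposition \ref{SpectrumJ}), and then uses multiplicativity of the smallest eigenvalue under tensor products together with Bernoulli's inequality $(1-x)^{\sqrt n} \geq 1 - \sqrt n\, x$. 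The tensor-factorization buys something concrete that your route does not: Bernoulli's inequality points in the right direction to yield the stated constant exactly, whereas your Gershgorin radius is $(1+\alpha)^{\sqrt n}-1 \geq \sqrt n\,\alpha \geq \frac{\sqrt n\, t(t-1)}{2 d^{\sqrt n}}$ — i.e.\ the expansion goes the wrong way, and you only recover \eqref{eq:196} up to a $(1+o(1))$ multiplicative correction, which you partially acknowledge with the "to leading order" hedge. Since the constant enters the rest of the argument only through $c_{d,n,t} = 1+o(1)$, this is harmless downstream, but as a proof of the proposition as literally stated you should either weaken the claimed constant or insert the tensor-product factorization to make the inequality exact. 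Your other steps — the product formula for $\braket{R_{\vec\pi}|R_{\vec\sigma}}$, the identification $\braket{\psi_\pi|\psi_\sigma} = d^{-\dist(\pi,\sigma)}$, the left-invariance of the row sum, and the reduction of the restricted index set $H_{t,n}$ to the full one — are all sound, and your closing remark about needing the transposition count rather than the crude bound $|\braket{\psi_\pi|\psi_\sigma}|\leq 1/d$ is exactly the role played by the paper's function $f_t$ in Proposition \ref{ftlowerbound}.
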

For the third line we used Cauchy-Schwartz. 
\end{proof}

In order to prove Proposition \ref{Qissmall} we need the following tool. If $\vv{x(1)},\vv{x_2}, \ldots,\vv{x_K}$ are $d$-dimensional vectors the multi-product of them is defined to be:

\be
\mathsf{multiprod} (\vv{x_1},\vv{x_2}, \ldots,\vv{x_K}) := \sum^d_{i=1} x_{1 i} x_{2 i}\ldots x_{K i}.
\ee

\begin{proposition} [Majorization] \OldNormalFont{} Let $\vv{x}_1,\vv{x}_2, \ldots,\vv{x}_K$ be $d$-dimensional, non-negative and real vectors. If $\vv{x}_i^{\downarrow}$ is $\vv{x}_i$ in descending order, then:
\be
\mathsf{multiprod} \left (\vv{x}_1,\vv{x}_2, \ldots,\vv{x}_K\right)\leq \mathsf{multiprod} (\vv{x}_1^{\downarrow},\vv{x}_2^{\downarrow}, \ldots,\vv{x}_K^{\downarrow}).
\ee
\label{multiprod}
\end{proposition}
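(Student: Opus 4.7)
This is the multi-sequence rearrangement inequality, generalizing the classical Hardy-Littlewood-P\'olya two-sequence case: among all ways of independently permuting the entries of $K$ nonnegative vectors, the sum of entrywise products is maximized when all vectors are sorted in a common order. I would prove it by a two-element exchange argument that reduces to the classical $K=2$ inequality, combined with a consistency argument showing that the common sorting order can be taken descending.

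First, fix an arbitrary arrangement and pick a vector index $j$ and two positions $i \neq i'$. Define the ``compound'' factor $c_i^{(j)} := \prod_{k \neq j} x_{k,i}$. The contribution of positions $i, i'$ to $\multiprod$ is $x_{j,i} c_i^{(j)} + x_{j,i'} c_{i'}^{(j)}$, and swapping the entries at positions $i, i'$ of $\vv{x}_j$ changes the total by $(x_{j,i'} - x_{j,i})(c_i^{(j)} - c_{i'}^{(j)})$. Such a swap never decreases the multi-product whenever $\vv{x}_j$ and the compound vector $(c_i^{(j)})_i$ are oppositely ordered at $(i,i')$; this is the classical two-sequence rearrangement inequality applied to $\vv{x}_j$ and $(c_i^{(j)})_i$.

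The multi-product attains a maximum on the finite set of arrangements. At this maximum, the exchange lemma forces $\vv{x}_j$ to be co-sorted with $(c_i^{(j)})_i$ for every $j$. I would then show this forces a common sorting direction for all vectors. Suppose for contradiction that $\vv{x}_1$ and $\vv{x}_2$ disagreed on some pair $(i, i')$, say $x_{1,i} > x_{1,i'}$ while $x_{2,i} < x_{2,i'}$. The optimality conditions give $c_i^{(1)} \geq c_{i'}^{(1)}$ and $c_i^{(2)} \leq c_{i'}^{(2)}$, so the ratio $(c_i^{(1)}/c_i^{(2)}) / (c_{i'}^{(1)}/c_{i'}^{(2)}) \geq 1$. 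But the telescoping identity $c_i^{(1)}/c_i^{(2)} = x_{2,i}/x_{1,i}$ (and similarly for $i'$), together with the assumed strict inequalities, forces the same ratio to be $<1$, a contradiction. Hence every pair of vectors is co-sorted at the maximum, and by renaming the summation index (which leaves $\multiprod$ invariant) we may take the common order to be descending, yielding exactly $\multiprod(\vv{x}_1^{\downarrow}, \ldots, \vv{x}_K^{\downarrow})$.

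The main obstacle is handling vectors with zero entries, where the compound-ratio identity becomes degenerate. I would handle this by a standard continuity argument: perturb each vector to have strictly positive entries, apply the argument above, and take the limit, using that both sides of the claimed inequality are continuous in the entries.
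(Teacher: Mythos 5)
Your proof is correct, but it takes a genuinely different route from the paper's. The paper proves the $K=2$ rearrangement inequality by pairwise swaps, observes that the same swap argument bounds every truncated sum $\sum_{i\le d'} x_{1i}x_{2i}$, so that $\vv{x}_1\circ\vv{x}_2$ is weakly majorized by $\vv{x}_1^{\downarrow}\circ\vv{x}_2^{\downarrow}$, and then inducts on $K$ by absorbing two vectors into their entrywise product. You instead run a global variational argument: take a maximizing arrangement, derive the first-order exchange conditions $(x_{j,i'}-x_{j,i})(c_i^{(j)}-c_{i'}^{(j)})\le 0$, and use the ratio identity $c_i^{(1)}/c_i^{(2)}=x_{2,i}/x_{1,i}$ to force all $K$ vectors to be similarly ordered at the optimum. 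Your argument is self-contained and exhibits the extremal structure directly, but it pays two costs the paper avoids: the ratio identity requires strictly positive entries (hence your perturbation-and-limit step, which is fine since both sides are polynomial and $\vv{x}\mapsto\vv{x}^{\downarrow}$ is continuous), and the step from "no pair of vectors is strictly oppositely ordered" to "a single permutation sorts all vectors descending" deserves a sentence in the presence of ties (e.g., sort positions lexicographically by the tuple $(x_{1,i},\ldots,x_{K,i})$; pairwise similar ordering guarantees this sorts every coordinate). The paper's induction never divides and so handles zeros and ties for free, at the price of needing the slightly stronger partial-sum form of the $K=2$ case to make the inductive step go through.
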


\begin{proof}
The $K=2$ version of claim is that $\langle \vec x(1), \vec x_2\rangle \leq \langle \vec x(1)^\downarrow, \vec x_2^\downarrow\rangle$.  This is a standard fact.  To prove it, observe that WLOG we can assume $\vec x(1) = \vec x(1)^\downarrow$.  Then for any out-of-order pair $x_{2i} < x_{2j}$ with $i<j$, we will increase $\langle \vec x(1), \vec x_2\rangle$ by swapping $x_{2i}$ and $x_{2j}$.  Applying this repeatedly we end with $\langle \vec x(1)^\downarrow, \vec x_2^\downarrow\rangle$.

This same argument works if we replace the inner product with a sum over the first $d' \leq d$ terms, i.e. $\sum_{i=1}^{d'} x_{1i}x_{2i}$.  Thus the same argument shows that
\be \vv{x}_1 \circ \vv{x}_2 \preceq \vv{x}_1^\downarrow \circ \vv{x}_2^\downarrow .
\ee

The proposition now follows by induction on $K$.
\end{proof}

We also need the following upper bound: 
\begin{proposition}\OldNormalFont{}
Let $e\in S_t$ be the identity permutation. Define $f_t: \R_{>1}  \rightarrow \R_{>1}$ with the map:

\be
f_t(\alpha)=\sum_{\sigma \in S_t} \frac{1}{\alpha^{\dist(e,\sigma)}},
\label{ft}
\ee

\noindent for $\alpha > 1$. Then as long as $2t^2 \leq \alpha$

\be
f_t(\alpha) \leq 1 + \frac{2 t^2}{\alpha}.
\ee
\label{ftlowerbound}
\end{proposition}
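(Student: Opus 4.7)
The plan is to convert $f_t(\alpha)$ into an explicit product via a classical generating-function identity for the symmetric group, and then apply two elementary inequalities. The key observation is that with respect to the generating set of transpositions, the Cayley distance from the identity satisfies $\operatorname{dist}(e,\sigma) = t - c(\sigma)$, where $c(\sigma)$ denotes the number of cycles (including fixed points) in the disjoint-cycle decomposition of $\sigma$. This is standard: a single $\ell$-cycle factors into $\ell-1$ transpositions and no fewer. Using this, I would rewrite
\[
f_t(\alpha) \;=\; \alpha^{-t}\sum_{\sigma\in S_t}\alpha^{c(\sigma)}.
\]

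Next I would invoke the well-known identity $\sum_{\sigma\in S_t} x^{c(\sigma)} = x(x+1)(x+2)\cdots(x+t-1)$ (whose coefficients are the unsigned Stirling numbers of the first kind), which collapses the expression into the clean product
\[
f_t(\alpha) \;=\; \prod_{k=0}^{t-1}\!\left(1 + \frac{k}{\alpha}\right).
\]
Applying $1+y \leq e^y$ termwise then gives $f_t(\alpha) \leq \exp\!\left(\tfrac{t(t-1)}{2\alpha}\right)$.

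To finish, under the hypothesis $\alpha \geq 2t^2$ the exponent in the last display is at most $\tfrac{t(t-1)}{4t^2} \leq \tfrac{1}{4}$, so I can close with the elementary inequality $e^y \leq 1+2y$ valid for $y\in[0,1]$ (verified by noting $g(y):=1+2y-e^y$ has $g(0)=0$, $g(1)=3-e>0$, and $g''<0$). This yields
\[
f_t(\alpha) \;\leq\; 1+\frac{t(t-1)}{\alpha} \;\leq\; 1+\frac{2t^2}{\alpha},
\]
as desired. The only step that relies on nontrivial input is the cycle-counting generating function; the rest is routine calculus. I do not anticipate a real obstacle here: the hypothesis $\alpha \geq 2t^2$ is comfortably stronger than what the exponential bound needs, and the factor of $2$ on the right-hand side of the claim provides enough slack to absorb the loss incurred in linearizing $e^y$ on $[0,1]$.
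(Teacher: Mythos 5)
Your proof is correct, and every step checks out: the Cayley-distance identity $\dist(e,\sigma)=t-c(\sigma)$, the Stirling-number generating function $\sum_{\sigma\in S_t}x^{c(\sigma)}=x(x+1)\cdots(x+t-1)$ giving the exact product $f_t(\alpha)=\prod_{k=0}^{t-1}(1+k/\alpha)$, and the two elementary inequalities $1+y\leq e^y$ and $e^y\leq 1+2y$ on $[0,1]$ (the latter justified correctly by concavity and the endpoint values). The paper itself states this proposition without supplying a proof, so there is nothing to compare line by line; the implicit standard argument in this literature is cruder than yours: one bounds the number of permutations at distance $k$ from the identity by $\binom{t}{2}^k\leq t^{2k}$ and sums the resulting geometric series $\sum_{k\geq 0}(t^2/\alpha)^k\leq (1-t^2/\alpha)^{-1}\leq 1+2t^2/\alpha$ under the same hypothesis $\alpha\geq 2t^2$. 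Your route buys an exact closed form and the sharper intermediate bound $1+t(t-1)/\alpha$, at the cost of invoking the cycle-counting identity; the counting route is more self-contained but lossier. Either is acceptable here, and yours is arguably the cleaner of the two.
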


For $\sigma_1, \ldots, \sigma_M \in S_t$ define the function:
\begin{equation}
h(D,t, \sigma_1, \ldots, \sigma_M):= \sum_{\pi \in S_t} \frac{1}{D^{\dist(\pi, \sigma_1)+\ldots+\dist(\pi, \sigma_M)}}.
\end{equation}

\begin{proposition}\OldNormalFont{}
Let $(\sigma_1,\ldots, \sigma_M) \in H$ be permutations that not all of them are equal to each other then:

\be
h(D,t, \sigma_1,\ldots, \sigma_M) \leq \frac{1}{D}+\frac{1}{D^{M-1}} + \frac{2 t^2}{D^M}.
\ee
\label{maininequality}
\end{proposition}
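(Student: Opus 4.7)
The plan is to exploit the left shift-invariance $\dist(\rho\pi,\rho\sigma)=\dist(\pi,\sigma)$ of the Cayley metric to reduce to the case $\sigma_1 = e$, and then split the sum over $\pi \in S_t$ based on $\pi$'s relation to the set $\{\sigma_1,\dots,\sigma_M\}$. Let $k$ denote the multiplicity of $e$ among the $\sigma_i$'s, which satisfies $1 \leq k \leq M-1$ by the non-constancy hypothesis together with $\sigma_1 = e$. Write the remaining distinct values as $\tau_1,\dots,\tau_r$ with multiplicities $m_1,\dots,m_r \geq 1$ summing to $M-k$. Setting $A(\pi) := \sum_i \dist(\pi,\sigma_i)$, the sum $h = \sum_\pi D^{-A(\pi)}$ splits into the contributions from $\pi = e$, from $\pi \in \{\tau_1,\dots,\tau_r\}$, and from $\pi$ outside this finite set.

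The three pieces can be bounded separately. For $\pi = e$, every non-identity $\sigma_i$ adds at least $1$ to $A(e)$, so $A(e) \geq M-k$ and this term is at most $D^{-(M-k)}$. For $\pi = \tau_l$, the same reasoning gives $A(\tau_l) \geq M - m_l$, so this regime contributes at most $\sum_{l=1}^r D^{-(M-m_l)}$. For $\pi \notin \{e,\tau_1,\dots,\tau_r\}$, each $\dist(\pi,\sigma_i) \geq 1$, and isolating the $\sigma_1 = e$ factor gives $A(\pi) \geq (M-1) + \dist(\pi,e)$; summing the geometric factor in $\dist(\pi,e)$ and invoking Proposition \ref{ftlowerbound} turns this into $D^{-(M-1)}(f_t(D)-1) \leq 2t^2/D^M$.

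The crux is converting the partition-dependent sum $\sum_l D^{-(M-m_l)}$ into a single clean term. I will use the elementary merging inequality $(D^a-1)(D^b-1) \geq 1$, valid for $a,b \geq 1$ and $D \geq 2$, which rearranges to $D^{-(M-a)} + D^{-(M-b)} \leq D^{-(M-a-b)}$. This shows that merging two multiplicities only increases the sum, and iterating until a single block of size $M-k$ remains yields $\sum_l D^{-(M-m_l)} \leq D^{-k}$. Assembling the three contributions gives $h \leq D^{-k} + D^{-(M-k)} + 2t^2/D^M$, and since $k \mapsto D^{-k} + D^{-(M-k)}$ is convex on $[1,M-1]$, its maximum is attained at the endpoints $k=1$ and $k=M-1$, each giving $D^{-1} + D^{-(M-1)}$. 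This delivers the claimed bound.

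The main obstacle is the regime $\pi \in \{\tau_1,\dots,\tau_r\}$: the naive ``$r$ distinct values, each contributing at most $1/D$'' yields only $(M-1)/D$, which is far too weak to reach the target $1/D^{M-1}$. The merging inequality is precisely what trades multiplicity within the partition for the higher negative power of $D$ needed, and the final convexity step then makes the overall bound independent of $k$.
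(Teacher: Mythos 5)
Your handling of the contribution from $\pi\in\{\sigma_1,\ldots,\sigma_M\}$ is correct and is a pleasant alternative to the paper's argument: the paper bounds that piece by $\max\,(D^{\mu_1}+\cdots+D^{\mu_K})/D^M$ over admissible multiplicity vectors and then argues via convexity over $\conv(P)$ that the maximum is $(D^{M-1}+D)/D^M$, whereas your merging inequality $D^{-(M-a)}+D^{-(M-b)}\le D^{-(M-a-b)}$ followed by one-dimensional convexity in $k$ reaches the same bound $\frac1D+\frac1{D^{M-1}}$ in a more hands-on way. The reduction to $\sigma_1=e$ by left-invariance of $\dist$ is also fine.

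The gap is in your third regime, $\pi\notin\{e,\tau_1,\ldots,\tau_r\}$. There you lower-bound $A(\pi)\ge (M-1)+\dist(\pi,e)$, i.e.\ you extract genuine decay only from the single factor $i=1$ and replace the other $M-1$ distances by $1$. This yields $\sum_{\pi\notin C}D^{-A(\pi)}\le D^{-(M-1)}\left(f_t(D)-1\right)$, and to conclude that this is $\le 2t^2/D^M$ you must invoke Proposition \ref{ftlowerbound} at argument $\alpha=D$, which requires $2t^2\le D$. That hypothesis fails precisely where the proposition is used: in Proposition \ref{Qissmall} it is applied with $D=d$ (e.g.\ $d=2$) and $M=\sqrt n$, where only the much weaker condition $2t^2\le D^M$ is available. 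The failure is not merely formal bookkeeping: for $D=2$ one has $f_t(2)-1\ge (t-1)!/2^{t-1}$ from the $t$-cycles alone, so $D^{-(M-1)}(f_t(D)-1)$ can exceed $2t^2/D^M$ by a superpolynomial factor in $t$. The paper's proof avoids this by the rearrangement/majorization step (Proposition \ref{multiprod}): after sorting, each vector $\bigl(D^{-\dist(\pi,\sigma_i)}\bigr)_{\pi}$ is dominated entrywise by the sorted $\bigl(D^{-\dist(\pi,e)}\bigr)_{\pi}$, giving $\sum_{\pi\notin C}\prod_i D^{-\dist(\pi,\sigma_i)}\le \sum_{\pi\ne e}D^{-M\dist(\pi,e)}=f_t(D^M)-1\le 2t^2/D^M$, which needs only $2t^2\le D^M$. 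The essential point you are missing is that all $M$ distance factors must be made to decay with $\dist(\pi,e)$ simultaneously, and this cannot be done pointwise (a $\pi$ far from $e$ may be close to some $\sigma_i$); the rearrangement inequality is exactly the tool that permits it.
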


\begin{proof} [\OldNormalFont{} Proof of Proposition \ref{Qissmall}]
In or to prove this, we show that the sum of terms in each row is a small number. Then use Lemma \ref{sumofrows} to obtain the result. Consider the particular row $(\sigma_1, \ldots, \sigma_{sqrt{n}}) \in H$, then the sum of terms in each row is:

\be
\sum_{(\pi_1, \ldots, \pi_{\sqrt{n}})\in H} \braket{R_{\pi_1, \ldots, \pi_{\sqrt{n}}}|C_{\sigma_1, \ldots, \sigma_{\sqrt{n}}}} = \sum_{\pi_1, \ldots, \pi_{\sqrt{n}}\in S_t} \braket{R_{\pi_1, \ldots, \pi_{\sqrt{n}}}|C_{\sigma_1, \ldots, \sigma_{\sqrt{n}}}}-\sum_{\pi\in S_t} \braket{R_{\pi, \ldots, \pi}|C_{\sigma_1, \ldots, \sigma_{\sqrt{n}}}}.
\ee

The lower bound:

\be
\sum_{\pi\in S_t} \braket{R_{\pi, \ldots, \pi}|C_{\sigma_1, \ldots, \sigma_{\sqrt{n}}}}\geq 0,
\ee

\noindent is good enough. The goal is to find a good upper bound for $S:=\sum_{\pi_1, \ldots, \pi_{\sqrt{n}}\in S_t} \braket{R_{\pi_1, \ldots, \pi_{\sqrt{n}}}|C_{\sigma_1, \ldots, \sigma_{\sqrt{n}}}}$. But $S$ simplifies to:

\be
S=  \left(\sum_{\pi \in S_t} \frac{1}{d^{\dist(\pi,\sigma_1)+\ldots+\dist(\pi, \sigma_{\sqrt{n}})}}  \right)^{\sqrt{n}} = h(d,t,\sigma_1, \ldots, \sigma_{\sqrt{n}})^{\sqrt{n}}.
\ee

\noindent Now we use Proposition \ref{maininequality} and find the upper bound:

\be
S\leq   \left(\frac{1}{d}+\frac{1}{d^{\sqrt{n}-1}}+\frac{2 t^2}{d^{\sqrt{n}}}   \right)^{\sqrt{n}}.
\ee

\noindent Which is a global maximum and in turn is a bound on the $\infty$-norm.
\end{proof}

\begin{proof} [\OldNormalFont{} Proof of Proposition \ref{Gramlowerbound}]
We will prove the statement for the row space, and the same thing works for the column space. First, for any normal vector $\ket x$, $\braket{x | \tilde J_R | x} \geq \lambda_{\min} (\tilde J_R)$. 
Let $J(\sqrt n)$ be the Gram matrix for the Haar subspace on one row of the grid. The entries of $J(\sqrt n)$ are according to:
\be
J(\sqrt n)_{\pi,\sigma} :=   \left ( \frac{1}{d^{\dist(\pi,\sigma)}} \right )^{\sqrt{n}}=  \left (\frac{1}{d^{\sqrt{n}}}  \right)^{\dist(\pi,\sigma)}.
\ee
Let $P$ be the projector that projects out the subspace spanned by $\{\ket{R_{\pi, \ldots, \pi}} : \pi \in S_t\}$. Then $\tilde J = P J(\sqrt n)^{\tensor \sqrt n} P^\dagger$. We first need the following proposition
\begin{proposition} \OldNormalFont{} If $J$ is the Gram matrix of the vector space spanned by $ \{ \ket{\psi_\pi}^{\tensor m}: \pi \in S_t \}.$, then:
\be
1- \frac{t(t-1)}{2 d^m}\leq \lambda_{\min}(J)
\ee
\label{SpectrumJ}
\end{proposition}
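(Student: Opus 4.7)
The plan is to diagonalize $J$ exactly via the representation theory of $S_t$ and identify its minimum eigenvalue in closed form. The key observation is that $J_{\pi,\sigma}=d^{-m\,\dist(\pi,\sigma)}$ depends only on the cycle type of $\pi^{-1}\sigma$, so $J$ acts on the regular representation $\mathbb{C}[S_t]$ as convolution by a class function. Schur's lemma therefore diagonalizes $J$ along the isotypic decomposition, and its eigenvalues are indexed by irreducible representations $\rho_\lambda$ (equivalently, partitions $\lambda\vdash t$):
\[ \mu_\lambda \;=\; \frac{1}{\dim \rho_\lambda}\sum_{\tau\in S_t} d^{-m\,\dist(e,\tau)}\chi_{\rho_\lambda}(\tau). \]
Applying the classical principal-specialization (Frobenius/hook-content) identity $\frac{1}{\dim \rho_\lambda}\sum_\tau z^{t-\dist(e,\tau)}\chi_{\rho_\lambda}(\tau)=\prod_{(i,j)\in\lambda}(z+(j-i))$ with $z=d^m$ collapses this to the closed form $\mu_\lambda=\prod_{(i,j)\in\lambda}(1+(j-i)/d^m)$.

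Next I would show that $\min_\lambda \mu_\lambda$ is attained at the sign representation $\lambda=(1^t)$, whose cell contents are $0,-1,\ldots,-(t-1)$ and hence $\mu_{\mathrm{sgn}}=\prod_{j=0}^{t-1}(1-j/d^m)$. Listing the contents of any $\lambda\vdash t$ in decreasing order as $c_1\ge c_2\ge\cdots\ge c_t$, I claim $c_k\ge-(k-1)$ for every $1\le k\le t$. This follows by a short counting argument: either $\ell(\lambda)\ge k$, in which case the first $k$ cells of the leftmost column already have contents $0,-1,\ldots,-(k-1)$; or $\ell(\lambda)<k$, in which case every cell $(i,j)\in\lambda$ has $i\le\ell(\lambda)\le k-1$, so $j-i\ge 1-(k-1)=-(k-1)$, and there are $t\ge k$ such cells. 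The stated bound is trivially satisfied whenever $d^m\le t(t-1)/2$, so we may assume $d^m>t(t-1)/2\ge t-1$; each factor $d^m+c_k$ is then non-negative, and term-by-term multiplication yields $\mu_\lambda\ge\mu_{\mathrm{sgn}}$.

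Finally, the Weierstrass product inequality $\prod_i(1-a_i)\ge 1-\sum_i a_i$ applied to $a_j=j/d^m\in[0,1]$ gives $\mu_{\mathrm{sgn}}\ge 1-\sum_{j=0}^{t-1}j/d^m=1-t(t-1)/(2d^m)$, which completes the proof. The main subtlety of the plan is the identification that sign realises the minimum: a direct Gershgorin bound on the off-diagonal row sum $\prod_{j=0}^{t-1}(1+j/d^m)-1$ exceeds $t(t-1)/(2d^m)$ by an $O(1/d^{2m})$ correction of the wrong sign and falls just short of the claimed bound, so the representation-theoretic identification (or an equivalent sharper argument) is essential.
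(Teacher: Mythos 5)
Your proof is correct, and it is worth noting that the paper itself states Proposition \ref{SpectrumJ} without supplying any argument (the surrounding machinery it does develop — the row-sum bound of Lemma \ref{sumofrows} together with Proposition \ref{ftlowerbound} — would only yield $\lambda_{\min}(J) \geq 1 - 2t^2/d^m$, or at best $1 - \tfrac{t(t-1)}{2d^m} - O(d^{-2m})$ after exactly counting the $\binom{t}{2}$ permutations at Cayley distance one, which as you observe falls on the wrong side of the claimed constant). Your exact diagonalization is the standard and, as far as I can see, necessary route to the sharp bound: since $J_{\pi,\sigma}=d^{-m(t-c(\pi^{-1}\sigma))}$ is convolution by a central function, Schur's lemma gives eigenvalues $\mu_\lambda=\prod_{(i,j)\in\lambda}\bigl(1+(j-i)/d^m\bigr)$ via the content-polynomial identity, and your majorization argument on sorted contents ($c_k\ge -(k-1)$, with the harmless arithmetic slip $1-(k-1)=-(k-2)$, not $-(k-1)$) correctly identifies the sign representation as the minimizer once $d^m>t-1$, after which Weierstrass gives $\prod_{j=0}^{t-1}(1-j/d^m)\ge 1-\tfrac{t(t-1)}{2d^m}$. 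The only thing I would add is that the downstream uses of this proposition (the definition of $c_{d,n,t}$ and the analogous bounds in Lemma \ref{lem:generalDproj}) only need $\lambda_{\min}(J)\ge 1-\poly(t)/d^m$, so the weaker Gershgorin bound would have sufficed for the paper's theorems; but as a proof of the proposition as literally stated, yours is complete where the paper's is absent.
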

Using this proposition $\lambda_{\min} (J(\sqrt n)) \geq 1- \frac{t(t-1)}{2 d^{\sqrt n}}$, and therefore $\lambda_{\min} (J(\sqrt n)^{\tensor \sqrt n}) \geq (1- \frac{t(t-1)}{2 d^{\sqrt n}})^{\sqrt n}  \geq  1- \frac{\sqrt n t(t-1)}{2 d^{\sqrt n}}$. This implies that $J(\sqrt n)^{\tensor \sqrt n} \succeq I (1-\frac{\sqrt n t(t-1)}{2 d^{\sqrt n}})$, and therefore $\tilde J \succeq P P^\dagger (1-\frac{\sqrt n t(t-1)}{2 d^{\sqrt n}})$. This means that restricted to $\tilde V_R$ the minimum eigenvalue of $\tilde J_R$ is at least $(1-\frac{\sqrt n t(t-1)}{2 d^{\sqrt n}})$.

\end{proof}

\begin{proof}[\OldNormalFont{} Proof of Proposition \ref{maininequality}]
Let $C=   \{ \sigma_1,\ldots, \sigma_M  \}$. Then $h=h_1+h_2$, where:

\be
h_1 =  \sum_{\pi \in C} \frac{1}{D^{\dist(\pi, \sigma_1)+\ldots+\dist(\pi, \sigma_M)}},
\ee

\noindent and,

\be
h_2 =  \sum_{\pi \in S_t/C} \frac{1}{D^{\dist(\pi, \sigma_1)+\ldots+\dist(\pi, \sigma_M)}}.
\ee
We then find useful upper bounds for $h_1$ and $h_2$ separately. Suppose that $C$ has distinct elements $\{\tau_1,\ldots,\tau_K\}$ with $\tau_1$ appearing $\mu_1$ times, $\tau_2$ appearing $\mu_2$ times, etc.
Define 
\ba
 S & = \left \{ (\mu_1,\ldots, \mu_K) \in \bbZ_{\geq 0}^K: \mu_1 + \ldots + \mu_K =M, \max(i) \mu_i < M\right\} \\
P &= \Big\{(\mu_1, \ldots, \mu_K) \in S: \exists i,j, \hspace{1mm} \mu_i = M-1 \hspace{1mm} \& \hspace{1mm} \mu_j = 1\Big \}
\ea
Now we can bound $h_1$ by
\bea
h_1&=& \sum_{\pi \in C} \frac{1}{D^{\mu_1 \dist(\pi, \tau_1)+\ldots+\mu_K\dist(\pi, \tau_K)}}\nonumber \\
&\leq& \max_{(\mu_1,\ldots, \mu_K) \in S}\frac{D^{\mu_1}+ \ldots +D^{\mu_K}}{D^M}\nonumber \\
&\leq& \max_{(\mu_1,\ldots, \mu_K) \in \text{conv} (P)}\frac{D^{\mu_1}+ \ldots +D^{\mu_K}}{D^M} \label{eq:S-in-conv-P}\nonumber \\
&\leq& \frac{D^{M-1}+D}{D^M} \label{eq:max-extreme-pt}\nonumber \\
&=& \frac{1}{D}+\frac{1}{D^{M-1}}.
\eea

Here $\conv$ denotes the convex hull and \eq{S-in-conv-P} uses the fact that $K\geq 2$ since $\sigma_1,\ldots,\sigma_M$ are assumes to be not all equal.
To justify \eq{max-extreme-pt}, observe that $f (\mu) = D^{\mu_1} + \ldots + D^{\mu_K}$ is a convex function and the maximization is over a convex set whose extreme points are $P$. Therefore the maximum is achieved at a point in $P$. 

In order to find a bound on $h_2$, for each $\sigma\in C$ we will define the following vector $\vec X_\sigma$ whose entries are labeled by $\pi\in S_t$.
\be \vec X_{\sigma,\pi} = \begin{cases}
0 & \text{ if }\pi \in C \\
D^{-\dist(\sigma,\pi)}& \text{ if }\pi \not\in C
\end{cases}
\ee
Then $h_2 = \multiprod(\vec X_{\sigma_1},\ldots,\vec X_{\sigma_M})$.
We can use Proposition \ref{multiprod} to show that
\begin{equation}
h_2= \multiprod (\vv{X}_{\sigma_1},\ldots,\vv{X}_{\sigma_M} )
\leq\multiprod (\vv{X}^{\downarrow}_{\sigma_1},\ldots, \vv{X}^{\downarrow}_{\sigma_M} ).
\end{equation}

We will also define $\vec X_e$ (where $e$ denotes the identity element of $S_t$) by
\be \vec X_{e,\pi} = D^{-\dist(e,\pi)}.\ee
Observe that $\vec X_\sigma$ can be obtained from $\vec X_e$ by zeroing out the elements in locations corresponding to $C$ and reordering the remaining elements.  Thus
for each $\sigma\in C$
\be \vec X_{\sigma}^{\downarrow} \preceq \vec X_e^{\downarrow}.\ee
We use Proposition \ref{multiprod} again to bound
\be h_2 \leq \multiprod(\underbrace{\vec X_e,\ldots \vec X_e}_{M \text{ times}}) = 
f_t(D^{-M})- 1
\leq \frac{2 t^2}{D^M}
.\ee
\end{proof}

\subsubsection{Extended quasi-orthogonality of permutation operators with application to random circuits on $D$-dimensional lattices} 

In this section we prove lemmas \ref{lem:generalDproj}, \ref{lem:basicDmonomial}, \ref{lem:diamond2} and \ref{lem:diamondD}. Before getting to the proof we go over some notation and definitions.

Let $\text{Rows}(D,n) := \{r_1, \ldots, r_{n^{1-1/D}}\}$ be the set of
rows in the $D$-th direction and let $V_{\text{Rows}(D,n)}$ be the subspace $G_{\text{Rows}(D,n)}$ projects onto. 
Then $V_{\text{Rows}(D,n)} = V_{\Haar(r_1)} \tensor\ldots \tensor
V_{\Haar(r_{n^{1-1/D}})}$.  A spanning set for $V_{\text{Rows}(D,n)}$
is $H_{\text{Rows}(D,n)} := \{\ket{D_{\sigma_1, \ldots,\sigma_{n^{1-1/D}}} } : \sigma_1, \ldots,\sigma_{n^{1-1/D}} \in S_t\}$. Here $V_{\Haar(S)}$ is the Haar subspace (like $V_\Haar$) on a subset of qudits $S$. $\ket{D_{\sigma_1, \ldots,\sigma_{n^{1/D}}} }$ is the basis state representing maximally entangled states for each qudit such that the qudits in the first row are permuted by $\pi_1$, the qudits in the second row are permuted by $\pi_2$, and so on. In other words:
\be
\ket{D_{\sigma_1,\ldots, \sigma_{n^{1-1/D}}}} =  \bigotimes_{r_i \in \text{Rows}(D,n)}  \bigotimes_ {v \in r_i} \ket{\psi_{\sigma_i}}_{v}.
\ee

We view the $D$ dimensional lattice as $n^{1/D}$ $D-1$-dimensional
sub-lattices, each composed of $n^{1-1/D}$ qudits. More concretely,
the full lattice is the set $A = [n^{1/D}]^D$. For $1 \leq \beta \leq
n^{1/D}$, denote $p_\beta =\{(x(1), \ldots, x_{\text{Rows}(D,n)}) \in
A : x_{\text{Rows}(D,n)} = \beta\}$. We denote the set of these
lattices by $\text{Planes}(D) := \{p_1, \ldots,
p_{n^{1-1/D}}\}$. (This terminology is chosen to match the $D=3$ case
but the arguments here apply to any $D>2$.) These lattices are connected to each other by the rows in $\text{Rows}(D,n)$. $V_{\text{Planes}(D)} = V_{\Haar(p_1)} \tensor\ldots \tensor V_{\Haar(p_{n^{1-1/D}})}$ is the span of $H_{\text{Planes}(D)} := \{\ket{F_{\pi_1, \ldots,\pi_{n^{1-1/D}}} } : \pi_1, \ldots,\pi_{n^{1-1/D}} \in S_t\}$.  Here $\ket{F_{\pi_1, \ldots,\pi_{n^{1-1/D}}} }$ is the basis state of maximally entangled states for each qudit, such that the qudits in $p_1$ are permuted by $\pi_1$, qudits in $p_2$ are permuted by $\pi_2$ and so on. In other words:
 \be
\ket{F_{\pi_1,\ldots, \pi_{{n^{1/D}}}}} =  \bigotimes_{p_i \in \text{Planes}(D)}  \bigotimes_ {v \in p_i} \ket{\psi_{\pi_i}}_{v}.
\ee
Then $G_{\text{Planes}(D)}$ is the projector onto $V_{\text{Planes}(D)}$. 

Let $\tilde{V}_{\text{Planes}(D)} := V_{\text{Planes}(D)}\cap V_\Haar^\perp$ and $\tilde{V}_{\text{Rows}(D,n)} =: V_{\text{Rows}(D,n)}\cap V_\Haar^\perp$ be respectively the orthogonal complements of $V_{\text{Planes}(D)}$ and $V_{\text{Rows}(D,n)}$ with respect to $V_\Haar$. Also define $\tilde H_{\text{Rows}(D,n)}$ and $\tilde H_{\text{Planes}(D)}$ the same as $H_{\text{Rows}(D,n)}$ and $H_{\text{Planes}(D)}$, excluding basis marked with permutations that are all equal to each other. For example, $F_{\pi,\ldots, \pi} \notin \tilde H_{\text{Planes}(D)}$. Define the overlap matrix $[Q]_{gh} := \braket{g|h}$, for $g \in H_{\text{Planes}(D)}$ and $h \in H_{\text{Rows}(D,n)}$. Let $\tilde{J}_{\text{Planes}(D)}$ and $\tilde{J}_{\text{Rows}(D,n)}$ be the Gram matrices corresponding to $\tilde{H}_{\text{Planes}(D)}$ and $\tilde{H}_{\text{Rows}(D,n)}$, respectively. In other words, $[\tilde J_{D}]_{g,h} = \braket{g|h}$ for $g,h \in \tilde H_{\text{Rows}(D,n)}$ and $[\tilde {J_{\text{Planes}(D)}}]_{g,h} = \braket{g|h}$ for $g,h \in \tilde H_{\text{Planes}(D)}$.

We first prove Lemma \ref{lem:generalDproj}, which basically states that the composition of $G_{\text{Rows}(D,n)}$ and $F_{\text{Rows}(D,n)}$ is very close to $G^{(t)}_\Haar$, or equivalently, $\tilde V_{\text{Rows}(D,n)}$ and $\tilde V_{\text{Planes}(D)}$ are almost orthogonal:
\restatelemma{lem:generalDproj}

\begin{proof}
The proof is very similar to the proof of Lemma \ref{lem:2Dproj}. In
particular, we need generalized versions of propositions
\ref{prop:angle}, \ref{gapandQ} and \ref{Qissmall}. The generalization
of proposition \ref{prop:angle} states that
$\cos^2(\measuredangle
(\tilde{V}_{\text{Planes}(D)},\tilde{V}_{\text{Rows}(D,n)}))$ equals
the largest singular value of $F_{D} G_{\text{Rows}(D,n)}  - G_{\Haar}$.
Proposition \ref{gapandQ} generalizes to the statement that the cosine of this angle is equal to
\be
\frac{1}{\sqrt{ \lambda_{\min}(\tilde{J}_{\text{Planes}(D)})\lambda_{\min}( \tilde{J}_{\text{Rows}(D,n)})}} \|Q\|_\infty \leq c_{D,d,n,t}  \|Q\|_\infty.
\ee
Where $1/c_{D,d,n,t}$ is a lower bound on $\sqrt{ \lambda_{\min}(\tilde{J}_{\text{Planes}(D)})\lambda_{\min}( \tilde{J}_{\text{Rows}(D,n)})}$.

We first bound $\|Q\|_\infty$. Using Lemma \ref{sumofrows} 
\be
\|Q\|_\infty \leq \sqrt{\max_h  \sum_g Q_{gh} \max_g  \sum_h Q_{gh}}=: \omega.
\ee

Similar to the calculations in Section \ref{overlap}
\be
Q_{F_{\pi_1, \ldots,\pi_{n^{1-1/D}};D_{\sigma_1, \ldots,\sigma_{n^{1/D}}}}} = \frac{1}{d^{\sum_{i=1}^{n^{1-1/D}} \sum_{j=1}^{n^{1/D}} \dist (\pi_i, \sigma_j)}}.
\ee

Let $\alpha$ ($\beta$) be respectively the set of permutations
$\sigma_1, \ldots, \sigma_{n^{1/D}}$ ($\pi_1,\ldots, \pi_{n^{1-1/D}}$)
that are not all equal. We compute
\bea
\max_{\pi_1, \ldots, \pi_{n^{1-1/D}} \in \beta} \sum_{\sigma_1, \ldots , \sigma_{n^{1/D}}} \frac{1}{d^{\sum_{i=1}^{n^{1-1/D}} \sum_{j=1}^{n^{1/D}} \dist (\pi_i, \sigma_j)}} &=& \max_{\pi_1, \ldots, \pi_{n^{1-1/D}} \in \beta} (\sum_{\sigma} \frac{1}{d^{\sum_{i=1}^{n^{1-1/D}} \dist (\pi_i, \sigma)}})^{n^{1/D}}\\
&=& \max_{\pi_1, \ldots, \pi_{n^{1-1/D}} \in \beta} h(d^{{n^{1/D}}},t, \pi_1, \ldots, \pi_{n^{1-1/D}})\\
&\leq&  \left (\frac{1}{d}+\frac{1}{d^{n^{1-1/D}-1}}+\frac{2 t^2}{d^{n^{1-1/D}}}  \right )^{n^{1/D}}\\
 &=& \frac{1}{d^{\Omega(n^{1-1/D})}}.
\eea
and
\bea
 \max_{\sigma_1, \ldots, \pi_{n^{1/D}} \in \alpha} \sum_{\pi_1, \ldots , \pi_{n^{1-1/D}}} \frac{1}{d^{\sum_{i=1}^{n^{1-1/D}} \sum_{j=1}^{n^{1/D}} \dist (\pi_i, \sigma_j)}} &=&\max_{\sigma_1, \ldots, \sigma_{n^{1/D}} \in \alpha} (\sum_{\pi} \frac{1}{d^{\sum_{j=1}^{n^{1/D}} \dist (\pi, \sigma_j)}})^{n^{1/D}}\\
&=& \max_{\sigma_1, \ldots, \sigma_{n^{1/D}} \in \alpha} h(d^{n^{1-1/D}},t, \sigma_1, \ldots, \sigma_{n^{1/D}})\\
&\leq&  (\frac{1}{d}+\frac{1}{d^{{n^{1/D}}-1}}+\frac{2 t^2}{d^{{n^{1/D}}}}   )^{n^{1-1/D}}\\
 &=& \frac{1}{d^{\Omega(n^{1/D})}}.
\eea
Hence
\be
\omega = \frac{1}{d^{\Omega(n^{1-1/D})}}.
\ee

Next, we have to show that $c_{D,d,n,t}$ is not too large. Using exactly the same steps in the proof of Proposition \ref{Gramlowerbound} we can show that
\be
\lambda_{\min}(\tilde{J}_{\text{Planes}(D)}) \geq 1-\frac{n^{1/D} t(t-1)}{2 d^{n^{1-1/D}}},
\ee
and
\be
\lambda_{\min}(\tilde{J}_{\text{Rows}(D,n)}) \geq 1-\frac{n^{1-1/D} t(t-1)}{2 d^{n^{1/D}}}.
\ee
\end{proof}

Next, we use this result to prove Lemma \ref{lem:basicDmonomial}. Recall the expression $\tilde{G}_{n,D,c} $ from Definition \ref{def:recursive}
\be
\tilde{G}_{n,D,c} =  (\tilde{G}^{\tensor n^{1/D}}_{n^{1-1/D}, D-1 , c} G_{\text{Rows}(D,n)} \tilde{G}^{\tensor n^{1/D}}_{n^{1-1/D}, D-1 , c} )^c,
\ee
where $c$ is a constant depending on $D$ and $t$, but independent of $n$. Note that $\tilde{G}_{n,D,c} = \tilde{G}^\dagger_{n,D,c}$ if $\tilde{G}_{n^{1-1/D},D-1,c} = \tilde{G}^\dagger_{n^{1-1/D},D-1,c}$. Also let ${\hat {G}}_{n,D,c}:= G_{\text{Rows}(D,n)} \tilde{G}_{n^{1-1/D},D-1,c}^{\tensor n^{1/D}} G_{\text{Rows}(D,n)}$. Hence 
$\tilde{G}_{n,D,c} = \tilde{G}_{n^{1-1/D},D-1,c}^{\tensor n^{1/D}} ({\hat {G}}_{n,D,c})^{c-1}  \tilde{G}_{n^{1-1/D},D-1,c}^{\tensor n^{1/D}}$.

\restatelemma{lem:basicDmonomial}


\begin{proof}
The proof is by induction. The base case $D = 2$ is by Lemma \ref{lem:2Dproj}. We assume that for any large enough $m$, $\|{\hat {G}}_{m,D-1,c} - G_\Haar \|_\infty \leq \frac{1}{d^{O(m^{1/{D-1}})}}$, and we show that $\|{\hat {G}}_{n,D,c} - G_\Haar \|_\infty \leq \frac{1}{d^{\Omega(n^{1/D})}}$.
\bea
\left \|{\hat {G}}_{n,D,c} - G_\Haar \right\|_\infty &\leq& \left\|G_{\text{Rows}(D,n)} \tilde{G}_{n^{1-1/D},D-1,c}^{\tensor n^{1/D}} G_{\text{Rows}(D,n)} - G_\Haar\right\|_\infty\\
&\leq& \left\| \tilde{G}_{n^{1-1/D},D-1,c}^{\tensor n^{1/D}} G_{\text{Rows}(D,n)} - G_\Haar\right\|_\infty\\ 
&=& \Big\|(\tilde{G}_{n^{1-1/D},D-1,c}^{\tensor n^{1/D}} -G_{\text{Planes}(D)}) G_{\text{Rows}(D,n)}\\
 &&+ G_{\text{Planes}(D)} G_{\text{Rows}(D,n)}- G_\Haar\Big\|_\infty\\ 
&\leq& \Big\|(\tilde{G}_{n^{1-1/D},D-1,c}^{\tensor n^{1/D}} -G_{\text{Planes}(D)}) G_{\text{Rows}(D,n)}\Big\|_\infty\\
 &&+ \Big\|G_{\text{Planes}(D)} G_{\text{Rows}(D,n)}- G_\Haar\Big\|_\infty\\ 
&\leq& \Big\|\tilde{G}_{n^{1-1/D},D-1,c}^{\tensor n^{1/D}}  -G_{\text{Planes}(D)}\Big\|_\infty+ \Big\|G_{\text{Planes}(D)} G_{\text{Rows}(D,n)}- G_\Haar\Big\|_\infty\\ 
&\leq& n^{1/D} \Big\|\tilde{G}_{n^{1-1/D},D-1,c} - G_{\Haar(p_1)} \Big\|_\infty + \Big\|G_{\text{Planes}(D)} G_{\text{Rows}(D,n)}- G_\Haar\Big\|_\infty\\
&\leq& n^{1/D}  \frac{1}{d^{O( n^{(1-1/D)\cdot 1/{(D-1)}})}} + 1/d^{\Omega(n^{1-1/D})}\\
&\leq& \frac{n^{1/D}}{d^{\Omega(n^{1/D})}} + 1/d^{\Omega(n^{1-1/D})}\\
&\leq& \frac{1}{d^{\Omega(n^{1/D})}}.
\eea
\end{proof}

\restatelemma{lem:basicDmonomial}

\begin{proof}
The proof is by induction. Our induction hypothesis is $\max_x |\braket{x | (\tilde{G}_{n ,D,c}  -G_\Haar)^2 | x}| \leq \frac{\eps}{d^{nt}}$. First, we show this bound (for sub-lattices of dimension $D-1$) implies the statement of this theorem:
\bea
|\braket{x | \tilde{G}_{n ,D,c}  -G_\Haar | y}| &=& |\braket{x | \tilde{G}_{n^{1-1/D},D-1,c}^{\tensor n^{1/D}} ({\tilde{G'}}_{n,D,c}^{c-1} - G_\Haar)  \tilde{G}_{n^{1-1/D},D-1,c}^{\tensor n^{1/D}} | y}|\nonumber \\
&\leq& \Big\|{{\hat {G}}}_{n,D,c} - G_\Haar \Big\|^{c-1}_\infty \Big\|   \tilde{G}_{n^{1-1/D},D-1,c}^{\tensor n^{1/D}} \ket{y}\bra{x} \tilde{G}_{n^{1-1/D},D-1,c}^{\tensor n^{1/D}}\Big\|_1\nonumber \\
&\leq&  \Big\|{{\hat {G}}}_{n,D,c} - G_\Haar \Big\|^{c-1}_\infty \nonumber \\
&&\times\max_{x,y \in [d]^{2 t n^{1-1/D}}}\Big\|   \tilde{G}_{n^{1-1/D},D-1,c} \ket{y}\bra{x} \tilde{G}_{n^{1-1/D},D-1,c}\Big\|^{ n^{1/D}}_1\nonumber \\
&\leq& \frac{1}{d^{O(c \cdot n^{1/D})}} \max_{x,y \in [d]^{2 t n^{1-1/D}}}\Big\| \tilde{G}_{n^{1-1/D},D-1,c}\ket{y}\bra{x}  \tilde{G}_{n^{1-1/D},D-1,c}\Big\|^{ n^{1/D}}_1\nonumber \\
&\leq& \frac{1}{d^{O(c \cdot n^{1/D})}} \max_{x \in [d]^{2 t n^{1-1/D}}} |\bra{x}  \tilde{G}^2_{n^{1-1/D},D-1,c}\ket x|^{ n^{1/D}}\nonumber \\
&\leq& \frac{1}{d^{O(c \cdot n^{1/D})}} \nonumber \\
&&\times\max_{x \in  [d]^{2 t n^{1-1/D}}} (\braket{x|G_\Haar|x}  + |\bra{x} ( \tilde{G}_{n^{1-1/D},D-1,c}-G_H)^2\ket x|)^{ n^{1/D}}\nonumber \\
&\leq& \frac{1}{d^{O(c \cdot n^{1/D})}}   \left (\max_{x \in  [d]^{2 t n^{1-1/D}}}\frac{t! + 1/d^{n^{(1-1/D) \cdot \frac{1}{D-1}}}}{d^{n^{1-1/D}t}} \right)^{ n^{1/D}}\nonumber \\
&\leq& \frac{\eps}{d^{nt}}.
\label{eq:nd}
\eea

Next, assumming $\max_x |\braket{x | (\tilde{G}_{n^{1-1/D},D-1,c} -G_\Haar)^2 | x}| \leq \frac{\eps}{d^{n^{1-1/D} t}}$, we show $\max_x |\braket{x | (\tilde{G}_{n,D,c} -G_\Haar)^2 | x}| \leq \frac{\eps}{d^{nt}}$. The proof is very similar to the above calculation:
\bea
|\braket{x | (\tilde{G}_{n ,D,c} -G_\Haar)^2 | y}| &=& \bra x | \tilde{G}_{n^{1-1/D},D-1,c}^{\tensor n^{1/D}} ({{\hat {G}}}_{n,D,c}^{c-1} - G_\Haar) \tilde{G}_{n^{1-1/D},D-1,c}^{\tensor n^{1/D}} \nonumber \\
&&\times ({{\hat {G}}}_{n,D,c}^{c-1} - G_\Haar)  \tilde{G}_{n^{1-1/D},D-1,c}^{\tensor n^{1/D}} | \ket y\nonumber \\
&\leq& \Big\|({{\hat {G}}}_{n,D,c}^{c-1} - G_\Haar)  \tilde{G}_{n^{1-1/D},D-1,c}^{\tensor n^{1/D}} ({{\hat {G}}}_{n,D,c}^{c-1} - G_\Haar) \Big\|_\infty\nonumber \\
&&\times \Big\|\tilde{G}_{n^{1-1/D},D-1,c}^{\tensor n^{1/D}} \ket{y}\bra{x} \tilde{G}_{n^{1-1/D},D-1,c}^{\tensor n^{1/D}}\Big\|_1\nonumber \\
&\leq& \Big\|{{\hat {G}}}_{n,D,c}^{c-1} - G_\Haar \Big\|_\infty \nonumber \\
&&\times\max_{x,y \in [d]^{2 t n^{1-1/D}}}\Big\|   \tilde{G}_{n^{1-1/D},D-1,c} \ket{y}\bra{x} \tilde{G}_{n^{1-1/D},D-1,c}\Big\|^{ n^{1/D}}_1\nonumber \\
&\leq& \frac{\eps}{d^{nt}}.
\eea

In the third line we have used Lemma \ref{lem:infinitypalace}. We skip the calculations after the third line because it is similar to the calculations of \eqref{eq:nd}.

\end{proof}

Next, we prove Lemma \ref{lem:diamondD}. Lemma \ref{lem:diamond2} is a special case of this lemma and we skip its proof. 

\restatelemma{lem:diamondD}

\begin{proof}
As discussed in Section \ref{sec:elementarytools}, the superoperator $\Channel[G^{(t)}_\Haar]$ can be written in the following canonical form
\be
\Channel[G^{(t)}_\Haar] [X] = \sum_{\pi \in S_t} \Tr (V(\pi)X) \Wg(\pi).
\ee
Using the notation defined in Section \ref{sec:elementarytools}, $\cX [\Channel[G^{(t)}_\Haar]] = G^{(t)}_\Haar$ and
\be
\left(G_{\text{Rows}(D,n)} G_{\text{Planes}(D)} G_{\text{Rows}(D,n)}\right)^c - G^{(t)}_\Haar =: \sum_{a, b \in S_t^{\times n^{1-1/D}}} \ket {D_a} \Lambda_{a,b} \bra {D_b}.
\ee
Using the definition of $\Lambda$ we can write 
\bea
\Channel\left[(G_{\text{Rows}(D,n)}G_{\text{Planes}(D)}G_{\text{Rows}(D,n)})^c - G^{(t)}_\Haar\right] &=&\Channel\left[\sum_{a, b \in S_t^{\times n^{1-1/D}}} \ket {D_a} \Lambda_{a,b} \bra {D_b}\right]\nonumber \\
&=& \sum_{a, b\in S_t^{\times n^{1-1/D}}}  \frac{1}{d^{nt}}V(a) \Lambda_{a,b}  V^\ast(b).
\eea
Therefore 
\bea
\|\Channel[(G_{\text{Rows}(D,n)}G_{\text{Planes}(D)}G_{\text{Rows}(D,n)})^c - G^{(t)}_\Haar]\|_\diamond &\leq& \sum_{a, b\in S_t^{\times n^{1-1/D}}} |\Lambda_{a,b}|   \|\frac{1}{d^{nt}}V(a) V^\ast(b)\|_\diamond\nonumber \\
 &\leq&  \sum_{a, b\in S_t^{\times n^{1-1/D}}} |\Lambda_{a,b}| \leq t^{O(n^{1-1/D})} \|\Lambda\|^c_\infty.
\label{eq:272}
\eea
Here we have used $\|\frac{1}{d^{nt}}V(a) V^\ast(b)\|_\diamond \leq 1$. This is because $V(a) V^\ast(b)$ is a tensor product of $n^{1-1/D}$ superoperators, i.e., $ \otimes_i V(a_i) V^\ast(b_i)$, and hence $\|V(a) V^\ast(b)\|_\diamond = \prod_i \|V(a_i) V^\ast(b_i)\|_\diamond$. It is enough to show that each of $\|V(a_i) V^\ast(b_i)\|_\diamond$ is bounded by $1$.
\bea
\frac{1}{d^{nt}}\left\| V(a_1)V(b_1)^\ast\right\|_\diamond &=& \frac{1}{d^{nt}} \sup_{X: \left\|X\right\|_1 = 1} \left\|\Tr_A (V(a_1)_A \tensor \id_B X_{AB}) \tensor V_A(b_1)\right\|_1\nonumber \\
&=&  \sup_{X : \|X\|_1 = 1} \Big\|\Tr_A (V(a_1)_A \tensor \id_B X_{AB})\Big\|_1\cdot \frac{1}{d^{nt}} \Big\|V_A(b_1)\Big\|_1\nonumber \\
&\leq&  \sup_{X: \|X\|_1 = 1} \Big\|V_{(a_1)} \tensor \id X_{AB}\Big\|_1 \cdot 1\nonumber \\
&=&  \sup_{X: \|X\|_1 = 1} \|X_{AB}\|_1\nonumber \\
&\leq& 1.
\eea

It is enough to compute $\|\Lambda\|_\infty$. Let $\ket a$ be an orthonormal basis labeled according to the indices of $\Lambda$. Define
\be
T := \sum_{a,b} \sqrt \Lambda_{a b} \ket {D_a} \bra b.
\ee
$T T^\dagger = \sum_{a, b} \ket {D_a} \Lambda_{a,b} \bra {D_b}$ and $T^\dagger T = \sum_{a, b} \ket a (\sqrt \Lambda J \sqrt \Lambda)_{ab} \bra b$, where $[J]_{a,b} = \braket{D_a | D_b}$. First of all, using Lemma \ref{lem:spectrum}  $T T^\dagger$ and $T^\dagger T$ have the same spectra. Hence
\be
\|\sum_{a, b} \ket {D_a} \Lambda_{a,b} \bra {D_b}\|_\infty = \|T T^\dagger\|_\infty = \|T^\dagger T\|_\infty = \|\sqrt \Lambda J \sqrt \Lambda\|_\infty
\ee
Therefore
\bea
\|\Lambda\|_\infty &\leq& \Big\|\sum_{a, b} \ket {D_a} \Lambda_{a,b} \bra {D_b}\Big\|_\infty +  \|\sqrt \Lambda (J-\id) \sqrt \Lambda\|_\infty\nonumber \\
 &\leq&  \Big\|\left(G_{\text{Rows}(D,n)} G_{\text{Planes}(D)} G_{\text{Rows}(D,n)}\right)^c - G^{(t)}_\Haar\Big\|_\infty + \|\sqrt \Lambda\|_\infty  \|J-\id\|_\infty \|\sqrt \Lambda\|_\infty\nonumber \\
 &=&  \Big\|\left(G_{\text{Rows}(D,n)} G_{\text{Planes}(D)} G_{\text{Rows}(D,n)}\right)^c - G^{(t)}_\Haar\Big\|_\infty + \| \Lambda\|_\infty  \|J-\id\|_\infty
\eea
As a result 
\be
\|\Lambda\|_\infty \leq \frac{ \left\|(G_{\text{Rows}(D,n)} G_{\text{Planes}(D)} G_{\text{Rows}(D,n)})^c - G^{(t)}_\Haar\right\|_\infty}{1 -\|J-\id\|_\infty}.
\ee
In Lemma \ref{lem:generalDproj} we showed that $\|(G_{\text{Rows}(D,n)} G_{\text{Columns}(D)} G_{\text{Rows}(D,n)})^c - G^{(t)}_\Haar\|_\infty \leq \|G_{\text{Columns}(D)} G_{\text{Rows}(D,n)} - G^{(t)}_\Haar\|^c_\infty = 1/d^{O(c n^{1-1/D})}$. It is enough to show that $\| J-\id\|_\infty$ is small. But $J$ is tensor product of $n^{1-1/D}$ Gram matrices $J_1$ such that $\|J_1 - \id\|_\infty = \frac{O(t^2)}{d^{n^{1/D}}}$ (see Lemma \ref{SpectrumJ}), hence $\| J-\id\|_\infty = n^{1-1/D} \frac{O(t^2)}{d^{n^{1/D}}}$ which is bounded by $1/2$ for large enough $n$ and constant $t$ and $D$. As a result, $\|\Lambda\|_\infty = 1/d^{O(c n^{1-1/D})}$. Combining this with \eqref{eq:272} we find that
\be
\left \|\Channel\left [(G_{\text{Rows}(D,n)} G_{\text{Planes}(D)}G_{\text{Rows}(D,n)})^c - G^{(t)}_\Haar\right]\right\|_\diamond \leq t^{O(t n^{1-1/D})}1/d^{O(c n^{1-1/D})}.
\ee

\end{proof}

\newpage
\section{$O(n \ln^2 n)$-size random circuits with long-range gates output anti-concentrated distributions}
\label{sec:all-to-all-anti-conc}

Recall that for a circuit $C$, $\Coll(C)$ is the collision probability, 
\be\sum_{x \in \{0,1\}^n} |\braket{x |C|0}|^4,\label{eq:collision}\ee
 of $C$ in the computational basis. Also recall that $\mu^{(\text{CG})}_{t}$ is the distribution over random circuits obtained from application of $t$ random long-range gates. Unlike the previous section where we used $t$ to denote the degree of a monomial, here we use $t$ for time, i.e. the number of time-steps in a random circuit.

The goal of this section is to prove the following theorem:
\restatetheorem{thm:cg} 
\label{objective}
Our strategy is to relate the convergence of the expected collision probability to a classical Markov chain mixing problem. In Section \ref{section:cgbackground} we go over the notation and definitions we use in the proof of this theorem. In Section \ref{section:cgproof} we prove the theorem. This proof is based on several lemmas which we will prove in sections \ref{section:relatingtoMarkov} and \ref{chain}. 

\subsection{Background: random circuits with long-range gates and Markov chains}
\label{section:cgbackground}

Previous work \cite{ODP06, HL09, BF13, BF13-2} demonstrates that if we
only care about the second moment of $\mu^{(\text{CG})}_t$, then the
corresponding moment superoperator is related to a certain classical
Markov chain. In particular the application of the moment
superoperator on the basis $\text P^2_n := \left \{\sigma_p \tensor
  \sigma_p : p \in \{0,1,2,3\}^n\right \}$ is a classical Markov
chain.
 We now describe this connection.

We first start with some basic properties of moment superoperators.

\begin{fact} \OldNormalFont{}
Let $\mu$ and $\mu_1, \ldots, \mu_K$ be distributions over circuits.

\begin{enumerate}
\item If $\mu$ is a convex combination of $\mu_1, \ldots, \mu_K$ then $\Channel \left[G^{(2)}_\mu\right]$ is the same convex combination of
\\
 $\Channel \left[G^{(2)}_{\mu_1}\right], \ldots ,\Channel \left[G^{(2)}_{\mu_K}\right]$. 

\item  If $\mu$ is the composition of a circuit from $\mu_1$ with a circuit with $\mu_2$, then $\Channel \left[G^{(2)}_\mu\right] = \Channel \left[G^{(2)}_{\mu_2}\right] \circ \Channel \left[G^{(2)}_{\mu_1}\right]$.
\end{enumerate}
\end{fact}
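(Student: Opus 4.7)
Both parts of the fact are immediate consequences of the linearity of expectation combined with the fact that the map $\Channel[\cdot]$ is linear and behaves multiplicatively under composition. My plan is simply to unfold the definition $\Channel[G^{(2)}_\mu](Z) = \E_{C\sim\mu}[C^{\otimes 2} Z C^{\dagger\otimes 2}]$ and push the expectation through.

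For part 1, if $\mu = \sum_i p_i \mu_i$ with $p_i\geq 0$ and $\sum_i p_i = 1$, then for any input $Z$ one has $\E_{C\sim\mu}[C^{\otimes 2} Z C^{\dagger\otimes 2}] = \sum_i p_i \E_{C\sim\mu_i}[C^{\otimes 2} Z C^{\dagger\otimes 2}]$, which is exactly the claimed convex combination of the superoperators $\Channel[G^{(2)}_{\mu_i}]$.

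For part 2, the key identity is the multiplicativity $(C_2 C_1)^{\otimes 2} = C_2^{\otimes 2} C_1^{\otimes 2}$, and similarly for the Hermitian conjugate. If $\mu$ is the pushforward of $\mu_1 \times \mu_2$ under $(C_1, C_2)\mapsto C_2 C_1$, then using independence of $C_1$ and $C_2$, one rewrites $\E_{C_1, C_2}[(C_2 C_1)^{\otimes 2} Z (C_1^\dagger C_2^\dagger)^{\otimes 2}]$ as $\E_{C_2}[C_2^{\otimes 2}\, \E_{C_1}[C_1^{\otimes 2} Z C_1^{\dagger\otimes 2}]\, C_2^{\dagger\otimes 2}]$, which is $\Channel[G^{(2)}_{\mu_2}] \circ \Channel[G^{(2)}_{\mu_1}]$ applied to $Z$. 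The only subtlety worth flagging — and essentially the only thing that could be gotten wrong — is the order of composition: since $C_1$ is applied to the state first, $\Channel[G^{(2)}_{\mu_1}]$ sits on the inside, consistent with the stated formula $\Channel[G^{(2)}_{\mu_2}] \circ \Channel[G^{(2)}_{\mu_1}]$. There is no real obstacle in this proof; both statements are structural and follow in a few lines.
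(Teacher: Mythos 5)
Your proof is correct, and it is the standard argument the authors clearly have in mind: the paper states this Fact without proof, so you are simply filling in the (routine) details via linearity of expectation for part 1 and the identity $(C_2 C_1)^{\otimes 2} = C_2^{\otimes 2} C_1^{\otimes 2}$ together with independence for part 2. Your remark about the order of composition — $\Channel[G^{(2)}_{\mu_1}]$ sitting on the inside because $C_1$ acts first — is exactly the one point worth being careful about, and you got it right.
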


Recall that $\Channel \left[G^{(2)}_{i,j}\right]$ denotes $\Channel
\left[G^{(2)}_{\text U(4)}\right]$ applied to qubits $i$ and
$j$. Since $\mu^{\text{CG}}_1$ is a convex combination of two-qubit
random $\text U(4)$ gates, the first point above implies that
\be
\Channel \left[G^{(2)}_{\mu^{\text{CG}}_1}\right] = \frac 2 {n(n-1)} \sum_{i < j}\Channel \left[G^{(2)}_{i,j}\right]
\ee
and since $\mu^{\text{CG}}_t$ is $t$ times compositions of $\mu^{\text{CG}}_1$ with itself, the second item implies that
\be
\Channel \left[G^{(2)}_{\mu^{\text{CG}}_t}\right] = \left(\frac 2 {n(n-1)} \sum_{i < j}\Channel \left[G^{(2)}_{i,j}\right]\right)^t.
\label{eq:186}
\ee

The moment superoperator $\Channel[G^{(2)}_{\text U(4)}]$ has the following simple action on the Pauli basis:
\be
\Channel[G^{(2)}_{\text U(4)}] (\sigma_p\tensor \sigma_q \tensor \sigma_a\tensor \sigma_b) = 
\begin{cases}
\sigma_0\tensor \sigma_0 \tensor \sigma_0\tensor \sigma_0 & p q =ab = 00 \\
\frac{1}{15} \sum_{\substack{c, d \in \{0,1,2,3\}^2 \\
cd \neq 00}} \sigma_c \tensor \sigma_d  \tensor \sigma_c \tensor \sigma_d & p q =ab \neq 00\\
0 & \text{otherwise}
\end{cases}
\label{eq:Pauli-action}
\ee

In particular the action of $\Channel[G^{(2)}_{\text U(4)}]$ on the
Pauli basis $\text P^2_2$ is a stochastic matrix, and for any pair $i\neq j$ the action of  $\Channel[G^{(2)}_{\text U(4)}]$ on qubits $i,j$ can be represented by a stochastic matrix acting on $\text P^2_n$.
Using
\eqref{eq:186} $\Channel \left[G^{(2)}_{\mu^{\text{CG}}_s}\right]$ on
$\text P^2_n$ is also a stochastic matrix.  We can describe this stochastic matrix as a Markov chain on state space $\cS = \{0,1,2,3\}^n$, with $S_t\in \cS$ describing the string at time $t$.

It turns out that the expected collision probability depends on the subset of qubits that have been hit by the random circuit. In case a subset of size $m$ of qubits (out of $n$ qubits) never have a gate applied to them, then the expected collision probability converges to a value like $\approx \frac{2^m}{2^n}$ and not $\frac 1 {2^n +1}$. So we need to separately track which qubits have ever been hit by a gate throughout this process.  Let $H_t\in 2^{[n]}$ denote the set of qubits that have been hit by at least one gate by time $t$, where $2^{[n]}$ denotes the power set of $[n]$.

%
%

Together $(S_t,H_t)$ can be modeled as the following Markov chain.

\begin{definition} \OldNormalFont{}
Let $(S_0,H_0), (S_1,H_1), (S_2,H_2), \ldots \in \cS \times 2^{[n]}$ be the following classical Markov chain.  
Initially $H_0 = \emptyset$ and $S_0$ is a random element of $\{0,3\}^n \backslash 0^n$. 
At each time step $t$ we choose a random pair $i,j \in [n]$ with $i\neq j$.  We let $H_{t+1} = H_t \cup \{i,j\}$ so that $H_t$ represents the set of all indices chosen up to time $t$.  We determine $S_{t+1}$ from $S_t$ using the transition rule of \eqref{eq:Pauli-action}.  Specifically if the $i,j$ positions of $S_t$ are both 0, then we leave them equal to 00, and otherwise we replace them with a uniformly random element of $\{01,02,\ldots,33\}$.
\label{def:classicalprocess}
\end{definition}

Suppose we condition on $H_t \supseteq H$ for some set $H$ with $|H|=n-m$.
Let 
\be P_t^{(n-m)}(k):= \Pr\left[| S_t(H)|=k | H_t \subseteq H\right]. \label{eq:Ptn-def}\ee
We can use this notation since the RHS of
\eqref{eq:Ptn-def} depends only on $|H|,t,n,k$ and not on $H$.

For a function $f : [n] \rightarrow \bbR$ we define $\|\cdot \|_\ast$ to be the following norm
\be
\|f\|_\ast := \sum_{k \in [n]} \frac {|f(k)|}{3^k}.
\label{eq:starnorm}
\ee
\subsubsection{Summary of the definitions}

 See below for a summary of the definitions:
\begin{center}
\begin{tabular} 
{  l l l}
\toprule
\textbf{Notation} & \textbf{Definition} & \textbf{Reference}\\
\midrule
$\Coll(C)$ & the collision probability of circuit $C$ &  Equation \eqref{eq:collision}\\
\hline
$G^{(t)}_\mu$ & average of $C^{\ot t,t}$ over $C\sim \mu$ &Definition \ref{def:moments}\\
\hline
$G^{(t)}_{i,j}$ & Haar projector of order $t$ on qudits $i$ and $j$&Definition \ref{def:moments}\\
\hline
$\mu^{\text{CG}}_{t}$ & the distribution over circuits with $t$ random
                        two-qubit gates &Definition \ref{def:cg}\\
\hline
$P^2_n$ & $\{\sigma_p \ot \sigma_p : p \in \{0,1,2,3\}^n\}$ &Section \ref{section:cgbackground}\\
\hline
$S_0, S_1, \ldots$ & Markov chain of Pauli strings & Definition \ref{def:classicalprocess}\\ 
\hline
$H_t$ & subset of $[n]$ that is covered according to the Markov chain of Pauli strings & Definition \ref{def:classicalprocess}\\
\hline
$S'_0, S'_1, \ldots$ & Accelerated Markov chain of binary strings with decoupled coordinates & Definition \ref{def:decoupled}\\ 
\hline
$X_t$ & $|S_t|$ & Section \ref{chain}\\
\hline
$Y_t$ & Steps of the accelerated Markov chain $Q$ & Section \ref{chain}\\
\hline
$P_t^{(n-m)}(k)$ & $\Pr\left[| S_t(H)|=k | H_t \subseteq H\right]$ for
                   any fixed $H$ with $|H|=n-m$&Equation \eqref{eq:Ptn-def}\\
\hline
$P_t(k)$ & $ \Pr\left[| S_t(H)|=k\right]$.  Also equal to $P_t^{(n)}(k)$ & Equation \eqref{eq:Ptn-def}\\
\hline
$\|f\|_\ast$ & $\sum_{x=1}^n \frac {|f(x)|}{3^x}$&Equation \eqref{eq:starnorm}\\
\hline
$\|f\|_\text{\Hsquare}$ & $\sum_{x=1}^n |f(x)|\frac {3n} {x 3^x}$ & Proposition \ref{prop:combining}\\
\hline
$P$ & transition matrix of the birth and death Markov chain & Equation \eqref{weightchain}\\
\hline 
$Q$ & transition matrix of the partially accelerated Markov chain & Equation \eqref{accelerated}\\
\hline
$T_{\text{left (right)}} (Y^s)$ & wait time for the steps $Y_0,\ldots, Y_s$ on the left (right) hand side of site $\frac 56 n$&Section \ref{sec:waitabc}\\
\hline
$\nu$ &$3/4 n$&Section \ref{sec:waitabc} \\
\hline
$\nu_\tau$ &$Y_0 \exp(- \frac{\tau}{\nu}) + \nu (1- \exp(- \frac{\tau}{\nu}))$&Section \ref{sec:waitabc} \\
\hline
$\beta$ &  $8(4+c) \ln n$ for constant $c$ fixed in advance &Section \ref{sec:waitabc}\\
\hline
$x(0)$ &  $\nu/\beta$ &Section \ref{sec:waitabc}\\
\hline
$\rho_x$ &  $\sum_{j = 1}^s I\{Y_j = x\}$&Section \ref{sec:waitabc}\\
\hline
$A$ &  $\cap_{1 \leq x\leq x(0)} \{N_x \leq \beta x\}$ &Section \ref{sec:waitabc}\\
\hline
$M_s$ &  $\min_{1 \leq j \leq s} \{Y_j\}$ &Section \ref{sec:waitabc}\\
\hline
$y^s$ &  short hand for $(y_0, \ldots, y_s)$ &Section \ref{sec:propcombining}\\
\hline
$\mathsf{Bin} (n,p)$ &  binomial distribution on $n$ elements each occurring with probability $p$  &\\
\hline
$\mathsf{Geo}(\alpha)$ & Geometric distribution with mean $\frac{1}{\alpha}$&\\
\hline
$\mathsf{Pois} (\tau)$ &  Poisson distribution with mean $\tau$  &\\
\hline
$\mathsf{Unif} [a,b]$ &  continuous uniform distribution on the interval $[a,b]$  &\\
\bottomrule\end{tabular}
\end{center}

\subsection{Proof of  \thmref{cg}: bound on the collision probability}
\label{section:cgproof}

Before giving the proof we state the following three main theorems. The first one relates the expected collision probability to the $\|\cdot\|_\ast$ norm of the probability vector on the state space of the Markov chain of weights. More concretely
\begin{theorem} \torestate{\
\be
\E_{C \sim \mu^{\text{CG}}_t} \text{Coll}(C)  \leq \frac 1 {2^n}  + \sum_{m=0}^n {n \choose m} e^{ -t m/n}  \|P_t^{(n-m)}\|_\ast
\ee
\label{thm:collisiontochain}}
\end{theorem}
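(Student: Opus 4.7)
The plan is to reduce $\E_C[\Coll(C)]$ to a classical Markov-chain hitting probability via Pauli duality, and then to decompose that probability over the set of qubits that have been touched by a gate. Using the expansions $\ket{0^n}\bra{0^n}^{\otimes 2}=4^{-n}\sum_{p,q\in\{0,3\}^n}\sigma_p\otimes\sigma_q$ and $\sum_x\ket{x}\bra{x}^{\otimes 2}=2^{-n}\sum_{p\in\{0,3\}^n}\sigma_p\otimes\sigma_p$, together with the explicit action of $\Channel[G^{(2)}_{(i,j)}]$ on the Pauli--Pauli basis displayed in Section~\ref{sec:definitions}, every off-diagonal component $\sigma_p\otimes\sigma_q$ with $p\ne q$ either gets killed by some gate hitting a position where $p$ and $q$ differ or persists with $p'\ne q'$, and therefore contributes $0$ to the trace against $\sum_x\ket{x}\bra{x}^{\otimes 2}$. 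The diagonal components $p=q$ evolve exactly according to the chain $S_t$ of Definition~\ref{def:classicalprocess}, yielding
\[
\E_{C\sim\mu^{\text{CG}}_t}[\Coll(C)] = \frac{1}{2^n}\sum_{p\in\{0,3\}^n}\Pr[S_t\in\{0,3\}^n\mid S_0=p] = \frac{1}{2^n} + \frac{2^n-1}{2^n}\Pr_{\mathrm{chain}}[S_t\in\{0,3\}^n],
\]
in which the isolated $1/2^n$ matches the constant in the claimed bound and comes from the fixed point $p=0^n$.

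To control the second term I would exploit the $\{1,2,3\}$-symmetry of the two-qubit Haar channel: the chain's transition kernel is invariant under any global relabelling of the non-identity Pauli labels. Consequently, conditioning on $H_t\subseteq H$ and on $|S_t(H)|=k$, the non-zero Pauli labels at touched positions inside $H$ are i.i.d.\ uniform on $\{1,2,3\}$ while untouched non-zero labels are automatically $3$; the probability that $S_t|_H\in\{0,3\}^H$ collapses to $3^{-k_{\mathrm{t}}}$, where $k_{\mathrm{t}}$ counts non-zero touched positions inside $H$. In the regime $t>\tfrac12 n\ln(n/\delta)$ a coupon-collector argument gives $\Pr[i\notin H_t]\le\delta/n$, so up to a multiplicative $(1+\delta)$ every non-zero site of $H$ has been touched and $k_{\mathrm{t}}=|S_t(H)|$; averaging then yields
\[
\Pr_{\mathrm{chain}}[S_t|_H\in\{0,3\}^H\mid H_t\subseteq H] \le (1+\delta)\bigl(P_t^{(n-m)}(0)+\|P_t^{(n-m)}\|_\ast\bigr).
\]

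Next I would decompose by the untouched set via the trivial bound $1\le\sum_{U\subseteq[n]}\mathbf{1}\{U\subseteq H_t^c\}$; applying it inside the expectation and invoking the position-permutation symmetry to reduce the inner sum to a single reference set $H$ of size $n-m$ gives
\begin{align*}
\Pr_{\mathrm{chain}}[S_t\in\{0,3\}^n] &\le \sum_U \Pr_{\mathrm{chain}}[U\subseteq H_t^c,\ S_t\in\{0,3\}^n]\\
&= \sum_{m=0}^n \binom{n}{m}\Pr[H_t\subseteq H]\,\Pr_{\mathrm{chain}}[S_t|_H\in\{0,3\}^H\mid H_t\subseteq H].
\end{align*}
The elementary count $\Pr[H_t\subseteq H] = \bigl(\tbinom{n-m}{2}/\tbinom{n}{2}\bigr)^t \le (1-m/n)^t \le e^{-tm/n}$ combined with the previous paragraph delivers the claimed inequality, once the $\sum_m\binom{n}{m}e^{-tm/n}P_t^{(n-m)}(0)$ correction is absorbed into the $1/2^n$ baseline.

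The main obstacle will be this last absorption, together with the related $k=0$ contribution that sits outside the $\|\cdot\|_\ast$ norm (which sums from $k=1$). The plan is to show that $P_t^{(n-m)}(0)$ is dominated by $\Pr[S_0|_H=0\mid S_0\ne 0^n]=(2^m-1)/(2^n-1)$, because once the chain has positive weight inside $H$ the two-qubit Haar channel of \eqref{eq:Pauli-action} almost never drives that weight back to zero. Computing $\sum_m\binom{n}{m}e^{-tm/n}(2^m-1)/(2^n-1)\lesssim 2^{-n}\bigl[(1+2e^{-t/n})^n-(1+e^{-t/n})^n\bigr]$ then yields an $O(2^{-n}\sqrt{n\delta})$ correction in the stated regime, comfortably absorbed into the $1/2^n$ term. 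A secondary bookkeeping nuisance is to verify that the $\{1,2,3\}$-symmetry argument survives the joint conditioning on $H_t\subseteq H$ and $|S_t(H)|=k$, which requires distinguishing touched from untouched non-zero sites in $H$ and showing that this distinction costs at most the advertised $(1+\delta)$ slack.
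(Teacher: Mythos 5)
Your overall route is the paper's route: reduce $\E_C[\Coll(C)]$ to the Pauli--string chain $(S_t,H_t)$ (the paper's Theorem~\ref{thm:lowerbound}), use the $\{1,2,3\}$-relabelling symmetry to replace the probability of landing in $\{0,3\}^H$ by $\E[3^{-|S_t(H)|}]$ (Lemma~\ref{lem:uniform-123}), decompose over the touched set with the coupon-collector bound $\Pr[H_t\subseteq H]\le e^{-tm/n}$ (Lemma~\ref{lem:coupon}), and pay a $(1+\delta)$ for swapping the conditioning (Lemma~\ref{lem:condition-cover}). The one step that fails as written is exactly the one you flag as a ``secondary bookkeeping nuisance,'' and it is not secondary: because you condition on $H_t\subseteq H$ from the outset (via the over-count $1\le\sum_U \mathbf{1}\{U\subseteq H_t^c\}$), the event $\{H_t\subsetneq H\}$ has conditional probability as large as $\delta$, and on that event the untouched non-zero sites of $H$ inflate $3^{-k_{\mathrm{t}}}$ all the way up to $1$. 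Splitting $\E[3^{-k_{\mathrm{t}}}\mid H_t\subseteq H]$ on whether $H_t=H$ therefore yields $(1+O(\delta))\bigl(P_t^{(n-m)}(0)+\|P_t^{(n-m)}\|_\ast\bigr)+O(\delta)$, and the \emph{additive} $O(\delta)$ (already present at $m=0$) swamps the target $O(2^{-n})$; no refinement of the symmetry argument removes it, because conditioned on $H_t\subsetneq H$ the quantity $3^{-k_{\mathrm{t}}}$ really is of order one.

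The fix is the paper's decomposition rather than yours: partition \emph{exactly} on $\{H_t=H\}$, so that every site of $H$ is touched and $\Pr[S_t|_H\in\{0,3\}^H\mid H_t=H]=\sum_k 3^{-k}\Pr[|S_t(H)|=k\mid H_t=H]$ holds with no untouched-site correction; then relax $\Pr[H_t=H]\le\Pr[H_t\subseteq H]\le e^{-tm/n}$ only in the \emph{prefactor}, and use $\Pr[E\mid H_t=H]\le\Pr[E\mid H_t\subseteq H]/\Pr[H_t=H\mid H_t\subseteq H]\le(1-\delta)^{-1}\Pr[E\mid H_t\subseteq H]$ to convert the inner conditioning to the one defining $P_t^{(n-m)}$. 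Since the events $\{H_t=H\}$ partition, the realizations you were trying to over-count under a larger $H$ are each counted once under $H=H_t$, and the additive leak disappears. On the $k=0$ term: your observation that $P_t^{(n-m)}(0)=(2^m-1)/(2^n-1)$ is correct and is more careful than the paper's own Corollary~\ref{cor:anotherform}, which silently starts the inner sum at $k=1$; but note that the resulting $\sum_m\binom{n}{m}e^{-tm/n}(2^m-1)/2^n$ correction sits \emph{on top of} the $1/2^n$ in the stated bound rather than inside it, so to recover the theorem verbatim you must either track it explicitly (it is harmless for the downstream use in Theorem~\ref{thm:cg}) or adopt the paper's convention of dropping it at the level of the corollary.
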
\OldNormalFont{}
\noindent This result is proved in Section \ref{section:relatingtoMarkov}. 

The second theorem shows that for $t \approx n \ln^2 n$, $\| P^{(n)}_t\|_\ast \approx \text{Constant} \times \frac{1}{2^n + 1}$, where $\frac{1}{2^n+1}$ is the value of this norm at the stationary state.
\begin{theorem}\torestate{
There exists a constant $c$ such that
if $t = c n \ln^2 n$ then $\|P^{(m)}_t\|_\ast \leq \frac{28}{2^m + 1}$. \label{thm:starmixing}}
\label{thm:pstartmixing}
\end{theorem}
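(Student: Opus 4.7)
The plan is to translate the statement about the norm $\|P_t^{(m)}\|_\ast$ into a mixing problem for the birth-and-death Markov chain $X_t := |S_t|$ on $\{0,1,\ldots,m\}$ with transition probabilities approximately $\frac{6}{5}\frac{x(m-x)}{m^2}$ (right), $\frac{2}{5}\frac{x(x-1)}{m^2}$ (left), and the remainder for staying put (see \cite{ODP06}). The stationary distribution for this chain has $\Pr[k] \propto 3^k\binom{m}{k}$, so $\sum_k 3^{-k}\pi(k) = \frac{2^m}{3^m+2^m-1}\approx 2/2^m$, matching the target bound up to constants. The central obstacle is that the weight $1/3^k$ in $\|\cdot\|_\ast$ makes this norm exponentially sensitive to even tiny probability mass at small $k$; a direct application of spectral gap or log-Sobolev bounds loses too much, and the fact that the chain is strongly biased towards $k=0$ when $k/m\ll 1$ means a trajectory can dwell in the dangerous region for a long time.

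Following the roadmap sketched in the introduction, I would first introduce the partially accelerated chain $Q$ on $\{0,\ldots,m\}$ by modifying the hold probabilities so that the probabilities of moving left, right, and staying are each affine functions of $x$: the chain is genuinely accelerated for $x < \tfrac{5}{6}m$ and decelerated for $x\geq \tfrac{5}{6}m$. The point of this choice is that Kac's method then exactly diagonalizes $Q$ in terms of Krawtchouk polynomials, giving a closed form for $\Pr[Y_s = k]$ after any number of steps $s$, starting from the initial distribution induced by conditioning on $|H|=m$. Using these formulas I would show that for $s = \Omega(m\ln m)$, the distribution of $Y_s$ satisfies $\sum_k 3^{-k}\Pr[Y_s=k] \leq C/2^m$ for an explicit constant $C$.

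Next I would couple $Q$ back to the original chain $P$ via wait times. At each step of $Y$, the original chain $X$ accumulates a random wait before it moves, and the total time elapsed after $s$ steps of $Q$ is a sum of these (heavy-tailed) waits. For $x < \tfrac{5}{6}m$ the waits are short in expectation, while for $x \geq \tfrac{5}{6}m$ they are larger but the chain also spends little real time at small $k$. I would bound (a) the expected total wait via $\mathbb{E}[T]\leq c_1 n\ln n$ using the affine structure, and (b) the probability that $T > cn\ln^2 n$ by a subexponential tail bound exploiting that most waits are $O(1)$ and only the $O(\ln n)$ excursions near $\tfrac{5}{6}m$ contribute a geometric with mean $O(m/\log m)$. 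This shows that with probability $\geq 1 - n^{-\Omega(\log n)}$, the accelerated chain completes enough steps inside the real time budget $cn\ln^2 n$ to deliver the mixing bound above.

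Finally, I would convert the ``with high probability in $s$'' statement into a bound on $\|P_t^{(m)}\|_\ast$ at a specific deterministic time $t = cn\ln^2 n$. The anticipated difficulty is that the argument may only directly bound $\|P_t^{(m)}\|_\ast$ for $t$ in an interval $[t_1,t_2]$; to pin it down at the target $t$, I would invoke a monotonicity lemma showing that $\|P_t^{(m)}\|_\ast$ is nonincreasing in $t$ when the chain is started from $X_0\in\{0,m\}$-like initial laws (and handle the actual initial law coming from conditioning on $H_t \subseteq H$ by dominating it by such a law up to a constant factor). Carefully accounting for the leftover mass at $k=0$ (which contributes the dominant $1/1$ term in $\|\cdot\|_\ast$) and the constant inflation from the wait-time coupling yields the stated constant $28$ in place of the sharp $\approx 2$; the hardest technical step is the tail estimate for the wait time near the $5m/6$ plateau, since this is precisely where naive comparison methods would lose the needed $\log$ factor.
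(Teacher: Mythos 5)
Your proposal follows essentially the same route as the paper's proof: reduction to the birth-and-death chain, the partially accelerated chain with affine transition rates (accelerated below $\tfrac56 n$, decelerated above) solved exactly by Kac's generating-function method in terms of Krawtchouk polynomials, the wait-time coupling with geometric waits and a large-deviation tail bound, and a final conversion from an interval of times to a fixed time. The only cosmetic difference is in that last step, where the paper averages $\|P_\tau\|_\ast$ over $\tau\in[t_1,t_2]$ (Proposition \ref{prop:combining}) and extracts a good $t^\ast$, invoking monotonicity of the collision probability proved at the operator level (Proposition \ref{prop:nonincreasing}) rather than a monotonicity statement for $\|P_t^{(m)}\|_\ast$ itself as you suggest; both variants work since $\|P_t\|_\ast$ is, up to normalization, the quadratic form $(P_0,P^tP_0)$ in the stationary inner product.
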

\noindent This result is proved in Section \ref{chain}. 

The third theorem gives an exact expression for the collision probability in terms of the Markov chain $S_0, S_1, \ldots$. We use this to compute the lower bound.
\begin{theorem}\torestate{
$\Coll_{\mu^{\text {CG}}_t} = \frac{1}{2^n} \left(1 + \sum_{p,q \in \{0,3\}^n\backslash 0^n} \Pr [S_t = p | S_0 = q]\right)$\label{thm:lowerbound}}
\end{theorem}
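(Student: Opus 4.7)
The plan is a direct Pauli-basis computation. First I would unfold the definition of the collision probability:
\be
\E_{C \sim \mu^{\text{CG}}_t}\Coll(C) = \Tr\!\left[\sum_{x\in\{0,1\}^n}(\ket{x}\bra{x})^{\otimes 2}\,\Channel\!\left[G^{(2)}_{\mu^{\text{CG}}_t}\right]\!\left((\ket{0^n}\bra{0^n})^{\otimes 2}\right)\right].
\ee
Using $\ket{x_i}\bra{x_i}=\tfrac12(\sigma_0+(-1)^{x_i}\sigma_3)$, one finds $\sum_x(\ket{x}\bra{x})^{\otimes 2}=\tfrac{1}{2^n}\sum_{p\in\{0,3\}^n}\sigma_p\otimes\sigma_p$ (the cross terms cancel when summed over $x$), and similarly $(\ket{0^n}\bra{0^n})^{\otimes 2}=\tfrac{1}{4^n}\sum_{q,q'\in\{0,3\}^n}\sigma_q\otimes\sigma_{q'}$. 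Combined with the Pauli orthogonality identity $\Tr[(\sigma_p\otimes\sigma_p)(\sigma_a\otimes\sigma_b)]=4^n\delta_{p,a}\delta_{p,b}$, this reduces the problem to computing the coefficient of $\sigma_p\otimes\sigma_p$ in $\Channel\!\left[G^{(2)}_{\mu^{\text{CG}}_t}\right](\sigma_q\otimes\sigma_{q'})$.

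Next I would establish the key structural claim: for $p\neq 0^n$, this coefficient vanishes unless $q=q'$. The reason is that the elementary action \eqref{eq:Pauli-action} on any pair $(i,j)$ kills $\sigma_q\otimes\sigma_{q'}$ as soon as the gate's support contains an index $k$ where $q_k\neq q'_k$, because the restriction of $\sigma_q\otimes\sigma_{q'}$ to $(i,j)$ is then an off-diagonal Pauli pair on two qubits and gets projected out. If no such gate ever touches the disagreement set $\{k:q_k\neq q'_k\}$, then at every position in that set the surviving term retains a Pauli pair with distinct entries, which is orthogonal to every $\sigma_p\otimes\sigma_p$. Hence only $q=q'$ (and separately $p=0^n$) contributes.

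I would then treat the two surviving contributions. For $p=0^n$ we have $\sigma_p\otimes\sigma_p=I$, and because $\Channel\!\left[G^{(2)}_{\mu^{\text{CG}}_t}\right]$ is trace-preserving, $\Tr\Channel\!\left[G^{(2)}_{\mu^{\text{CG}}_t}\right](\sigma_q\otimes\sigma_{q'})=\Tr(\sigma_q)\Tr(\sigma_{q'})=4^n\delta_{q,0^n}\delta_{q',0^n}$; pulling out the normalizations gives the additive $\tfrac{1}{2^n}$ in the theorem. For $p\neq 0^n$, I would match the transition rule of \defref{classicalprocess} against the action in \eqref{eq:Pauli-action}: the update $\sigma_q\otimes\sigma_q\mapsto\frac{1}{15}\sum_{s\neq 00}\sigma_s\otimes\sigma_s$ (or stays put when the restriction is $00$) is precisely the stochastic rule for $S_t$ on the diagonal Pauli states, so by induction on $t$ the coefficient of $\sigma_p\otimes\sigma_p$ in $\Channel\!\left[G^{(2)}_{\mu^{\text{CG}}_t}\right](\sigma_q\otimes\sigma_q)$ equals $\Pr[S_t=p\mid S_0=q]$. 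Note that $q=0^n$ contributes nothing because $\Channel\!\left[G^{(2)}_{\mu^{\text{CG}}_t}\right](I\otimes I)=I\otimes I$, so the sum effectively runs over $q\in\{0,3\}^n\setminus 0^n$. Collecting everything yields the claimed identity.

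The only step that requires real care is the off-diagonal vanishing claim, i.e.\ showing that $\sigma_q\otimes\sigma_{q'}$ with $q\neq q'$ contributes zero to the $\sigma_p\otimes\sigma_p$ component. I would phrase this cleanly as an invariant of the evolution: the support of $\Channel\!\left[G^{(2)}_{\mu^{\text{CG}}_t}\right](\sigma_q\otimes\sigma_{q'})$ lies entirely in the span of Pauli pairs $\{\sigma_a\otimes\sigma_b\}$ for which the set of indices $k$ with $a_k\neq b_k$ is contained in the set of indices $k$ with $q_k\neq q'_k$ that have not yet been acted on by a gate; since this set is nonempty whenever $q\neq q'$ and the term is still alive, no $\sigma_p\otimes\sigma_p$ component can ever arise.
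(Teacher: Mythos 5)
Your proposal is correct and follows essentially the same route as the paper: expand the measurement operator and the initial state in the Pauli basis, observe that the off-diagonal terms $\sigma_q\otimes\sigma_{q'}$ with $q\neq q'$ contribute nothing to the diagonal components, and identify the evolution of the surviving diagonal terms with the Markov chain $S_t$. The only difference is one of self-containedness: you prove the off-diagonal vanishing and the Markov-chain correspondence directly, whereas the paper asserts the former and cites Harrow--Low (\thmref{HL09}) for the latter.
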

The proof of this expression is the same as equation \eqref{eq:collexpression} and is derived in section \ref{section:relatingtoMarkov}.

\begin{proof}[\OldNormalFont{} Proof of \thmref{cg}]

We first prove the upper bound. There are two major steps. 

Combining Theorems \ref{thm:collisiontochain} and
\ref{thm:pstartmixing} and choosing $t=cn\ln^2(n)$ we obtain
\bea
\E_{C \sim \mu^{\text{CG}}_{t}} \text{Coll}(C) 
 &\leq& \frac 1 {2^n}  + \sum_{m=0}^n {n \choose m} e^{ -t m/n}  \frac{28}{2^{n-m}}\\
 &\leq& \frac 1 {2^n}  (1 +28 (1 + 2 e^{-t/n})^n)\\
 &\leq& \frac 1 {2^n}  (1 +28 (1 + \frac{2}{n^{c \ln n}})^n)\\
 &\leq& \frac{29}{2^n+1}.
\eea
Here we need to assume $n$ is larger than some universal
constant. This can be done by adjusting $c$ to cover the finite set of
cases where $n$ is too small.

For the lower bound we use the expression in Theorem \ref{thm:lowerbound} and bound it according to
\bea
\Coll_{\mu^{\text {CG}}_t} &\geq& \frac 1 {2^n}\sum_{p \in \{0,3\}^n} \Pr [S_t = p | S_0 = p],\\
&=& \frac 1 {2^n}\sum_{k=0}^n \sum_{p \in \{0,3\}^n : |p|_H = k} \Pr [S_t = p | S_0 = p],\\
&\geq& \frac 1 {2^n}\sum_{k=0}^n {n \choose k} r_k^t,
\eea
where
\be
r_k = \frac {14} {15} (1 - \frac k n) (1-\frac k {n-1}) + \frac 1 {15} \geq e^{- 3 \frac k n},
\ee
is the probability that a string of Hamming weight $k$ does not change after one step of the Markov chain. Assume $t \leq \frac 1 {3 c'}  n \ln n$ then 
\bea
\Coll_{\mu^{\text {CG}}_t} &\geq& \frac 1 {2^n}\sum_{k=0}^n {n \choose k} e^{- 3 \frac {kt} n},\\
&=& \frac 1 {2^n}(1 + e^{- 3 t/n})^n,\\
&\geq & \frac 1 {2^n}\exp(\frac{n^{1-1/c'}}{1 + n^{-1/c'}})\\
&\geq & \frac 1 {2^n} \cdot 1.6 ^{n^{1-1/c'}}
\eea
\end{proof}

\subsection{Proof of Theorem \ref{thm:collisiontochain}: relating collision probability to a Markov chain}
\label{section:relatingtoMarkov}

In this section we relate the expected collision probability of a random circuit with long-range gates to the $\|\cdot\|_\ast$ norm of the probability vector $P^{(m)}_t$ defined in Section \ref{section:cgbackground}.   We will prove several intermediate results along the way to \thmref{collisiontochain}.

\begin{theorem}[Section 3 of \cite{HL09}]\label{thm:HL09} \OldNormalFont{}
We can write
\be
\Channel \left[G^{(2)}_{\mu^{\text{CG}}_t}\right] (\sigma_q \ot \sigma_q) = \sum_{p \in \{0,1,2,3\}^n} \Pr[S_t = p|S_0 = q] \sigma_p \ot \sigma_p.
\ee
\end{theorem}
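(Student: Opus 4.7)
The plan is to prove this by induction on $t$, leveraging the action of a single random two-qubit gate on the Pauli basis given in \eqref{eq:Pauli-action}. The key observation is that the set of ``diagonal'' Pauli tensor products $\{\sigma_p \ot \sigma_p : p \in \{0,1,2,3\}^n\}$ is invariant under $\Channel[G^{(2)}_{(i,j)}]$ for any pair $(i,j)$, with transition coefficients that exactly match the Markov rule of Definition~\ref{def:classicalprocess}.

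First I would handle the base case $t=0$: here $\Channel[G^{(2)}_{\mu^{\text{CG}}_0}]$ is the identity and the RHS collapses to $\sigma_q \ot \sigma_q$ since $\Pr[S_0 = p \mid S_0 = q] = \delta_{p,q}$. For the inductive step, I use the semigroup property $\Channel[G^{(2)}_{\mu^{\text{CG}}_{t+1}}] = \Channel[G^{(2)}_{\mu^{\text{CG}}_1}] \circ \Channel[G^{(2)}_{\mu^{\text{CG}}_t}]$ together with the convex decomposition $\Channel[G^{(2)}_{\mu^{\text{CG}}_1}] = \frac{2}{n(n-1)}\sum_{i<j} \Channel[G^{(2)}_{(i,j)}]$. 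Applying the inductive hypothesis:
\be
\Channel[G^{(2)}_{\mu^{\text{CG}}_{t+1}}](\sigma_q \ot \sigma_q) = \frac{2}{n(n-1)}\sum_{i<j} \sum_{p} \Pr[S_t = p \mid S_0 = q]\, \Channel[G^{(2)}_{(i,j)}](\sigma_p \ot \sigma_p).
\ee
Now I invoke \eqref{eq:Pauli-action}: for each fixed pair $(i,j)$, $\Channel[G^{(2)}_{(i,j)}](\sigma_p \ot \sigma_p)$ either leaves the two-qubit block $p_i p_j = 00$ untouched, or, if $p_i p_j \neq 00$, spreads it uniformly over the fifteen nonzero Pauli pairs $\sigma_c \ot \sigma_c$ with $c \in \{0,1,2,3\}^2 \setminus \{00\}$. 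This is precisely the one-step transition kernel of the Markov chain $S_{t+1} \mid S_t$: condition on the random pair $(i,j)$, leave positions outside $\{i,j\}$ fixed, and resample positions $(i,j)$ by the stated rule.

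Combining the uniform average over $(i,j)$ with the inductive distribution for $S_t$ therefore yields the total probability formula
\be
\sum_{p'} \Pr[S_{t+1} = p' \mid S_0 = q]\, \sigma_{p'} \ot \sigma_{p'},
\ee
which completes the induction. I expect no substantive obstacle here: once one has the single-gate action in the $\sigma_p \ot \sigma_p$ basis and the semigroup decomposition of $\mu^{\text{CG}}_t$, the result is essentially a bookkeeping exercise. The only mildly delicate point is verifying that the ``$p_i p_j = 00$ stays fixed, else uniform on the remaining $15$ elements'' rule indeed coincides with Definition~\ref{def:classicalprocess}, but this is immediate from the case analysis in \eqref{eq:Pauli-action}.
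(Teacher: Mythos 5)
Your proof is correct, and it is essentially the standard argument: the paper itself does not reprove this statement but cites Section 3 of \cite{HL09}, where the result is established by exactly this induction — decompose $\Channel[G^{(2)}_{\mu^{\text{CG}}_t}]$ as the $t$-fold composition of the uniform average over pairs via \eqref{eq:186}, and observe that the single-gate action \eqref{eq:Pauli-action} restricted to the diagonal span $\{\sigma_p \ot \sigma_p\}$ is precisely the one-step transition kernel of Definition~\ref{def:classicalprocess}. No gaps; the only point you flag as delicate (matching the ``$00$ stays fixed, else uniform on the $15$ nonzero pairs'' rule to the Markov chain) is indeed immediate since the chain was defined to mirror that rule.
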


\begin{proof}[\OldNormalFont{} Proof of \thmref{lowerbound}]
We can write the expected collision probability in terms of the moment superoperator $\Channel \left[G^{(2)}_{\mu^{\text{CG}}_t}\right]$. We use the notation $\Coll_{\mu^{\text{CG}}_t} = \E_{C \sim \mu^{\text{CG}}_t} \Coll(C)$:
\bea
\Coll_{\mu^{\text{CG}}_t} &=& \sum_{z \in \{0,1\}^n} \E_{C \sim{\mu^{\text{CG}}_t}} |  \langle z  | C  | 0 \rangle |^4\nonumber\\
 &=& \sum_{z \in \{0,1\}^n} \bra z \tensor \bra z \E_{C \sim {\mu^{\text{CG}}_t}} \left(C \ket{0^n}\bra{0^n} C^\dagger \tensor C \ket{0^n}\bra{0^n} C^\dagger  \right) \ket z \tensor \ket z\nonumber\\
  &=& \Tr \sum_{z \in \{0,1\}^n} \proj  z \tensor \proj z \Channel \left[G^{(2)}_{\mu^{\text{CG}}_t}\right] \left( \ket{0^n}\bra{0^n} \tensor  \ket{0^n}\bra{0^n}\right)
  \label{eq:expressionforcoll}
\eea

It is useful to write $\ket{0^n}\bra{0^n} \tensor  \ket{0^n}\bra{0^n}$
and $\sum_{z \in \{0,1\}^n} \proj  z \tensor \proj z $ in the Pauli basis:
\ba
\ket{0^n}\bra{0^n} \tensor  \ket{0^n}\bra{0^n} &
= \frac 1 {4^n} \sum_{p,q \in \{0,3\}^n} \sigma_p \tensor \sigma_q.\\
\sum_{z \in \{0,1\}^n} \proj z \ot \proj z &
= \frac 1 {2^n} \sum_{p \in \{0,3\}^n} \sigma_p \tensor \sigma_p.
\label{eq:sumoverz}
\ea
Then the collision probability becomes:
\be
\Coll_\nu = \frac{1}{2^n} +(1- \frac{1}{2^n})\frac 1{2^n}\Tr \left(
  \sum_{z \in \{0,1\}^n} \proj z \tensor \proj z\right)
 \Channel\left[G^{(2)}_{\mu^{\text{CG}}_t}\right] \left( \frac 1 {2^n-1} \sum_{q \in \{0,3\}^n\backslash 0^n} \sigma_q \tensor \sigma_q \right) 
\ee
Using \thmref{HL09}
\be
\Channel\left[G^{(2)}_{\mu^{\text{CG}}_t}\right] \left( \frac 1 {2^n-1} \sum_{q \in \{0,3\}^n\backslash 0^n} \sigma_q \tensor \sigma_q\right) 
=  \frac 1 {2^n-1} \sum_{\substack{p \in \{0,1,2,3\}^n\backslash 0^n \\ q \in \{0,3\}^n\backslash 0^n}}
\Pr[S_t = p | S_0 = q]\sigma_p \tensor \sigma_p.
\ee

As a result 
\bea
\Coll_{\mu^{\text {CG}}_t} &=& \frac{1}{2^n}\left(1 +\sum_{p,q \in \{0,3\}^n\backslash 0^n} \Pr [S_t = p | S_0 = q]\right)
\label{eq:collexpression}
\eea
\end{proof}

For a string $a \in \{0,1,2,3\}^n$ and a subset $A \in 2^{[n]}$ we let $a(A)$ denote the substring of $a$ restricted to $A$.
\begin{lemma}\label{lem:uniform-123} \OldNormalFont{}
For $H \subseteq [n]$ and  $p,q \in \{0,1,2,3\}^n$ 
\be
 \Pr [S_t = p | S_0 = q, H_t=H]
=  \frac 1 {{|H| \choose |p(H)|} 3^{|p(H)|}}\Pr [|S_t(H)|=|p(H)| \big| S_0 = q, H_t=H]
\ee
if $q([n] \backslash H) = p([n] \backslash H)$ and $0$ otherwise. 
\end{lemma}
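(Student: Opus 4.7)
The plan is to prove \lemref{uniform-123} in two pieces.

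First, for positions outside $H$: by construction in \defref{classicalprocess}, $H_t=\bigcup_{s\le t}\{i_s,j_s\}$, so every gate applied in the first $t$ steps acts on a pair inside $H_t$. Hence any site $x\in[n]\setminus H_t$ has never been touched, giving $(S_t)_x=(S_0)_x=p_x$. Conditioning on $H_t=H$ then forces $S_t([n]\setminus H)=p([n]\setminus H)$, so the conditional probability vanishes whenever $q$ disagrees with $p$ on $[n]\setminus H$, which disposes of the ``otherwise'' case.

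For the main identity I would establish by induction on $t$ the following stronger invariant: conditional on $S_0=p$ and $H_t=H$, the Pauli labels of $S_t$ at its non-identity sites in $H$ are, given the support $\supp(S_t(H))$, independent and uniform on $\{1,2,3\}$. Granted this invariant, for any $q$ satisfying $q([n]\setminus H)=p([n]\setminus H)$ and $k=|q(H)|$,
\[
\Pr[S_t=q\mid S_0,H_t=H]\;=\;\frac{1}{3^{k}}\cdot\Pr\!\left[\,\supp(S_t(H))=\supp(q(H))\,\bigm|\, S_0,H_t=H\,\right],
\]
which is the lemma's formula upon reading $\Pr[|S_t(H)|=|q(H)|\mid \cdots]$ as the probability that $S_t(H)$ has the same support pattern as $q(H)$.

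The inductive step will use the transition rule \eqref{eq:Pauli-action}: a Haar $U(4)$ gate applied to $(i,j)$ leaves a $(0,0)$-pair alone and otherwise replaces any non-identity pair by a uniformly random element of the $15$ non-identity Pauli pairs. This uniform output factorizes as (i) a choice of occupancy pattern on $\{i,j\}$ from $\{(0,\mathrm{NI}),(\mathrm{NI},0),(\mathrm{NI},\mathrm{NI})\}$ with weights $3/15,3/15,9/15$, and (ii) an independent uniform $\{1,2,3\}$-label at each resulting non-identity slot, independent of the sites outside $\{i,j\}$. Combining (ii) with the inductive hypothesis on the untouched sites preserves the invariant at time $t+1$; the ``do nothing'' branch preserves it trivially, and the base case $t=0$ is vacuous since $H_0=\varnothing$.

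The only real bookkeeping is to check that the new labels at the just-touched pair are independent of the inductive label structure elsewhere, and this is immediate from the explicit factorization (i)--(ii) above. So there is no serious obstacle; the lemma is essentially a consequence of the $U(4)$-randomization rule together with the fact that the process never touches $[n]\setminus H_t$.
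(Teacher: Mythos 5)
Your proposal is correct and takes essentially the same route as the paper: the paper's own (three-sentence) proof consists of exactly your two observations, namely that sites outside $H$ are never touched, and that conditioned on the support of $S_t(H)$ each non-identity label is uniform on $\{1,2,3\}$, contributing a factor of $1/3$ per non-zero site. Your explicit induction via the factorization of the uniform distribution on the $15$ non-identity Pauli pairs, and your reading of the right-hand side as a support-matching event (which is indeed the interpretation the paper uses when applying the lemma), simply make precise what the paper leaves implicit.
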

In other words, once we condition on $H_t=H$, the probability distribution of $S_t(H)$ depends only on its Hamming weight.

\begin{proof}
Conditioned on $H_t=H$ the sites of $[n]\backslash H$ are not hit, so the event that $q([n] \backslash H) \neq p([n] \backslash H)$ has zero probability. Now since the set $H$ is covered, $1,2$ or $3$ have equal probabilities of appearing at any position of the string $S_t (H)$.  As a result for each non-zero bit of $S_t (H)$ we get a factor of $1/3$.
\end{proof}

Using \thmref{lowerbound} and \lemref{uniform-123} we obtain

\begin{corollary}\OldNormalFont{}
\be
\Coll_{\mu^{\text{CG}}_t} =
\frac 1 {2^{n}} + (1-1/2^n) \sum_{H \subseteq [n]}  \Pr [H_t = H] \sum_{1 \leq k \leq |H|} \frac {\Pr [|S_t(H)|=k \big| H_t=H]} {3^{k}}.
\ee
\label{cor:anotherform}
\end{corollary}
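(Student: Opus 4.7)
The plan is to start from \thmref{lowerbound}, which gives $\Coll_{\mu^{\text{CG}}_t} = \frac{1}{2^n}\left(1 + \sum_{p,q \in \{0,3\}^n\setminus 0}\Pr[S_t=p\mid S_0=q]\right)$, and to convert the double sum into the stated form by conditioning on the hit set $H_t$ and summing out $p$. Since $H_t$ is determined by the gate selections alone and is independent of $S_0$, the first step is to write $\Pr[S_t=p\mid S_0=q] = \sum_{H \subseteq [n]}\Pr[H_t=H]\,\Pr[S_t=p\mid S_0=q,\, H_t=H]$, and then invoke \lemref{uniform-123} to factor the conditional probability as $3^{-|p(H)|}$ times the probability that $S_t(H)$ has the matching support pattern (and zero unless $p$ agrees with $q$ on $[n]\setminus H$).

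Next, for each fixed $q$ and $H$, I would perform the sum over $p \in \{0,3\}^n \setminus 0$ satisfying $p([n]\setminus H) = q([n]\setminus H)$. The value $p(H)$ ranges over $\{0,3\}^{|H|}$, and there are $\binom{|H|}{k}$ such strings of Hamming weight $k$. By the rotational symmetry of the random-circuit dynamics, the probability mass $\Pr[|S_t(H)|=k\mid S_0=q, H_t=H]$ distributes uniformly over the $\binom{|H|}{k}$ possible supports, so the two binomial factors cancel and the inner sum over $p$ collapses to $\sum_k 3^{-k}\,\Pr[|S_t(H)|=k\mid S_0=q, H_t=H]$.

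Finally, averaging over $q$ with the uniform distribution $\Pr[S_0=q] = (2^n-1)^{-1}$ replaces the conditional probabilities with the marginal $\Pr[|S_t(H)|=k\mid H_t=H]$, after which the $(2^n-1)$ factor combines with the $2^{-n}$ prefactor to yield the stated identity (with the isolated $\frac{1}{2^n}$ term tracking the $\Pr[S_t=0\mid S_0=0]=1$ contribution from the original double sum). The boundary correction arising from $p \neq 0$, which forbids $p(H) = 0$ when $q([n]\setminus H) = 0$, contributes nothing because the Markov chain preserves $|S_t(H)| \geq 1$ whenever $|S_0(H)| \geq 1$, so the missing $k = 0$ term vanishes. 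The main technical obstacle is the careful bookkeeping of the $\binom{|H|}{k}$ factors --- they appear both as the number of $p(H) \in \{0,3\}^{|H|}$ of weight $k$ and as the reciprocal density of supports of size $k$ inside the symmetric distribution of $S_t(H)$, and they must cancel exactly to produce the clean $3^{-k}$ weighting.
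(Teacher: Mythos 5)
Your outline matches what the paper actually does (the paper gives no proof beyond "Using \thmref{lowerbound} and \lemref{uniform-123} we obtain..."): decompose over $H_t$, exploit the conditional structure of $S_t$ given $H_t=H$, collapse the sum over final strings $p$, then average over the uniform initial string $q$. However, the one step you single out as the main technical obstacle is justified incorrectly. For a \emph{fixed} initial string $q$ it is not true that, conditioned on $H_t=H$ and $|S_t(H)|=k$, the support of $S_t(H)$ is uniform over the $\binom{|H|}{k}$ subsets of size $k$: conditioned on $H_t=H$ the gate sequence is permutation-symmetric over $H$, but the initial condition $S_0(H)=q(H)$ is not, and for small $t$ the support of $S_t(H)$ remembers where $q(H)$ was supported (already visible for $|H|=3$, $t=2$, $q(H)=300$, where the weight-one outcomes are biased toward support $\{1\}$). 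The uniformity you invoke becomes true only after averaging over the permutation-invariant initial distribution, which in your write-up happens one step too late.

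The fix is immediate, and in fact you never need uniformity over supports. You are summing over all $p(H)\in\{0,3\}^{|H|}$, i.e.\ exactly one string per support $A\subseteq H$, and the content of \lemref{uniform-123} that its proof actually establishes is that, given $\supp(S_t(H))=A$, the nonzero entries are i.i.d.\ uniform over $\{1,2,3\}$. Hence each term contributes $3^{-|A|}\Pr[\supp(S_t(H))=A\mid\cdots]$, and summing over $A$ with $|A|=k$ gives $3^{-k}\Pr[|S_t(H)|=k\mid\cdots]$ directly; no binomial factors ever appear, so there is nothing to cancel. Two smaller bookkeeping points: the exclusion of $p=0^n$ is harmless only in the case you treat ($q([n]\setminus H)=0$ forces $q(H)\neq 0$, hence $|S_t(H)|\geq 1$); when $q([n]\setminus H)\neq 0$ and $q(H)=0$ the $k=0$ term survives with value $1$, and the average over $q$ produces a prefactor $\tfrac{2^n-1}{2^n}$ rather than $1$. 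The clean derivation therefore yields $\tfrac{1}{2^n}+\tfrac{2^n-1}{2^n}\sum_H\Pr[H_t=H]\sum_{k\geq 0}3^{-k}\Pr[|S_t(H)|=k\mid H_t=H]$, which matches the stated corollary only up to these exponentially small (and, for the paper's upper bounds, harmless) corrections — a defect of the corollary as stated rather than of your argument, but one you should not claim to have eliminated.
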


\begin{proof}
Expanding \thmref{lowerbound} we have
\ba
\Coll_{\mu^{\text {CG}}_t} &= \frac{1}{2^n} \left(1 + \sum_{p,q \in \{0,3\}^n\backslash 0^n} \Pr [S_t = p | S_0 = q]\right)\\
&= \frac{1}{2^n} \left(1 + \sum_{H \subseteq [n]} \sum_{p,q \in \{0,3\}^n\backslash 0^n} \Pr [H_t = H | S_0 = q] \Pr [S_t = p | S_0 = q, H_t = H]\right)\\
&= \frac{1}{2^n} \left(1 + \sum_{H \subseteq [n]} \sum_{p,q \in
    \{0,3\}^n\backslash 0^n} \Pr [H_t = H] \Pr [S_t = p | S_0 = q, H_t
  = H]\right)
. 
\ea
Using \lemref{uniform-123} in the above we have
\ba
&= \frac{1}{2^n}  + \frac{1}{2^n}  \sum_{H \subseteq [n]}  \Pr [H_t = H] \sum_{p,q \in \{0,3\}^n\backslash 0^n} \Pr [S_t = p | S_0 = q, H_t = H],\\
&= \frac{1}{2^n}  + \frac{1}{2^n}  \sum_{H \subseteq [n]}  \Pr [H_t = H] \sum_{\substack{p,q \in \{0,3\}^n\backslash 0^n\\ p([n]\backslash H) = q([n]\backslash H)}} \frac 1 {{|H| \choose |p(H)|} 3^{|p(H)|}}\Pr [|S_t(H)|=|p(H)| \big| S_0 = q, H_t=H],\\
&= \frac{1}{2^n} +  \frac{1}{2^n}  \sum_{H \subseteq [n]}  \Pr [H_t = H] \sum_{\substack{q \in \{0,3\}^n}}\sum_{1 \leq k \leq |H|}\sum_{\substack{p \in \{0,3\}^n\backslash 0^n\\ p([n]\backslash H) = q([n]\backslash H)\\|p(H)| = k}} \frac 1 {{|H| \choose |p(H)|} 3^{|p(H)|}}\Pr [|S_t(H)|=|p(H)| \big| S_0 = q, H_t=H],\\
&= \frac{1}{2^n} + \frac{1}{2^n}  \sum_{H \subseteq [n]}  \Pr [H_t = H] \sum_{1 \leq k \leq |H|} \sum_{\substack{q \in \{0,3\}^n}}\sum_{\substack{p \in \{0,3\}^n\backslash 0^n\\ p([n]\backslash H) = q([n]\backslash H)\\|p(H)| = k}} \frac 1 {{|H| \choose k} 3^{k}}\Pr [|S_t(H)|=k \big| S_0 = q, H_t=H],\\
&= \frac{1}{2^n}  + \frac{1}{2^n}  \sum_{H \subseteq [n]}  \Pr [H_t = H] \sum_{1 \leq k \leq |H|} \sum_{\substack{q \in \{0,3\}^n}} \frac 1 {3^{k}}\Pr [|S_t(H)|=k \big| S_0 = q, H_t=H],\\
&=  \frac 1 {2^{n}} + (1-1/2^n) \sum_{H \subseteq [n]}  \Pr [H_t = H] \sum_{1 \leq k \leq |H|} \frac {\Pr [|S_t(H)|=k \big| H_t=H]} {3^{k}}.
\ea
\end{proof}

The standard coupon-collector bound is
\begin{lemma}[coupon collector]\OldNormalFont{}
 Let $H \subseteq [n]$. Then $\Pr [H_t \subseteq H] \leq e^{-(n-|H|)t/n }$.
\label{lem:coupon}
\end{lemma}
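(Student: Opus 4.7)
The plan is a direct calculation, since $H_t$ has a very simple description: it is just the union of the pairs of qubits selected in the $t$ independent steps of the process, so $\{H_t \subseteq H\}$ is the event that every one of these $t$ pairs is a subset of $H$. By independence of the rounds, this probability factors.

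Concretely, I would let $m := n - |H|$, so $m$ is the number of ``forbidden'' qubits that must be avoided at every step. At a single step we sample a pair $(i,j)$ uniformly at random from the $\binom{n}{2}$ unordered pairs in $[n]$, and the pair lies in $H$ iff both endpoints do, which happens with probability
\be
\frac{\binom{n-m}{2}}{\binom{n}{2}} = \frac{(n-m)(n-m-1)}{n(n-1)}.
\ee
By independence of the $t$ rounds,
\be
\Pr[H_t \subseteq H] = \left(\frac{(n-m)(n-m-1)}{n(n-1)}\right)^{t}.
\ee

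To finish, I would use the elementary bound $\frac{n-m-1}{n-1} \leq 1$ (valid for $m \geq 0$) to simplify the single-step probability to $\frac{n-m}{n} = 1 - m/n$, and then apply the standard inequality $1-x \leq e^{-x}$ for $x \in [0,1]$ with $x = m/n$ to conclude
\be
\Pr[H_t \subseteq H] \leq \left(1 - \frac{m}{n}\right)^{t} \leq e^{-mt/n} = e^{-(n-|H|)t/n}.
\ee
There is no real obstacle here; the only mild subtlety is making sure to use the slack $\frac{n-m-1}{n-1} \leq 1$ rather than trying to work with the exact two-factor expression, since that slack is what gives the clean exponent $mt/n$ that the rest of the paper wants to use (e.g.\ in \thmref{collisiontochain}).
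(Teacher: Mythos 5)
Your proof is correct and is essentially identical to the paper's: both compute the single-step probability that a random gate lands entirely inside $H$ as $\frac{|H|(|H|-1)}{n(n-1)}$, raise it to the $t$-th power by independence, bound it by $(|H|/n)^t$, and finish with $1-x\leq e^{-x}$. No differences worth noting.
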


\begin{proof}
Let $E^{(i)}_H$ be the event that at step $i$ of the circuit a random gate lands completely inside the set $H$. Then $\Pr[E^{(i)}_H] = \frac{|H| (|H|-1)}{n(n-1)}$. Now $\Pr [H_t \subseteq H] = \Pr[E^{(i)}_H]^t = (\frac{|H| (|H|-1)}{n(n-1)})^t \leq (\frac{|H|}{n})^t  \leq e^{-(n-|H|)t/n }$.
\end{proof}

We now have all the pieces to prove \thmref{collisiontochain}.

\begin{proof}[\OldNormalFont{} Proof of \thmref{collisiontochain}]
Using corollary \ref{cor:anotherform} the total collision
probability is
\ba \Coll_{\mu^{\text{CG}}_t}
&=
\frac {1} {2^n} + (1-1/2^n)\sum_{H\subseteq [n]} \Pr[H_t=H] \sum_{k=1}^n
\frac{\Pr[|S_t(H)|=k|H_t=H]}{ 3^{k}} \nn \\ 
&=
\frac {1} {2^n} + (1-1/2^n)\sum_{H\subseteq [n]}\sum_{k=1}^n
\frac{\Pr[|S_t(H)|=k, H_t=H]}{ 3^{k}} \nn \\ 
&\leq
\frac {1} {2^n} + (1-1/2^n)\sum_{H\subseteq [n]}\sum_{k=1}^n
\frac{\Pr[|S_t(H)|=k, H_t \subseteq H]}{ 3^{k}} \nn \\ 
&\leq
\frac {1} {2^n} + \sum_{H\subseteq [n]} \Pr [H_t \subseteq H] \sum_{k=1}^n
\frac{\Pr[|S_t(H)|=k | H_t \subseteq H]}{ 3^{k}} \nn \\ 
& \leq \frac{1}{2^n}  +
\sum_{H\subseteq [n]} \Pr[H_t\subseteq H] \sum_{k=1}^n
\frac{P_t^{(|H|)}(k)}{ 3^{k}} \nn \\
&\leq \frac 1 {2^n}  +
\sum_{H\subseteq [n]} e^{-(n-|H|)t/n } \sum_kP_t^{(|H|)}(k) /
 3^{k}
& \text{\lemref{coupon}}
 \nonumber\\
        &= \frac 1 {2^n}  +\frac 1 {(1-\eps)}\sum_{m = 0}^n {n \choose m}
        e^{-mt/n }\|P^{(n-m)}_t\|_\ast.
& \text{setting }m=n-|H|
\ea

\end{proof}
\subsection{Proof of Proposition \ref{prop:nonincreasing}: collision probability is non-increasing in time}
\label{sec:nonincreasing}

When we try to recover the original chain from the accelerated chain
we find that $s$ steps of the accelerated chain typically correspond
to $t=O(s)$ steps of the original chain, but with a significant
variance.   This means that our bounds on the collision probability of
the accelerated chain translate only into bounds for a distribution of
times of running the original chain.  This issue can be addressed
using the following fact.

\begin{proposition}\OldNormalFont{}
$\E_{C \sim \mu^{\text{CG}}_t} \text{Coll}(C)$ is a non-increasing function of $t$.
\label{prop:nonincreasing}
\end{proposition}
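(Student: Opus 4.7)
The plan is to rewrite $\E_{C\sim\mu^{\text{CG}}_t}\Coll(C)$ as a quadratic form $\langle v,M^t v\rangle$ in a self-adjoint operator $M$ whose spectrum lies in $[0,1]$. Once the expression is in this form, monotonicity follows from the spectral theorem: expanding $v=\sum_k c_k v_k$ in an orthonormal eigenbasis of $M$ with eigenvalues $\lambda_k\in[0,1]$ yields $\langle v,M^t v\rangle=\sum_k\lambda_k^t|c_k|^2$, a sum of termwise non-increasing functions of $t$.

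The starting point is \eqref{eq:expressionforcoll}, which gives $\Coll_{\mu^{\text{CG}}_t}=\langle\Pi,M^t\rho_0\rangle_{\mathrm{HS}}$, where $\Pi:=\sum_z|zz\rangle\langle zz|$, $\rho_0:=|0^n\rangle\langle 0^n|^{\ot 2}$, and $M:=\Channel[G^{(2)}_{\mu^{\text{CG}}_1}]=\tfrac{2}{n(n-1)}\sum_{i<j}\Channel[G^{(2)}_{(i,j)}]$. The operator $M$ is a convex combination of orthogonal projections, because each $\Channel[G^{(2)}_{(i,j)}]$ is self-adjoint w.r.t.\ the Hilbert--Schmidt inner product (the Haar measure is invariant under $U\mapsto U^\dagger$) and idempotent (using that $(G^{(2)}_{(i,j)})^2=G^{(2)}_{(i,j)}$ and that $\Channel$ is multiplicative). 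Hence $M$ is self-adjoint and positive semidefinite with $\|M\|\leq 1$.

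The remaining task is to collapse the bilinear form $\langle\Pi,M^t\rho_0\rangle$ into a quadratic form in $\Pi$. From the Pauli-basis description \eqref{eq:Pauli-action}, the subspace $V_{\mathrm{diag}}:=\mathrm{span}\{\sigma_p\ot\sigma_p:p\in\{0,1,2,3\}^n\}$ and its orthogonal complement are both $M$-invariant: on the two qubits the gate acts on, $\sigma_p\ot\sigma_q$ is annihilated unless $p|_{\{i,j\}}=q|_{\{i,j\}}$, and when this holds, whether or not $p=q$ holds globally is preserved by the gate. Expanding $\Pi=\tfrac{1}{2^n}\sum_{p\in\{0,3\}^n}\sigma_p\ot\sigma_p$ and $\rho_0=\tfrac{1}{4^n}\sum_{p,q\in\{0,3\}^n}\sigma_p\ot\sigma_q$, we see that $\Pi\in V_{\mathrm{diag}}$ and that the orthogonal projection of $\rho_0$ onto $V_{\mathrm{diag}}$ is exactly $\tfrac{1}{2^n}\Pi$. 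Since $M^t$ sends $V_{\mathrm{diag}}^\perp$ into itself, which is orthogonal to $\Pi$, the off-diagonal part of $\rho_0$ contributes nothing, and hence
\be
\Coll_{\mu^{\text{CG}}_t}=\langle\Pi,M^t\rho_0\rangle_{\mathrm{HS}}=\tfrac{1}{2^n}\langle\Pi,M^t\Pi\rangle_{\mathrm{HS}}.
\ee
Applying the spectral argument of the first paragraph to $v=\Pi$ then completes the proof. I do not expect a real obstacle; the essential structural input is the identity $P_{\mathrm{diag}}\rho_0=\tfrac{1}{2^n}\Pi$, which reduces what is a priori an arbitrary bilinear form to one to which spectral monotonicity applies directly.
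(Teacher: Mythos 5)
Your proof is correct and takes essentially the same route as the paper: both reduce the collision probability to the quadratic form $\tfrac{1}{2^n}\Tr\left(\alpha\, M^t(\alpha)\right)$ (the paper's $\alpha$ being $2^n\Pi$) by discarding the off-diagonal Pauli terms of $\rho_0$, and then conclude from the fact that $M=\Channel[G^{(2)}_{\mu^{\text{CG}}_1}]$ is an average of Hilbert--Schmidt-orthogonal projectors and hence has spectrum in $[0,1]$. Your explicit justification of why the off-diagonal part of $\rho_0$ contributes nothing — via the $M$-invariance of $V_{\mathrm{diag}}$ and its orthogonal complement — is a correct filling-in of a step the paper asserts without detail.
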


\begin{proof}

$\Channel [G_{\mu^{(\text{CG})}_1}]$ corresponds to an average of
$n(n-1)/2$ projectors (using the Hilbert-Schmidt inner product). Hence it is a psd matrix with maximum eigenvalue $\leq 1$. Let $\alpha = \sum_{p \in \{0,3\}^n} \sigma_p \tensor \sigma_p$. \eqref{eq:expressionforcoll} may be written as
\begin{align}
\sum_{z \in \{0,1\}^n} & \Tr\left(\ket z \bra z \tensor \ket z \bra z
  \Channel \left[G^{(2)}_{\mu^{\text{CG}}_t}\right] \left(
    \ket{0^n}\bra{0^n} \tensor  \ket{0^n}\bra{0^n}\right)\right)
\nonumber = \\
&\Tr\left(\frac{\alpha}{2^n}  \Channel \left[G^{(2)}_{\mu^{\text{CG}}_t}\right] \left(
    \ket{0^n}\bra{0^n} \tensor  \ket{0^n}\bra{0^n}\right)\right)
\label{eq:equivalentcoll}
\end{align}
Using \eqref{eq:sumoverz}, terms of the form $\sigma_p \ot \sigma_q$
for $p \neq q$ in the decomposition of $\ket {0^n}\bra{0^n} \tensor
\ket {0^n}\bra{0^n}$ do not contribute to the collision probability.
Therefore using this observation and \eqref{eq:equivalentcoll}, the
collision probability after $t$ steps is proportional to $\Tr ( \alpha
\Channel [G_{\mu^{(\text{CG})}_1}]^t \alpha)$. Since $\Channel
[G_{\mu^{(\text{CG})}_1}]$ has all eigenvalues between 0 and 1, we
conclude that the collision probability after $t$ steps cannot
increase in $t$. 
\end{proof}

This argument relied on the starting state being
$\ket{0^n}$.  There exist starting states, such as
$\ket{+}^{\otimes n}$, for which the collision probability increases
when random gates are applied.

\subsection{Proof of \thmref{starmixing}: the Markov chain analysis}
\label{chain}

Consider the following birth-and-death Markov chain on the state space $\{0, 1,2, \ldots, n\}$.
\begin{equation}
P(k,l) : = 
\begin{cases}
\vspace{4mm}
1-\dfrac{2k (3n-2k-1)}{5n(n-1)} & l=k\\
\vspace{4mm}
\dfrac{2k(k-1)}{5n(n-1)} & l=k-1\\
\vspace{4mm}
\dfrac{6k(n-k)}{5n(n-1)} & l=k+1\\
0 & \text{otherwise}
\end{cases}
\label{weightchain}
\end{equation}

This Markov chain is reducible in general, however if we restrict the state space to $\{0\}$ or $\{1,2,\ldots ,n\}$ it is irreducible. 
Consider the following initial distribution over the state space $\{1,2,\ldots ,n\}$:
\be P^{(n)}_0 (k) = \frac{{n \choose k}}{2^n-1}  \qquad k \in
\{1,2,\ldots,n\} \ee
We claim that
\begin{lemma}\OldNormalFont{}
\be P_t^{(n)} = P^t P^{(n)}_0\ee
\end{lemma}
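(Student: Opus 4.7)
The plan is to establish the identity by verifying two components: (i) the initial condition $P_0^{(n)}$ matches the law of $|S_0|$ under the process of Definition \ref{def:classicalprocess}, and (ii) the one-dimensional marginal process $(|S_t|)_{t\geq 0}$ is itself a Markov chain on $\{1,\dots,n\}$ whose one-step kernel coincides with $P$.

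Part (i) is immediate: $S_0$ is uniform over $\{0,3\}^n\setminus\{0^n\}$, so $\Pr[|S_0|=k]=\binom{n}{k}/(2^n-1)=P_0^{(n)}(k)$ for $k\in\{1,\dots,n\}$. For part (ii), the key observation is that the distribution of $|S_{t+1}|-|S_t|$ given $S_t$ depends only on $|S_t|$. This holds because the update rule of Definition \ref{def:classicalprocess} depends on $(S_t(i),S_t(j))$ only through its support pattern: if both entries are zero the pair remains $(0,0)$, and otherwise $(S_{t+1}(i),S_{t+1}(j))$ is uniform over the $15$ non-zero pairs in $\{0,1,2,3\}^2$, a distribution whose induced support pattern is again a function only of whether the input had one or two non-zero entries. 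Since the pair $\{i,j\}$ is chosen uniformly from $\binom{[n]}{2}$, conditional on $|S_t|=k$ the number of selected indices in $\supp(S_t)$ has a distribution determined by $k$ alone, which yields the Markov property of $(|S_t|)_{t\geq 0}$.

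To identify the kernel explicitly, fix $|S_t|=k$ and partition by the number of selected indices lying in $\supp(S_t)$. With probability $\binom{k}{2}/\binom{n}{2}$ both selected indices lie in the support; after the uniform resampling the new pair has two, one, or zero non-zero entries with probabilities $9/15,\,6/15,\,0$, i.e.\ weight changes $0,-1,-2$. With probability $k(n-k)/\binom{n}{2}$ exactly one selected index lies in the support, giving weight changes $+1$ or $0$ with probabilities $9/15,\,6/15$. With probability $\binom{n-k}{2}/\binom{n}{2}$ neither index lies in the support and the weight is unchanged. Summing the contributions,
\begin{align*}
\Pr[|S_{t+1}|=k+1\mid|S_t|=k]&=\tfrac{k(n-k)}{\binom{n}{2}}\cdot\tfrac{9}{15}=\tfrac{6k(n-k)}{5n(n-1)}=P(k,k+1),\\
\Pr[|S_{t+1}|=k-1\mid|S_t|=k]&=\tfrac{\binom{k}{2}}{\binom{n}{2}}\cdot\tfrac{6}{15}=\tfrac{2k(k-1)}{5n(n-1)}=P(k,k-1),
\end{align*}
and the holding probability matches $P(k,k)$ by complementarity. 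Since $P(1,0)=0$, mass cannot escape $\{1,\dots,n\}$, so iterating the one-step identity gives $P_t^{(n)}=P^tP_0^{(n)}$. I do not anticipate a substantive obstacle; the only point worth checking carefully is the Markov reduction in step (ii), which follows by direct inspection of the update rule in \eqref{eq:Pauli-action}.
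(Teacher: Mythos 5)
Your proof is correct. The paper itself disposes of this lemma with a one-line citation to Lemma 5.2 of Harrow--Low \cite{HL09}; what you have written out is precisely the content of that cited fact, namely the lumping of the Pauli-string chain to its Hamming weight (valid because the update rule and the uniform choice of the pair $\{i,j\}$ depend on $S_t$ only through $|\supp(S_t)|$) together with the $9/15$, $6/15$ bookkeeping that reproduces the kernel $P$ of \eqref{weightchain}. Your transition probabilities and the holding probability all check out against \eqref{weightchain}, so this is a correct self-contained derivation rather than a genuinely different route.
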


\begin{proof}
The proof follows from the fact that 
$\Pr [|S_t| = l \big| |S_0| = k]
= P^t(k,l)$
which was shown in 
Lemma 5.2 of \cite{HL09}.
\end{proof}

We now prove \thmref{starmixing} which gives a sharp upper bound on $\|P^{(n)}_t\|_\ast$. Throughout this section we drop the superscript $(n)$. Moreover we use the notation $X_t := |S_t|$.

{Proof overview:}
The philosophy of our analysis is to consider an acceleration of the
chain $P$: a chain with transition matrix $Q$ which is the same as $P$
but moves faster. As mentioned in the introduction, previous work \cite{HL09, BF13-2}
considered a ``fully accelerated'' chain, but we will instead carefully
choose the amount of acceleration so that the transition probabilities
are affine functions of $x$.    This will allow an exact solution of
the dynamics of this partially accelerated chain using a method of
Kac~\cite{K47}, as we describe in Section~\ref{sec:exact-OU}.  
We then analyze how much time should $P$
wait in each step of its walk in order to simulate steps of $Q$. In
order to do this we need to prove bounds on how many times each site
of the Markov chain has been visited during the accelerated walk and
based on that we count how many steps the original chain should
wait. This analysis is demonstrated in Section \ref{sec:waitabc}. Along the way during the wait-time analysis we will further
modify the partially accelerated chain to run in continuous time, so
that in time $t$ we sample $t'$ from $\text{Pois}(t)$ (the Poisson
distribution) and move $t'$ steps.  This resulting chain is also
exactly solvable, and the solution turns out to be extremely simple, and exemplifies the connection of the accelerated walk with the well-known Ornstein-Uhlenbeck process (see Proposition \ref{prop:Ybound}). We need to analyze the error from moving to continuous time, which
turns out to be a straightforward analysis of the Poisson
distribution.

Now suppose that the accelerated chain goes through a sequence of transitions $Y_0, Y_1, \ldots, Y_s$.

Let $p(x) = P(x,x+1)$ and $q(x) = P(x, x-1)$. We first consider the chain $P$ conditioned on moving at every single step. This chain at site $x$ has probability of moving forward and backwards $\dfrac{p(x)}{p(x)+q(x)}$ and $\dfrac{q(x)}{p(x)+q(x)}$, respectively. We can compute these probabilities
\begin{eqnarray}
Q^a(x,x)  &=& 0,\nonumber \\
Q^a(x, x+1)  &=& \dfrac{3 (n-x)}{3n-2x-1},\nonumber \\
Q^a(x,x-1)  &=& \dfrac{x-1}{3n-2x-1}.
\end{eqnarray}
Such a chain is called accelerated.  The chain $Q^a$ was used in 
\cite{HL09,DJ10,BF13} but we will not use it in this paper.

Instead of an accelerated chain we now define a partially accelerated chain as:
\begin{eqnarray}
Q^w(x,x) &=& w(x),\nonumber \\
Q^{w}(x,x+1) &=&  (1-w(x)) \dfrac{3 (n-x)}{3n-2x-1},\nonumber \\
Q^w(x,x-1) &=& (1-w(x)) \dfrac{x-1}{3n-2x-1}.
\end{eqnarray}
for arbitrary probability value $w(x)$. Setting $w(x) = \frac{2x}{3n-1}$ the partially accelerated chain becomes affine:
\begin{eqnarray}
Q(x,x) &=& \frac{2x}{3n-1},\nonumber \\
Q(x,x+1) &=&  \dfrac{3 (n-x)}{3n-1},\nonumber \\
Q(x,x-1) &=&  \dfrac{x-1}{3n-1}.
\label{accelerated}
\end{eqnarray}
By ``affine'' we mean that the transition probabilities are degree-1 polynomials in $x$.
%
%
%
Let $X_0, X_1, \ldots$ be the steps of the Markov chain evolving according to the transition matrix $P$ and $Y_0, Y_1, \ldots$ be the Markov chain according to $Q$. We now describe a coupling between these two.

\subsubsection{Coupling between $X$ and $Y$ chains}
\label{sec:XYcoupling}

For $x < \frac 56 n$ let $\alpha(x) = 1- \frac{p(x) + q(x)}{1- w(x)} = 1 -
\frac{2x(3n-1)}{5n(n-1)}$. If $0 < x < \frac 56 n$, $0 < \alpha (x) < 1$. So for this range we can view $\alpha(x)$ as a probability.

For $x \geq \frac 56 n$, let $\beta (x)$ be the solution to the following equation
\be
p(x) + q(x) = 1 - w(x) + \beta (x) w(x) ( p(x) + q(x)).
\ee
We can solve for $\beta(x)$ to find
\be
\beta(x) = \frac 1 {w(x)} \frac {-\alpha(x)}{1 - \alpha(x)}
= \frac {2x(3n-1)-5n(n-1)}{4x^2}.
\ee
For $x\geq \frac 56 n$ we have $\alpha(x) < 0$, so from the first
expression for $\beta(x)$ we see that $\beta(x) > 0$.  From the
second expression for $\beta(x)$ we can calculate the upper bound $\beta(x) \leq 1/4 + \frac6 {5 n}$.

%

\begin{mdframed}

\begin{coupling}
\OldNormalFont{}
The following describes a coupling between $X_0, X_1, \ldots$ and
$Y_0, Y_1,\ldots$.  It takes as input an arbitrary $x\in [n]$.  We
write $A \leftarrow a$ to mean that we assign value $a$ to variable $A$.

\bit
\item Set $X_0\leftarrow x$ and $Y_0\leftarrow x$.
\item Sample $Y_1,Y_2,\ldots,$ according to the Markov chain $Q$.
\item Set $s\leftarrow 0$ and $t\leftarrow 0$.
\item Repeat the following steps.
\bit
\item If $\alpha(X_t)>0$ then\\
{\em In this case, the $X$ chain may move more slowly than the $Y$
  chain, so one step of the $Y$ chain corresponds to one or more steps
  of the $X$ chain.}
\begin{enumerate}
\item With probability $1-\alpha(X_t)$, set $s\leftarrow s+1$.
\item Set $t\leftarrow t+1$.
\item Set $X_t \leftarrow Y_s$.
\end{enumerate}
\item Else\\
{\em Otherwise, there is the possibility of advancing the $X$ chain
  while the $Y$ chain waits.  This is only possible if $Y_s=Y_{s+1}$.}
\begin{enumerate}
\item If $Y_s \neq Y_{s+1}$ then 
\begin{enumerate}
\item[a.] Set $s \leftarrow s+1$.
\item[b.] Set $t \leftarrow t+1$.
\item [c.] Set $X_t \leftarrow Y_s$.
\end{enumerate}
\item Else
\begin{enumerate}
\item [a.] With probability $\beta(X_t)$, set $s \leftarrow s+1$
\item [b.] Otherwise (with probability $1-\beta(X_t)$)
\begin{enumerate}
\item $t \leftarrow t+1$
\item $X_t \leftarrow Y_t$
\end{enumerate}
\end{enumerate}
\end{enumerate}
\eit
\eit

\label{xy-coupling}
\end{coupling}
\end{mdframed}

\begin{definition}\OldNormalFont{}
For a tuple $L$ and a number $x$ let $L_{\text{left}(x)}$ be the same as $L$ except that we remove elements which are $> x$. Similarly define $L_{\text{right}(x)}$ to be the tuple resulted from removing the elements that are $< x$.
\end{definition}

\begin{theorem}\OldNormalFont{}
Assume $X_0 = Y_0$ and fix $s>0$, and let ${\cal Y} := (Y_0, Y_1, \ldots, Y_s)$. Define 
\be
S := \{ i : {\cal Y}_{\text{right}(\frac 56n)} [i] ={\cal Y}_{\text{right}(\frac 56n)} [i+1]\}.
\ee
Let
\be
T_{\text{left}} (\mathcal{Y}) = \sum_{y \in {\cal Y}_{ \text{left} (\frac 56 n)}} \mathsf{Geo}({\alpha_{y}})
\ee
and
\be
T_{\text{right}} (\mathcal{Y}) = \sum_{y \in S} \mathsf{Bern} (\beta(y))
\label{eq:betabern}
\ee
then the process in Coupling \ref{xy-coupling} satisfies
\be
Y_s = X_{s + T_{\text{left}}(\mathcal{Y}) - T_{\text{right}}(\mathcal{Y})}
\ee
\end{theorem}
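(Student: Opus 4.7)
My plan is to prove this by careful bookkeeping of the counters $s$ and $t$ throughout the execution of the coupling, matching each category of iteration with a summand of $T_{\text{left}}$ or $T_{\text{right}}$. Concretely, I would first classify each iteration of the Repeat loop by which of $s$ and $t$ is incremented, obtaining five types: (L1) left branch with the $s$-increment chosen (both $s,t$ advance); (L2) left branch with no $s$-increment (only $t$); (R-A) right branch with $Y_s \neq Y_{s+1}$ (both); (R-B1) right branch with $Y_s=Y_{s+1}$ and the skip chosen (only $s$); and (R-B2) right branch with $Y_s=Y_{s+1}$ and no skip (both). From the increment rules one reads off the invariant
\[ t - s \;=\; \#(\text{L2 iterations}) \;-\; \#(\text{R-B1 iterations}) \]
which holds after every iteration of the loop.

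The next step is to identify each side of this identity with the corresponding random variables in the theorem statement. For each left-region position, i.e.\ each $y = Y_j$ with $y < \tfrac{5}{6}n$, the $s$-counter is frozen at $j$ through a sequence of i.i.d.\ Bernoulli trials with success probability $1-\alpha(y)$ (success being the rare ``advance $s$'' event); the number of L2 failures contributed equals the corresponding $\mathsf{Geo}(\alpha_y)$ summand in $T_{\text{left}}$. For each equal consecutive pair $(Y_i, Y_{i+1}) = (y,y)$ in the right-region subsequence, the coupling performs a single $\mathsf{Bern}(\beta(y))$ trial whose success is exactly an R-B1 iteration; summing these reproduces $T_{\text{right}}$. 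I would induct on $s$ to combine these two identifications with the invariant above, and conclude using the fact that every branch of the coupling ends with an assignment $X_t \leftarrow Y_s$, so that $X_{t_s} = Y_s$ whenever $t_s = s + T_{\text{left}} - T_{\text{right}}$.

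The hard part will be establishing the right-region identification cleanly. Specifically, one has to verify that within the ``Else'' branch, when the coupling loops over equal consecutive $Y$ values, the dynamics can in fact be summarized by one Bernoulli per wait-pair (rather than, say, a geometric random variable arising from repeated B1/B2 attempts): this hinges on the specific affine form of $\beta(x)$ which was chosen precisely so that the induced marginal transition for $X$ equals $P$. Along the way it is convenient to verify the marginal-distribution claim $X_t \sim P$ in both regions, using direct algebra in the left region and a short geometric-series summation in the right region, both of which reduce via the explicit formulas for $\alpha(x), \beta(x), w(x)$ to the identity $1 - w(x)\beta(x) = \frac{5n(n-1)}{2x(3n-1)}$.
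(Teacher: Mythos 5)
Your proof is correct and follows essentially the same route as the paper's: an induction over the coupling in which each left-region visit contributes a geometric number of extra $t$-increments and each right-region wait-pair contributes a single $\mathsf{Bern}(\beta)$ skip, with the marginal transition law of $X$ verified to be $P$ from the defining relations for $\alpha$ and $\beta$ (your identity $1-w(x)\beta(x)=\frac{5n(n-1)}{2x(3n-1)}$ is exactly the paper's defining equation $(p+q)(1-\beta w)=1-w$ rearranged, and it is what rules out the geometric-per-wait-pair alternative you flag). One small remark: your classification of the no-skip right-branch case (R-B2) as advancing both counters is the reading required by the theorem statement and by the paper's own inductive argument, even though the pseudocode of the coupling as printed increments only $t$ there; your reading is the intended one.
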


\begin{proof}
We prove this by induction on Coupling~\ref{xy-coupling}. For the
base case we have $X_0 = Y_0$. Now suppose for $s > 0$, $Y_s = X_{s +
  T_{\text{left}} - T_{\text{right}}}$. Let $Y_{s+1}$ the
$s+1$-th step. There are two possibilities: if $Y_s <
\frac 56 n$, then $\alpha(Y_s) > 0$. In this case, $s$ will be
incremented once while $t$ may be incremented many times.  The number
of times $t$ will advance is distributed according to $\mathsf{Geo}(\alpha(Y_s))$.
Let $X'=Y_{s+1}$, i.e.~ the location on the chain after one step of $Q$. We show that $X'$ is distributed according to $X_{s + T_{\text{left}} - T_{\text{right}} + \mathsf{Geo}(\alpha(Y_s))}$. To see this note:
\bea
\Pr[X' = x | X_{s + T_{\text{left}} - T_{\text{right}}} = x] &=& \alpha(x) + (1-\alpha(x)) w(x) = 1 - p(x) - q(x) = P(x,x)\nonumber \\
\Pr[X' = x+1 | X_{s + T_{\text{left}} - T_{\text{right}}} = x] &=& (1-\alpha(x)) (1-w(x)) \frac{p(x)}{p(x) + q(x)} =p(x) = P(x,x+1)\nonumber \\
\Pr[X' = x-1 | X_{s + T_{\text{left}} - T_{\text{right}}} = x] &=& (1-\alpha(x)) (1-w(x)) \frac{q(x)}{p(x) + q(x)} =q(x) = P(x,x-1).
\eea
Now if $Y_s \geq \frac 56 n$ then if $Y_{s+1} \neq Y_s$, $X_{s +
  T_{\text{left}} - T_{\text{right}} + 1} = Y_{s+1}$. But if $Y_{s+1}
= Y_s$, then with probability $\beta(Y_s)$ the $X$ process skips this,
ie, $Y_s = X_{s + 1 +  T_{\text{left}} - T_{\text{right}} - 1}$. Let
$x \geq \frac 56 n$. Let $E_+$ be the event that $X_{s + T_{\text{left}} - T_{\text{right}} + 1} = x+1$ conditioned on $X_{s + T_{\text{left}} - T_{\text{right}}} = x$. Then
\be
\Pr (E_+) = (1-w(x)) \frac {p(x)}{p(x) + q(x)} + \beta(x) w(x) \Pr (E_+)
\ee
which implies that $\Pr (E_+) = P(x, x+1)$. Similarly if we define $E_-$ to be the event that $X_{s + T_{\text{left}} - T_{\text{right}} + 1} = x-1$ conditioned on $X_{s + T_{\text{left}} - T_{\text{right}}} = x$, then
\be
\Pr (E_-) = (1-w(x)) \frac {q(x)}{p(x) + q(x)} + \beta(x) w(x) \Pr (E_-)
\ee
which implies that $\Pr (E_-) = P(x,x-1)$. Using this, if $E_0$ is the event that $X_{s + T_{\text{left}} - T_{\text{right}} + 1} = x$ conditioned on $X_{s + T_{\text{left}} - T_{\text{right}}} = x$ then $\Pr[E_0] = \Pr [(E_+ \cup E_-)^{\mathsf{c}}] = P(x,x)$.

\end{proof}



We need the following two theorems which basically assert that 1) the wait time during the accelerated process is not too long and 2) the accelerated chain mixes after $O(n \ln^2 n)$ steps in the $\|\cdot \|_\ast$ norm.
\begin{theorem}[Wait-time bound]\label{thm:main-wait} \OldNormalFont{}
Let $Y_0, Y_1, \ldots, Y_s$ be $s$ steps of the accelerated Markov
chain defined in \eqref{accelerated} for $Y_0 \sim \Bin
(n,1/2)$, and $W_s$ be the number of steps Markov chain $X_0, X_1,
\ldots$ has waited after $s$ steps of the accelerated chain. Then for
$s = O(n \ln n)$, and for any constant $\alpha > 0$ there exists a
constant $c$ such that 
\be\Pr[T_{\text{left}}(s) \geq c n \ln^2 n] \leq 2^{-n}\cdot n^{-\alpha}.\ee
\end{theorem}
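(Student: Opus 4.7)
The plan is to decompose the wait time by the site visited. Using the visit counts $\rho_x$ from the notation table,
\[
T_{\text{left}}(s) \;=\; \sum_{x=1}^{\lfloor 5n/6 \rfloor} \sum_{j=1}^{\rho_x} G_j^{(x)}, \qquad G_j^{(x)} \sim \mathsf{Geo}(\alpha(x)) \text{ i.i.d.},
\]
with $\alpha(x) = 1 - \tfrac{2x(3n-1)}{5n(n-1)}$. The per-visit wait has mean $1/\alpha(x)$ of constant order for small $x$ but blowing up like $\Theta(n/(5n/6 - x))$ as $x \uparrow 5n/6$. The total wait is therefore driven by a competition between many visits to small sites (each cheap) and rare visits near $5n/6$ (each expensive), and the main work is to bound each regime with failure probability $\leq 2^{-n} n^{-\alpha}$.

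The first step will be to control the visit counts $\rho_x$. Since $Q$ has an affine drift towards $\nu = 3n/4$, the exact Kac-type solution to its dynamics (to be developed in Section \ref{sec:waitabc}) expresses $\Pr[Y_j = x]$ in closed form through Krawtchouk polynomials, and after the initial transient the distribution concentrates on an $O(\sqrt{n})$-neighborhood of $\nu_j = Y_0 e^{-j/\nu} + \nu(1 - e^{-j/\nu})$. In particular, for $x \leq x(0) := \nu/\beta$ with $\beta = 8(4+c)\ln n$, the single-step probability $\Pr[Y_j = x]$ is exponentially small in $n$ once $\nu_j$ leaves a window of width $O(\sqrt{n \ln n})$ around $x$. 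A moment-generating-function computation for $\rho_x$, exploiting the strong drift back towards $\nu$ after each visit, will yield $\Pr[A^c] \leq 2^{-n} n^{-\alpha}$ for the event $A = \bigcap_{1 \leq x \leq x(0)} \{\rho_x \leq \beta x\}$ after a union bound over at most $x(0)$ values of $x$. For larger $x \in (x(0), 5n/6]$, the crude bound $\rho_x \leq s = O(n \ln n)$ is adequate except very close to the boundary $5n/6$, where the tighter estimate $\rho_x = O(s \pi(x))$ coming directly from the exact solution is needed.

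Given control over the $\rho_x$'s, a Bernstein-style tail bound for sums of independent geometric random variables gives
\[
\Pr\!\left[\sum_{j=1}^{\rho_x} G_j^{(x)} \geq \frac{\rho_x}{\alpha(x)} + \lambda_x\right]
\;\leq\; \exp\!\left(-\Omega\!\left(\min\!\left(\lambda_x \alpha(x),\; \tfrac{\lambda_x^2 \alpha(x)^2}{\rho_x}\right)\right)\right).
\]
Choosing $\lambda_x$ so that $\sum_x \lambda_x = O(n \ln^2 n)$ and each per-site failure probability is $\leq 2^{-n} n^{-\alpha - 1}$, a union bound over $x \in \{1,\ldots,\lfloor 5n/6\rfloor\}$ combined with the estimate for $\Pr[A^c]$ closes the argument, after we verify that $\sum_x \rho_x/\alpha(x)$ is itself $O(n \ln^2 n)$ on the event $A$.

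The hard part will be achieving the extremely small failure probability $2^{-n} n^{-\alpha}$ in the $\rho_x$-concentration step: standard Markov-chain concentration tools give only polynomial or moderate exponential tail bounds, so one must work with the Kac/Krawtchouk exact representation of $Q^j$ in order to beat the factor of $2^n$ hidden in the $\Bin(n,1/2)$ initial distribution. A secondary source of delicacy is the region $x \to 5n/6$, where $1/\alpha(x)$ diverges: any naive count of visits there multiplied by the worst-case wait would already exceed the target $cn \ln^2 n$, so the sharp visit-count bound $\rho_x = O(s \pi(x))$ via the stationary density of $Q$ is essential, not an optional refinement.
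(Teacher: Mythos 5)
Your decomposition starts from an inverted picture of where the wait time comes from, and this derails the whole plan. The per-visit wait at site $x$ is \emph{not} $\mathsf{Geo}(\alpha(x))$ with mean $1/\alpha(x)$; it is geometric with success probability $1-\alpha(x) = \tfrac{2x(3n-1)}{5n(n-1)} = \Theta(x/n)$, hence mean $\Theta(n/x)$. (Check it directly against the original chain: $P(1,1) = 1 - \tfrac{6}{5n}$, so the chain at Hamming weight $1$ sits still for $\approx \tfrac{5n}{6}$ steps, while the accelerated chain moves almost surely; near $x = \tfrac56 n$ the two chains move at the same rate and there is no waiting at all, which is precisely why $\alpha(\tfrac56 n)=0$ and the coupling changes regime there.) So small sites are the expensive ones and the boundary $x\uparrow \tfrac56 n$ is harmless: the at most $s=O(n\ln n)$ visits to sites in $(x(0), \tfrac56 n]$ contribute $O(n\ln n \cdot \ln n)$ in total with no refined visit-count estimate needed, and your ``essential'' bound $\rho_x = O(s\pi(x))$ near $\tfrac56 n$ is addressing a non-problem. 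Conversely, the regime you dismiss as ``many visits to small sites (each cheap)'' is exactly where the theorem can fail, and controlling it is the entire content of the proof.

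The second, related gap is in how you propose to recover the factor $2^{-n}$. The correct mechanism is not concentration of the marginals $\Pr[Y_j = x]$ (whether via Krawtchouk polynomials or otherwise), because the dangerous event is a deep excursion of the \emph{trajectory} into the small-$x$ region starting from an atypically small $Y_0$. The argument that works conditions on $Y_0$ and on the minimum site visited $M_s$, uses the reversibility bound $\Pr[Y_j = y \mid Y_0 = z] \le \pi_y/\pi_z$ to get $\Pr[M_s \le a \mid Y_0 = z] \le s\binom{n}{a}3^a/\bigl(\binom{n}{z}3^z\bigr)$, bounds the visit counts $N_x \le \beta x$ for $x \le x(0)$ (your Proposition-style drift estimate is the right tool here, and this part of your plan does survive), and then applies a tail bound for sums of geometrics whose decay rate is governed by the \emph{smallest} success probability $p^\ast = \Theta(M_s/n)$: the resulting $(1-p^\ast)^{\Omega(n\ln^2 n)} = e^{-\Omega(M_s \ln^2 n)}$ is exactly what beats the weight $\binom{n}{M_s}3^{M_s}$ when you sum over $M_s$, yielding a bound of the form $\binom{n}{Y_0}^{-1}\poly(n)^{-1}$ that averages over $Y_0\sim\Bin(n,1/2)$ to $2^{-n}n^{-\alpha}$. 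Without tracking $M_s$ (which your proposal never mentions), a union bound over sites with per-site failure probability $2^{-n}n^{-\alpha-1}$ is not achievable: for a walk started at $Y_0 = 1$ the event of lingering at small sites has probability only polynomially small, and it is the $\binom{n}{Y_0}^{-1}$ weighting, not a uniform $2^{-n}$ tail, that saves the theorem.
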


\begin{theorem}[Accelerated-chain mixing] \OldNormalFont{}
If $s \geq 3 n \ln n$ then
\be 
\|Q_s\|_\ast \leq \frac{27}{2^n + 1}(1+ \frac 1 {\poly(n)}).
 \ee
\label{thm:acc-mix}
\end{theorem}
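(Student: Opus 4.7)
The key observation is that the transition probabilities of $Q$ in \eqref{accelerated} are affine in the state, which makes the chain exactly solvable. After the shift $Z := Y - 1 \in \{0,\ldots,n-1\}$, the continuous-time chain $\tilde Q_t := e^{t(Q-I)}$ has exactly the generator of $n-1$ independent two-state Markov chains $\xi_1,\ldots,\xi_{n-1} \in \{0,1\}$, each with rates $\alpha = 3/(3n-1)$ for $0\to 1$ and $\beta = 1/(3n-1)$ for $1\to 0$, under the identification $Z = \sum_i \xi_i$. Any initial distribution on $Y_0 \in \{1,\ldots,n\}$ lifts by exchangeability to a symmetric distribution on $\{0,1\}^{n-1}$ supported on configurations with $|\xi(0)| = Y_0 - 1$, and the dynamics preserves this symmetry. (Equivalently, the chain is diagonalized by Krawtchouk polynomials on the symmetric subspace.)

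Exploiting the product structure, $\|\tilde Q_s\|_\ast = \E[3^{-Y_s}]$ factors cleanly. Define $f_b(s) := \E[3^{-\xi(s)} \mid \xi(0) = b]$ for $b\in\{0,1\}$; solving the two-state Kolmogorov equations gives
\begin{equation*}
f_0(s) = \tfrac12(1+\lambda), \qquad f_1(s) = \tfrac16(3-\lambda), \qquad \lambda := e^{-4s/(3n-1)}.
\end{equation*}
Conditional on $\xi(0)$ the particles are independent, so $\E[3^{-Z_s}\mid \xi(0)] = f_1^{|\xi(0)|} f_0^{(n-1)-|\xi(0)|}$, which depends only on $|\xi(0)| = Y_0 - 1$. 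Averaging over $Y_0 \sim P_0^{(n)}$, with $P_0^{(n)}(j) = \binom{n}{j}/(2^n-1)$, and using the identity $\sum_{j=1}^n \binom{n}{j} f_1^j f_0^{n-j} = (f_0+f_1)^n - f_0^n$, I obtain
\begin{equation*}
\|\tilde Q_s\|_\ast = \frac{(f_0+f_1)^n - f_0^n}{3 f_1 (2^n-1)}.
\end{equation*}
For $s \geq 3n\ln n$ we have $\lambda \leq n^{-4+o(1)}$, so $f_0+f_1 = 1+\lambda/3$, $(f_0+f_1)^n = 1 + O(n^{-3})$, $f_0^n = 2^{-n}(1+O(n^{-3}))$, and $f_1 = \tfrac12(1+O(n^{-4}))$, giving $\|\tilde Q_s\|_\ast = \tfrac{2}{3\cdot 2^n}(1+O(n^{-3}))$, comfortably below the target $27/(2^n+1)$.

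The main obstacle is that the product decomposition yields independence only in continuous time: in the discrete-time chain $Q$ at most one particle moves per step, so the $\xi_i(s)$'s acquire residual dependence. I would bridge this via Poissonization, $\tilde Q_s = \E_{N\sim\Pois(s)}[Q^N]$, combined with Poisson concentration $\Pr[|N - s| \geq c\sqrt{s \ln n}] = n^{-\Omega(1)}$ and the observation that on the symmetric subspace $Q$ has eigenvalues $\mu_k = 1 - 4k/(3n-1)$ for $k = 0,\ldots,n-1$, with $|\mu_k|^s$ uniformly bounded by $\mu_1^s \leq n^{-4+o(1)}$ once $s \geq 3n\ln n$. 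The resulting discrete-to-continuous discrepancy on $\|Q_s\|_\ast$ is of lower order than our estimate and is absorbed by the factor-of-$40$ slack between the continuous-time bound $\tfrac{2}{3}\cdot 2^{-n}$ and the claimed bound $27\cdot 2^{-n}$. An alternative route is to avoid Poissonization entirely by working directly with the Krawtchouk expansion of $Q$ on the symmetric subspace, since the discrete and continuous eigenvalues agree to relative error $O(k^2\ln n /n)$, negligible for the small-$k$ modes that dominate the $\ast$-norm.
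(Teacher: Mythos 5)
Your continuous-time computation is correct and quite clean: the generator $Q-I$ of the affine chain \eqref{accelerated} is exactly the Hamming-weight lumping of $n-1$ independent two-state chains with rates $3/(3n-1)$ and $1/(3n-1)$, your $f_0,f_1$ are right, and the resulting closed form $\bigl((f_0+f_1)^n-f_0^n\bigr)/\bigl(3f_1(2^n-1)\bigr)$ does evaluate to $\tfrac{2}{3\cdot 2^n}(1+O(n^{-3}))$ for $s\geq 3n\ln n$. This is the same eigenstructure the paper extracts via Kac's generating-function ODE (your product chain \emph{is} the Krawtchouk diagonalization), so the first half of your argument is a genuinely nicer derivation of the same object. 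The paper itself uses exactly your decoupled-bits-plus-Poissonization device (Definition \ref{def:decoupled} and Proposition \ref{prop:Ybound}), but only in the wait-time analysis, where polynomial losses are affordable.

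The gap is the discrete-to-continuous bridge, and it is not a technicality. The target is a bound of size $\Theta(2^{-n})$, so any additive error must be exponentially small; Poisson concentration only makes the bad event polynomially unlikely, and on that event $\|Q_N\|_\ast$ can be as large as $\Omega((2/3)^n)\gg 2^{-n}$ (e.g.\ at small $N$), so the tail contribution alone can swamp the answer. The only direction the triangle inequality gives for free is $\|\tilde Q_s\|_\ast=\E_{N\sim\Pois(s)}\|Q_N\|_\ast\geq \Pr[N=s]\,\|Q_s\|_\ast$, which costs a factor $\approx e\sqrt{s}=e\sqrt{3n\ln n}$ --- this exceeds your factor-of-$40$ slack already for modest $n$. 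Going the other way would require monotonicity of $N\mapsto\|Q_N\|_\ast$, which you have not established (and which is not immediate since $Q$ has negative eigenvalues). Your uniform bound $|\mu_k|^s\leq\mu_1^s\leq n^{-4+o(1)}$ also does not control the discrepancy by itself: the difference $Q_s-\tilde Q_s$ is $\sum_{m\geq1}\alpha_m(\mu_m^s-e^{s(\mu_m-1)})x^{(m)}$, and without bounds on the coefficients $\alpha_m$ and on $\|x^{(m)}\|_\ast$ (which a priori carry factors like $\sqrt{\binom{n-1}{m}3^m}$, up to $2^n$ in total), multiplying by $n^{-4}$ proves nothing. What saves the paper is that both $\mu_m^s$ and $e^{s(\mu_m-1)}$ decay like $\lambda^m$ \emph{geometrically in $m$}, which beats the growth of $\alpha_m^2\approx\binom{n-1}{m}3^m\cdot\poly(n)$ term by term --- but establishing those coefficient bounds (via the Krawtchouk orthogonality and symmetry relations and the overlap integrals $\int_0^{1/3}g_m$) is the bulk of the paper's proof, and it is precisely the content your "alternative route" sentence defers. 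So either carry out that coefficient computation, or prove monotonicity of $\|Q_N\|_\ast$ in $N$; as written, the bridge does not close.
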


Also, the following theorem combines Theorems \ref{thm:main-wait} and \ref{thm:acc-mix} to argue
that the original Markov chain mixes rapidly in the $\|\cdot\|_{*}$ norm.
\begin{proposition}
\torestate{\OldNormalFont{}
Let 
\be
\|f\|_\text{\Hsquare} := \sum_{k=1}^n |f(k)|\frac {3n} {k 3^k}
\ee
For any $t_0 \leq t_1 \leq t_2$:
\be
\E_{\tau} \|P_\tau\|_* \leq  \frac{t_0}{2} \cdot
\Pr[T_{\text{left}}(t_0) \geq t_1-t_0] + 
\frac 1T \sum_{t_0 \leq s \leq  4 t_2}\|Q_s\|_{\text{\Hsquare}} + 6 t_2  \frac{1}{1.4^{t_2}}
\ee
where, $T= t_2 - t_1+1$.
\label{prop:combining}}
\end{proposition}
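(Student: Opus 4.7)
The plan is to exploit the coupling of Coupling~\ref{xy-coupling}, which gives the pathwise identity $X_{\tau(s)} = Y_s$ with $\tau(s) = s + T_{\text{left}}(s) - T_{\text{right}}(s)$. Defining $s(\tau) := \max\{s : \tau(s) \leq \tau\}$, we have $X_\tau = Y_{s(\tau)}$ for every $\tau$, so $\|P_\tau\|_* = \E[3^{-X_\tau}] = \E[3^{-Y_{s(\tau)}}]$. The overall idea is to rewrite the sum over real times $\tau$ as a sum over accelerated indices $s$ weighted by a wait time $w_s := \tau(s+1)-\tau(s)$, and then use the heavier $\|\cdot\|_{\text{\Hsquare}}$ norm to absorb that wait weight.

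First I would interchange the order of summation. Taking $\tau$ uniform on $[t_1,t_2]$ and setting $I_s := \bigl|[\tau(s),\tau(s{+}1){-}1]\cap[t_1,t_2]\bigr|$, the number of real-time steps assigned to accelerated index $s$, one gets
\[
\E_\tau \|P_\tau\|_* \;=\; \frac{1}{T}\, \E\!\left[\sum_{\tau=t_1}^{t_2} 3^{-X_\tau}\right] \;=\; \frac{1}{T}\sum_{s\geq 0} \E\!\left[I_s\cdot 3^{-Y_s}\right],
\]
with natural majorization $I_s \leq w_s$ (and $I_s=0$ unless $\tau(s)\leq t_2$ and $\tau(s{+}1)>t_1$).

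Next I would partition the probability space. Let $A := \{T_{\text{left}}(t_0) \geq t_1-t_0\}$: on $A$ we have $\tau(t_0) \geq t_1 - T_{\text{right}}(t_0)$, typically forcing $s(t_1) < t_0$, whereas on $A^c$ we have $s(t_1) \geq t_0$. Let $B := \{s(t_2) > 4 t_2\}$, which in view of $\tau(s) \geq s - T_{\text{right}}(s)$ requires $T_{\text{right}}(4t_2) \geq 3 t_2$. On the good event $G := A^c \cap B^c$, every $s$ that contributes lies in $[t_0, 4t_2]$, matching the summation range in the proposition. The central computation on $G$ is the pointwise bound $\E[w_s \cdot 3^{-Y_s}] \leq \|Q_s\|_{\text{\Hsquare}}$: conditioning on $Y_s = y < \tfrac{5}{6}n$, $w_s$ is geometric with success probability $1-\alpha(y) = \Theta(y/n)$ so $\E[w_s\mid Y_s=y] \approx 5n/(6y)$, while for $y\geq \tfrac{5}{6}n$ the case-2 analysis from Section~\ref{sec:XYcoupling} gives $\E[w_s\mid Y_s=y] = O(1)$. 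Summing $Q_s(y)\cdot 3^{-y}$ against these conditional means reproduces exactly the $3n/(y\cdot 3^y)$ weight defining $\|\cdot\|_{\text{\Hsquare}}$ (with a leading constant that is at most $1$), giving the middle term.

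The bad events must be absorbed into the remaining two summands. On $A$, the relevant $s$-indices can lie below $t_0$ where the wait majorization above no longer produces a $\|Q_s\|_{\text{\Hsquare}}$ bound; I would instead use $3^{-Y_s}\leq 1$ together with $\sum_s I_s\leq T$ and control the residual sum by a loose count of at most $\tfrac{t_0}{2}$ accelerated positions with nonzero contributions to $[t_1,t_2]$, yielding the first term $\tfrac{t_0}{2}\Pr[T_{\text{left}}(t_0)\geq t_1-t_0]$. The event $B$ is handled by a straightforward Chernoff estimate: by \eqref{eq:betabern}, $T_{\text{right}}(4t_2)$ is stochastically dominated by a binomial with parameter $\beta^*\leq \tfrac{1}{4}+o(1)$, and $\Pr[T_{\text{right}}(4t_2)\geq 3 t_2]$ decays geometrically at rate $1/1.4$, producing the correction $6t_2\cdot 1.4^{-t_2}$. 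The main obstacle I anticipate is carrying out the wait-time expectation conditional on $Y_s$ cleanly enough across both regimes (heavy-tailed geometric for small $y$ and mixed Bernoulli/geometric from case~2 for large $y$), and verifying that the resulting constant is absorbed by the $3n/(y\cdot 3^y)$ weight so that the coefficient of $\|Q_s\|_{\text{\Hsquare}}$ in the final bound is exactly~$1$.
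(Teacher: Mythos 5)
Your proposal is correct and follows essentially the same route as the paper: decompose $\Pr[X_\tau=k]$ over accelerated-chain indices, split into $s<t_0$ (controlled by the wait-time tail $\Pr[T_{\text{left}}(t_0)\geq t_1-t_0]$), $t_0\leq s\leq 4t_2$ (where $\E[W_s-W_{s-1}\mid Y_s=k]\leq 3n/k$ converts the $\|\cdot\|_*$ weight into the $\|\cdot\|_{\text{\Hsquare}}$ weight with coefficient $1$), and $s>4t_2$ (binomial domination of $T_{\text{right}}$ giving the $1.4^{-t_2}$ tail). The only cosmetic difference is the provenance of the factor $\tfrac12$ in the first term, which in the paper comes from $\|(1,\dots,1)\|_*\leq\tfrac12$ applied to a uniform-in-$k$ bound of $t_0\cdot\Pr[T_{\text{left}}(t_0)\geq t_1-t_0]$, not from counting $t_0/2$ positions.
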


\begin{proof}[\OldNormalFont{} Proof of \thmref{starmixing}]
We need to find suitable values for $t_0, t_1, t_2$. Let $t_0 = 3 n \ln n$ so that $\max_{s \in [t_0, t_2]}\|Q_s\|_{\text{\Hsquare}} \leq \frac{27}{2^n + 1}(1+ \frac 1 {\poly(n)})$ in Proposition \ref{prop:combining}. Next, choose $c$ to be large enough so that (using Theorem \ref{thm:main-wait}) if $t_1 = c  n \ln^2 n$
\be
\Pr[T_{\text{left}}(t_0) \geq t_1-t_0] \leq \frac{1}{2^n +1} \frac1 {n^3}.
\ee
Finally, let $c' > c$ be any constant and choose $t_2 =c' t_1$. Using Theorem \ref{thm:acc-mix} we conclude that:
\be
\frac 1 T \sum_{\tau = t_1}^{t_2} \|P_\tau\|_\ast \leq \frac{28}{2^n+1}.
\label{eq:averageP} 
\ee
This implies that there exists a value $t_1 \leq t^\ast \leq t_2$ such that 
\be
\|P_{t^\ast}\|_\ast \leq  \frac 1 T \sum_{\tau = t_1}^{t_2} \|P_\tau\|_\ast \leq \frac{28}{2^n+1}.
\ee
Since $t^\ast$ is related to $n \ln^2 n$ by a constant, this implies the proof.


\end{proof}

It remains to prove Theorems~\ref{thm:main-wait} and \ref{thm:acc-mix}
and \propref{combining}.  We prove \thmref{main-wait} in Sections
\ref{sec:waitabc} and \ref{sec:Ornsteininaction}, \thmref{acc-mix} in
Sections \ref{sec:exact-OU} and \ref{sec:exact-Q}, and \propref{combining} in \secref{propcombining}.


\subsubsection{Wait-time analysis}
\label{sec:waitabc}

In this section we prove Theorem \ref{thm:main-wait}.
Before getting to the proof we need some preliminaries. Sites with low
Hamming weight have the largest wait times. Hence, intuitively, we
want to say that during the accelerated walk, these sites are not hit
so often. More formally, let $N_x = \sum_{\tau=1}^s I   \{ Y_\tau \leq x
\}$ and let $\beta>1$.  
\begin{proposition}\OldNormalFont{}
Let $\nu=3/4 n$. For $x \leq \nu/ \beta$, $\Pr [N_x \geq \beta x ] \leq s^{3/2} e \cdot
e^{- \frac \beta 8 x}$. 
\label{prop:hitlow}
\end{proposition}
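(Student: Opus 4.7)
The strategy will exploit the strong upward drift of the accelerated chain $Q$ inside the low region: for $1 \leq y \leq x \leq \nu/\beta$, one has $Q(y,y+1) = 3(n-y)/(3n-1) \geq 1 - O(1/\beta)$ and $Q(y,y-1) = (y-1)/(3n-1) \leq 1/(4\beta)$, so every visit to $\{1,\ldots,x\}$ is escaped upward in essentially one step. Moreover, from $y=x+1$ the probability of descending back to $x$ is at most $x/(3n-1) \leq 1/(4\beta)$, so re-entries into the low region are also rare. Intuitively, $N_x$ is dominated by the number of $x+1 \to x$ transitions, each of which has probability $O(1/\beta)$ per visit to $x+1$.

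The plan is to decompose the trajectory into excursions: maximal runs of consecutive times $\tau$ with $Y_\tau \leq x$. Writing $L_1,\ldots,L_M$ for the excursion lengths and $M$ for their number, $N_x = \sum_{i=1}^{M} L_i$. Under the initial distribution $Y_0 \sim \Bin(n,1/2)$ inherited from Theorem \ref{thm:main-wait}, standard binomial concentration gives $Y_0 > x$ with probability $1 - e^{-\Omega(n)}$, so with overwhelming probability every excursion enters the low region at $y=x$ (coming down from $x+1$). By the strong Markov property applied at each entry into the low region, the excursion lengths $L_1,\ldots,L_M$ are then i.i.d., distributed as the hitting time of $x+1$ for the birth-death walk on $\{1,\ldots,x+1\}$ started at $x$. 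Since $Q(x,x+1) \geq 1 - 3/(4\beta)$, a direct comparison with a geometric random variable yields $\Pr[L_1 \geq k] \leq (3/(4\beta))^{k-1}$, and hence $\E[\exp(L_1/8)] \leq 2$ whenever $\beta$ is a sufficiently large constant times $\ln n$.

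Combining these ingredients through the Markov inequality on $\exp(N_x/8)$ gives
\[
\Pr[N_x \geq \beta x] \;\leq\; e^{-\beta x/8}\, \E\!\left[\prod_{i=1}^{M} \E\!\left[e^{L_i/8}\,\big|\, M\right]\right] \;\leq\; e^{-\beta x/8}\, \E[2^M],
\]
and it remains to control $\E[2^M]$ by a polynomial in $s$. The number of excursions $M$ is stochastically dominated by $\Bin(V_{x+1}, 1/(4\beta))$, where $V_{x+1} \leq s$ counts the visits to $x+1$, and a Chernoff-type calculation then yields $\E[2^M] = O(s^{3/2})$ for the parameter ranges of interest. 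The main obstacle will be getting this combinatorial bookkeeping tight enough to produce precisely the $s^{3/2}$ prefactor (rather than a larger power of $s$) and the exact constant $1/8$ in the exponent; the former requires a Gaussian-type concentration for $M$ around its small mean, while the latter reflects the optimal rate obtained by balancing the geometric tail parameter $3/(4\beta)$ against the exponential weight. The low-probability event $\{Y_0 \leq x\}$ contributes only an additive $e^{-\Omega(n)}$ correction, which is absorbed into the $e^{-\beta x/8}$ bound since $\beta x \leq \nu = 3n/4$.
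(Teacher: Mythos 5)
Your excursion decomposition is a genuinely different route from the paper's (which reduces to the worst start $Y_0=1$, writes $\Pr[N_x\geq \beta x]\leq \sum_{\tau=\beta x}^{s}\Pr[Y_\tau\leq x]$, and bounds each summand with a Gaussian-type tail for the accelerated chain obtained by Poissonization), but as written it has a fatal gap at the step $\E[2^M]=O(s^{3/2})$. Your own domination gives $M\preceq \Bin(V_{x+1},1/(4\beta))$ with only the trivial bound $V_{x+1}\leq s$, so the best you can conclude is $\E[2^M]\leq (1+1/(4\beta))^{s}\approx e^{s/(4\beta)}$; with $s=\Theta(n\ln n)$ and $\beta=\Theta(\ln n)$ this is $e^{\Theta(n)}$, nowhere near polynomial in $s$. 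What is actually true is that $M=1$ with overwhelming probability, because once the chain leaves the low region it equilibrates near $\frac34 n$ and essentially never returns to $x\leq \nu/\beta$ — but proving that the chain does not revisit (equivalently, bounding $V_{x+1}$ or $\Pr[Y_\tau\leq x]$ for large $\tau$) is precisely the hard content of the proposition, and it is the part the paper supplies via the decoupled/Poissonized chain and the Ornstein--Uhlenbeck-type bound $\Pr[Y_\tau\leq x]\leq \sqrt{\tau}\,e\cdot e^{-(\nu_\tau-x)^2/2\nu_\tau}$. Your bookkeeping assumes this away.

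Two further problems. First, the claimed tail $\Pr[L_1\geq k]\leq (3/(4\beta))^{k-1}$ is false: a single down-step from $x$ forces at least two additional steps to reach $x+1$, so $\Pr[L_1\geq 3]$ is of order $1/\beta$, not $1/\beta^2$. (A ballot-type count of down/hold steps does give a geometric tail with rate roughly $\sqrt{C/\beta}$ per step, which still yields $\E[e^{L_1/8}]\leq 2$, so this is repairable.) Second, the proposition must hold for the worst-case start — it is invoked later conditioned on $Y_0=1$ in the proof of the bound on $\Pr[A^{\mathsf c}]$, and the paper's proof explicitly reduces to $Y_0=1$ by stochastic domination. Starting at $Y_0=1$ the first excursion deterministically has length at least $x$, so your i.i.d.-geometric picture fails for it, and discarding the event $Y_0\leq x$ as an additive $e^{-\Omega(n)}$ correction does not establish the statement in the form it is used.
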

If we set  $\beta = 8(4+c)\ln n$ then \propref{hitlow} implies that
\be 
\Pr [N_x \geq \beta x ] \leq \frac 1
{{n \choose x} n^c}.\ee
Let $x(0)$ denote the corresponding $\nu/\beta$, i.e.
\be x(0) := \frac{\nu}{8(4+c) \ln n} .\ee

\begin{proof} 
We observe that $N_x$ conditioned on $Y_0 =z \geq 1$ is stochastically
dominated by the same variable conditioned on  $Y_0 =1$.  The proof is
by just taking the natural coupling that makes sure the latter walk
is always $\leq$ the former. Hence we can assume that the walk starts
out from $Y_0 =1$ and we will obtain a valid upper bound.

In \cite{HL09} (see the proof of lemma A.5) the authors show that
\begin{eqnarray}
\Pr [N_x \geq \beta x ] \leq \sum_{\tau=\beta x}^s \Pr  [Y_\tau \leq x ].
\label{Nx}
\end{eqnarray}
To understand these probabilities we will develop an exactly solvable
analogue for $Y_\tau$.   Although
$Y_\tau$ is a random walk in discrete time and space, we can
approximate it by a process that takes place in continuous time and
space.  If $Y_\tau$ were an unbiased random walk then we could
approximate it with Brownian motion.  However, it is biased to always
drift towards the point $\frac 34 n$.  The continuous-time-and-space
random process which diffuses like Brownian motion but is biased to
drift towards a fixed point is called the Ornstein-Uhlenbeck process.
We will not prove a formal connection between $Y_\tau$ and the
Ornstein-Uhlenbeck process, but instead will prove bounds on $Y_\tau$
that are inspired by the analogous facts about Ornstein-Uhlenbeck.

\begin{proposition} [Connection with the Ornstein-Uhlenbeck process] \OldNormalFont{}
Define 
\be 
\nu_\tau := z e^{-\frac{4\tau}{3n}} + \frac{3}{4}n\left(1-e^{-\frac{4\tau}{3n}}\right).
\label{eq:mu-tau-def} \ee
Then we can bound
\be
\Pr [Y_\tau \leq x ] \leq \sqrt \tau e \cdot e^{-\frac{(\nu_\tau-x)^2}{2 \nu_\tau}}
\ee
\label{prop:Ybound}
\end{proposition}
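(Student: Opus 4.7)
The plan is to derive a Chernoff-type tail bound by exploiting the affine structure of the transition rates of $Q$: each of $Q(x, x-1)$, $Q(x, x)$, $Q(x, x+1)$ is a degree-$1$ polynomial in $x$, which closes the one-step recursion for the moment generating function of $Y_\tau$. As a warm-up, the first-moment recursion simplifies to $\E[Y_{\tau+1} \mid Y_\tau] = (1 - \tfrac{4}{3n-1}) Y_\tau + \tfrac{3n+1}{3n-1}$, and iterating from $Y_0 = z$ gives $\E[Y_\tau] = \nu_\tau + O(1)$. This matches precisely the deterministic Ornstein--Uhlenbeck drift toward the equilibrium value $3n/4$ at relaxation rate $4/(3n)$, which is how the formula for $\nu_\tau$ in the statement is motivated.

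For the tail bound proper, I would analyze $\phi_\tau(\lambda) := \E[e^{-\lambda Y_\tau}]$. The affine structure yields a closed recursion $\phi_{\tau+1}(\lambda) = A(\lambda)\phi_\tau(\lambda) + B(\lambda)\phi'_\tau(\lambda)$ for explicit $A, B$, and in the continuous-time (Poissonized) limit this becomes a first-order linear PDE in $(\tau, \lambda)$. Solving by the method of characteristics---the characteristic ODE is $\dot\lambda \propto e^\lambda - 3e^{-\lambda} + 2$, which integrates via partial fractions on $(3s+1)(s-1)$---is exactly the Kac's-method exact solvability the paper's introduction attributes to Krawtchouk polynomials, and after simplification the solution is sub-Poisson of mean $\nu_\tau$, i.e.\ $\phi_\tau(\lambda) \leq \exp((e^{-\lambda}-1)\nu_\tau)$ for $\lambda \geq 0$. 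Standard Chernoff optimization with $\lambda = \ln(\nu_\tau/x)$ then produces
$$\Pr[Y_\tau \leq x] \leq e^{\lambda x}\phi_\tau(\lambda) \leq \exp\!\bigl(-(\nu_\tau - x)^2/(2\nu_\tau)\bigr),$$
via the sub-Poisson exponent inequality $(1+t)\ln(1+t) - t \geq t^2/2$ on $t \in [-1, 0]$.

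The remaining $\sqrt{\tau}\, e$ prefactor is a discrete-to-continuous conversion cost: if $\tilde Y_t$ denotes the Poissonized chain and $N_\tau \sim \Pois(\tau)$, then $Y_\tau$ and $\tilde Y_\tau \stackrel{d}{=} Y_{N_\tau}$ differ by an amount governed by $|N_\tau - \tau| = O(\sqrt{\tau})$, and passing from the clean continuous-time tail bound back to the discrete walk absorbs a factor of $\sqrt{\tau}$ into the prefactor (with the $e$ coming from standard Poisson tail constants). The main obstacle will be turning the characteristic-curve solution into a clean sub-Poisson MGF bound with the right first moment $\nu_\tau$: the affine rates guarantee this works in principle, but tracking $O(1/n)$ corrections is delicate. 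A fallback that avoids Poissonization entirely is Freedman's martingale inequality applied to the rescaled deviation $\hat L_\tau := (1-\tfrac{4}{3n-1})^{-\tau}(\nu_\tau - Y_\tau)$; its predictable quadratic variation is bounded by roughly $\nu_\tau \cdot (1-\tfrac{4}{3n-1})^{-2\tau}$ using the step-variance bound $\Var(\Delta Y_s \mid Y_s) \leq (3n-2Y_s-1)/(3n-1)$, and this yields a tail of the same qualitative form $\exp(-\Omega((\nu_\tau - x)^2/\nu_\tau))$, which is already sufficient for the downstream application in \propref{hitlow}.
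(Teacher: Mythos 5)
Your overall route --- exploit the affine transition rates, pass to a Poissonized/continuous-time picture, extract a sub-Poisson moment generating function with mean $\nu_\tau$, apply a Chernoff bound via $t\ln t-t+1\geq (1-t)^2/2$, and pay a $\sqrt{\tau}\,e$ prefactor for returning to discrete time --- is the same skeleton as the paper's proof, and your first-moment recursion and characteristic ODE are both correct. However, the central claim $\E[e^{-\lambda Y_\tau}]\leq \exp((e^{-\lambda}-1)\nu_\tau)$ is asserted ("after simplification the solution is sub-Poisson of mean $\nu_\tau$") rather than derived, and it is exactly here that the paper does something you have skipped. The paper never manipulates $Q$'s MGF: it introduces an auxiliary chain $S'$ on $\{0,1\}^n$ whose coordinates are genuinely decoupled (pick a uniformly random coordinate; flip a $0$ always, flip a $1$ with probability $1/3$), proves that its Hamming weight $Y'_\tau$ is stochastically dominated by $Y_\tau$, and then Poissonizes $S'$ so that the number of hits to each coordinate becomes an independent $\Pois(\tau/n)$. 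At that point $Y'_{\Pois(\tau)}$ is \emph{exactly} $\Bin(n-z,\alpha_\tau)+\Bin(z,\beta_\tau)$ with mean exactly $\nu_\tau$, and the sub-Poisson Chernoff step is the textbook bound for sums of independent Bernoullis. If you insist on working with $Q$ directly, you must confront the fact that its rates are $3(n-x-1)/(3n-1)$ and $(x-1)/(3n-1)$ rather than the product-chain rates $3(n-x)/(3n)$ and $x/(3n)$, so the continuous-time solution is not exactly a product of independent spins and its mean is not exactly $\nu_\tau$; some monotone comparison is needed to replace the true mean by $\nu_\tau$ in the final exponent, and that comparison is precisely what the domination by $S'$ supplies.

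The second gap is that your mechanism for the $\sqrt{\tau}\,e$ prefactor does not go through as described. A fluctuation $|N_\tau-\tau|=O(\sqrt{\tau})$ in the random time would perturb the \emph{argument} of the tail bound (i.e.\ shift $\nu_\tau$ and hence the exponent), not produce a multiplicative $\sqrt{\tau}$ in front of the probability; moreover, converting a statement at a random time into one at the fixed time $\tau$ along these lines would require monotonicity of $\Pr[Y_t\leq x]$ in $t$, which is not available. The correct (and much simpler) mechanism is a point-mass bound: for the Poissonized event $B=\{Y'_{\Pois(\tau)}\leq x\}$ one has $\Pr[B]\geq \Pr[\Pois(\tau)=\tau]\cdot\Pr[Y'_\tau\leq x]$, and $\Pr[\Pois(\tau)=\tau]=\tau^\tau e^{-\tau}/\tau!\geq 1/(e\sqrt{\tau})$ by Stirling, which is exactly where the factor $\sqrt{\tau}\,e$ comes from. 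Your Freedman fallback is a reasonable insurance policy for the downstream hitting-count bound, which only needs the exponent up to constants, but it too must account for the $O(1)$ mismatch between $\E[Y_\tau]$ and $\nu_\tau$ and for the linear term in Freedman's denominator.
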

The proof is in Section~\ref{sec:Ornsteininaction}. 

This proposition is inspired by the fact that the exact solution to
the Ornstein-Uhlenbeck process is a Gaussian with mean and variance
both equal to $\nu_\tau$.  We can see that once $\tau \gtrsim n\ln n$,
this is close to a Gaussian centered at $\frac 34n$, i.e. the
stationary distribution.

Note that $\nu_\tau$ is an increasing function of $\tau$, and
furthermore, for $\nu_\tau \geq x$, $e^{-\frac{(\nu_\tau-x)^2}{2
    \nu_\tau}}$ is decreasing in $\nu_\tau$, and therefore $\tau$.  Hence the sum in \eqref{Nx} can be bounded by
\be
\Pr [N_x \geq \beta x ] \leq s^2 e \cdot \exp\left({-\frac{(\nu(1-e^{-\frac{\beta x}\nu})-x)^2}{2 \nu (1- e^{-\frac{\beta x}\nu})}}\right).
\ee 

Using the following inequalities
\be
\frac u{1+u} \leq 1-e^{-u} \leq u, \text{ for } u \leq 1.
\ee
we find that
\be
\Pr [N_x \geq \beta x ] \leq s^{3/2} e \cdot \exp\left({-\frac{\left(\frac{\beta x}{1 + \frac{\beta x}\nu}-x\right)^2}{2 \beta x}}\right).
\ee
Since $\frac{\beta x}\nu < 1$ then  
\be
\Pr [N_x \geq \beta x ] \leq s^{3/2} e \cdot e^{- \frac \beta 8 x}.
\label{eq:Nx-tail-bound}
\ee
\end{proof}

Now following \cite{HL09,DJ10,BF13},
define the good event $A :=  \cap_{1 \leq x \leq x(0)} \{ N_x \leq
\beta \cdot x \}$. Recall that $\beta = 8(4+c)\ln n$ and $x(0)=\nu/\beta$.

\begin{proposition}\label{prop:Hc-bound} \OldNormalFont{}
$\Pr [A^{\mathsf{c}}|Y_0] \leq \frac{2}{{n \choose Y_0} n^{c-1}}$.
\end{proposition}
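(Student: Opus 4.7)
The plan is to apply a union bound over the sites $x \in \{1, \ldots, x(0)\}$, bound each per-site probability $\Pr[N_x \geq \beta x \mid Y_0]$ by Markov's inequality, and then control the resulting sums $\sum_\tau \Pr[Y_\tau \leq x \mid Y_0]$ via the tail bound of Proposition~\ref{prop:Ybound}, refined to exploit the dependence on the starting state $Y_0$.

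Concretely, I would first write
\be
\Pr[A^c \mid Y_0] \leq \sum_{x=1}^{x(0)} \Pr[N_x \geq \beta x \mid Y_0] \leq \sum_{x=1}^{x(0)} \frac{1}{\beta x}\sum_{\tau=1}^s \Pr[Y_\tau \leq x \mid Y_0],
\ee
the second step being Markov's inequality. For the base case $Y_0 = 1$, the analysis underlying Proposition~\ref{prop:hitlow} applies directly: plugging $\beta = 8(4+c)\ln n$ and $s = O(n\ln n)$ into the single-site tail bound $s^{3/2}e \cdot e^{-\beta x/8}$ yields $\Pr[N_x \geq \beta x \mid 1] \leq 1/(\binom{n}{x} n^c)$ after using $\binom{n}{x} \leq n^x$ and absorbing the $s^{3/2}e$ prefactor into a single polynomial slack. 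Summing the resulting geometric-in-$x$ series gives $\Pr[A^c \mid 1] \leq 2/n^{c+1} \leq 2/(n\cdot n^{c-1})$, matching the proposition at $Y_0 = 1$.

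For $Y_0 \geq 2$ a sharper estimate is required, because the Gaussian-style bound of Proposition~\ref{prop:Ybound} delivers only $\exp(-\Theta(\nu_\tau))$ decay and in particular gives at best $\exp(-\Theta(n))$ when $Y_0 = \Theta(n)$, which is not enough to match the exponentially smaller factor $\binom{n}{Y_0}^{-1}$. The refinement uses the exact continuous-time representation $Y_\tau \mid Y_0 \sim \Bin(Y_0, p_\tau) + \Bin(n - Y_0, q_\tau)$ with $p_\tau = \tfrac{3}{4} + \tfrac{1}{4} e^{-4\tau/(3n)}$ and $q_\tau = \tfrac{3}{4}(1 - e^{-4\tau/(3n)})$ (which is obtainable by Kac's method, cf.~Section~\ref{sec:exact-OU} and matches the mean $\nu_\tau$ entering Proposition~\ref{prop:Ybound}), combined with the sharp Chernoff bound $\Pr[\Bin(n,p) \leq k] \leq \exp(-n D(k/n \,\|\, p))$. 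It is the KL-divergence rate function appearing here, rather than the looser Gaussian exponent, that provides the binomial-coefficient-scale decay.

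The hardest part will be the bookkeeping of summing these binomial tail bounds over $\tau$ and $x$ so as to extract the precise $\binom{n}{Y_0}^{-1}$ factor while absorbing all polynomial prefactors into the final constant $2$. The cleanest route I see exploits reversibility of the accelerated chain $Q$, whose stationary distribution $\pi(k) \propto 3^{k-1}\binom{n-1}{k-1}$ (from detailed balance on $Q(k,k+1)/Q(k+1,k) = 3(n-k)/k$) satisfies $\pi(x)/\pi(Y_0)$ of order $\binom{n-1}{x-1}/\binom{n-1}{Y_0-1}$, which is essentially $\binom{n}{Y_0}^{-1}$ times polynomial corrections. Combining this with a standard reversible-chain hitting-time/occupation-time bound on $\sum_\tau \Pr[Y_\tau = x \mid Y_0]$ then yields the advertised estimate, and the discrete-to-continuous-time transfer for the $Q$ chain adds only an $O(1)$ multiplicative overhead.
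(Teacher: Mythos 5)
There is a genuine gap, and it sits in the two places where this proposition is actually hard. First, Markov's inequality applied to $N_x$ cannot deliver the bounds you need: it gives $\Pr[N_x \ge \beta x] \le \E[N_x]/(\beta x)$, and since the walk visits $[1,x]$ with non-negligible probability (indeed $\E[N_x]=\Omega(1)$ in general, e.g.\ when $Y_0\le x$), this yields at best something of order $1/(\beta x)$ --- polynomially small --- whereas the required per-site bound is $1/\bigl(\binom{n}{x}n^{c}\bigr)$, super-polynomially small in $x\ln n$. The correct tool (used inside Proposition~\ref{prop:hitlow}, following \cite{HL09}) is the inclusion $\{N_x\ge\beta x\}\subseteq\bigcup_{\tau\ge\beta x}\{Y_\tau\le x\}$: at least $\beta x$ steps must elapse before the occupation count can reach $\beta x$, so one sums tail probabilities only over $\tau\ge\beta x$, by which time the drift has pushed $\nu_\tau$ up to $\approx\beta x$ and each term is $e^{-\Omega(\beta x)}$. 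Your $Y_0=1$ case comes out right only because you silently abandon the Markov step and quote Proposition~\ref{prop:hitlow}'s conclusion instead.

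Second, and more importantly, neither of your proposed mechanisms produces the $\binom{n}{Y_0}^{-1}$ factor. The stationarity ratio $\pi(x)/\pi(Y_0)=3^{x-Y_0}\binom{n-1}{x-1}/\binom{n-1}{Y_0-1}$ is $\ge 1$ for $Y_0\le x\le x(0)$, so an occupation-time bound of the form $\sum_\tau\Pr[Y_\tau=x\mid Y_0]\le s\,\pi(x)/\pi(Y_0)$ is vacuous precisely on the range of $x$ where it is needed when $Y_0$ is small (note $x(0)=\Theta(n/\ln n)$, so $\binom{n-1}{x-1}$ there is super-polynomially large, not ``polynomial corrections''); and even for $x<Y_0$, feeding it into Markov gives terms like $\frac{s}{\beta x}\cdot\frac{\pi(x)}{\pi(Y_0)}$ whose contribution near $x=Y_0-1$ is of order $s/\beta=\Theta(n)$, nowhere near $2/\bigl(\binom{n}{Y_0}n^{c-1}\bigr)$. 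The KL-Chernoff bound on the Poissonized decoupled chain controls single-time probabilities $\Pr[Y_\tau\le x]$ but still has to pass through the occupation-count step. What the paper actually does is a two-factor decomposition you never mention: for $x<Y_0$ it writes $\Pr[N_x>\beta x]=\Pr[N_x>\beta x\mid M_s\le x]\,\Pr[M_s\le x]$, where $M_s$ is the running minimum of the walk. The first factor is bounded by the worst case $Y_0=1$ via stochastic domination plus Proposition~\ref{prop:hitlow}; the second --- the probability of ever descending to $x$ --- is bounded by $s\,\pi(x)/\pi(Y_0)$ (Proposition~\ref{prop:Ms-bound}), and this is where both the $\binom{n}{Y_0}^{-1}$ and the geometric factor $3^{-(Y_0-x)}$ that makes the sum over $x$ converge come from. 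For $x\ge Y_0$ one simply uses Proposition~\ref{prop:hitlow} together with $\binom{n}{x}\ge\binom{n}{Y_0}$. Without conditioning on the descent event, the proof does not go through.
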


To prove \propref{Hc-bound}, we will need a bound on the minimum site
visited during the accelerated walk. Let $M_s := \min_{1 \leq i\leq s}
\{Y_i   \}$. Then 
\begin{proposition}\label{prop:Ms-bound} \OldNormalFont{}
$\Pr   [ M_s \leq a   | Y_0 = z  ] \leq  s\frac{{n \choose a} 3^a}{{n \choose z} 3^z}$
\end{proposition}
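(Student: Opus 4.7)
The plan is to combine a one-line union bound with reversibility of the partially accelerated chain $Q$. First I would observe that on the state space $\{1,2,\ldots,n\}$ the chain $Q$ defined in \eqref{accelerated} is irreducible and reversible: writing the detailed-balance equation
\be
\pi(x)\,Q(x,x+1) \;=\; \pi(x+1)\,Q(x+1,x),
\ee
and substituting $Q(x,x+1)=3(n-x)/(3n-1)$ and $Q(x+1,x)=x/(3n-1)$, yields the recursion $\pi(x+1)/\pi(x) = 3(n-x)/x$, so $\pi(x)\propto 3^{x-1}\binom{n-1}{x-1}$. A short computation using $\binom{n}{k}=(n/k)\binom{n-1}{k-1}$ then gives
\be
\frac{\pi(a)}{\pi(z)} \;=\; \frac{a}{z}\cdot\frac{\binom{n}{a}\,3^a}{\binom{n}{z}\,3^z} \;\leq\; \frac{\binom{n}{a}\,3^a}{\binom{n}{z}\,3^z} \qquad\text{for } a\leq z,
\ee
which already matches (up to the slack factor $a/z\leq 1$) the RHS we want to prove.

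Next I would reduce $\Pr[M_s\leq a\mid Y_0=z]$ to a sum of one-step distributions. Assume the nontrivial case $z>a$ (if $z\leq a$ the event $\{M_s\leq a\}$ holds trivially and the bound is $\geq 1$, hence vacuous). Because $Q$ is a birth-and-death chain (moves by at most $\pm 1$ per step), any trajectory starting at $Y_0=z>a$ that reaches $\{1,\ldots,a\}$ must pass through the state $a$ itself. Hence
\be
\{M_s\leq a\} \;\subseteq\; \bigcup_{i=1}^s \{Y_i=a\},
\ee
and the union bound gives $\Pr[M_s\leq a\mid Y_0=z]\leq \sum_{i=1}^s \Pr[Y_i=a\mid Y_0=z]$.

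Finally I would apply reversibility term-by-term. Since $\pi(z)\Pr[Y_i=a\mid Y_0=z]=\pi(a)\Pr[Y_i=z\mid Y_0=a]$,
\be
\sum_{i=1}^s \Pr[Y_i=a\mid Y_0=z] \;=\; \frac{\pi(a)}{\pi(z)}\sum_{i=1}^s \Pr[Y_i=z\mid Y_0=a] \;\leq\; s\,\frac{\pi(a)}{\pi(z)} \;\leq\; s\,\frac{\binom{n}{a}\,3^a}{\binom{n}{z}\,3^z},
\ee
using $\sum_{i=1}^s \Pr[Y_i=z\mid Y_0=a]\leq s$. This is the claimed bound. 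The only real step that takes thought is verifying the detailed-balance computation and the ratio identity comparing $\pi(a)/\pi(z)$ to the target quantity $\binom{n}{a}3^a/\binom{n}{z}3^z$; everything else is the standard hitting-time-via-reversibility argument (the birth-death structure is exactly what makes ``first hitting $\{\leq a\}$'' equal to ``first hitting $\{=a\}$'').
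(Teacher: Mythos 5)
Your proof is correct and follows essentially the same route as the paper: the union bound $\{M_s\leq a\}\subseteq\bigcup_{i\leq s}\{Y_i=a\}$ (via the birth--death structure) followed by a per-step bound $\Pr[Y_i=a\mid Y_0=z]\leq \pi(a)/\pi(z)$; the paper gets the latter from stationarity alone (its Lemma on return probabilities) rather than from reversibility, but these are interchangeable one-line arguments here. One point in your favor: your detailed-balance computation correctly identifies the stationary distribution of $Q$ as $\pi(x)\propto \binom{n-1}{x-1}3^{x-1}$, whereas the paper's proof silently substitutes $\pi_k\propto\binom{n}{k}3^k$, which is the stationary distribution of the \emph{unaccelerated} chain $P$; your extra factor $a/z\leq 1$ is exactly what reconciles the two and makes the stated inequality valid (in the regime $a\leq z$ where it is used). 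Your side remark that the bound is vacuous for $z\leq a$ is not literally true for $a$ near $n$ (where $\binom{n}{k}3^k$ is decreasing), but that regime never occurs in the paper's applications, so this is immaterial.
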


We need the following lemma which is a standard fact about Markov chains.

\begin{lemma}\OldNormalFont{}
Let $Y_0, \ldots$ be a Markov chain with stationary distribution $\pi$ then for any $x, y$ in the state space and integer $s > 0$
\be
\Pr [ Y_s = y   | Y_0 = x ] \leq \frac{\pi_y}{\pi_x}.
\ee
\label{lem:ret}
\end{lemma}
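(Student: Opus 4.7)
The plan is to exploit reversibility of the Markov chain. Although not stated explicitly in the lemma, the chains to which this lemma is applied in the paper (the birth-and-death chains $P$ and $Q$ defined in \eqref{weightchain} and \eqref{accelerated}) are all reversible with respect to their stationary distribution $\pi$; this is automatic for any irreducible birth-and-death chain, since the detailed balance equations $\pi_x P(x,x+1) = \pi_{x+1} P(x+1,x)$ are always solvable and uniquely determine $\pi$ up to normalization.

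Once reversibility is in hand, the proof is a three-line calculation. First, detailed balance $\pi_x P(x,y) = \pi_y P(y,x)$ extends to $s$-step transitions by induction on $s$: if $\pi_x P^{s-1}(x,z) = \pi_z P^{s-1}(z,x)$ for all $x,z$, then
\[
\pi_x P^s(x,y) = \sum_z \pi_x P^{s-1}(x,z) P(z,y) = \sum_z \pi_z P^{s-1}(z,x) P(z,y) \cdot \frac{P(z,y)}{P(z,y)},
\]
and a cleaner route is simply to observe that $P^s$ also satisfies detailed balance with respect to $\pi$ because $P^s$ is a product of reversible transitions (equivalently, the matrix $\mathrm{diag}(\pi)^{1/2} P \mathrm{diag}(\pi)^{-1/2}$ is symmetric, so all its powers are symmetric). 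Either way we obtain $\pi_x P^s(x,y) = \pi_y P^s(y,x)$.

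Second, rearranging yields
\[
\Pr[Y_s = y \mid Y_0 = x] = P^s(x,y) = \frac{\pi_y}{\pi_x} \, P^s(y,x) \leq \frac{\pi_y}{\pi_x},
\]
where the final inequality uses $P^s(y,x) \leq 1$, since it is a probability. There is no real obstacle here; the only subtlety is confirming that reversibility is indeed a hypothesis we may invoke for the chains to which the lemma is applied, which it is.
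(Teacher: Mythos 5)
Your proof is correct, but it takes a different route from the paper's, and the paper's is worth knowing because it is both shorter and more general. The paper does not invoke reversibility at all: it writes
\be
\Pr[Y_s = y \mid Y_0 = x] \;=\; \frac{1}{\pi_x}\,\pi_x \Pr[Y_s = y \mid Y_0 = x] \;\leq\; \frac{1}{\pi_x}\sum_z \pi_z \Pr[Y_s = y \mid Y_0 = z] \;=\; \frac{\pi_y}{\pi_x},
\ee
where the inequality simply adds the remaining nonnegative terms of the sum and the last equality is stationarity of $\pi$ under $s$ steps of the chain. This proves the lemma exactly as stated, for an arbitrary (not necessarily reversible) chain with stationary distribution $\pi$, which matches the hypotheses in the lemma. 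Your argument instead imports reversibility as an extra hypothesis, symmetrizes $P$ to get $\pi_x P^s(x,y) = \pi_y P^s(y,x)$, and bounds $P^s(y,x)\leq 1$; this is valid, and your check that the birth-and-death chains $P$ and $Q$ to which the lemma is applied are reversible is correct, so nothing breaks in the application. Two minor points: your first displayed ``induction'' identity is garbled (the factor $P(z,y)/P(z,y)$ does nothing and the induction as written does not close), but the symmetrization argument you give immediately afterward is a complete and correct substitute; and since the lemma is stated without a reversibility hypothesis, a referee would prefer the paper's one-line stationarity argument, which proves the stronger statement.
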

\begin{proof} 
\bea
\Pr [ Y_s = y   | Y_0 = x ] &=&\frac 1 {\pi_x} \pi_x \Pr [ Y_s = y   | Y_0 = x ]\\
&\leq&\frac 1 {\pi_x}\sum_z \pi_z \Pr [ Y_s = y   | Y_0 = z ]\\
&\leq&\frac {\pi_y} {\pi_x}
\eea
\end{proof}

\begin{proof}[\OldNormalFont{} Proof of \propref{Ms-bound}]
\begin{align}
\Pr   [ M_s \leq a   | Y_0 = z  ] &\leq \Pr   [  \cup_{1 \leq i \leq s} \{ Y_i = a  \}  | Y_0 = z ]\nonumber \\
&\leq  \sum_{j = 1}^s \Pr  [Y_j = a  | Y_0=z]\nonumber \\
&\leq  s \cdot \frac{\pi_a}{\pi_z} & \text{using \lemref{ret}}\nonumber \\
&=  s \cdot \frac{{n \choose a} 3^a}{{n \choose z} 3^z}.
\end{align}
\end{proof}

Now we show that the event $A= \cap_{1 \leq x \leq x(0)} \{ N_x \leq
\beta \cdot x \}$ is very likely. 
\begin{proof}[\OldNormalFont{} Proof of \propref{Hc-bound}]
The proof is very similar to the proof of lemma 4.5 in Brown and Fawzi \cite{BF13}.
\begin{eqnarray}
\Pr[A^{\mathsf{c}}] &=& \Pr  [  \cup_x N_x > \beta \cdot x ]\nonumber \\
&\leq& \sum_x \Pr  [ N_x > \beta \cdot x ]\nonumber \\
&\leq& \sum_{ x<M_s}  \Pr  [ N_x > \beta \cdot x ] +
\sum_{M_s\leq x<Y_0}  \Pr  [ N_x > \beta \cdot x ] +
 \sum_{x(0) \geq x\geq Y_0} \Pr  [ N_x > \beta \cdot x ]\label{eq:three-terms}
\end{eqnarray}
In the last line we have used the fact that $M_s\leq Y_0$.
Now we will handle each term in \eq{three-terms} separately.  
When $x<M_s$, $N_x=0$, so $\sum_{ x<M_s}  \Pr  [ N_x > \beta \cdot x
]=0$.
Next when $x\geq Y_0$, we can use \propref{hitlow} to bound
$\Pr[N_x>\beta x] \leq \binom{n}{Y_0}n^{-c}$.
Finally, when
 $M_s\leq x<Y_0$, we have
\begin{align}
\Pr  [ N_x > \beta \cdot x ]&=  \Pr  [ N_x > \beta \cdot x  | M_s \leq x ] \Pr  [M_s \leq x  ]\nonumber \\
&\leq \Pr  [ N_x > \beta \cdot x  |Y_0 = 1  ] \Pr  [M_s \leq x  ]\nonumber \\
&\leq \Pr [M_s \leq x  ] \cdot \frac{1}{{n\choose x} n^{c}} &
                                                               \text{using \propref{hitlow}} \\
&\leq\frac{{n\choose x}}{{n\choose Y_0} 3^{Y_0-x}} \cdot
  \frac{1}{{n\choose x} n^{c}}
& \text{using \propref{Ms-bound}}
\end{align}

We now combine these contributions and sum over $x$ to obtain
\ba
\Pr[A^{\mathsf{c}}] 
&\leq s\frac{1}{{n\choose Y_0}n^c} \sum_{x<Y_0}
3^{Y_0-x}+ \sum_{x(0)\leq x \leq Y_0} \frac{1}{{n\choose Y_0} n^{c}}\\
&\leq s\frac{1}{2 {n\choose Y_0}n^c} + \frac{1}{{n\choose Y_0} n^{c-1}}\\
&\leq \frac{2}{{n\choose Y_0} n^{c-2}}.
\ea
\end{proof}


\begin{proof}[\OldNormalFont{} Proof of Theorem \ref{thm:main-wait}]
Recall that the initial position on the chain $Y_0$ is distributed according to a binomial around $n/2$. Hence it is enough to show that starting from position $Y_0$ on the chain the probability that the wait time is larger than the bound stated in the theorem is bounded by
\be\frac{1}{{n \choose Y_0} \poly(n)}.\ee
 If such bound holds than the probability of waiting too long is bounded by 
\be\frac 1{2^n-1} \sum_{Y_0 =1}^n \frac{{n\choose Y_0}}{{n \choose Y_0} \poly(n)} = \frac 1 {2^n+1}\cdot  \frac 1{\poly(n)}.\ee
 We achieve this in the following.

Let $a$ be a constant. Consider the following bound on the wait-time random variable $T_{\text{left}}(s) = T_{\text{left}}({Y_0}) + \ldots + T_{\text{left}}({Y_s})$:
\begin{eqnarray}
\Pr  [ T_{\text{left}}(s) \geq a  ] &\leq& \Pr [ T_{\text{left}}(s) \geq a  | A ] + \Pr  [ A^{\mathsf{c}} ]\nonumber \\
&\leq& \sum_{m=1}^{Y_0}\Pr [ T_{\text{left}}(s) \geq a  | A, M_s=m ] \Pr [M_s=m ] + \Pr  [ A^{\mathsf{c}} ]\nonumber \\
&\leq& \frac{1}{{n\choose Y_0} 3^{Y_0}}\sum_{m=1}^{Y_0}\Pr[ T_{\text{left}}(s) \geq a | A, M_s=m] {n \choose m} 3^m + \frac{2}{{n\choose Y_0} n^{c-2}},
\label{unbelievable}
\end{eqnarray}
using Propositions \ref{prop:Ms-bound} and \ref{prop:Hc-bound}.

Let $\rho_x = N_x - N_{x-1}$ be the number of times site $x$ has been visited during $s$ rounds of the accelerated walk. Recall from Section \ref{chain} that
\be
T_{\text{left}}(s) \preceq \sum _{x=1}^{5n/6} \rho_x \cdot \mathsf{Geo}({\frac{6 x}{5 n}}).
\ee

Hence we need a concentration bound for sums of geometric random variables. Fortunately we know the following Chernoff-type tail bounds on the sum of geometric random variables.
\begin{theorem}[Janson~\cite{J14}]\OldNormalFont{}
Let $G = \sum_{i=1}^n \mathsf{Geo}({p_i})$ be the sum of independent geometric random variables with parameters $p_1, \ldots, p_n$, and let $p^\ast  = \min_i p_i$ and $\phi := \sum_{i=1}^n \frac{1}{p_i} = \E G$, then for any $\lambda \geq 1$
\be
\Pr [G \geq \lambda \phi] \leq \frac{1}{\lambda} (1-p^\ast)^{(\lambda-1-\ln \lambda) \phi},
\ee
\label{Janson}
\end{theorem}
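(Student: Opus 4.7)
The statement is Janson's tail bound for sums of independent geometric variables, cited from \cite{J14}, so the authors likely invoke it as a black box. If I were to reprove it from scratch, my plan would be the standard Chernoff--moment generating function strategy. Start by computing $\E[e^{sX_i}] = p_i e^s/(1-(1-p_i)e^s)$ for $s < -\ln(1-p_i)$, so that Markov gives
\[
\Pr[G \geq \lambda\phi] \leq e^{-s\lambda\phi} \prod_i \frac{p_i e^s}{1-(1-p_i)e^s}.
\]

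The key algebraic simplification is the substitution $s = -\ln(1-a)$ for $a \in [0,p^\ast)$: a direct calculation shows $(1-p_i)e^s = (1-p_i)/(1-a)$ and yields the clean identity $\E[e^{sX_i}] = (1-a/p_i)^{-1}$, reducing the Chernoff bound to
\[
\Pr[G \geq \lambda\phi] \leq (1-a)^{\lambda\phi} \prod_i (1-a/p_i)^{-1}.
\]
To bring $\phi = \sum_i 1/p_i$ into the exponent, I would use the elementary fact that $x \mapsto -\ln(1-x)/x$ is increasing on $(0,1)$ (immediate from the power series $1+x/2+x^2/3+\cdots$). Applied with $x = a/p_i$ and $y = a/p^\ast$, it yields $-\ln(1-a/p_i) \leq (p^\ast/p_i)(-\ln(1-a/p^\ast))$, so that $\prod_i(1-a/p_i)^{-1} \leq (1-a/p^\ast)^{-p^\ast\phi}$. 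The Chernoff bound now depends only on the scalar parameter $a$:
\[
\Pr[G \geq \lambda\phi] \leq (1-a)^{\lambda\phi}(1-a/p^\ast)^{-p^\ast\phi}.
\]

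The remaining step is the optimization over $a\in[0,p^\ast)$. Setting the derivative of the logarithm of the bound to zero gives the stationary point $a^\ast = (\lambda-1)p^\ast/(\lambda-p^\ast)$, and substituting back one obtains an exponent which, after algebraic simplification, matches the $(\lambda-1-\ln\lambda)\phi\ln(1-p^\ast)$ rate in Janson's bound (this is easiest to verify in the $p^\ast\to 0$ regime, where both reduce to the Cram\'er rate for a Gamma distribution). The main obstacle is recovering the $1/\lambda$ prefactor: the plain MGF optimization alone produces only the exponential factor. I expect that the cleanest route is to combine the MGF bound with the trivial Markov inequality $\Pr[G\geq \lambda\phi] \leq \E G/(\lambda\phi) = 1/\lambda$, either by taking the minimum of the two bounds or by mimicking Janson's original argument, which works directly with the weighted tail $\int_{\lambda\phi}^\infty \Pr[G\geq t]\,dt$ and uses $\E G = \phi$ to extract the $1/\lambda$ factor explicitly. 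Getting the exact constants right in this interpolation between the exponential-tail regime (captured by the MGF) and the polynomial regime (captured by Markov) is the delicate part of the proof.
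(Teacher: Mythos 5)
The paper does not prove this statement: it is imported verbatim from Janson \cite{J14} and used as a black box (only the subsequent Corollary \ref{corjanson} is proved in the paper), so there is no in-paper proof to compare against, and your opening observation to that effect is correct. Your Chernoff sketch is also essentially the right machinery: the MGF computation, the substitution $e^{s}=(1-a)^{-1}$ giving $\E[e^{sX_i}]=(1-a/p_i)^{-1}$, the monotonicity of $-\ln(1-x)/x$ to reduce to the single parameter $p^\ast$, and the stationary point $a^\ast=p^\ast(\lambda-1)/(\lambda-p^\ast)$ are all correct, and the optimized bound $\lambda^{\lambda\phi}\bigl((1-p^\ast)/(\lambda-p^\ast)\bigr)^{(\lambda-p^\ast)\phi}$ can indeed be checked to be at most $(1-p^\ast)^{(\lambda-1-\ln\lambda)\phi}$ (though you only verify this in the $p^\ast\to 0$ limit; the general inequality is elementary but not automatic and should be written out).

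The genuine gap is exactly where you flag it, and your first proposed repair does not work. The theorem asserts the \emph{product} $\frac{1}{\lambda}(1-p^\ast)^{(\lambda-1-\ln\lambda)\phi}$, whereas taking the minimum of the Chernoff bound and Markov's $1/\lambda$ only yields $\min\bigl(1/\lambda,\,(1-p^\ast)^{(\lambda-1-\ln\lambda)\phi}\bigr)$, which is in general far larger than the product (e.g.\ for a single near-exponential variable the product is $e\cdot e^{-\lambda}$ while the minimum is $e\lambda\cdot e^{-\lambda}$ for all moderate $\lambda$). Recovering the extra $1/\lambda$ requires modifying the first Markov step itself rather than post-processing two separate bounds; this is what Janson's argument does, and you would need to carry that out rather than gesture at it. That said, the gap is harmless for this paper: in Corollary \ref{corjanson} and in the wait-time bound of Theorem \ref{thm:main-wait} the $1/\lambda$ prefactor is ultimately discarded, and the prefactor-free bound $(1-p^\ast)^{(\lambda-1-\ln\lambda)\phi}$ that your argument does deliver would suffice for every use of the theorem in the text.
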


The bound we need for our results is:
\begin{corollary}\OldNormalFont{}
Let $G$ be sum of $s$ geometric random variables with
parameters $p^\ast = p_1 \leq \ldots \leq p_s$, and $\E G = \phi$.
If $T > 3 c \ln( c) \phi$, then 
\be 
\Pr [G > T] \leq \frac{1}{3 c \ln c} (1-p^\ast)^{T(1-1/c)}.
\ee
In particular, we can say that if $T > \E W$, then for any constant $c$ there exists a constant $c'$ such that 
\be 
\Pr[G > c'  T] \leq (1-p^\ast)^{c T}.
\ee
\label{corjanson}
\end{corollary}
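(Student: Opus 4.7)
The plan is to derive this corollary by a direct application of Janson's bound (Theorem~\ref{Janson}) with the substitution $\lambda = T/\phi$. Under this choice, Janson gives
\be
\Pr[G \geq T] \;\leq\; \frac{\phi}{T}\,(1-p^\ast)^{\left(\frac{T}{\phi} - 1 - \ln(T/\phi)\right)\phi}
\;=\; \frac{\phi}{T}\,(1-p^\ast)^{T - \phi(1 + \ln(T/\phi))}.
\ee
So the task reduces to two inequalities: the pre-factor $\phi/T$ must be bounded by $1/(3c\ln c)$, which is immediate from the hypothesis $T > 3c\ln(c)\,\phi$; and the exponent must dominate $T(1-1/c)$, which amounts to showing
\be
\frac{T}{c} \;\geq\; \phi\bigl(1 + \ln(T/\phi)\bigr).
\ee

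For the second inequality I would set $x = T/\phi$ and reduce the claim to $x/c \geq 1 + \ln x$ whenever $x \geq 3c\ln c$. This is a routine one-variable check: at $x = 3c\ln c$ one has $x/c = 3\ln c$ while $1+\ln x = 1 + \ln 3 + \ln c + \ln\ln c$, so the difference is $2\ln c - 1 - \ln 3 - \ln\ln c$, which is non-negative for $c$ sufficiently large; monotonicity of $x/c - \ln x$ for $x \geq c$ then handles all larger $x$. The small-$c$ cases can either be absorbed into the choice of constant or verified by direct inspection.

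The ``in particular'' part follows by bootstrapping. Given any fixed target constant $c$ in the conclusion $\Pr[G > c'T] \leq (1-p^\ast)^{cT}$, I apply the first part with threshold $c'T$ in place of $T$ and with a new internal constant $\tilde c$ chosen so that $c'(1 - 1/\tilde c) \geq c$, e.g.\ $\tilde c = c+1$ and $c' = (c+1)^2/c$. The hypothesis $T > \E G = \phi$ then guarantees $c'T > 3\tilde c\ln(\tilde c)\,\phi$ provided $c' \geq 3\tilde c\ln\tilde c$, which we can arrange by taking $c'$ large enough (depending only on $c$). The resulting pre-factor $1/(3\tilde c\ln \tilde c)$ is harmless and is absorbed into the exponential for $T$ large.

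There is no real obstacle here — the proof is essentially a calculus exercise in choosing $\lambda$ and checking that $\lambda - 1 - \ln\lambda$ grows linearly in $\lambda$ for $\lambda$ not too small. The only mild subtlety is keeping the constants consistent between the two statements, and ensuring the threshold $3c\ln c$ is chosen generously enough that the exponent comparison $x/c \geq 1 + \ln x$ holds with room to spare.
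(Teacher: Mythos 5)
Your proposal is correct and follows essentially the same route as the paper: substitute $\lambda = T/\phi$ into Janson's bound and reduce to the one-variable inequality $\lambda/c \geq 1 + \ln\lambda$ (the paper's $f(\lambda) = \lambda/c - \ln(e\lambda) \geq 0$), verified at $\lambda = 3c\ln c$ together with monotonicity for $\lambda \geq c$. You are in fact slightly more careful than the paper, which omits the derivation of the ``in particular'' clause and does not flag that the endpoint check genuinely fails for small $c$ (e.g.\ $c=2$) and must be absorbed into the constants.
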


\begin{proof} 
It is enough to show that if $\lambda > 3 c \ln c$ then $\lambda -1-\ln \lambda > \lambda (1-1/c)$. Let $f(\lambda) :=\frac{ \lambda}{c} - \ln (e\lambda)$ for $c>1$, we observe that $f$ is an increasing function for $\lambda >c$. We need to find a point $\lambda^\ast$ such that $f(\lambda^\ast) >0$, and one can check that $\lambda^\ast = 3 c \ln c$ works.
\end{proof}

In order to employ Corollary \ref{corjanson} in the context of wait
time (specifically \eqref{unbelievable}) we just need to find an upper
bound on the expected wait time. Now we condition on $A$. Hence for $x
\leq x(0)$, $N_x \leq \beta x$. Among all possibilities given by event
$A$, the wait time is maximized when the minimum visited site ($M_s$)
is  visited as often as possible  (see Brown-Fawzi~\cite {BF13} for a
discussion). So it will suffice to bound the wait time for the
situation when $x \leq x(0)$, $\rho_x = \beta$ and for $x = x(0)$,
$\rho_x = s  - \beta x(0)$.  In this case, the expected wait time
(conditioned on any starting point) is bounded by  
\be
\E[T_{\text{left}}(s) | A] \leq \beta \sum_{1 \leq x \leq x(0)} \frac{5 n}{2 x} + (s - \beta x(0)) \frac{5 n}{ 2 x(0)}
\ee
Assuming the parameters in Proposition \ref{prop:hitlow} we find that $\E[T_{\text{left}}(s) | A]  = O(n \ln^2 n + s \ln n)$, and in particular if $s = O(n \ln n)$ then $\E[T_{\text{left}}(s) | A] = O(n \ln^2 n)$.

Therefore using Lemma \ref{Janson} for any $C >0$ there exists a large enough constant $C'$ such that
\be
\Pr[ T_{\text{left}}(s) \geq C' n \ln^2 n | H, M_s=m] \leq e^{- C \cdot \frac{m}{n} \cdot n \ln^2 n}.
\ee
Combining this with \eqref{unbelievable} and choosing $C$ large enough yields
\begin{eqnarray}
\Pr  [ T_{\text{left}}(s) \geq C' n \ln^2 n ] &\leq& \frac{1}{{n\choose Y_0} 3^{Y_0}}\sum_{m=1}^{Y_0}e^{- C \cdot \frac{m}{n} \cdot n \ln^2 n} {n \choose m} 3^m +  \frac{2}{{n\choose Y_0} n^{c-2}}\nonumber \\
&\leq& \frac{3}{{n \choose Y_0} \cdot n^{c-2}}.
\label{unbelievable2}
\end{eqnarray}
and this completes the proof.
\end{proof}

\subsubsection{Proof of Proposition \ref{prop:Ybound}: Connection with the Ornstein-Uhlenbeck process}
\label{sec:Ornsteininaction}
We first define a new Markov chain $S'_0, S'_1, S'_2, \ldots$ which is
easier to analyze and gives us useful bounds for the Markov chain $S_0, S_1, S_2, \ldots$.
\begin{definition} \OldNormalFont{} $S'_0, S'_1, S'_2, \ldots$ is the following Markov
  chain. The state space is $\{0,1\}^{n}$. 
The initial string $S'_0$ is sampled uniformly at random from
$\{0,1\}^n \backslash 0^n$. At each step $t$, $S'_{t+1}$ results from
$S'_t$ by picking a random position of $S'_t$. If it was a zero we
flip it, otherwise if it was a $1$ with probability $1/3$ we flip it
and with probability $2/3$ it doesn't change. 
\label{def:decoupled}
\end{definition}

The Hamming weight of these strings corresponds to the position on a birth-and-death chain on the state space $\{0,1,2,\ldots, n\}$. Given a string $S' \in \{0,1\}^{n}$ the probability that the Hamming weight of $S'$ increases by $1$ is $1- x/n$ and the probability that it decreases is $\frac{x}{3 n}$. Let $Q'$ be the transition matrix describing the Hamming weight.
  
We now claim that:
 \begin{proposition} \OldNormalFont{}
 Starting from a string of Hamming weight $\geq 1$, at any time $t$,
 $Y_t$ stochastically dominates $Y'_t$, meaning that
 \be\Pr [Y'_t \geq k] \leq \Pr [Y_t \geq k]\ee
 \label{prop:stochasticdomination}
 \end{proposition}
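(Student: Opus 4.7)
}

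The plan is to exhibit an explicit monotone coupling of the two birth-and-death chains $Y_t$ and $Y'_t$, driven by a single uniform random variable at each step, that preserves the inequality $Y_t \geq Y'_t$ at every step. First I would write down the transition probabilities of $Y'_t$. Because in each step of $S'_t$ one picks a uniformly random position and (if it is $0$) flips it deterministically, or (if it is $1$) flips it with probability $1/3$, the Hamming weight $Y'_t$ evolves as a birth-and-death chain with
\be
Q'(x,x+1) \;=\; \frac{n-x}{n},\qquad Q'(x,x-1) \;=\; \frac{x}{3n},\qquad Q'(x,x) \;=\; \frac{2x}{3n}.
\ee

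The next step is the key pointwise comparison with the transitions of $Q$ from \eqref{accelerated}:
\be
\frac{3(n-x)}{3n-1} \;\geq\; \frac{n-x}{n}
\qquad\text{and}\qquad
\frac{x-1}{3n-1} \;\leq\; \frac{x}{3n},
\ee
valid for all $1\leq x \leq n$ (the first reduces to $3n\geq 3n-1$, the second to $3n(x-1) \leq x(3n-1)$, i.e. $x \leq 3n$). In words, the $Y$ chain moves up at least as often and moves down at most as often as the $Y'$ chain from the same state.

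With these inequalities in hand I would construct the coupling. At each time $t$ draw $U_t \sim \mathsf{Unif}[0,1]$ (independently of everything else) and update $Y_t$ by
\be
Y_{t+1} \;=\;
\begin{cases}
Y_t+1 & U_t < Q(Y_t,Y_t+1),\\
Y_t & Q(Y_t,Y_t+1) \leq U_t < Q(Y_t,Y_t+1)+Q(Y_t,Y_t),\\
Y_t-1 & \text{otherwise,}
\end{cases}
\ee
and update $Y'_t$ from the same $U_t$ using the intervals defined by $Q'$. Starting both chains from the same value (this is legitimate, since the initial distribution of $S_0$ and $S'_0$ agree in Hamming weight: both are binomial on $\{0,1\}^n$ conditioned on being nonzero), I would then verify by case analysis that $Y_t \geq Y'_t$ implies $Y_{t+1} \geq Y'_{t+1}$. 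The only nontrivial cases are $Y_t = Y'_t$ and $Y_t = Y'_t + 1$; in both cases the pointwise comparison above shows that the upward-move interval of $Y'_t$ sits inside the upward-move interval of $Y_t$, and the downward-move interval of $Y_t$ sits inside that of $Y'_t$, so ordering is preserved. When $Y_t \geq Y'_t + 2$ the gap is wide enough that even a simultaneous $(+1,-1)$ pair of moves cannot reverse the ordering.

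The main obstacle is just bookkeeping: one has to verify the containment of intervals carefully when $Y_t = Y'_t$, which involves comparing quantities with denominators $3n-1$ and $3n$ (and similarly $n$ vs.\ $3n-1$). Once those two elementary inequalities are checked, the rest of the case analysis is mechanical and yields the almost-sure domination $Y_t \geq Y'_t$ under the coupling, which implies the stated stochastic domination.
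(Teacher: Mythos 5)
Your proposal is correct and follows essentially the same route as the paper, whose entire proof is the observation that $Q$ has a larger forward probability and a smaller backward probability than $Q'$ at every state; you simply make the implied monotone coupling explicit, and your two elementary inequalities are exactly the ones needed. One small imprecision: in the case $Y_t = Y'_t+1$ your justification claims that the upward-move interval of $Y'_t$ sits inside that of $Y_t$, i.e.\ $Q'(y-1,y)\leq Q(y,y+1)$, which fails for large $y$ (at $y=n$ one has $Q'(n-1,n)=1/n>0=Q(n,n+1)$); the correct and sufficient statement is that the up-interval of $Y'$ is disjoint from the down-interval of $Y$, i.e.\ $Q'(y-1,y)\leq 1-Q(y,y-1)$, which reduces to $\frac{y-1}{3n-1}\leq\frac{y-1}{n}$ and does hold, so the ordering is still preserved.
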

 
 \begin{proof} 
 It is enough to observe that for $0 \leq x \leq n$, the probability of moving forward for $Q$ is larger than the probability of moving forward for $Q'$, and also the probability of moving backwards for $Q$ is smaller than the probability of moving backwards for $Q'$. 
 \end{proof}

Now suppose that we simulate $Q'$ for $T$  steps. First,
instead of considering $T$ steps we consider this number to be a
Poisson random variable $T \sim \mathsf{Pois}(\tau)$, where $\tau$ is
some positive real number.
Let $f_l$ be the number of times that site $l$ is hit after $T$
steps. Then $(f_1, \ldots, f_n) \sim \mathsf{Multi} (T, \frac{1}{n},
\ldots, \frac{1}{n})$ is the number of times each position in $[n]$ is
hit after $T$  steps. Here
$\mathsf{Multi} (T, \frac{1}{n}, \ldots, \frac{1}{n})$ is the
multinomial distribution over $n$ items summing up to $T$, each
happening with probability $1/n$.   

We can then consider $T$ in turn to be a random variable
distributed according to  $T \sim \mathsf{Pois}(\tau)$.  
It turns out that defining $T$
in this way will make $f_1,\ldots,f_n$ independent.  
Moreover, for any $l \in \{1,\ldots, n\}$,
\be 
f_l \sim \mathsf{Pois} (\tau/n).
\ee
In other words, the number of times each site is hit is independently
distributed according to a Poisson distribution. This technique is
sometimes called
Poissonization. 


Now suppose the $l$'th bit of $S'_0$ starts out from $0$ and that $f_l = k$. 
We find that the probability of ending up with a $1$ in this case is
\be 
p_k = \frac{3}{4}  \left(1 - \left(\frac{-1}{3}\right)^k \right),
 \ee
and the probability of reaching a $0$ is
\be 
1-p_k = \frac{1}{4} + \frac{3}{4} \left(\frac{-1}{3}\right)^k.
 \ee
 Using these two probabilities and taking the expectation over the Poisson measure we can compute
\begin{eqnarray}
\Pr[S'_T [l]=1 | S'_0[l] = 0] &=& 
\sum_{k=0}^\infty \frac{e^{-\tau/n}}{k!} (\tau/n)^k \left(\frac{3}{4}-\frac{3}{4} \left(-1/3\right)^k\right)\nonumber \\
&=& \frac{3}{4} \left(1-e^{-\frac{4\tau}{3n}}\right)\nonumber \\
&=:& \alpha_\tau.
\end{eqnarray}
Note that the $T$ on the LHS is still a random variable distributed
according to $\Pois(\tau)$.

For the case when the $l$'th bit starts out equal to $1$ and $f_l = k $, we find the probabilities in a similar way. The probability of ending up in bit $1$ is 
\be 
p_k = \frac{3}{4} + \frac{1}{4} \left(\frac{-1}{3}\right)^k,
 \ee
and the probability of ending up in $0$ is
\be 
1-p_k =  \frac{1}{4} - \frac{1}{4} \left(\frac{-1}{3}\right)^k.
 \ee
We then compute
\begin{eqnarray}
\Pr[S'_T [l] =1 | S'_0 [l] \neq 0] 
&=& \sum_{k=0}^\infty \frac{e^{-\tau/n}}{k!} (\tau/n)^k \left(\frac{3}{4}+\frac{1}{4} \left(-1/3\right)^k\right)\nonumber \\
&=& \frac{3}{4} + \frac{1}{4}e^{-\frac{4\tau}{3n}}\nonumber \\
&=:& \beta_\tau.
\end{eqnarray}

As a result conditioned on $|S'_0|=z$,
\be 
Y'_T \sim Y'_{\Pois(\tau)} \sim \mathsf{Bin} (n-z, \alpha_\tau) + \Bin (z, \beta_\tau).
 \ee
This has expectation equal to
\be 
\E  \left[Y'_T  | Y'_0 = z  \right] =  z e^{-\frac{4\tau}{3n}} +
\frac{3}{4}n\left(1-e^{-\frac{4\tau}{3n}}\right),
 \ee
which is simply equal to $\nu_\tau$, which was first introduced in \eqref{eq:mu-tau-def}.
Next, using a simple Chernoff bound for sum of binomial random
variables we can show that for all $x < \nu_j$
\begin{equation}
\Pr  [ Y'_{\Pois(\tau)} \leq x  ] \leq  e^{-\nu_\tau \frac{(1-x/\nu_\tau)^2}{2}} =  e^{-\frac{(\nu_\tau-x)^2}{2 \nu_\tau}}.
\label{dynamics}
\end{equation}
 This bound is exactly the one that we expect from an
Ornstein-Uhlenbeck process. 

Fix a number $x \in [n]$. Let $B$ (the bad event) be $\{|S'_T| \leq k\}$. Then
\ba
\Pr[B] &= \sum_{s=0}^\infty \Pr [T = s] \Pr [B | T = s ]\\
&\geq \Pr [T = \tau] \Pr [B | T = \tau]\\
&\geq \Pr [T = \tau] \Pr [|S'_\tau| \leq x ]\label{eq:badevent}
\ea 
We can evaluate $\Pr [T = \tau] = \frac{\tau^\tau}{\tau !} e^{-\tau} \geq \frac{1}{\sqrt \tau e}$, where we have use 
Stirling's formula (from wikipedia) which states that $\frac{\tau !}{(\tau /e)^\tau}
\leq e\sqrt{\tau}$. Together with the bound in \eqref{eq:badevent} we find that
\be
\Pr [Y'_\tau \leq x ] \leq \sqrt \tau e \cdot \Pr[B]
\ee
Combining this inequality with \eqref{dynamics} we conclude that
\be
\Pr [Y'_\tau \leq x ] \leq \sqrt \tau e \cdot e^{-\frac{(\nu_\tau-x)^2}{2 \nu_\tau}}
\ee
Using Proposition \ref{prop:stochasticdomination}
\be
\Pr [Y_\tau \leq x ] \leq \sqrt \tau e \cdot e^{-\frac{(\nu_\tau-x)^2}{2 \nu_\tau}}
\ee

If $\tau \geq \frac 34 n \ln n$ then $\frac 34 n \geq \nu_\tau \geq \frac 34 n-1$. Therefore
\be
\Pr [Y_\tau \leq x ] \leq \sqrt {\frac 34 n \ln n} e \cdot e^{-\frac{2 (\frac 34 n -x -1)^2}{3 n}}
\ee

\subsubsection{Proof of Theorem \ref{thm:acc-mix}:  exact solution to the Markov chain $Q$}
\label{sec:exact-OU}

In this section we give an exact solution to the Markov chain $Q$
defined in Section \ref{chain}. Here, by giving an exact solution we
mean we can find the eigenvalues and eigenvectors of the transition
matrix explicitly and evaluate the norm $\|Q_t\|_\ast$. The
construction follows nearly directly from a result of Kac~\cite{K47}.

Recall the transition probabilities of Markov chain $Q$ according to
Equation \eqref{accelerated}. In \eqref{accelerated}, $Q$ is defined over the state space $[n]$. Without loss of generality and for convenience we can relabel the state space to $\{0,1,2,\ldots, n-1\}$ and redefine the transition matrix according to:
\begin{eqnarray}
p_i &:=& Q(i,i+1) = \dfrac{3 (n-i-1)}{3n-1},\\ \nonumber
q_i &:=& Q(i,i-1) = \dfrac{i}{3n-1},\\ \nonumber
r_i &:=& Q(i,i) = \dfrac{ 2 (i+1)}{3n-1}.
\label{coeff}
\end{eqnarray}
for $i \in \{0,1,2,3, \ldots, n-1\}$.

Now we consider the eigenvalue problem
\be 
x^ {(\lambda)} Q = \lambda x^{ (\lambda)},
 \ee
where $x^{(\lambda)}$ is a row vector with entries $x^{(\lambda)}(i)$, is the left eigenvector corresponding to the eigenvalue $\lambda$. For now we drop the superscript $\lambda$ in $x^{(\lambda)}$. Expanding this equation we have
\be 
p_{i-1} x({i-1}) + r_i x(i) + q_{i+1} x({i+1})= \lambda x(i).
\label{eq:3-term-recur}
 \ee

Notice that $q_0 = p_{n-1} = 0$. Define the generating function
\be 
g_\lambda(z) = \sum_{i=0}^\infty x(i) z^i,
 \ee
where for $i\geq n$, we set $x(i)=0$.  It suffices to solve
\eqref{eq:3-term-recur} subject to the boundary conditions $x_{-1} =
x_{n} = 0$.
For $i>0$ we can write

\be 
p_{i-1} x(i-1) z^i + r_i x(i) z^i + q_{i+1} x(i+1) z^i= \lambda x(i) z^i,
 \ee
assuming $x_{-1}=0$. Using the coefficients of \eqref{coeff} we get
\be 
\dfrac{3(n-i)}{3n-1} x(i-1) z^i +  \dfrac{2 (i+1)}{3n-1} x(i) z^i +  \dfrac{i+1}{3n-1} x(i+1) z^i= \lambda x(i) z^i.
\label{eq:eigenvalue}
 \ee

For $i=0$ the equation is
\be 
x_{1}= ((3 n-1) \lambda - 2)x(0).
 \ee
Summing ($\sum_{i=0}^\infty$) over the first term in the left-hand side of \eqref{eq:eigenvalue} we obtain
\be
\frac{3(n-1)}{3n-1} z \cdot g_\lambda(z) - (\frac{3}{3n-1})z^2 \frac{d}{dz} g_\lambda(z).
\ee
Similarly for the second term we get
\be
\frac{2}{3n-1} g_\lambda(z) + (\frac{2}{3n-1})z\frac{d}{dz} g_\lambda(z),
\ee
and for the third term
\be
(\frac{1}{3n-1})\frac{d}{dz} g_\lambda(z),
\ee
and for the term on the right-hand side
\be
\lambda g_\lambda(z)
\ee

Let $\lambda' = \lambda \frac{3n-1}{3(n-1)} - \frac{2}{3(n-1)}$. Putting all of these together we obtain the following first order differential equation
\be 
\dfrac{1}{g_\lambda(z)}\dfrac{d}{dz} g_\lambda(z)=  (n-1) \dfrac{3 \lambda' -3z }{- 3 z^2 +2 z + 1},
 \ee
with the boundary conditions
\begin{eqnarray}
g_\lambda(0) &=& x(0),\\
\dfrac{d^{n}}{d z^{n}} g (0) &=& 0.
\end{eqnarray}

Assume $n-1$ is divisible by $4$. Solving this differential equation and applying the first boundary condition ($g_\lambda(0) = x(0)$) we get
\be 
g_\lambda(z) = x^{(\lambda)}(0) (1+3z)^{\frac{n-1}4 (1+3\lambda')} (1-z)^{\frac{n-1}4 (3-3\lambda')}.
 \ee
The second boundary condition basically says that $g_\lambda(z)$ should be a
polynomial of degree at most $n-1$. This implies that $3 \lambda'
(n-1)/4$ should be an integer.   Since the exponents of both the
$(1+3z)$ and the $(1-z)$ terms should be nonnegative, we can further
constrain $3\lambda'(n-1)/4$ to lie in the interval
$[-\frac{n-1}{4},3\frac{n-1}4]$.
These constraints are enough to determine the $n$
eigenvalues $\lambda_0,\ldots,\lambda_{n-1}$.  They must (up to an
irrelevant choice of ordering) satisfy
\be 3\lambda'_m\frac{n-1}4 = 3\frac{n-1}{4} - m.\ee
Rearranging and solving for $\lambda_m$ we have 
\be \lambda_m = 1-\frac{4m}{3n-1}.\ee
The eigenvalue gap is exactly $\dfrac{4}{3 n-1}$. Note for $m=0$ we get $\lambda_0 =1$ and
\be 
g_1(z) = x^{(1)}(0) (1+3 z)^{n-1} = x^{(1)}(0) \sum_{i=0}^{n-1} {n -1
  \choose i} 3^i z^i
= \sum_i \pi(i) z^i.
 \ee
In the last equation we have introduced $\pi(i)$, which is the
stationary distribution. This is a binomial
centered around $\frac 34(n-1)$ and shifted by 1.  Its mean
$\frac 34n+\frac 14$ differs from that of the non-accelerated
chain by an offset of $\approx \frac 14$.  We might expect a shift
like this because the accelerated chain spends less time on lower
values of $x$.

Since the stationary distribution has unit 1-norm we can evaluate 
\be
x^{(1)} = \frac 1 {4^{n-1}}
\label{eq:x1}
\ee

The eigenvectors for each eigenvalue $\lambda$ can be indirectly read from the generating function $g_\lambda(z)$. We use the notation $x^{(\lambda)}$ for the eigenvector corresponding to eigenvalue $\lambda$. Also we denote the $i$-th component of these vectors by $x^{(\lambda)}(i)$, for $i \in \{0,1,2,3, \ldots, n-1\}$.

\subsubsection{Exact solution to the Markov chain $Q$ implies a good upper bound on $\|Q_t\|_{\small\Box}$}\label{sec:exact-Q}

We want to use the above exact solution to derive a bound on
$\|Q_t\|_{\small\Box}$.  We begin by stating some facts.
\begin{enumerate}
\item $\lambda_m = 1-\frac{4m}{3n-1}\leq e^{-  \frac {4m}{3n-1}}$ for $m \in [0, n-1]$.
\item $g_m(z) = x^{(m)}(0) (1+3z)^{n-m-1} (1-z)^{m} = \sum_{i=0}^{n-1}x^{(m)}(i) z^i$ for $m \in [0, n-1]$.
\item $x^{(m)} Q = \lambda_m x^{(m)} =  (1-\dfrac{4m}{3n-1} ) x^{(m)}$
  for  $m \in \left [0, n-1\right]$.
\item $Q$ is a reversible Markov chain on $\{0,\ldots,n-1\}$ with
  stationary distribution $\pi(i) = \binom{n-1}{i} 3^i / 4^{n-1}$.
\end{enumerate}

Since $x^{(m)}$'s are the left eigenvectors of $Q$, they can be used
to find the right  eigenvectors $y^{(n)}$:
\be 
y^{(m)}(i) = \dfrac{x^{(m)}(i)}{\pi(i)}.
 \ee
Left and right eigenvectors are orthonormal with respect to each other, i.e., for any $l,m\in [n-1]$ 
\be 
\sum_{i=0}^{n-1} x^{(m)}(i) y^{(l)}(i) = \sum_{i=0}^{n-1} \dfrac{x^{(m)}(i)x^{(l)}(i)}{\pi(i)} = \delta_{m,l}.
 \ee
We define the following inner product between functions
\begin{equation}
(f,g) := \sum_{i} \dfrac{1}{\pi(i)} f(i) g(i),
\label{()norm}
\end{equation}
according to which $\{x^{(m)} : m \in [n-1]\}$ forms an orthonormal basis, i.e.,
\begin{equation}
(x^{(i)},x^{(j)}) := \delta_{i,j}.
\end{equation}

We denote the initial distribution by $Q_0(i) =
\dfrac{1}{2^n-1}{n\choose i+1}$. Also we denote the eigenvector corresponding to eigenvalue $1$ with $x^{(1)} = \pi$, which is the same as the stationary distribution. We write this initial vector as a combination of eigenvectors of the chain
\be 
Q_0 = \sum_{i=0}^{n-1} \alpha_i x^{(i)}
\qquad \text{with}\qquad
\alpha_i = (x^{(i)}, Q_0).
 \ee
Therefore after $t$ steps
\begin{eqnarray}
Q_t &=& \sum_{m=0}^{n-1} \alpha_m \lambda_m^t x^{(m)} = \sum_{m=0}^{n-1} (x^{(m)}, Q_0) \lambda_m^t x^{(m)},\nonumber \\
 &=& \sum_{m=0}^{n-1} (x^{(m)}, Q_0) \lambda_m^t x^{(m)}.
\label{eq:Qtexpansion}
\end{eqnarray}
We are interested in
\be 
\|Q_t\|_{\text{\Hsquare}}:= (1 - 1/2^n) \frac {12} {2^n} (Q_0, Q_t) \leq \frac {12} {2^n} (Q_0, Q_t)
 \ee
 
Using Equation \eqref{eq:Qtexpansion} this can be evaluated as
\begin{eqnarray}
\|Q_t\|_{\text{\Hsquare}} &\leq& \frac {12} {2^n}  ( \sum_{m=0}^{n-1} (x^{(m)}, Q_0) \lambda_m^t x^{(m)} , Q_0)\\
&=& \frac {12} {2^n} \sum_{m=0}^{n-1} (x^{(m)}, Q_0)^2 \lambda_m^t\\
&=& \frac {12} {2^n} \sum_{m=0}^{n-1} \alpha_m^2 \lambda_m^t
\label{eq:errr}
\end{eqnarray}

As a result the problem reduces to evaluating the overlaps $\alpha_m = (x^{(m)}, Q_0)$.
\begin{eqnarray}
\alpha_m &=&(x^{(m)}, Q_0)\nonumber \\
&=&\sum_{i =0}^{n-1} x^{(m)} (i) \frac{\frac{{n \choose i+1}}{2^n-1}}{\frac{{n-1 \choose i}}{4^{n-1}} 3^i}\\
&=&3 \cdot \frac{4^{n-1}}{2^n-1 }\sum_{i =0}^{n-1} x^{(m)} (i) \frac{n}{(i+1) \cdot  3^{i+1}}\\
&=&3n \cdot \frac{4^{n-1}}{2^n-1 }\int_{z = 0}^{1/3}g_{m} (z)dz \\
\end{eqnarray}
Now we evaluate the integral $\int_{z = 0}^{1/3}g_{m} (z) dz$. We consider two cases, one for $m = 0$ and one for $m > 0$:
\begin{enumerate}
\item $m = 0$:

In this case $g_0(z) = (1 + 3z)^{n-1}$. Therefore
\begin{align}
\int_{z = 0}^{1/3}g_{0} (z)dz =& x^{(0)}(0) \int_{z = 0}^{1/3} (1 + 3z)^{n-1}dz
\\ &= x^{(0)}(0) \frac{2^{n}}{3 \cdot n}
\\ &= \frac{4 }{2^{n} \cdot 3 n} \text{using Equation \eqref{eq:x1}}
\end{align}

\item $m > 0$:

In this case we give an upper bound on the integral
\bea
\int_{z = 0}^{1/3}g_{m} (z)dz &=& x^{(m)}(0)  \int_{z = 0}^{1/3}(1+3z)^{n-m-1} (1-z)^{m}dz\\
&\leq & x^{(m)}(0)  2^{n-1} \int_{z = 0}^{1/3} (\frac{1-z}{1+3z})^{m}dz\\
&\leq& x^{(m)}(0)  2^{n-1} \int_{z = 0}^{1/3} (1-z)^{m}dz\\
&\leq& x^{(m)}(0)  \frac{2^{n-1}}{m+1}
\eea 
\end{enumerate}

As a result we conclude that
\be
\alpha_m \leq 
\begin{cases}
1+ \frac 1 {2^{n-1}} & m=0\\
x^{(m)}(0)  4^n \frac{3n}{4(m+1)} & m > 0\\
\end{cases}
\label{eq:alpham}
\ee

The last step is to evaluate $x^{(m)}(0)$. In order to do this we need some insight from a well studied class of polynomials known as the Krawtchouk polynomials.
It turns out the Krawtchouk polynomial naturally appears in the expansion of
$(1+3z)^{n-m-1}(1-z)^m$ as the coefficients of $z$ monomials. 
The degree-$t$ Krawtchouk polynomial is defined as:
\be 
K^{(t)} (x) := \sum_{i=0}^{t}  {x \choose i} {n-x-1 \choose t-i}   3^{t-i} (-1)^i.
\ee
(Elsewhere in the literature the Krawtchouk polynomials have been defined with the 3 above
replaced by either 1 or some other number.)
Now we evaluate the coordinates in each $x^{(m)}$ vector. 
\begin{eqnarray}
(1+3z)^{n-m-1}(1-z)^m &=& \sum_{i=0}^{n-m-1} {n-m-1 \choose i} 3^i z^i \sum_{j=0}^m {m \choose j} (-1)^j z^j,\nonumber \\
&=& \sum_{i=0}^{n-m-1} \sum_{j=0}^m {n-m-1 \choose i}   {m \choose j} 3^i (-1)^j z^{i+j}\nonumber \\
&=& \sum_{t=0}^{n-1} z^t \sum_{i=0}^{t}  {m \choose i} {n-m-1 \choose t-i}   3^{t-i} (-1)^i,\nonumber \\
&=:& \sum_{t=0}^{n-1} z^t K^{(t)} (m).
\end{eqnarray}

Hence these Krawtchouk polynomials define the eigenstates, up to overall
normalization, according to
\be 
x^{(m)}(i) = x^{(m)}(0) K^{(i)} (m).
 \ee
Moreover using the orthogonality of the $x^{(m)}$'s, we have
\be 
(4^n-1) {x^{(m)}(0)}^2 \sum_{t=0}^{n-1} \dfrac{{K^{(t)}(m)}^2}{{n \choose t} 3^t} =1.
 \ee
 

 In order to compute $x^{(m)}(0)$ we prove the following proposition.
\begin{proposition}
$\sum_{t=0}^{n-1} \dfrac{{K^{(t)}(m)}^2}{{n-1 \choose t} 3^t} = \dfrac{4^n}{{n-1 \choose m} 3^m}$.
\label{prop:mainortho}
\end{proposition}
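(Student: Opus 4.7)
The polynomials $K^{(t)}(m)$ are (a rescaling of) the standard Krawtchouk polynomials for a $4$-ary alphabet of length $n-1$, with generating function $G_m(z)=(1+3z)^{n-1-m}(1-z)^m=\sum_t K^{(t)}(m)\,z^t$, and the identity to be proved is nothing other than the Krawtchouk \emph{dual} orthogonality. My plan is to derive it from two classical Krawtchouk facts: a self-duality exchanging the degree $t$ and the argument $m$, and the ``primal'' orthogonality in $m$ with respect to the binomial weight $\binom{n-1}{m}3^m$.

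\emph{Self-duality.} Starting from the explicit formula for $K^{(t)}(m)$, I would observe that
\[
\binom{n-1}{m}\binom{m}{i}\binom{n-1-m}{t-i}=\frac{(n-1)!}{i!\,(m-i)!\,(t-i)!\,(n-1-m-t+i)!}
\]
is manifestly symmetric under swapping $m\leftrightarrow t$. Multiplying the defining sum for $K^{(t)}(m)$ through by $\binom{n-1}{m}3^m$ and pulling out a factor of $3^{t+m}$, this symmetry gives
\[
\binom{n-1}{m}3^m\,K^{(t)}(m) \;=\; \binom{n-1}{t}3^t\,K^{(m)}(t).
\]

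\emph{Primal orthogonality.} I would establish
\[
\sum_{m=0}^{n-1}\binom{n-1}{m}3^m\,K^{(r)}(m)K^{(s)}(m) \;=\; 4^{n-1}\binom{n-1}{r}3^r\,\delta_{rs}
\]
by evaluating the bivariate generating function $\sum_m\binom{n-1}{m}3^m G_m(z)G_m(w)$. By the binomial theorem this equals $\bigl[(1+3z)(1+3w)+3(1-z)(1-w)\bigr]^{n-1}$; the algebraic miracle $(1+3z)(1+3w)+3(1-z)(1-w)=4+12zw$ reduces it to $4^{n-1}(1+3zw)^{n-1}$, and reading off $[z^r w^s]$ yields the orthogonality relation.

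\emph{Combining.} Using self-duality to rewrite $K^{(t)}(m)^2/(\binom{n-1}{t}3^t)$ as $K^{(m)}(t)^2\binom{n-1}{t}3^t/(\binom{n-1}{m}^2 3^{2m})$, then summing over $t$ and invoking the primal orthogonality with $r=s=m$ on the resulting sum, the left-hand side of the proposition collapses to a closed-form expression proportional to $1/(\binom{n-1}{m}3^m)$, matching the stated identity up to the overall power of $4$ (a bookkeeping issue that must be reconciled against the normalization fixed by $\pi$ and by $x^{(m)}(0)$). I do not anticipate any serious obstacle here: the only real content is the self-duality miracle and the generating-function collapse, both of which are elementary binomial manipulations; everything else is cancellation.
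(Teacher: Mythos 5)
Your proposal is correct and follows essentially the same route as the paper: the self-duality $\binom{n-1}{m}3^m K^{(t)}(m)=\binom{n-1}{t}3^tK^{(m)}(t)$ is exactly Lemma \ref{lem:symmetry}, and your bivariate generating-function computation $[(1+3z)(1+3w)+3(1-z)(1-w)]^{n-1}=4^{n-1}(1+3zw)^{n-1}$ is precisely how the paper proves Lemma \ref{lem:orthogonality} (just without the intermediate $p,q$ normalization). The ``bookkeeping issue'' you flag is real but is not yours to reconcile: the paper's own proof also terminates at $\frac{4^{n-1}}{\binom{n-1}{m}3^m}$, so the $4^n$ in the proposition statement is an inconsistency in the paper, not a gap in your argument.
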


Proving this will require two lemmas that establish symmetry and orthogonality properties of Krawtchouk polynomials.

\begin{lemma}
[Orthogonality] \torestate{If we define
\be 
k^{(t)}(x) := \sum_{i=0}^t {x \choose i}{N-x \choose t-i} p^{t-i} (-q)^i,
 \ee
for $p,q \in [0,1]$ and $p+q=1$. Then these Krawtchouk polynomials satisfy the following orthogonality relationship
\be 
\sum_{x=0}^n {N\choose x} p^x q^{N-x} k^{(t)} (x) k^{(s)} (x) = {N \choose t} (pq)^t \delta_{t,s}.
 \ee
\label{lem:orthogonality}}
\end{lemma}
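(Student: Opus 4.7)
The plan is to prove the orthogonality relation via a generating function argument, which is the standard route for Krawtchouk-type identities. The first step is to recognize that $k^{(t)}(x)$ is defined precisely as the Vandermonde-style convolution coefficient of $z^t$ in the product $(1-qz)^{x}(1+pz)^{N-x}$. Expanding each factor as a binomial series and collecting the $z^t$ coefficient gives exactly
\[
\sum_{i=0}^{t} \binom{x}{i}(-q)^{i}\binom{N-x}{t-i}p^{\,t-i} = k^{(t)}(x),
\]
so that $\sum_{t\ge 0} k^{(t)}(x)\, z^{t} = (1-qz)^{x}(1+pz)^{N-x}$.

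Second, I will introduce the bivariate generating function
\[
F(z,w) := \sum_{x=0}^{N} \binom{N}{x} p^{x} q^{N-x}\,(1-qz)^{x}(1+pz)^{N-x}(1-qw)^{x}(1+pw)^{N-x},
\]
whose $z^{t}w^{s}$ coefficient is exactly the left-hand side of the identity we wish to prove. Grouping the factors raised to the $x$ and the $N-x$ powers and then applying the binomial theorem in $x$ rewrites $F(z,w)$ as
\[
F(z,w) = \bigl[\,p(1-qz)(1-qw) + q(1+pz)(1+pw)\,\bigr]^{N}.
\]

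The third step is the small algebraic miracle that makes the identity work: expanding the bracket, the linear-in-$z$ and linear-in-$w$ terms cancel because $p\cdot(-q) + q\cdot p = 0$, the constant term is $p+q=1$, and the $zw$ term collapses to $p q^{2} + p^{2} q = pq(p+q) = pq$. Hence
\[
F(z,w) = (1 + pq\,zw)^{N} = \sum_{t=0}^{N} \binom{N}{t}(pq)^{t} z^{t} w^{t}.
\]
Comparing coefficients of $z^{t}w^{s}$ on the two expressions for $F(z,w)$ yields $\binom{N}{t}(pq)^{t}\delta_{t,s}$ on the right, proving the lemma.

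There is no real obstacle here; this is a clean generating-function calculation. The only thing to be careful about is the sign convention (the $-q$ versus $+p$ in the definition of $k^{(t)}$), which is exactly what drives the cancellation in step three. In the intended application we will take $p = 3/4$ and $q = 1/4$ (so $N = n-1$ and $pq = 3/16$), which is what produces the factor $\tfrac{1}{\binom{n-1}{t}3^{t}}$ appearing in Proposition~\ref{prop:mainortho} after a short rescaling $k^{(t)}(x) = (1/4)^{t} K^{(t)}(x)$.
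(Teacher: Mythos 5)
Your proposal is correct and is essentially the same argument as the paper's: both expand $(1-qz)^{x}(1+pz)^{N-x}$ as the generating function for $k^{(t)}(x)$, form the bivariate binomially weighted sum, apply the binomial theorem to collapse it to $\bigl(1+pq\,zw\bigr)^{N}$ via the same sign cancellation, and compare coefficients. The only cosmetic difference is that the paper packages the weighted sum as an inner product $(g_p(y),g_p(z))$ before evaluating it.
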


\begin{lemma}[Symmetry]\torestate{ The Krawtchouk polynomials obey the following symmetry relation.
\be 
\dfrac{{n-1\choose x}}{3^t}K^{(t)}(x) = \dfrac{{n-1\choose t}}{3^x}K^{(x)}(t).
 \ee
\label{lem:symmetry}}
\end{lemma}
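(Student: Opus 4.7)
The plan is to prove the symmetry by expanding both sides using the definition $K^{(t)}(x) = \sum_{i=0}^{t} \binom{x}{i}\binom{n-x-1}{t-i} 3^{t-i}(-1)^i$ and exhibiting an explicit symmetric form. First I would multiply both sides by $3^x \cdot 3^t$ to clear the denominators, so that the claim becomes
\be
\binom{n-1}{x} K^{(t)}(x)\, 3^x \;=\; \binom{n-1}{t} K^{(x)}(t)\, 3^t,
\ee
and then it suffices to show that the common expression is a function of $x,t$ that is manifestly invariant under swapping $x\leftrightarrow t$.

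The key step is a trinomial-style binomial rewriting. I would use the identity $\binom{n-1}{x}\binom{x}{i} = \binom{n-1}{i}\binom{n-1-i}{x-i}$, followed by $\binom{n-1-i}{x-i}\binom{n-1-x}{t-i} = \binom{n-1-i}{x-i,\,t-i,\,n-1-x-t+i}$, i.e.
\be
\binom{n-1}{x}\binom{x}{i}\binom{n-x-1}{t-i}
= \binom{n-1}{i}\cdot \frac{(n-1-i)!}{(x-i)!\,(t-i)!\,(n-1-x-t+i)!}.
\ee
The right-hand side is manifestly symmetric in $(x,t)$. Substituting this into the sum gives
\be
\binom{n-1}{x} K^{(t)}(x)\, 3^x
= \sum_{i} \binom{n-1}{i}\,\frac{(n-1-i)!}{(x-i)!\,(t-i)!\,(n-1-x-t+i)!}\, 3^{t+x-i}(-1)^i,
\ee
where the summation index $i$ ranges over all integers for which the factorials are nonnegative (outside this range the binomials vanish, so we may take the sum over all of $\bbZ_{\geq 0}$). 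Performing the identical manipulation starting from $\binom{n-1}{t} K^{(x)}(t)\, 3^t$ yields exactly the same sum, proving the identity.

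The main obstacle, if any, is just bookkeeping: verifying that the summation ranges agree after the rewriting, and that the identity $\binom{n-1}{x}\binom{x}{i} = \binom{n-1}{i}\binom{n-1-i}{x-i}$ is applied in the correct direction. Both are routine once one observes that the product of the three binomials collapses to a single multinomial coefficient times $\binom{n-1}{i}$, which is the engine of the symmetry.
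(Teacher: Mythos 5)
Your proof is correct and follows essentially the same route as the paper: both arguments reduce to the term-by-term observation that $\binom{n-1}{x}\binom{x}{i}\binom{n-x-1}{t-i}$ collapses to $\frac{(n-1)!}{i!\,(x-i)!\,(t-i)!\,(n-1-x-t+i)!}$, which is manifestly symmetric under $x\leftrightarrow t$. The only cosmetic difference is the normalization (you clear denominators to get a factor $3^{t+x-i}$, the paper keeps $3^{-i}$), and both are symmetric, so nothing further is needed.
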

These two lemma are proved in appendix \ref{section:krawtchuk}.

\begin{proof}[Proof of \propref{mainortho}]
Using Lemma \ref{lem:orthogonality}, setting $p=3/4$ and $q=1/4$, and $N=n-1$ and $t=s$, we have
\be 
4^t k^{(t)} (x) =\sum_{i=0}^t {x \choose i}{n-x-1 \choose t-i} 3^{t-i} (-1)^i = K^{(t)}(x),
 \ee
Therefore we obtain the relation
\be 
\sum_{x=0}^n {n-1 \choose x} 3^x {K^{(t)}(x)}^2 = {n-1 \choose t}3^t 4^{n-1}.
 \ee

We now use the symmetry from Lemma \ref{lem:symmetry} to obtain
\be 
K^{(t)}(x) = \dfrac{3^t}{3^x} \frac{{n-1\choose t}}{{n-1\choose x}}K^{(x)}(t).
 \ee
As a result
\be 
\sum_{x=0}^n  \dfrac{ {K^{(x)}(t)}^2}{{3^x}{n-1\choose x}} = \frac{4^{n-1}}{{n-1 \choose t}3^t}. 
 \ee
This concludes the proof.
\end{proof}

A corollary of Proposition \ref{prop:mainortho} is that
\be 
{x^{(m)}(0)}  = \frac{1}{(4^n-1)} \sqrt{{n-1 \choose m}3^m}.
\label{x0}
\ee
Plugging this into Equation \eqref{eq:alpham} we get
\be
\alpha_m \leq
\begin{cases}
2 & m=0\\
\sqrt{{n-1 \choose m}3^m} \frac{3n}{2(m+1)} & m > 0\\
\end{cases}
\label{eq:alpha}
\ee

%

Now we are ready to prove Theorem \ref{thm:acc-mix}.

\begin{proof}[Proof of Theorem \ref{thm:acc-mix}]
Using Equations \eqref{eq:errr} and \eqref{eq:alpha}
\bea
\|Q_t\|_{\text{\Hsquare}} &\leq& \frac {12} {2^n} \sum_{m=0}^{n-1} \alpha_m^2 \lambda_m^t\\
&\leq& \frac {24} {2^n} +  \frac {12} {2^n} \sum_{m=1}^{n-1} \alpha_m^2 \lambda_m^t\\
&\leq& \frac {24} {2^n} +  \frac {12} {2^n} \sum_{m=1}^{n-1} \left (\sqrt{{n-1 \choose m}3^m} \frac{3n}{2(m+1)}\right)^2 \lambda_m^t\\
&\leq& \frac {24} {2^n} +  \frac {27n^2 } {2^n} \sum_{m=1}^{n-1} {n-1 \choose m}3^m \lambda_m^t\\
&\leq& \frac {24} {2^n} +  \frac {27n^2 } {2^n} \sum_{m=1}^{n-1} {n-1 \choose m}\left(3 e^{- \dfrac{ 4 t}{3n-1}}\right)^m\\
&\leq& \frac {24} {2^n} +  \frac {27n^2 } {2^n} \sum_{m=1}^{n-1} {n-1 \choose m}\left(3 e^{- 4 n \ln n} \right)^m\\
&\leq& \frac {24} {2^n} (1 + O(\frac 1 n))
\eea
\end{proof}

\subsubsection{Proof of Proposition \ref{prop:combining}: Combining wait-time analysis with the analysis of the accelerated chain}
\label{sec:propcombining}

\restateprop{prop:combining}

\begin{proof}[\OldNormalFont{} Proof of Proposition \ref{prop:combining}]
Let $\tau \sim \mathsf{Unif} (t_1, t_2)$. Then 
\be
\frac 1 T \sum_{s = t_1}^{t_2} \|P_s\|_\ast = \E_\tau \|P_\tau\|_\ast
\ee
We use the notation $y^s = (y_1, \ldots, y_s)$, for $y_j$ running over
$[n]$. Consider the event $\{X_\tau = k\}$. This event is equivalent
to the disjoint union $\cup_{s \geq 0} \cup_{y^s \in [n]^s : y_s = k}
\{Y^s=y^s\}\cap \{W_{s-1} < \tau \leq W_s\}$. Here $y_0 \sim \Bin (n,1/2)$,
conditioned on $y_0\neq 0$. 
Therefore
\bea
\nonumber
\Pr [X_\tau = k] &=& \sum_{s \geq 0} \sum_{y^s: y_s = k} \Pr [Y^s =
y^s] \Pr [W_{s-1}<  \tau \leq W_s]\\ 
\nonumber
&=& \sum_{0 \leq s < t_0} \sum_{y^s: y_s = k} \Pr [Y^s = y^s]  \Pr [W_{s-1}<  \tau \leq W_s]\\ 
&& +\sum_{t_0 \leq s } \sum_{y^s: y_s = k} \Pr [Y^s = y^s]  \Pr [W_{s-1}<  \tau \leq W_s].
\label{eq:breakingintoterms}
\eea

We first argue about the time average of the first term.
\begin{align}
\nonumber
 \E_\tau \sum_{0 \leq s <t_0} \sum_{y^s: y_s = k}\Pr [Y^s =
y^s] &\Pr [W_{s-1}<  \tau \leq W_s] \leq \E_\tau \sum_{0 \leq s <t_0} \sum_{y^s: y_s = k}\Pr [Y^s =
y^s] \Pr [W_s \geq \tau] \\ 
\nonumber
  &= \E_\tau\sum_{0 \leq s <t_0} \sum_{y^s: y_s = k} \Pr [Y^s =
  y^s]\Pr [s + T_{\text{left}(y^s)} - T_{\text{right}(y^s)} \geq
  \tau]\\ 
\nonumber
    &\leq \E_\tau \sum_{0 \leq s <t_0} \sum_{y^s: y_s = k} \Pr [Y^s =
    y^s] \Pr [T_{\text{left}(y^s)} \geq \tau-s]\\ 
\nonumber
       &\leq  \sum_{0 \leq s <t_0} \sum_{y^s: y_s = k} \Pr [Y^s =
        y^s]  \Pr [T_{\text{left}(y^s)} \geq t_1 - t_0]\\ 
  &\leq t_0 \cdot \Pr[T_{\text{left}}(t_0) \geq t_1-t_0].
  \label{eq:waitthing}
\end{align}
In the last step we have used the fact that $T_{\text{left}(y^s)}$ is
a nondecreasing function of $s$.  To bound the contribution to the $\|
\cdot \|_\ast$ norm, observe that $\| (1,1,\ldots,1) \|_\ast = 1/3
+ 1/3^2 + \ldots \leq 1/2$.  Thus the contribution from the first term
is $\leq\frac{t_0}{2} \cdot \Pr [T_{\text{left}(y^{t_0})} \geq t_1 - t_0]$.

 Next we argue about the time average of the second term in \eqref{eq:breakingintoterms}.
\ba
\label{eq:firstterm}
\sum_{\substack{s \geq t_0\\ y^s: y_s = k}}
 \Pr [Y^s = y^s] \E_\tau  \Pr [W_{s-1}<  \tau \leq W_s] &\leq
  \sum_{\substack{0 \leq s \leq  4 t_2\\ y^s: y_s = k}} \Pr [Y^s = y^s] \E_\tau  \Pr [W_{s-1}<  \tau \leq W_s]. &\text{(part i)}\\
&+  \sum_{s > 4 t_2} \max_{y^s: y_s = k} \E_\tau  \Pr [W_{s-1}<  \tau \leq W_s]. &\text{(part ii)}
\label{eq:secondterm}
\ea

We now analyze each part independently
\begin{enumerate}
\item[(part i)] Write
\be
\E_\tau  \Pr [W_{s-1}<  \tau \leq W_s]  = \E_W \E_\tau  I [W_{s-1}<  \tau \leq W_s]. 
\ee
Here $\E_W$ is the expectation value over wait times $W_{y_1}, \ldots, W_{y_s}$, and $I[W_{s-1}<  \tau \leq W_s]$ is the indicator of the event $W_{s-1}<  \tau \leq W_s$. 

We first bound $\E_W \E_\tau  I [W_{s-1}<  \tau \leq W_s]$.  Fix $y^s$ such that
$y_s=k$, and for $a\leq b$ integers, let $[a,b]$ denote the set
$\{a,a+1,\ldots, b\}$.  Then
\ba\nonumber
\E_W \E_\tau  I [W_{s-1}<  \tau \leq W_s],
&=
\E_W \frac{|[t_1,t_2]\cap [W_{s-1},W_s]|}{T},
\\ 
& \leq  \E_W \frac{|[W_{s-1},W_s]|}{T}.
\label{eq:interval}
\ea
There are two possibilities for the random variable
$|[W_{s-1},W_s]| = W_s - W_{s-1}$; one for $k < \frac 56n$ and one for $k \geq
\frac 56n$:
\be
W_s - W_{s-1} \sim
\begin{cases}
\mathsf{Geo}({1-\alpha(k)}) & k < \frac 56n\\
\mathsf{Bern} (\beta_k) & k \geq \frac 56n
\end{cases}
\ee
Therefore
\ba
\nonumber
\E_W [W_s - W_{s-1}]  &\leq \mathsf{Geo}({1-\alpha(k)}) + \mathsf{Bern} (\beta_k)\\ 
\nonumber
& \leq \frac {5 n (n-1)}{2k (3n-1)} + 1/2\\
&\leq \frac{3n} k.
\ea
Using this in \eqref{eq:interval} and \eqref{eq:firstterm} we find the bound
\ba
 \sum_{\substack{0 \leq s \leq  4 t_2\\ y^s: y_s = k}} \Pr [Y^s = y^s] \E_\tau  \Pr [W_{s-1}<  \tau \leq W_s] &\leq  \sum_{\substack{0 \leq s \leq  4 t_2\\ y^s: y_s = k}} \Pr [Y^s = y^s] \frac {3n}{kT}
 &\leq  \sum_{t_0 \leq s \leq  4 t_2} \Pr [Y_s = k] \frac {3n}{kT}
 \label{eq:somebound}
\ea

\item [(part ii)] For the second part we use
\ba
\nonumber
\sum_{s > 4 t_2} \max_{y^s: y_s = k} \E_\tau  \Pr [W_{s-1}<  \tau \leq W_s] &\leq \sum_{s > 4 t_2} \max_{y^s: y_s = k} \E_\tau  \Pr [W_{s-1}<  \tau]\\
\nonumber
&\leq \sum_{s > 4 t_2} \max_{y^s: y_s = k} \max_{t_1 \leq \tau \leq t_2}  \Pr [W_{s-1}<  \tau]\\
\nonumber
&\leq \sum_{s > 4 t_2} \max_{y^s: y_s = k} \max_{t_1 \leq \tau \leq t_2}  \Pr [s-1 + T_{\text{left}(y^s)} < \tau + T_{\text{right}(y^s)}]\\
\nonumber
&\leq \sum_{s > 4 t_2} \max_{y^s: y_s = k} \max_{t_1 \leq \tau \leq t_2}  \Pr [s-1 -  \tau <T_{\text{right}(y^s)}]\\
&\leq \sum_{s > 4 t_2} \max_{y^s: y_s = k} \Pr [s-1 -  t_2 <T_{\text{right}(y^s)}]
\label{eq:stail}
\ea

Now recall from Equation \eqref{eq:betabern} we know that $T_{\text{right}(y^s)}$ is statistically dominated by $\mathsf{Bin}(s,1/2)$. So the RHS of \eqref{eq:stail} gets bounded by:
\ba
\nonumber
&\leq  \sum_{s > 4 t_2} \Pr [s -  t_2 \leq \mathsf{Bin} (s,1/2)]\\
\nonumber
&\leq  \sum_{s > 4 t_2} \sum_{k = s -t_2}^s \frac{{s \choose k}}{2^s}\\
\nonumber
&\leq  \sum_{s > 4 t_2} t_2 \frac{{s \choose t_2}}{2^s} &\hspace{-2cm}\text{(using }s > 4 t_2)\\
\nonumber
&\leq  \sum_{s > 4 t_2} t_2 \frac{{s \choose s/4}}{2^s} &\hspace{-2cm}\text{(using }s > 4 t_2)\\
\nonumber
&\leq  \sum_{s > 4 t_2} t_2 \frac{(4 e)^{s/4}}{2^s}\\
\nonumber
&\leq  t_2 \cdot \sum_{s > 4 t_2}  \frac{1}{1.09^s}\\
&\leq  12 t_2  \frac{1}{1.4^{t_2}}
\label{eq:nottoolong}
\ea
\end{enumerate}

Using \eqref{eq:somebound}, \eqref{eq:nottoolong}, \eqref{eq:waitthing} and \eqref{eq:breakingintoterms}
\be
\E_{\tau} \Pr [X_\tau = k] \leq  t_0 \cdot \Pr[T_{\text{left}}(t_0) \geq t_1-t_0] + \sum_{t_0 \leq s \leq  4 t_2} \Pr [Y_s = k] \frac {3n}{kT} + 12 t_2  \frac{1}{1.4^{t_2}}
\ee
Therefore
\be
\E_{\tau} \|P_\tau\|_* \leq  \frac{t_0}{2} \cdot
\Pr[T_{\text{left}}(t_0) \geq t_1-t_0] + 
\frac 1T \sum_{t_0 \leq s \leq  4 t_2}\|Q_s\|_{\text{\Hsquare}} + 6 t_2  \frac{1}{1.4^{t_2}}.
\ee

\end{proof}

\subsection{Towards exact constants}
\label{sec:constants}
Here we discuss what constant factors we may expect from the bound in Theorem \ref{thm:cg}. We do not consider the case of $D$-dimensional graphs here.

What is the right time scale in order to get anti-concentration?  Since Pauli strings of
weight $k$ have contribution $1/3^k$ as well as expected wait-time of $\approx n/k$, it
seems reasonable to guess that lower values of $k$ contribute more to the
anti-concentration probability.  On the other hand, the initial distribution of $k$ is
centered around $n/2$.  Still, enough probability mass survives at low values of $k$ to
yield a non-trivial lower bound in \thmref{cg}.

Thus, let us focus initially on walks starting with weight $k=1$.
Here the expected ``escape time'' from the low-$k$ sector (say to
$k=n/2$) is $\approx \frac{5}{6} n \ln n$, and, simultaneously, it
takes $\approx \frac{5}{6} n \ln n$ time to hit $\frac 34 n -o(n)$.
This is the basis for the following conjecture. A special case of this conjecture for anti-concentration was recently resolved in \cite{DHJB20}.

\begin{conjecture}
\OldNormalFont{} 
If $t = \frac 56 n \ln n + o(n \ln n)$ then $\Pr_{C \sim \mu^{(\text{CG})}_t}   [|\bra{x}C\ket{0}|^2 \geq \frac{\alpha}{2^n} ] = \Omega(1)$.
\label{conj:exact-constant}
\end{conjecture}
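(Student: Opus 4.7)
}
The approach has two stages: a Paley--Zygmund reduction from anti-concentration to the expected collision probability, and a sharpening of each quantitative estimate in the Markov-chain analysis of \secref{cg} so that the constants match the heuristic $\tfrac 53 n\ln n$. For the reduction, I would work with the Pauli-twirled ensemble $\cC_x$ used in \thmref{supremacy}, equivalently taking an independent uniform $b\in\{0,1\}^n$ and applying Paley--Zygmund to
$Y = 2^n|\bra{b}C\ket{0}|^2$,
which has $\E Y = 1$ and $\E Y^2 = 2^n \E_C\Coll(C)$. So the desired $\Omega(1)$ anti-concentration bound follows at once from any bound of the form $\E_C\Coll(C)\leq K/2^n$ with $K = O(1)$. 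The conjecture therefore reduces to showing $\E_C\Coll(C) = O(1)/2^n$ at time $t = \tfrac53 n\ln n + o(n\ln n)$ --- a $\ln n$ improvement over the $O(n\ln^2 n)$ provided by \thmref{cg}.

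For this sharpened bound I would follow the pipeline Theorem~\ref{thm:collisiontochain} $\to$ Proposition~\ref{prop:combining} $\to$ analysis of $\|Q_s\|_{\text{\Hsquare}}$ and of the wait time $T_{\mathrm{left}}$, tightening each estimate. The Krawtchouk diagonalization of $Q$ from Section~\ref{sec:exact-OU} already delivers a sharp constant: plugging \eqref{eq:alpha} into \eqref{eq:errr} gives
\[
\|Q_s\|_{\text{\Hsquare}} \leq \frac{2}{2^n} + \frac{27n^2}{2^n}\sum_{m\geq 1}\binom{n-1}{m}\bigl(3\,e^{-4s/(3n-1)}\bigr)^{m},
\]
and the sum is $o(1/2^n)$ already once $s\geq (\tfrac 34+\varepsilon)n\ln n$. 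What remains is to show that $s = (\tfrac 34+\varepsilon)n\ln n$ steps of the accelerated chain correspond, up to lower-order fluctuations, to $t = \tfrac 53 n\ln n + o(n\ln n)$ steps of the original chain, i.e.\ $T_{\mathrm{left}}(s) = (\tfrac 56+o(1)) n\ln n$ with high probability. To achieve this I would replace the Janson-style geometric tail bound in \secref{waitabc} by a direct second-moment argument using the Ornstein--Uhlenbeck estimate of Proposition~\ref{prop:Ybound}: this predicts $\E[\rho_x] = \Theta(1/x)$ uniformly for $1\leq x\leq \tfrac 34 n$, so that $\E[T_{\mathrm{left}}(s)] = \sum_x \E[\rho_x]\cdot \tfrac{5n}{2x} = (\tfrac 56+o(1))\,n\ln n$ by a discrete integration, with matching upper-tail concentration controlled by the good event $A$ of \propref{Hc-bound}. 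Finally, \propref{nonincreasing} (monotonicity of $\E_C\Coll(C)$ in $t$) lets us convert the time-averaged bound of \propref{combining} into a pointwise bound at the target time $\tfrac 53 n\ln n$, with only $o(1)$ loss provided the averaging window has width $o(n\ln n)$.

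The main obstacle is the sharp upper tail on $T_{\mathrm{left}}(s)$. The geometric wait at site $x$ has standard deviation $\Theta(n/x)$, so the naive variance of $T_{\mathrm{left}}$ summed over a typical trajectory scales as $\sum_x \rho_x \cdot (n/x)^2$, which is too large to be controlled by a second-moment inequality if the $\rho_x$ are treated as independent. The way out is that $\rho_1,\rho_2,\ldots$ are strongly negatively correlated through the trajectory of $Y_\tau$: once $Y_\tau$ reaches $\tfrac 34 n$, it returns to low-$x$ only rarely, a fact which is quantified by the Ornstein--Uhlenbeck picture of Proposition~\ref{prop:Ybound} and by the exact eigenvector formula $x^{(m)}(i) \propto K^{(i)}(m)$. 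Concretely I expect the hardest technical step to be a sharp decoupling estimate saying that, conditioned on the event $A$, the random vector $(\rho_1,\dots,\rho_{\log n})$ is close in distribution to independent Poisson-type variables with means $\Theta(1/x)$, so that Bernstein-type tail bounds apply. Handling the heaviest few sites ($x\in\{1,2,3\}$) may additionally require a separate truncation argument at scale $O(n\log\log n)$ combined with Proposition~\ref{prop:Ms-bound} to control the number of revisits. A secondary, more routine obstacle is verifying that the coupon-collector factor $\binom{n}{m}e^{-tm/n}$ in \thmref{collisiontochain} contributes only $1+o(1)$ once summed at our time scale, which should follow from $t/n\gg \ln n$ as in the proof of \thmref{cg}.
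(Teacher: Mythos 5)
First, note that \prettyref{conj:exact-constant} is stated in the paper as a \emph{conjecture}: the paper offers no proof, only the heuristic of \secref{constants}, which estimates the constant $\tfrac53=\tfrac56+\tfrac56$ directly from the exact birth--death hitting-time formula for the \emph{original} chain $P$ (escape of the left tail from $k=O(1)$ to $n/2$, plus drift of the bulk from $n/2$ to $\tfrac34 n$). Your reduction via Paley--Zygmund and the twirled ensemble is sound and is exactly how the paper converts collision-probability bounds into anti-concentration, so the conjecture does reduce to $\E_C\Coll(C)=O(2^{-n})$ at $t=\tfrac53 n\ln n+o(n\ln n)$. But your plan for that bound has two concrete gaps. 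First, the claim that $\|Q_s\|_{\text{\Hsquare}}=O(2^{-n})$ already at $s=(\tfrac34+\varepsilon)n\ln n$ does not follow from plugging \eqref{eq:alpha} into \eqref{eq:errr}: the factor $\tfrac{3n}{2(m+1)}$ in the bound on $\alpha_m$ leaves an $n^2$ prefactor, so the $m=1$ term is $\approx n^3 e^{-4s/(3n)}$ and the displayed bound only closes at $s\gtrsim\tfrac94 n\ln n$. To get $\tfrac34 n\ln n$ you would need to prove the overlap bound is loose, e.g.\ that $\alpha_1^2=\Theta(n)$ rather than $\Theta(n^3)$ (which is plausible --- $\alpha_1$ is essentially the mean displacement $n/2-3n/4$ normalized by the stationary standard deviation $\Theta(\sqrt n)$ --- but it is a step you have not supplied).

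Second, the wait-time accounting is internally inconsistent and conflates the unconditional chain with the chain conditioned on a low start. From the typical start $Y_0\approx n/2$, sites $x\ll n/2$ are essentially never visited and $\E[T_{\text{left}}]=O(n)$; the quantity $\tfrac56 n\ln n$ arises only conditioned on $Y_0=O(1)$, where each site on the way up is visited $\Theta(1)$ times (not $\Theta(1/x)$ times) and the per-visit wait is $\approx\tfrac{5n}{6x}$ (not $\tfrac{5n}{2x}$); as written, your formula $\sum_x\Theta(1/x)\cdot\tfrac{5n}{2x}$ evaluates to $O(n)$, not $\tfrac56 n\ln n$. Because of the $3^{-k}$ weighting, the entire difficulty of the conjecture lives in this conditional, low-start regime, where one needs the upper tail of $T_{\text{left}}$ sharp to a $1+o(1)$ multiplicative factor against a target probability of order $\binom{n}{Y_0}^{-1}$ --- precisely the large-deviation statement your plan defers to a hoped-for ``Poisson decoupling'' of the $\rho_x$. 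Finally, note that your two ingredients add to $\tfrac34+\tfrac56=\tfrac{19}{12}<\tfrac53$; monotonicity (\propref{nonincreasing}) means this discrepancy is not fatal, but it signals that the accelerated-steps-plus-wait decomposition does not naturally reproduce the conjectured constant, whereas the paper's heuristic obtains $\tfrac53$ by tracking original-chain hitting times for the two regimes separately. As it stands this is a credible research plan sharing the paper's machinery, not a proof.
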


Here is the reasoning behind this conjecture.  Recall that the transition matrix $P$ is a
birth-death chain, with probability of moving forward, backwards, and staying put being
$p_l$, $q_l$ and $r_l$, respectively. Let $\pi$ be the stationary distribution. Let $T_l=
\min \{t: X_t \geq l\}$ be the time of hitting the chain site $l$ starting from the first site.
 For any birth-death chain, starting at site $l-1$ \cite{APL09}, the expected time of moving one step forward is
\be 
\bbE_{l-1} ( T_l) = \frac{1}{q_l} \sum_{i=1}^{l-1} \dfrac{\pi(i)}{\pi(l)}.
 \ee
In our chain
\be 
\bbE_{l-1} ( T_l)=\frac{5}{2} \sum_{i=1}^{l-1} \dfrac{{n \choose i}}{{n-2 \choose l-2} 3^{l-i}}.
 \ee

In order to bound this we use the inequalities (proven in \cite{BP17})
\be 
{n-2 \choose l-2} \leq {n-2 \choose i-1} \left (\dfrac{n-i-1}{i} \right)^{l-i-1},
 \ee
\noindent and
\be 
{n \choose i} \leq {n \choose l-1} \left (\dfrac{l-1}{n-l+2} \right)^{l-i-1}.
 \ee
Therefore
\begin{eqnarray}
\label{eq:e_l}
\bbE_{l-1} ( T_l) &\leq& \frac{5}{6} \sum_{i=1}^{l-1} \dfrac{{n \choose l-1}}{{n-2
                               \choose l-2} }
\left (\dfrac{l-1}{3(n-l+2)}\right )^{l-i-1}, \\
\label{eq:e_l-simplified}
&\leq& \frac{5}{6}  n  \left(\dfrac{1}{l-1}+\frac{1}{3n/4-l+7/4} \right) \left(1 + O(1/n) \right).
\label{eq:bound-hit-time}\end{eqnarray}
The last line holds for $l < \frac 34 n$. The transition from \eqref{eq:e_l} to
\eqref{eq:e_l-simplified} is directly inspired by Equation (2) of [the arXiv version of] \cite{BP17}.

To bound the time of reaching $\frac 34 n-\delta$ for some $\delta\geq 0$ we sum
\eq{bound-hit-time} over $1 \leq l \leq \frac 34 n-\delta$ and neglect the $1+O(1/n)$ corrections.
\ba
\bbE_1[T_{\frac 34n-\delta}]
& \leq \frac 56 n \left(\sum_{l=1}^{\frac 34n-\delta}  \frac 1\ell +\frac{1}{3n/4-l+11/4} \right)\\
\label{eq:hitting-time-terms}
&\approx \frac 56 n \left(\ln\left(\frac{\frac 34n-\delta}{1}\right) + 
\ln\left(\frac{3n/4+7/4}{\delta + 11/4}\right)\right)\\
& = \frac 56 n\left( \ln\frac{n^2}{\delta+1} + O(1)\right).\ea
Using this bound, for $a < b$ we can also compute $\bbE_{a} [T_b]$ as $\bbE_{1} [T_b] -
\bbE_{1} [T_a]$.  We wish to estimate $\bbE_a[T_b]$ in two main regimes.  Recall that our
starting distribution is $\Bin(n,1/2)$ and the stationary distribution is $\Bin(n,3/4)$.
Thus we need to know the time for most of the probability mass to reach $\approx 3/4n$,
and for the left tail of the initial distribution to reach the left tail of the final
distribution. (The right tail is less demanding and less important, because it does not
have the long wait times and it is suppressed by the $1/3^k$ factors.)  For the bulk of
the probability distribution we use the estimate $\bbE_{n/2} ( T_{3/4 n - O(1)}) \lesssim
\frac 56 n \ln n$.  For the left tail, we use the bound $\bbE_1[T_{0.74n}] \lesssim \frac
56n\ln n$.  In each case the time required is $\frac 56 n\log n + O(n)$.

%
%

\newpage
\section{Alternative proof for anti-concentration of the outputs of random circuits with nearest-neighbor gates on $D$-dimensional lattices}
\label{sec:2Dmain}

\subsection{The $D=2$ case}
In this section we consider a simplified version of $\mu^{\lattice,n}_{2,c,s}$, where $c=1$ and that $K^{(t)}_{\mu^{\lattice,n}_{2,1,s}} = k^s_R k^s_C$. We prove the following:
\begin{theorem} \OldNormalFont{}
If $s = O ( \sqrt n + \ln (1/\eps) )$ then $\mu^{\lattice,n}_{2,1,s}$ satisfies
\be
\E_{C\sim \mu^{\lattice,n}_{2,s}}\Coll (C) \leq \frac{2}{2^n+1} (1+\eps).
\ee
\label{thm:designs2restatement}
\end{theorem}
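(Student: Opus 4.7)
The plan is to adapt the proof strategy of Theorem~\ref{thm:grid} part~1 to the collision-probability observable, avoiding the exponential norm-conversion cost by exploiting the highly structured form of $\rho = \proj{0^n}^{\otimes 2}$ and $Y = \sum_x \proj{x}^{\otimes 2}$.

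First, I would reduce from $g^s_R g^s_C$ to the ``ideal'' $G_R G_C$ via completely positive ordering. By Corollary~6 of \cite{BHH-designs}, taking $s = O(\sqrt n + \ln(1/\eps))$ makes each single-row $g^s_r$ a strong $\delta$-approximate $2$-design on its row for any $\delta = \eps/(C\sqrt n)$ with a suitable constant $C$. Taking tensor products across rows and then composing with the column step, Corollary~\ref{cor:overlappingdesigns} gives the cp ordering
\[
(1-\eps/2)\,\Channel[G_R G_C] \;\preceq\; \Channel[g^s_R g^s_C] \;\preceq\; (1+\eps/2)\,\Channel[G_R G_C].
\]
Because $\E_C \Coll(C) = \Tr[Y\cdot\Channel[G^{(2)}_\mu](\rho)]$ with both $Y$ and $\rho$ psd, this cp ordering transfers directly to the collision probability, reducing the theorem to showing $\Tr[Y\cdot\Channel[G_R G_C](\rho)] \leq \frac{2}{2^n+1}(1+\eps/2)$.

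Second, I would evaluate $\Tr[Y\cdot\Channel[G_R G_C](\rho)]$ exactly in the swap-operator basis. Writing the column twirl of a pure state as $(I_c + F_c)/(D(D+1))$, with $D = 2^{\sqrt n}$ and $F_c$ the swap of column~$c$ between the two copies, one has
\[
\Channel[G_C](\rho) \;=\; \frac{1}{(D(D+1))^{\sqrt n}}\sum_{S\subseteq [\sqrt n]} F_S, \qquad F_S \;=\; \prod_{c\in S} F_c \;=\; \bigotimes_r G^{(S)}_r,
\]
where $G^{(S)}_r$ is the swap between the two copies of precisely those qubits of row $r$ that lie in columns $S$. A direct calculation of $\Tr G^{(S)}_r$ and $\Tr[G^{(S)}_r F_r]$ together with the standard twirl identity $\E U^{\otimes 2} X U^{\dagger\otimes 2} = \alpha I_r + \beta F_r$ yields $\alpha^{(|S|)} + \beta^{(|S|)} = (2^{|S|} + 2^{\sqrt n - |S|})/(2^{\sqrt n}+1)$. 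Invariance of $Y$ under every partial swap (since $F_r\ket{xx}=\ket{xx}$) implies $\Tr\bigl[Y\prod_{r\in T}F_r\bigr] = 2^n$ for every $T\subseteq [\sqrt n]$, so summing the binomial expansion $\bigotimes_r(\alpha I_r + \beta F_r) = \sum_T \alpha^{\sqrt n-|T|}\beta^{|T|}\prod_{r\in T}F_r$ over $T$ yields the closed form
\[
\Tr[Y\cdot\Channel[G_R G_C](\rho)] \;=\; \frac{1}{(2^{\sqrt n}+1)^{2\sqrt n}}\sum_{s=0}^{\sqrt n}\binom{\sqrt n}{s}\bigl(2^s + 2^{\sqrt n - s}\bigr)^{\sqrt n}.
\]

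Finally, I would bound this sum. The extremal terms $s\in\{0,\sqrt n\}$ each contribute $(2^{\sqrt n}+1)^{\sqrt n}$, giving a main contribution of $\frac{2}{(2^{\sqrt n}+1)^{\sqrt n}}$, which is at most $\frac{2}{2^n+1}$ since $(2^{\sqrt n}+1)^{\sqrt n}\geq 2^n+1$. For $s\in[1,\sqrt n-1]$, the estimate $2^s + 2^{\sqrt n - s} \leq 2\cdot 2^{\sqrt n - \min(s,\sqrt n-s)}$ combined with the $1+2^{2\min(s,\sqrt n -s)-\sqrt n}$ correction shows that all intermediate terms together multiply the main contribution by a factor of at most $1 + O(\sqrt n\cdot 2^{-\sqrt n})$, which is absorbed into $\eps$ for sufficiently large $n$. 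The principal obstacle throughout is that the operator-norm bound $\|G_R G_C - G_\Haar\|_\infty \leq 2^{-\Omega(\sqrt n)}$ of Lemma~\ref{lem:2Dproj} by itself is insufficient: converting to a collision-probability bound via $Y$ costs a factor of $\|Y\|_1 = 2^n$ that wipes out the exponential gap. The resolution is to bypass norm conversions entirely and instead exploit the tensor-product structure of $\rho$ and $Y$ to evaluate the moment in the swap basis, with the cp ordering of the first step transferring this evaluation back to the true distribution $\mu^{\lattice,n}_{2,1,s}$.
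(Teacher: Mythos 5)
Your proposal is correct, and while its first stage coincides with the paper's (the cp-ordering reduction via Corollary \ref{cor:overlappingdesigns} from $g_R^s g_C^s$ to the idealized $G_R G_C$, together with the observation that cp ordering transfers to $\Tr[Y\cdot(\cdot)(\rho)]$ for psd $Y,\rho$ --- this is exactly Proposition \ref{prop:10}), the core computation replacing Proposition \ref{prop:collpalace} is genuinely different. The paper works in the Pauli basis: it tracks the Markov chain of Pauli strings, isolates the ``all-$\sigma_0$ row'' events that keep the collision probability large after the first Haar round, and then bounds the probability $\Pr[E_y]$ that specified columns remain all-zero after the second round; this argument is probabilistic and yields $\frac{2}{2^n+1}(1+1/\poly(n))$. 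You instead expand $\Channel[G_C](\proj{0^n}^{\ot 2})$ in the swap basis as $\frac{1}{(D(D+1))^{\sqrt n}}\sum_S F_S$, exploit that each $F_S$ factors as $\bigotimes_r G_r^{(S)}$ across rows, that only $\alpha+\beta = (2^{|S|}+2^{\sqrt n - |S|})/(2^{\sqrt n}+1)$ survives because $Y$ is fixed by every partial copy-swap, and obtain the exact closed form $\frac{1}{(2^{\sqrt n}+1)^{2\sqrt n}}\sum_s\binom{\sqrt n}{s}(2^s+2^{\sqrt n-s})^{\sqrt n}$; I have verified the coefficients $\Tr G_r^{(S)} = 2^{2\sqrt n-|S|}$, $\Tr[G_r^{(S)}F_r]=2^{\sqrt n+|S|}$ and the consistency of the normalization (e.g.\ the $S=\emptyset$ term reproduces $(2^{\sqrt n}+1)^{-\sqrt n}$). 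What your route buys is an exact expression and a sharper error term, $\frac{2}{(2^{\sqrt n}+1)^{\sqrt n}}(1+O(\sqrt n\,2^{-\sqrt n}))$ versus the paper's $1+1/\poly(n)$; what it gives up is the Markov-chain picture that the paper reuses verbatim for the general $D$-dimensional and scrambling/decoupling arguments, where the swap-basis bookkeeping would be less transparent. One caveat applies equally to both proofs: for $\eps$ below the intrinsic correction of the ideal circuit (below $\sqrt n\,2^{-\sqrt n}$ in your case, below $1/\poly(n)$ in the paper's) the stated inequality is not actually established by either argument, so this is a weakness of the theorem statement rather than of your proof.
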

This result is already established in Theorem \ref{thm:grid}, we give an alternative proof based on a reduction to a classical probabilistic process. This alternative approach may help with the analysis of random circuits on arbitrary graphs.

We use the following two statements
\begin{lemma}[Brand\~ao-Harrow-Horodecki`13 \cite{BHH-designs}] \OldNormalFont{}
Let $t = O(\sqrt{n} + \ln \frac{1}{\eps})$ then
\be
\Channel[g^t_{\text{Rows}}] \preceq  \bigotimes_{i \in \text{Rows}} \Channel[G_i] \cdot (1+\eps ).
\ee
the same holds for $\Channel[g^t_{\text{Columns}}]$.
\label{l:7}
\end{lemma}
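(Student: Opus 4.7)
The plan is to reduce this lemma to the one-dimensional strong $t$-design theorem from \cite{BHH-designs} and then tensorize the resulting CP ordering across disjoint rows. First, I would use the fact that distinct rows act on disjoint sets of qudits, so the row-level brick-wall gates commute and multiply on disjoint tensor factors. Consequently $g^s_{\text{Rows}} = \bigotimes_{i \in \text{Rows}} g^s_{r_i}$ as unitaries, hence $\Channel[g^s_{\text{Rows}}] = \bigotimes_i \Channel[g^s_{r_i}]$ as superoperators.

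Next, I would apply Corollary~6 of \cite{BHH-designs} (the one-dimensional strong approximate $t$-design theorem for brick-wall random circuits) to each row of $\sqrt{n}$ qudits. This yields the per-row CP ordering
\[
\Channel[g^s_{r_i}] \preceq (1+\eps')\,\Channel[G_{r_i}]
\]
whenever $s = \poly(t)\,(\sqrt{n} + \ln(1/\eps'))$.

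The key remaining step is to show that the CP ordering tensorizes: if $\cA_i \preceq \cB_i$ for $i = 1,\ldots, \sqrt{n}$ with all $\cA_i, \cB_i$ completely positive on disjoint systems, then $\bigotimes_i \cA_i \preceq \bigotimes_i \cB_i$. This follows from the telescoping identity
\[
\bigotimes_i \cB_i - \bigotimes_i \cA_i = \sum_{j=1}^{\sqrt{n}} \Bigl(\bigotimes_{i<j}\cB_i\Bigr) \otimes (\cB_j - \cA_j) \otimes \Bigl(\bigotimes_{i>j}\cA_i\Bigr),
\]
which is a sum of cp maps since each factor is cp. Applying this with $\cA_i = \Channel[g^s_{r_i}]$ and $\cB_i = (1+\eps')\Channel[G_{r_i}]$, I conclude
\[
\Channel[g^s_{\text{Rows}}] \;\preceq\; (1+\eps')^{\sqrt{n}}\, \bigotimes_{i \in \text{Rows}} \Channel[G_{r_i}].
\]

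Finally, I would choose $\eps' = \eps/(2\sqrt{n})$, so that $(1+\eps')^{\sqrt{n}} \leq 1+\eps$. The BHH depth requirement then becomes $\poly(t)\,(\sqrt{n} + \ln(\sqrt{n}/\eps)) = O(\sqrt{n} + \ln(1/\eps))$ after absorbing the $\ln\sqrt{n}$ term into $\sqrt{n}$. The identical argument, with rows replaced by columns, yields the stated bound for $g^s_{\text{Columns}}$. The proof is essentially bookkeeping; the only subtlety is the $\sqrt{n}$-fold amplification of the per-row multiplicative error, which is handled cheaply because the BHH depth bound scales only logarithmically in $1/\eps'$.
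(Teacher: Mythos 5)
Your proposal is correct and is essentially the argument the paper intends: its ``proof'' of Lemma \ref{l:7} is just the citation to \cite{BHH-designs}, and elsewhere (e.g.\ in the proof of Theorem \ref{thm:grid}) the same reduction is carried out by applying Corollary~6 of \cite{BHH-designs} to each row and combining via the comparison machinery of Lemma \ref{lem:comparison} / Corollary \ref{cor:overlappingdesigns}, which is exactly the tensorization your telescoping identity makes explicit. Your error bookkeeping is also sound: shrinking the per-row error to $\eps/(2\sqrt{n})$ costs only an additive $\ln\sqrt{n}$ in the per-row depth, which is absorbed into $O(\sqrt{n}+\ln(1/\eps))$ since the design order is fixed here.
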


\begin{proof}
This result is proved by Brand\~ao-Harrow-Horodecki in \cite{BHH-designs}.
\end{proof}

\begin{proposition}
\OldNormalFont{} Let $K_i$ an $\eps$ approximate $2$-designs on row or column $i \in \{R,C\}$, in the sense that
\be
K_i \preceq (1 + \eps) \Channel [G_i]
\ee
then for any sequence of rows or columns $i_1,\ldots, i_t$
\be
\Coll( K_{i_t} \ldots K_{i_1}) \leq (1+\eps)^t \Coll( \Channel [G_{i_t}] \ldots \Channel[G_{i_1}]).
\ee
\label{prop:10}
\end{proposition}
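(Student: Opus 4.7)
The plan is to reduce the claim to two independent observations: (i) the cp-ordering is preserved under composition (the comparison lemma from \secref{comparison}), and (ii) the collision functional $\Coll(\cdot)$ is itself monotone with respect to the cp-ordering. Neither of these requires non-trivial work, and together they yield the claim with no additional loss beyond the $(1+\eps)^t$ factor.

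For (i), every $K_{i_j}$ and every $\Channel[G_{i_j}]$ is completely positive, so the hypothesis $K_{i_j} \preceq (1+\eps)\Channel[G_{i_j}]$ gives a sequence of cp-orderings to which Lemma \ref{lem:comparison} applies. Setting $\cA_j := K_{i_j}$ and $\cB_j := (1+\eps)\Channel[G_{i_j}]$ in that lemma, I would immediately obtain
\be
K_{i_t}\cdots K_{i_1} \;\preceq\; (1+\eps)^t\, \Channel[G_{i_t}]\cdots \Channel[G_{i_1}].
\ee

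For (ii), I would unpack Definition \ref{def:coll} and write
\be
\Coll(\cA) \;=\; \Tr\bigl[M \cdot \cA(\rho_0)\bigr],
\ee
where $M := \sum_{x\in\{0,1\}^n}\proj{x}\tensor \proj{x}$ and $\rho_0 := \proj{0^n}\tensor \proj{0^n}$ are both psd. If $\cB-\cA$ is cp, then $(\cB-\cA)(\rho_0)$ is psd, and since the Hilbert--Schmidt inner product of two psd operators is non-negative, $\Tr[M(\cB-\cA)(\rho_0)]\geq 0$. Hence $\cA\preceq \cB$ implies $\Coll(\cA)\leq \Coll(\cB)$.

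Applying (ii) to the cp-ordering produced by (i) yields the claimed bound $\Coll(K_{i_t}\cdots K_{i_1}) \leq (1+\eps)^t\,\Coll(\Channel[G_{i_t}]\cdots \Channel[G_{i_1}])$. There is no serious technical obstacle; the whole argument hinges on the fact that $\Coll$ is evaluated by pairing a psd input with a psd measurement, which is precisely the setting in which the cp-ordering survives. The only minor notational point is to identify the $\Delta_i$ on the right-hand side of the statement with the Haar moment superoperator $\Channel[G_i]$ on row (or column) $i$, which is immediate from Definition \ref{def:mm}.
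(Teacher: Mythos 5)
Your proposal is correct and matches the paper's intended argument: the paper proves this proposition by pointing to the comparison lemma (Lemma \ref{lem:comparison}), whose claim that cp-orderings compose is exactly your step (i), and the monotonicity of $\Coll$ under the cp-ordering (your step (ii), via pairing the psd input $\proj{0^n}\tensor\proj{0^n}$ with the psd observable $\sum_x \proj{x}\tensor\proj{x}$) is the same device the paper uses elsewhere, e.g.\ in the proof of Theorem \ref{thm:54}. Your identification of $\Delta_i$ with $\Channel[G_i]$ is also the correct reading of the paper's notation.
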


\begin{proof}
This proposition is proved in Section \ref{sec:comparison-proofs}.
\end{proof}

Putting these together
\be
\Coll(\mu^{\lattice,n}_{2, 1, s}) \leq  ( 1+\eps )^2 \Coll(\Channel[G_R G_C]).
\ee

Therefore our objective is to show that
\begin{proposition} \OldNormalFont{}
\be
\Coll(\Channel[G_R G_C]) \leq \frac{2}{2^n+1} \left(1+\frac 1 {\poly(n)}\right).
\ee
\label{prop:collpalace}
\end{proposition}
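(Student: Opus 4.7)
The plan is to compute $\Coll(\Channel[G_R G_C])$ essentially in closed form by exploiting the fact that every operator in sight factorizes across either rows or columns. Using that $\Channel[G_R]$ and $\Channel[G_C]$ are self-adjoint (the Haar measure is invariant under $U\mapsto U^{\dagger}$), I will rewrite
\[
\Coll(\Channel[G_R G_C]) = \Tr\!\bigl(A\cdot \Channel[G_R G_C](B)\bigr) = \Tr\!\bigl(\Channel[G_R](A)\cdot \Channel[G_C](B)\bigr).
\]
Now $A=\bigotimes_r A_r$ with $A_r=\sum_{x\in\{0,1\}^{\sqrt n}}\ket{xx}\bra{xx}$ on row $r$'s two copies, and $B=\bigotimes_c B_c$ with $B_c=(\ket{0^{\sqrt n}}\bra{0^{\sqrt n}})^{\otimes 2}$ on column $c$'s two copies; moreover $\Channel[G_R]=\bigotimes_r\Channel[G_{\Haar(r)}]$ and $\Channel[G_C]=\bigotimes_c\Channel[G_{\Haar(c)}]$.

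The $t=2$ Schur--Weyl formula on each row/column then lets me evaluate these factors explicitly. Writing $D=2^{\sqrt n}$ and letting $F$ denote SWAP on two copies of a row or column, the standard identity $\Channel[G^{(2)}_\Haar](X)=\frac{(D\Tr X-\Tr(FX))\,I+(D\Tr(FX)-\Tr X)\,F}{D(D^2-1)}$ together with $\Tr A_r=\Tr(F_r A_r)=D$ and $\Tr B_c=\Tr(F_c B_c)=1$ yields
\[
\Channel[G_{\Haar(r)}](A_r)=\frac{I_r+F_r}{D+1},\qquad
\Channel[G_{\Haar(c)}](B_c)=\frac{I_c+F_c}{D(D+1)}.
\]
Expanding $\bigotimes_r(I_r+F_r)\cdot\bigotimes_c(I_c+F_c)=\sum_{R,C\subseteq[\sqrt n]}F_R F_C$, observing that $F_R F_C$ swaps qubit $(r,c)$ precisely when $r\in R$ XOR $c\in C$, and using $\Tr(F_RF_C)=2^{2n-|S_{R,C}|}$ with $|S_{R,C}|=\sqrt n(|R|+|C|)-2|R||C|$, the sum over $C$ telescopes via the binomial theorem. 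This delivers the clean identity
\[
\Coll(\Channel[G_R G_C]) \;=\; \frac{1}{(D+1)^{2\sqrt n}}\sum_{k=0}^{\sqrt n}\binom{\sqrt n}{k}D^{-k}(D+4^k)^{\sqrt n}.
\]

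It then suffices to bound the above sum. The $k=0$ term is $(D+1)^{\sqrt n}$; the $k=\sqrt n$ term is $D^{-\sqrt n}(D+D^2)^{\sqrt n}=(D+1)^{\sqrt n}$ as well (here I use $4^{\sqrt n}=D^2$, which is the one place $d=2$ enters). These two endpoint terms contribute exactly $2/(D+1)^{\sqrt n}$ to $\Coll$, and since $(D+1)^{\sqrt n}\geq 2^n+1$ this endpoint piece already lies at or below the target $\tfrac{2}{2^n+1}$. For $1\leq k\leq \sqrt n-1$ I will use $(D+4^k)^{\sqrt n}\leq D^{\sqrt n}\exp(\sqrt n\cdot 2^{2k-\sqrt n})$: for small $k$ the factor $\binom{\sqrt n}{k}D^{-k}$ already provides exponential decay, and by the symmetry $k\leftrightarrow \sqrt n-k$ visible in the formula the large-$k$ range is controlled identically. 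A short estimate will then show the total middle-term contribution is $2^{-\Omega(\sqrt n)}\cdot\tfrac{2}{2^n+1}$, which is precisely the $1+1/\poly(n)$ slack in the statement. The main obstacle will be keeping the constants straight in this last step, especially the conversion between $(D+1)^{\sqrt n}$ and $2^n+1$; no representation-theoretic or Markov-chain input is needed beyond the one-site Schur--Weyl formula.
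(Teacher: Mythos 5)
Your proof is correct, and it takes a genuinely different route from the paper's. The paper works in the Pauli basis: it uses the Markov-chain picture of the second-moment operator, notes that after the row twirl each row is all-identity with probability $1/2^{\sqrt n}$ and otherwise uniformly distributed over non-identity strings, and then bounds the probability that a given set of columns is still all-identity when the column twirl arrives; the $\tfrac{2}{2^n+1}$ emerges from a binomial sum over these "surviving zero columns." You instead evaluate $\Coll(\Channel[G_R G_C])=\Tr\bigl(\Channel[G_R](A)\,\Channel[G_C](B)\bigr)$ exactly: the one-site $t=2$ Schur--Weyl formula collapses each row and column factor to $(I+F)/(D+1)$ and $(I+F)/(D(D+1))$ respectively, and expanding the product of swaps gives the closed form
\be
\Coll(\Channel[G_R G_C])=\frac{1}{(D+1)^{2\sqrt n}}\sum_{k=0}^{\sqrt n}\binom{\sqrt n}{k}D^{-k}\bigl(D+4^k\bigr)^{\sqrt n},\qquad D=2^{\sqrt n}.
\ee
I verified the identity (including $F_RF_C=F_{S_{R,C}}$, $\Tr F_S=2^{2n-|S|}$, the telescoping over one index, and the symmetry $T_k=T_{\sqrt n-k}$ of the summands) and the endpoint/middle-term estimates; they are sound, the endpoint terms contribute $2/(D+1)^{\sqrt n}\leq 2/(2^n+1)$, and the middle terms are smaller by a factor $O(\sqrt n\,2^{-\sqrt n})$, so you actually obtain the stronger error $1+2^{-\Omega(\sqrt n)}$ in place of $1+1/\poly(n)$. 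What each approach buys: yours is exact, shorter, and self-contained for the $D=2$, $t=2$ case at hand, and it cleanly isolates where the two "endpoint" contributions (the Haar value and the all-identity sector) come from; the paper's Markov-chain argument is looser here but is the one that extends to the $D$-dimensional case (Theorem \ref{thm:52}) and to the scrambling and decoupling bounds, where a product of more than two twirls cannot be split into two halves by a single adjointness move.
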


%
%
%

\begin{proof}
Using the Markov chain interpretation discussed in Section \ref{objective}, the initial distribution on the chain is
\be
V_0 := \frac{1}{2^n}  (\sigma_0\tensor \sigma_0 + \sum_{\substack{p \in \{0,3\}^n\\
p\neq 0}} \sigma_p\tensor \sigma_p ),
\ee
and after the application a large enough random quantum circuit the distribution converges to 
\be
V^\ast := \frac{1}{2^n} \sigma_0\tensor \sigma_0 +  (1-\frac{1}{2^n} ) \cdot \frac{1}{4^n-1}\sum_{\substack{p \in \{0,1,2,3\}^n\\
p\neq 0}} \sigma_p\tensor \sigma_p,
\ee
and we want to see how fast this convergence happens. 

For clarity, throughout this proof we represent distributions along the full lattice by capital letters (such as $V$) and for individual rows or columns with small letters (such as $v^i$ for distribution $v$ on row or column $i$). Also, for simplicity we write $0$ instead of $\sigma_0\tensor \sigma_0$, and $\sigma_0^i$ for all zeros across row or column $i$.

$V_0$ is separable across any subset of nodes. So the initial distribution along each row or column is exactly
\be
\frac{1}{2^{\sqrt{n}}} (\sigma_0 + \sum_{\substack{p \in \{0,3\}^{\sqrt{n}}\\
p\neq 0}} \sigma_p\tensor \sigma_p ) =: v_0.
\ee

 After one application of $\Channel[G_R]$ each such distributions become
\be
v^\ast := \frac{1}{2^{\sqrt{n}}} \sigma_0\tensor \sigma_0 + (1-\frac{1}{2^{\sqrt{n}}}  ) \frac{1}{4^{\sqrt{n}}-1}\sum_{\substack{p \in \{0,1,2,3\}^{\sqrt{n}}\\
p\neq 0}} \sigma_p\tensor \sigma_p =: \frac{1}{2^{\sqrt{n}}} \sigma_0 +  (1-\frac{1}{2^{\sqrt{n}}}  ) v.
\label{eq:4.34}
\ee
Here we have defined 
\be
v := \frac{1}{4^{\sqrt{n}}-1}\sum_{\substack{p \in \{0,1,2,3\}^{\sqrt{n}}\\
p\neq 0}} \sigma_p\tensor \sigma_p.
\ee
therefore the distribution along the full chain is $V_1 :=  ( \frac{1}{2^{\sqrt{n}}}\sigma_0 + (1-\frac{1}{2^{\sqrt{n}}})v )^{\tensor \sqrt{n}}$. We also use the notation $v^y \sigma_0^{\backslash y} :=  \otimes_{i : y_i = 1} v \otimes \bigotimes_{i : y_i = 0} \sigma_0$, for $y \in \{0,1\}^{\sqrt{n}}$.

Before getting to the analysis, we should first understand the main reason why $\Coll (\Channel[G_R])$ is large.

After we apply $\Channel[G_R]$ the collision probability across each row is exactly $\frac{2}{2^{\sqrt{n}}+1}$. So the collision probability across the whole lattice is $\approx \frac{2^{\sqrt{n}}}{2^n}$; which is much larger (by a factor of $2^{\sqrt{n}}$) than what we want. The crucial observation here is that if in \eqref{eq:4.34} we project out all the $\sigma_0$ terms across each row, then the bound becomes $\approx \frac{1}{2^n}$. So what really slows this process are the zero $\sigma_0$ terms. The issue is that, after an application of $\Channel[G_R]$, all zeros states get projected to themselves. However, if one applies $\Channel[G_C]$ they get partially mix with other rows. So the objective is to show that after application of $\Channel[G_C] \Channel[G_R] $ for \emph{constant} number of times, these zeros disappear with large enough probability.

Let $V_s$ be the distribution along the full chain after we apply $ (\Channel[G_C] \Channel[G_R] )^s$. Eventually we want to compute
\be
\Coll(\Channel[G_s])=\frac{1}{2^n} \Tr \left(v_0^{\tensor \sqrt{n}}V_s \right) =: \kappa (V_s). 
\ee
Here we have defined the map
\be
\kappa : A \mapsto \frac{1}{2^n} \Tr  ( V_0 A ).
\ee

As a result
\bea
V_1 =  \otimes_{r \in \text{Rows}} \frac{1}{2^{\sqrt{n}}} \sigma_0^r +  (1-\frac{1}{2^{\sqrt{n}}}  ) v^r &=& \sum_{y \in \{0,1\}^{\sqrt{n}}} \frac{1}{2^{\sqrt{n}(\sqrt{n} - |y|)}}  (1-\frac{1}{2^{\sqrt{n}}}  )^{|y|} v^{y} \sigma_0^{\backslash y}.
\label{V_1}
\eea
An important observation here is that
\be
\kappa_i \left(\frac{1}{2^{\sqrt{n}}} \sigma_0^i \right) = \frac{1}{2^{\sqrt{n}}}, \hspace{1.5cm} \kappa  \left((1- \frac{1}{2^{\sqrt{n}}}) v^{ i} \right) =\frac{(1- \frac{1}{2^{\sqrt{n}}})}{2^{\sqrt{n}}+1} < \frac{1}{2^{\sqrt{n}}}.
\ee
the relevant information here is that when $\kappa$ is applied to the summation in \eqref{V_1}, it amounts to 
\be
\kappa(V_1) < \frac{1}{2^n} \sum_{y \in \{0,1\}^{\sqrt{n}}} 1 = \frac{2^{\sqrt{n}}}{2^n}.
\ee
In other words, each $\sigma_0$ term contributes to the number $1$ in the above summation. That means if we had started with the distribution
\be
V' =  \bigotimes_{r \in \text{Rows}}o(1/\sqrt{n}) \frac{1}{2^{\sqrt{n}}} \sigma_0^r +  \left(1-o(1/\sqrt{n}) \frac{1}{2^{\sqrt{n}}} \right) v^r,
\ee
then we would have obtained
\be
\kappa(V') = \frac{2}{2^n+1} \left(1+\frac 1{\poly(n)}\right),
\ee
which is exactly what we want. The last relevant piece of information is that if $v''_j$ is a distribution over row $j$ that with probability $1$ contains a nonzero item, then when $\Channel[G_j]$ is applied to it, it will instantly get mapped to $v_j$. This phenomenon is related to strong stationarity in Markov chain theory.

We claim that after the first application of $\Channel[G_C]$, the expected collision probability is according to the bound claimed in this theorem. In order to see this, we consider the distribution $V_1$ (\eqref{V_1}), this time along each column. Note that the distribution along columns. For any set of columns $j_1, \ldots, j_k$ let $E_{j_1, \ldots, j_k}$ be the event that these columns are all zeros, and the rest of the columns have at least one non-zero element in them. Here we use the notation $E_{j_1, \ldots , j_k} \equiv E_y$ for $y \in \{0,1\}^{\sqrt{n}}$ such that the $j_1, \ldots, j_k$ locations of $y$ are ones and the rest of its bits are zeros.

Therefore
\bea
\Coll (\Channel [G_C] V_1 ) &=& \sum_{y \in \{0,1\}^{\sqrt{n}}} \Pr [E_y ] \kappa \left(\sigma_0^y V^{\backslash y} \right)\nonumber \\
&=& \frac{1}{2^n} + \sum_{y \in \{0,1\}^{\sqrt{n}}\backslash} \Pr [E_y ]  \left(\frac{1}{2^{\sqrt{n}+1}} \right)^{\sqrt{n} - |y|}.\nonumber
\eea
Let $p_0 := \frac{1}{2^{\sqrt{n}}}+ \frac{1}{4}(1- \frac{1}{2^{\sqrt{n}}})$. The main observation is that for each such $y$,
\be
\Pr [E_y ] \leq p_0^{\sqrt{n} |y|} \left(1-p^{\sqrt{n}}_0 \right)^{\sqrt{n}-|y|}.
\ee
Therefore
\bea
\Coll (\Channel[G_C] V_1 ) &\leq& \frac{1}{2^{n}}+\sum_{y \in \{0,1\}^{\sqrt{n}}\backslash 0}p_0^{\sqrt{n} |y|} \left(1-p^{\sqrt{n}}_0 \right)^{\sqrt{n}-|y|}  \left(\frac{1}{2^{\sqrt{n}+1}} \right)^{\sqrt{n} - |y|}\nonumber \\
&=& \frac{1}{2^n}+ \left(p_0^{\sqrt{n}}+ \left(1-p_0^{\sqrt{n}}\right)\frac{1}{2^{\sqrt{n}+1}} \right)^{\sqrt{n}}\nonumber \\
&=& \frac{1}{2^n} + \frac{1-\frac{1}{2^{n}}}{2^n+1} \left(1+\frac 1 {\poly(n)}\right)\nonumber \\
&=&  \frac{2}{2^n+1} \left(1+\frac 1{\poly(n)}\right),
\eea
and this completes the proof.

\end{proof}

\subsection{Generalization to arbitrary $D$-dimensional case}
See Section \ref{sec:definitions} for definitions in this section. In particular, we need definitions for $\Channel[g_i]$, $K_i$ and $\Channel[G_i]$ for each coordinate $i$ of the lattice, and $K_t = (\prod_i k_i)^t$.

In this section we prove that 
\begin{theorem} \OldNormalFont{}
$D$-dimensional $O(D n^{1/D} + D \ln (\frac{D}{\eps}))$-depth random circuits on $n$ qubits have expected collision probability $\frac{2}{2^n+1} \left(1+\frac 1 {\poly(n)}\right)$.
\label{thm:52}
\end{theorem}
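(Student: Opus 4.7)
The plan is to prove Theorem~\ref{thm:52} by induction on $D$, with the base case $D=2$ being Theorem~\ref{thm:designs2restatement}. For the inductive step I would mimic the two-step structure of the $D=2$ argument exactly: first reduce the random-circuit collision probability to that of an idealized product of Haar projectors, and then bound the latter directly using the same decomposition technique.

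For the reduction, by Lemma~\ref{l:7} the 1-D factors $g^s_{\text{Rows}(D,n)}$ (one per row in direction $D$) are $\eps'$-approximate $t$-designs on each row after depth $O(n^{1/D}+\ln(1/\eps'))$, and by the inductive hypothesis---strengthened from a collision-probability statement to a cp-approximate-design statement, which is already available from Theorem~\ref{thm:lattice}---the $(D-1)$-dim factors $G^{(t)}_{\mu^{\lattice,n^{1-1/D}}_{D-1,c,s}}$ (one per plane) are $\eps'$-approximate $t$-designs on each plane. Choosing $\eps'=\eps/n$ to absorb the tensor-product blow-up via Lemma~\ref{lem:tensorerror}, Proposition~\ref{prop:10} applied to the two-layer structure (planes, then rows) yields
\be
\Coll\!\bigl(\Channel[G^{(t)}_{\mu^{\lattice,n}_{D,1,s}}]\bigr) \leq (1+O(\eps))\,\Coll\!\bigl(\Channel[G_{\text{Rows}(D,n)}\,G_{\text{Planes}(D)}]\bigr),
\ee
and a straightforward bookkeeping of depths gives the claimed $O(D n^{1/D}+D\ln(D/\eps))$ total.

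The core step is a generalization of Proposition~\ref{prop:collpalace}: I would bound $\Coll(\Channel[G_{\text{Rows}(D,n)}\,G_{\text{Planes}(D)}])$ by $\tfrac{2}{2^n+1}(1+1/\poly(n))$. After $G_{\text{Planes}(D)}$ acts on $V_0$, each plane is independently either in the identity state (with probability $2^{-n^{1-1/D}}$) or in the uniform non-identity Pauli state on the plane, mirroring the dichotomy used in the 2D proof. Applying $G_{\text{Rows}(D,n)}$ then randomizes every row in direction $D$ that contains at least one non-identity point, while leaving all-identity rows unchanged. Decomposing over the set $y\subseteq[n^{1-1/D}]$ of all-identity rows, the per-non-identity-row evaluation of $\kappa$ is the direct analog of $\kappa_i(v)=1/(2^{n^{1/D}}+1)$ from the 2D argument, and the probability $\Pr[E_y]$ factorizes over the $n^{1/D}$ independent planes. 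Within a single plane, the probability that $|y|$ specified positions are all $\sigma_0$ is at most $p_0^{|y|}$ with $p_0\approx 1/4$, giving $\Pr[E_y]\lesssim p_0^{n^{1/D}|y|}(1-p_0^{n^{1/D}})^{n^{1-1/D}-|y|}$. The sum over $y$ then collapses to $\bigl(p_0^{n^{1/D}}+(1-p_0^{n^{1/D}})/(2^{n^{1/D}}+1)\bigr)^{n^{1-1/D}}$, and because $p_0^{n^{1/D}}\approx 4^{-n^{1/D}}$ is exponentially smaller than $2^{-n^{1/D}}$, the dominant term $(1/(2^{n^{1/D}}+1))^{n^{1-1/D}}$ delivers the desired bound.

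The main obstacle I expect is the inductive invariant: Theorem~\ref{thm:52} as stated only controls the collision probability, but the reduction step needs the strong cp-approximate-$t$-design property in order to invoke Proposition~\ref{prop:10}. The cleanest remedy is to import this stronger property from Theorem~\ref{thm:lattice}, which already supplies strong designs at sub-linear depth for $(D-1)$-dim lattices, and then to verify that the extra $\ln n$ overhead from distributing a budget of $\eps$ across $\Theta(n^{1-1/D}+n^{1/D})$ factors is absorbed into the $Dn^{1/D}$ term of the depth. A subsidiary annoyance is that the naive bound $q_{|y|}\le p_0^{|y|}$ used implicitly in the 2D proof is only tight in the leading Haar order; handling its lower-order corrections may require slightly inflating $p_0$ or carrying a more refined binomial expansion, but this does not affect the asymptotic conclusion.
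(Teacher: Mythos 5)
Your overall two-step strategy (reduce to a product of exact Haar projectors via the cp-comparison lemma, then do a Pauli-string computation) is the paper's, and your combinatorial core --- the planes-then-rows generalization of Proposition \ref{prop:collpalace}, with each plane all-identity with probability $2^{-n^{1-1/D}}$ and each surviving row contributing $1/(2^{n^{1/D}}+1)$ --- is a sound computation about the idealized operator $\Channel[G_{\text{Rows}(D,n)}G_{\text{Planes}(D)}]$. The gap is in the reduction. To invoke Proposition \ref{prop:10} you need each factor to satisfy the one-sided cp bound $K \preceq (1+\eps')\Channel[G_{\Haar}]$, i.e.\ the \emph{strong} design property of Definition \ref{def:strongdesigns}. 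For the 1-D row factors this is available (BHH gives strong designs for 1-D brickwork at depth linear in the row length $n^{1/D}$). But for the $(D-1)$-dimensional plane factors at depth $O((D-1)n^{1/D})$ --- sub-linear in the $n^{1-1/D}$ qubits of a plane --- the strong design property is precisely what the paper cannot establish and lists as an open problem; Theorem \ref{thm:lattice} gives monomial, diamond, operator-norm and trace-norm convergence, none of which yields the cp ordering relative to $\Channel[G_\Haar^{(t)}]$ without an exponential norm-conversion loss. Nor can the inductive invariant be weakened to a collision-probability statement: $\Coll$ of the $(D-1)$-dim circuit is a single scalar and does not license substituting $G_{\text{Planes}(D)}$ for that circuit inside a larger composition.

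The paper sidesteps this by never treating a $(D-1)$-dimensional block as a unit: the circuit is decomposed into $D$ layers of \emph{one-dimensional} brickwork, one layer per coordinate direction, each of depth $O(n^{1/D}+\ln(D/\eps))$; every factor is then a strong design by BHH, and the comparison lemma reduces everything to $\Coll\bigl(\prod_{i=1}^D \Channel[G_i]\bigr)$ with only 1-D Haar projectors. The combinatorial estimate is then run direction by direction --- after applying $G_1,\dots,G_i$ each $i$-dimensional slab is all-identity with probability $\approx 2^{-i n^{1/D}}$, so after all $D$ directions each row is all-identity with probability $\approx 2^{-Dn^{1/D}}$ --- rather than in your single planes-then-rows step. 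If you want to keep your recursive decomposition, you must replace $G_{\text{Planes}(D)}$ by the product of 1-D row projectors in directions $1,\dots,D-1$ and verify that your plane-marginal estimates (e.g.\ the per-site identity probability $p_0\approx 1/4$) survive that replacement; doing so is essentially the paper's slab argument.
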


\begin{proof} The proof is basically a generalization of the proof for Theorem \ref{thm:designs2restatement}. Here we sketch an outline  and avoid repeating details. In particular, we need generalizations of
Lemma \ref{l:7} and Proposition \ref{prop:10}


The generalization of Lemma \ref{l:7} is simply that $k_i^t$ for $t = O(n^{1/D} + \ln \frac{D}{\eps})$ is an $\frac{\eps}{d}$-approximate $2$-design.
Proposition \ref{prop:10} naturally generalizes to: if for each coordinate $K_i$ is an $\frac{\eps}{D}$-approximate $2$-design then
\be
\Coll  \left(\prod_i K_i \right) \leq \left(1+\frac{\eps}{D}\right)^{D}  \cdot \Coll  \left(\prod_i \Channel [G_i]  \right).
\ee

Our objective is then to show
\be
\Coll  \left(\prod_i \Channel[G_i]  \right) = \frac{2}{2^{n}+1} \left(1+\frac 1 {\poly(n)}\right).
\ee
This last step may be the most non-trivial part in this proof.

Here we just outline the proof. For detailed discussions see the proof of Proposition \ref{prop:collpalace}. We first separate the all zeros state of the chain which contributes as $1/2^{n}$ to the expected collision probability. After the application of $G_1$ on the first coordinate, each row in this coordinate, will be all zeros vector with probability $1/2^{n^{1/D}}$ and $V$ with probability $1-1/2^{n^{1/D}}$. After the application of $G_2$ each plane in the direction $1,2$ will be all zeros with probability $\approx 1/2^{2 n^{1/D}}$ and $V$ with probability $\approx 1- 1/2^{2 n^{1/D}}$. After the application of $G_3$ each plane in $1,2,3$ direction is all zeros with probability $\approx 1/2^{3 n^{1/D}}$ and $V$ otherwise, and so on. Eventually after the application of $G_d$ the distribution along the chain is all zeros with probability $\approx 1/2^{D n^{1/D}}$ and $V$ otherwise. At this point the distribution along each individual row in each coordinate is $\approx 1/2^{D n^{1/D}} 0 + (1-1/2^{D n^{1/D}}) V$. So the collision probability across each such row is 
\be
\approx \frac{1}{2^{D n^{1/D}}} +\frac{1}{2^{n^{1/D}}}.
\ee
Therefore the collision probability across the full chain is
\be
\approx \frac{1}{2^n} +  \left(\frac{1}{2^{D n^{1/D}}} +\frac{1}{2^{n^{1/D}}} \right)^{n^{1-1/D}} \approx \frac{1}{2^n} + \frac{1}{2^{n}} \exp \left(\frac{1}{2^{d n^{1/D}}} n^{1-1/D}\right).
\ee

\end{proof}

%
%
%

\begin{corollary} \OldNormalFont{}
$O(\ln n \ln \ln n)$-depth random circuits with long-range gates have expected collision probability $\frac{2}{2^n+1} \left(1+\frac 1 {\poly(n)}\right)$.
\end{corollary}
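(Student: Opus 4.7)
The plan is to derive the corollary as a direct consequence of Theorem~\ref{thm:52}, by choosing the lattice dimension to be logarithmic in $n$ and observing that long-range gates impose no connectivity restrictions. First I would note that a random circuit with long-range two-qubit gates can realize any specific pairwise-gate pattern at no depth cost, since any pair of qubits can be acted on in a single parallel layer. Hence the $D$-dimensional lattice circuit of Theorem~\ref{thm:52} embeds into a long-range circuit of the same depth (any non-lattice gates we apply on top cannot increase the collision probability, by Proposition~\ref{prop:nonincreasing}).

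Next, I would pick $D = c \ln n$ for a small constant $c>0$, so that the side length $n^{1/D} = e^{1/c}$ is a constant. The ``spatial'' part of the depth bound, $D \cdot n^{1/D}$, then collapses to $O(\ln n)$. For the ``precision'' part, I would invoke the generalization of Lemma~\ref{l:7} stated inside the proof of Theorem~\ref{thm:52}: each per-row random circuit on $n^{1/D}=O(1)$ qudits is an $\eps/D$-approximate $2$-design in depth $t = O(n^{1/D} + \ln(D/\eps)) = O(\ln(D/\eps))$. The $D$ coordinates compound multiplicatively via the generalization of Proposition~\ref{prop:10}, giving $\Coll \leq (1+\eps/D)^D \cdot \Coll(\prod_i \Channel[G_i]) \leq (1+\eps)\cdot \tfrac{2}{2^n+1}(1+\tfrac1{\poly(n)})$. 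Choosing $\eps = 1/\poly(n)$ makes the compounded factor $(1 + 1/\poly(n))$ as desired, and the total depth is
\[
D\cdot t = O\!\left(\ln n \cdot \ln\!\tfrac{D}{\eps}\right) = O(\ln n \cdot \ln \ln n + \ln n \cdot \ln \tfrac1\eps).
\]
To hit the advertised bound of $O(\ln n \ln \ln n)$, I would argue that in the $t=2$ setting we really only need per-coordinate control over the collision probability (a single second-moment quantity), not a full $\eps/D$-approximate $2$-design in the strong sense, so the $\ln(1/\eps)$ cost per coordinate may be reduced to $O(\ln \ln n)$ while still giving $\Coll \leq \tfrac{2}{2^n+1}(1+1/\poly(n))$.

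The main obstacle is precisely this last sharpening: the direct application of Theorem~\ref{thm:52} with $D=c\ln n$ gives the correct $O(\ln n)$ spatial term effortlessly, but the precision term $D\ln(D/\eps)$ is naively $O(\ln^2 n)$ when $\eps=1/\poly(n)$. Cutting this down to $O(\ln n \ln \ln n)$ requires decoupling the per-coordinate $2$-design error from the overall collision-probability error, which in turn depends on (i) verifying that the recursion underlying Theorem~\ref{thm:52} still goes through at $D=O(\ln n)$ (larger than the bound $D = O(\ln n/\ln \ln n)$ of Lemma~\ref{lem:generalDproj}, but acceptable in the restricted regime $t=2$ as noted in the introduction), and (ii) replacing the generic design condition by a direct second-moment comparison at each level of the recursion, so that the per-row depth only needs to scale with $\ln\ln n$ rather than $\ln n$. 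Once these two refinements are in place, multiplying the $D=O(\ln n)$ coordinates by the per-coordinate depth $O(\ln \ln n)$ yields exactly the claimed bound.
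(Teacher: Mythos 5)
Your route is the paper's route: the entire proof given there is the single sentence ``Set $D = \ln n$ in Theorem~\ref{thm:52},'' so the substitution $D = \Theta(\ln n)$, $n^{1/D} = O(1)$ is exactly what is intended. Two remarks. First, your opening paragraph about embedding the lattice circuit into a long-range circuit and invoking Proposition~\ref{prop:nonincreasing} is unnecessary: the corollary is simply asserting that the $D=\Theta(\ln n)$ lattice ensemble, whose gates are by then geometrically unconstrained, is itself a depth-$O(\ln n \ln\ln n)$ family of circuits with long-range gates; no additional gates are appended and no monotonicity argument is required. Second, the ``main obstacle'' you identify is genuine, and it is not resolved by the paper either. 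Reading off $O(Dn^{1/D}+D\ln(D/\eps))$ at $D=\ln n$ yields $O(\ln n\ln\ln n)$ only if $\eps$ is treated as a constant, whereas the advertised $\frac{2}{2^n+1}\left(1+\frac{1}{\poly(n)}\right)$ forces $\eps=1/\poly(n)$ through the $(1+\eps/D)^D$ accounting; worse, each coordinate layer is a tensor product of $n^{1-1/D}$ per-row circuits whose cp-ordering errors compound multiplicatively, so the per-row design error must be $O(1/(n\ln n))$, which costs $\Omega(\ln n)$ depth per row even though each row has only $O(1)$ qudits, and hence $\Omega(\ln^2 n)$ total depth by the design-composition method of Lemma~\ref{l:7} and Proposition~\ref{prop:10}. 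Your suggested repair --- replacing the generic design condition by a direct second-moment comparison at each level of the recursion so that only $O(\ln\ln n)$ depth per row is needed --- is precisely the missing step, and neither your proposal nor the paper carries it out. In that sense your write-up is no less complete than the paper's own one-line argument, but be aware that the claimed $1+1/\poly(n)$ accuracy at depth $O(\ln n\ln\ln n)$ rests on this unproven sharpening.
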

\begin{proof}
Set $D = \ln n$ in Theorem \ref{thm:52}.
\end{proof}

\newpage

\section{Scrambling and decoupling with random quantum circuits}

\label{sec:scr}

In this section we reconstruct some of the results of Brown and Fawzi \cite{BF13,BF13-2}. The paper \cite{BF13-2} proves random circuit depth bounds required for scrambling and some weak notions of decoupling. We are able to use our proof technique to reconstruct and improve on the results of this paper. \cite{BF13} on the other hand introduces a stronger notion of decoupling with random circuits. Unfortunately our method does not seem to yield any results about this model.

We first define an approximate scrambler based on \cite{BF13-2}.
\begin{definition}
[Scramblers]\OldNormalFont{} $\mu$ is an $\eps$-approximate scrambler if for any density matrix $\rho$ and subset $S$ of qubits with $|S| \leq n/3 $
\be
\E_{C \sim \mu} \|\rho_S(C) - \frac{I}{2^{|S|}} \|^2_1 \leq \eps.
\ee
where $\rho_S(C) = \Tr_{\backslash S}C \rho C^\dagger$ and $\Tr_{\backslash S}$ is trace over the subset of qubits that is complimentary to $S$.
\end{definition}

We show that small depth circuits from $\mu^{\lattice,n}_{D,c,s}$ are good scramblers.

\begin{theorem} \OldNormalFont{} If $s = O(D \cdot n^{1/D} + \ln D)$ and $c= 1$ then $\mu^{\lattice,n}_{D,c,s}$ is a $\frac 1{\poly(n)}$-approximate scrambler. In particular, for $D = O(\ln n)$ this corresponds to an ensemble of $O(\ln n \ln \ln n)$ depth circuits that are $\frac 1 {\poly(n)}$-approximate scramblers.
\label{thm:54}
\end{theorem}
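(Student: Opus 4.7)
The plan is to reduce scrambling to a second-moment bound and then analyze it using the Pauli-basis row-projector machinery from the proof of Theorem~\ref{thm:52}. First, by Cauchy--Schwarz,
\be
\E_C \left\|\rho_S(C)-\frac{I}{2^{|S|}}\right\|_1^2 \leq 2^{|S|}\left(\E_C\Tr[\rho_S(C)^2]-\frac{1}{2^{|S|}}\right),
\ee
and the swap trick rewrites $\Tr[\rho_S(C)^2]=\Tr[(\swap_{SS}\otimes I_{\bar S\bar S})(C\rho C^\dagger)^{\otimes 2}]$, where $\swap_{SS}$ acts on two copies of $S$. Expanding the swap and $\rho$ in the Pauli basis gives $\swap_{SS}\otimes I_{\bar S\bar S}=\frac{1}{2^{|S|}}\sum_{p:p|_{\bar S}=0}\sigma_p\otimes\sigma_p$ and $\rho^{\otimes 2}=\frac{1}{4^n}\sum_{p,q}c_p c_q\,\sigma_p\otimes\sigma_q$ with $c_p:=\Tr[\rho\sigma_p]$, so $\E_C\Tr[\rho_S(C)^2]$ becomes a weighted sum of Pauli-basis matrix elements of $\Channel[G^{(2)}_\mu]$. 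A Weingarten computation places the Haar value of this sum within $O(2^{-2n/3})$ of $1/2^{|S|}$ when $|S|\leq n/3$, so it suffices to match this value up to additive error $1/(2^{|S|}\poly(n))$.

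Next, following Theorem~\ref{thm:52}, I replace each elementary block $\Channel[g^s_{\text{Rows}(\alpha,n)}]$ by its ideal row-Haar projector $\Channel[G_{\text{Rows}(\alpha,n)}]$. By the $t=2$ corollary of~\cite{BHH-designs}, for $s=O(n^{1/D}+\ln(D/\eps))$ each block is $(\eps/D)$-close in the strong approximation, so the overlapping-designs corollary~\ref{cor:overlappingdesigns} gives
\be
\Channel[G^{(2)}_{\mu^{\lattice,n}_{D,1,s}}]\preceq e^\eps\prod_{\alpha=1}^D\Channel[G_{\text{Rows}(\alpha,n)}].
\ee
Writing $\swap_{SS}\otimes I_{\bar S\bar S}=2\Pi_+^{SS}\otimes I_{\bar S\bar S}-I$ and observing that the identity piece cancels under trace-preservation, the cp ordering passes through the trace against the positive operator $\Pi_+^{SS}\otimes I_{\bar S\bar S}$; the extra $\eps$ loss is absorbed into the $\ln D$ slack in the depth by taking $\eps=1/\poly(n)$. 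This reduces the problem to a direct analysis of the idealized quantity $Q:=\Tr[(\swap_{SS}\otimes I_{\bar S\bar S})\prod_\alpha G_{\text{Rows}(\alpha,n)}(\rho^{\otimes 2})]$.

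For the ideal analysis, I use the explicit action of each row projector on Pauli pairs (from the remark after Definition~\ref{def:coll}): $\Channel[G^{(2)}_r]$ kills any pair $\sigma_p\otimes\sigma_q$ with $p|_r\neq q|_r$, and uniformly mixes nonzero matched row-restrictions over the $4^{n^{1/D}}-1$ nonzero row patterns. Iterating across the $D$ directions leaves only matched pairs with $p=q$ (since each qubit lies in a row in every direction) and implements a classical Markov chain on the Pauli pattern that spreads the support across the lattice. Following the zero-pattern combinatorics of Proposition~\ref{prop:collpalace}, after one full $D$-dimensional sweep the probability that any prescribed plane remains trivial on both copies is $\leq 2^{-\Omega(Dn^{1/D})}$. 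Pairing this with the Pauli decomposition of $\swap_{SS}\otimes I_{\bar S\bar S}$, which only records matched pairs whose support lies in $S$, yields $Q=1/2^{|S|}+1/(2^{|S|}\poly(n))$ whenever $|S|\leq n/3$.

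The main technical obstacle is precisely this last step: the proof of Theorem~\ref{thm:52} via Proposition~\ref{prop:collpalace} exploits the fact that $\proj{0^n}\otimes\proj{0^n}$ has a Pauli expansion supported on $\{\sigma_0,\sigma_3\}^n$ so that the row-by-row zero/nonzero bookkeeping reduces to elementary binomial counting. For arbitrary $\rho$ the coefficients $c_p c_q$ can be simultaneously large on strings of arbitrary Pauli symbols, and the bookkeeping must instead track how the row-projection Markov chain redistributes these weights over subsets $T\supseteq S$ of support. The saving features are that (i) $\swap_{SS}\otimes I_{\bar S\bar S}$ only records matched pairs whose support lies in $S$, and (ii) the condition $|S|\leq n/3$ leaves a complement of size $\geq 2n/3$ across which the $D$-dimensional row-projector product randomizes any non-$S$-supported pattern with failure probability $2^{-\Omega(Dn^{1/D})}$; combined with Cauchy--Schwarz over the Pauli weights (using $\sum_p c_p^2 = 2^n\Tr[\rho^2]\leq 2^n$) this yields the required $1/\poly(n)$ approximation.
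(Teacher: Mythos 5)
Your proposal follows essentially the same route as the paper's proof: reduce the trace-norm quantity to the purity $\E_C\Tr[\rho_S(C)^2]$, express this as a trace against the second-moment superoperator, invoke the BHH bound together with the overlapping-designs comparison to replace each directional block by its ideal row-Haar projector, and then control the resulting Pauli-string Markov chain with the zero-pattern combinatorics underlying Theorem~\ref{thm:52}. Your decomposition $\swap_{SS}\otimes I = 2\Pi_+^{SS}\otimes I - I$ is in fact a welcome patch, since the paper passes the cp ordering through the observable $A=\swap_{SS}\otimes I_{\bar S\bar S}$ by asserting it is psd, which it is not ($\swap$ has eigenvalue $-1$ on the antisymmetric subspace), and your direct Pauli-weight bookkeeping for general $\rho$ replaces the paper's appeal to Equation 3 of \cite{BF13-2} with equivalent content.
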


Brown and Fawzi show a circuit depth bound of $O(\ln^2 n)$ for random circuits with long-range interactions. Our result improves this to $O(\ln n \ln \ln n)$ depth. We believe that the right bound should be $O(\ln n)$. Moreover, no bound for the case of $D$-dimensional lattices was mentioned in their result. 

\begin{proof}
We first rewrite $\E_{C \sim \mu} \|\rho_S(C) - \frac{I}{2^{|S|}} \|^2_1 \leq 2^{|S|} \E_{C \sim \mu} \Tr (\rho^2_S(C))-1$ (to see why this is true see \cite{BF13-2}). Next, consider an arbitrary density matrix
\be
\rho= \sum_{i,j} \rho_{i,j} \ket{i}\bra{j}.
\ee
We first find an expression for $\Tr _{\backslash S} (C \rho C^{\dagger})$
\bea
\Tr _{\backslash S} (C \rho C^{\dagger}) &=& \sum_{i,j} \rho_{i,j} \Tr _{\backslash S} (C \ket{i}\bra{j} C^{\dagger})\nonumber \\
&=& \sum_{i,j} \rho_{i,j} \Tr _{\backslash S} \sum_{g, h} C_{ig} C^\ast_{jh} \ket{g}\bra{h}\nonumber \\
&=& \sum_{i,j} \rho_{i,j} \sum_{\tilde{g}, \tilde{h}} \sum_p C_{i,\tilde{g};p} C^\ast_{j,\tilde{h};p} \ket{\tilde{g}}\bra{\tilde{h}}.
\eea

Therefore
\bea
\E_{C\sim \mu^{\lattice,n}_{D,c,s}}\Tr_S \left(\Tr _{\backslash S} \left (C \rho C^{\dagger}\right)\right)^2 &=& \E_{C\sim \mu^{\lattice,n}_{D,c,s}} \Tr_S\left (\sum_{i,j} \rho_{i,j}  \sum_{\tilde{g}, \tilde{h}} \sum_p C_{i,\tilde{g};p} C^\ast_{j,\tilde{h};p} \ket{\tilde{g}}\bra{\tilde{h}} \right)^2\nonumber \\
&=& \E_{C\sim \mu^{\lattice,n}_{D,c,s}}\sum_{i,j} \sum_{k,l} \sum_{\tilde{g}_1, \tilde{h}_1} \sum_{\tilde{g}_2, \tilde{h}_2} \sum_{p,q}\nonumber \\
&&\rho_{i,j}\rho_{kl}  C_{i,\tilde{g}_1;p} C^\ast_{j,\tilde{h}_1;p}C_{i,\tilde{g}_2;q} C^\ast_{j,\tilde{h}_2;q} \delta_{\tilde{h}_1 = \tilde{g}_2}\delta_{\tilde{h}_2 = \tilde{g}_1}\nonumber \\
&=&\E_{C\sim \mu^{\lattice,n}_{D,c,s}} \sum_{ij,k,l} \sum_{a,b,c,d}\rho_{i,j}\rho_{kl}  C_{i, a ;b} C^\ast_{j,c;b}C_{i,c;d} C^\ast_{j,a;d}\nonumber \\
&=& \Tr  \left (\rho \tensor \rho \Channel \left [G^{(2)}_{\mu^{\lattice,n}_{D,c,s}}\right ] \left (\sum_{a,b,c,d}  \ket{ab}\bra{cb} \tensor \ket{cd}\bra{ad}\right ) \right)\nonumber \\
&=& \Tr  \left(\rho \tensor \rho \Channel \left [ G^{(2)}_{\mu^{\lattice,n}_{D,c,s}}\right] (A ) \right).
\eea
both $\rho \tensor \rho$ and $A$ are psd therefore using Lemma \ref{lem:comparison}
\be
\Tr  \left(\rho \tensor \rho \Channel[G_\mu^{(2)}] (A ) \right) \leq (1+\eps)^{D} \cdot \Tr \left (\rho \tensor \rho \prod_{1\leq i \leq D} \Channel [G_i]  (A ) \right).
\ee

Next, using Equation 3 of \cite{BF13-2} we reduce computation of $\Tr  (\rho \tensor \rho \prod_{1\leq i \leq D} \Channel[G_i]  (A ) )$ to the following probabilistic process: starting from a uniform distribution over $\{0,3\}^{n} \backslash I^n$ show that the probability that after the application $\prod_{1\leq i \leq D} \Channel[G_i]$ the string on Markov chain $K$ defined in Section \ref{objective} has weight $\leq n/3$ is $\poly(n)/2^{n}$ and this reconstructs theorem A.1 of \cite{BF13-2}. 

The initial state on the chain is $\frac{1}{2^n}\sum_{p \in \{0,3\}^n\backslash 00} \sigma_p \tensor \sigma_p$ we add the term $\frac{1}{2^n} \sigma_0\tensor \sigma_0$ this can only slower the process. With this modification each site is initially independently $Z \tensor Z$ or $I \tensor I$, each with probability $1/2$.

From using the proof of Theorem \ref{thm:52} after the application $\prod_i \Channel[G_i]$ the distribution along the each row  is $\approx 1/2^{D n^{1/D}} \sigma_0 \tensor \sigma_0 + (1-1/2^{D n^{1/D}}) V$. Therefore the probability that each site is zero is at most $1/4 + 1/2^{D n^{1/D}} =: 1/4 + \delta =: p_0$. Hence the probability of having at most $n/3$ is at most
\be
\sum_{k=1}^{n/3} \frac{{n \choose k}}{4^n-1} p_0^{n-k} \left (1-p_0\right)^{k} = \sum_{k=1}^{n/3} \frac{{n \choose k}}{4^n-1} \left (1/4 + \delta\right)^{n-k} \left (3/4-\delta\right)^{k} \leq  e^{4 \cdot 2/3 n \cdot \delta} \sum_{k=1}^{n/3} \frac{{n \choose k}}{4^n-1} 1/4^{n-k} (3/4)^{k}
\ee
which is within $1 + O\left (n/2^{D n^{1/D}}\right)$ of what we would expect from the Haar measure. Also when $D = O(\ln n)$ with a proper constant, this value is $1 + 1/ \poly(n)$.
\end{proof}

Next, we consider the following notion of decoupling defined in \cite{BF13-2}. Consider a maximally entangled state $\Phi_{MM'}$ along equally sized systems $M$ and $M'$ each with $m$ qubits, and a pair of equally sized systems $A$ and $A'$. Similar to \cite{BF13-2} we consider two models for $AA'$: 1) a pure state $\ket{0}_A\bra{0}$ along system $A$ with $n-m$ qubits and 2) a maximally entangled state $\phi_{AA'}$. We then apply a random circuit to systems $M' A$ and we want that for a small subsystem $S$ of $M$ the final state $\rho_{MS} (t)$ be decoupled in the sense that $\rho_{MS} (t) \approx I/2^{m+s}$.

\begin{definition}[Weak decouplers] \OldNormalFont{} a distribution $\mu$ over $\text U(2^n)$ is an $\eps$-approximate weak decoupler if $\|\rho_{MS} (t) - \frac{I_M}{2^{|M|}} \tensor \frac{I_S}{2^{|S|}}\|_1 \leq \eps$.
\end{definition}

\begin{theorem} \OldNormalFont{} Let $D$ be a constant integer. If $s = O(D \cdot n^{1/D})$ and $c= 1$ then there exists a constant $c'<1$ such that if $m < c' n^{1/D}$ then $\mu^{\lattice,n}_{D,c,s}$ is a $\frac 1 {\poly(n)}$-approximate weak decoupler. 
\end{theorem}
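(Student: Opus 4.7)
The plan is to adapt the second-moment/Markov-chain machinery used in the proof of \thmref{54} (scrambling), with modifications to track the Pauli correlations between $M$ and $M'$ produced by $\phi_{MM'}$ and the restriction to $S\subseteq M'A'$. First, reduce the decoupling $1$-norm to a $2$-norm computation: by Jensen followed by $\|X\|_1\leq \sqrt{d_{MS}}\,\|X\|_2$ for an operator on a $d_{MS}$-dimensional space,
\be
\bigl(\E_C \|\rho_{MS}(C) - I_M/2^m \ot I_S/2^{|S|}\|_1\bigr)^2
\leq 2^{m+|S|}\E_C\Tr(\rho_{MS}(C)^2) - 1,
\ee
so it suffices to show $\E_C\Tr(\rho_{MS}^2)\leq (1+1/\poly(n))/2^{m+|S|}$. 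Second, rewrite this second moment using the swap trick,
\be
\E_C\Tr(\rho_{MS}^2)
= \Tr\Bigl[(F_{MS}\ot I_{\overline{MS}}^{\ot 2})\,(\id_M^{\ot 2}\ot \Channel[G^{(2)}_{\mu^{\lattice,n}_{D,c,s}}])\bigl(\rho_{\text{init}}^{\ot 2}\bigr)\Bigr],
\ee
and apply the comparison lemma (\corref{overlappingdesigns}) to replace $\Channel[G^{(2)}_{\mu^{\lattice,n}_{D,c,s}}]$ by the idealized superoperator in which each row/plane is acted on by an independent Haar unitary, at multiplicative cost $1+1/\poly(n)$.

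The next step is to pass to the Pauli basis. Expand $\phi_{MM'} = 2^{-m}\sum_{p} \sigma_p^M \ot \tau_p^{M'}$ (with $\tau_p$ a signed Pauli) and $\psi_{A'}$ (or $\phi_{AA'}$) in the Pauli basis. Because the $M$-Paulis are never touched by $C$ while the $M'A'$-Paulis evolve according to the idealized row/plane Haar projections (which act as a classical Markov chain in the Pauli basis, exactly as in \secref{cgbackground}), and since tracing out $\overline{S}_{M'A'}$ kills any evolved Pauli leaking out of $S$, one obtains
\be
\E_C\Tr(\rho_{MS}^2)
= \frac{1}{2^{m+|S|}} + \frac{1}{2^{m+|S|}}\sum_{\text{initial Paulis } (p,q)\neq 0} w(p,q)\,\Pr\bigl[\text{evolved Pauli on } M'A' \text{ supported in } S \bigr],
\ee
where the weights $w(p,q)$ are bounded by the Pauli coefficients of $\phi_{MM'}\ot\psi_{A'}$ (or of $\phi_{MM'}\ot\phi_{AA'}$) and the initial Pauli ranges over the support of $\rho_{\text{init}}$ on $M'A'$. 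It therefore suffices to bound the excess sum by $1/\poly(n)$.

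Finally, apply the spreading analysis of the proof of \thmref{52}. The Paulis coming from $\phi_{MM'}$ are supported on $M'$, a region of size $m\leq c' n^{1/D}$. Choosing $c'$ small enough that $M'$ fits inside a single row of the $D$-dimensional lattice, one application of the row-Haar projection on that row spreads the initial $M'$-Pauli uniformly across the entire row, after which the alternating $O(D)$ rounds of row/plane Haar projections at total depth $O(D\cdot n^{1/D})$ force the evolved Pauli to attain Hamming weight $\Theta(n)$ with probability $1-1/\poly(n)$, by the same argument that gives $\Coll \approx 2\cdot 2^{-n}$ in the proof of \thmref{52}. The probability that such a heavily-spread Pauli happens to be supported entirely within $S$ (of size $\leq\alpha n$) is $2^{-\Omega(n)}$, and the remaining low-weight contributions are controlled by the $1/3^{|p|}$-style weighting and the factor $1/(4^n-1)$ normalization of the Paulis. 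Summing over initial Paulis yields the desired bound.

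The main obstacle is the geometric interplay between the small light-cone of a depth-$O(D\cdot n^{1/D})$ circuit and the fact that the initial Pauli support of $\phi_{MM'}$ is concentrated on the region $M'$. The argument above requires $M'$ to fit inside a single row/plane at the base of the recursive lattice construction so that one application of the row-Haar projection already randomizes the Pauli string over a region of size $\Theta(n^{1/D})$; this is precisely why the hypothesis $m< c' n^{1/D}$ enters. Making the constant $c'$ (and the choice of how $M'$ is embedded in the lattice) explicit, and verifying that the anti-concentration/spreading analysis from the proof of \thmref{52} carries over with the Pauli distribution restricted to start in $M'A'$ rather than uniformly on the whole lattice, is the technically delicate step; everything else is a routine modification of the scrambling proof.
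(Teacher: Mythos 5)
Your overall route is the same as the paper's: reduce the trace-norm decoupling error to the purity $2^{m+|S|}\E_C\Tr(\rho_{MS}^2)-1$, express that second moment as $\Tr\left(E\,\Channel[G^{(2)}_{\mu}](F)\right)$ via the swap trick, use the comparison lemma to replace the circuit's moment superoperator by the idealized product of row/plane Haar projectors at multiplicative cost $1+1/\poly(n)$, and then reduce to a Pauli-weight spreading statement proved by the Chernoff-type cascade from the proof of Theorem \ref{thm:52}. The gap is in the final quantitative accounting. The paper does not re-derive the reduction from decoupling to weight spreading; it invokes Theorem 3.5 of \cite{BF13-2}, whose criterion is that, starting from any Pauli string of weight $\leq m$ supported on $M'A'$, the evolved string reaches weight $\geq n/2$ with probability at least $1-4^{-m}$. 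An exponentially small (in $m$) failure probability is essential: the error term is multiplied by $2^{m+|S|}$ and summed over the $\sim 4^m$ initial Paulis coming from $\phi_{MM'}$, so a failure probability of $1/\poly(n)$, which is all you claim, does not suffice. Your closing appeal to the ``$1/3^{|p|}$-style weighting and the $1/(4^n-1)$ normalization'' is the bookkeeping for the collision probability (anti-concentration), not for decoupling, where the relevant quantity is instead the probability that a surviving low-weight Pauli lands entirely inside $S$; these weightings are not interchangeable, and without the $4^{-m}$ bound the low-weight failure events are not controlled.

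Relatedly, you misattribute the role of the hypothesis $m< c'n^{1/D}$. The paper does not need $M'$ to fit inside a single row: it reduces to initial Hamming weight $1$ by monotonicity of the spreading process, and the cascade then gives failure probability $e^{-\Omega(n^{1/D})}$ for constant $D$. The constant $c'$ is chosen precisely so that $e^{-\Omega(n^{1/D})}\leq 4^{-m}$ when $m<c'n^{1/D}$, i.e.\ so that the spreading failure probability meets the Brown--Fawzi criterion. Your geometric-embedding reading of the hypothesis is harmless but does not supply the needed inequality; to close the argument you must either cite the Brown--Fawzi reduction with its $1-4^{-m}$ threshold or redo that reduction yourself, and in either case upgrade your claimed $1-1/\poly(n)$ spreading probability to $1-4^{-m}$.
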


The depth bound Brown and Fawzi find in \cite{BF13-2} for this problem is $n^{1/D} \cdot O(\ln n)$ depth for $m = \poly(n)$. 

\begin{proof} We first show that the bound we want to calculate for the 1-norm in this theorem can be written as $\Tr \left (E \Channel\left [G_{\mu^{\lattice,n}_{D,c,s}}\right] F\right)$ where $E$ and $F$ are psd matrices. Hence using Lemma \ref{lem:comparison} we can use the overlapping projectors $\prod_i \Channel[G_i]$ instead $\Channel[G_{\mu^{\lattice,n}_{D,c,s}}]$ as the second-moment operator.

We first start with the case when $\psi_A$ is the pure state $\ket{0}_A\bra{0}$. The initial state is the (pure) density matrix
\be
\rho_{\mathrm{init}} = \frac{1}{2^m} \sum_{i,j} \ket{i}\bra{j} \tensor \ket{i0}\bra{j0}
\ee
where $\ket{i}$ runs through the computational basis of $M$ and $0$ is the initial state of $A$. After the application of a circuit $C$
\be
\rho_{\mathrm{init}} \mapsto \rho_C = \frac{1}{2^m} \sum_{i,j,k,l} \ket{i}\bra{j} \tensor \ket{k}\bra{l} C_{i0;k}C^\ast_{j0;l}.
\ee
where $C_{a;b}$ is the $ab$ entry of $C$. The density matrix corresponding to subsystem $MS$ becomes
\be
\frac{1}{2^m} \sum_{i,j, k',l', q'} \ket{i}\bra{j} \tensor \ket{k'}\bra{l'} C_{i0;k'q'}C^\ast_{j0;l'q'}.
\ee
We use the bound (also used in \cite{BF13-2})
\be
\|\rho_{MS} (C) - \frac{I_M}{2^{|M|}} \tensor \frac{I_S}{2^{|S|}}\|_1 \leq 2^{m+s} \Tr (\rho^2_{MS}(C))-1.
\label{eq:249}
\ee
Next, using the proof of Theorem \ref{thm:54} $\E_{C\sim \mu}\Tr (\rho^2_{MS}(C))$ can be written as $\Tr (C \Channel[G^{(2)}_\mu] D )$ where $C$ and $D$ are psd, hence $\Tr (C \Channel[G_{\mu^{\lattice,n}_{D,c,s}}^{(2)}] D ) \leq \Tr (C \prod _i \Channel[G_i] D) (1+\eps)$. Hence we can just use $\prod_i \Channel[G_i]$ to bound the expectation $\E_{C \sim \mu^{\lattice,n}_{D,c,s}}\|\rho_{MS} (C) - \frac{I_M}{2^{|M|}} \tensor \frac{I_S}{2^{|S|}}\|_1$. 

Next, we do the same calculation for the case when $\psi_{AA'}$ is the maximally entangled state $\frac{1}{2^{n-m}}\sum_{i,j} \ket{i}\bra{j} \tensor \ket{i}\bra{j}$. Therefore the initial density matrix is
\be
\rho_{\mathrm{init}} = \frac{1}{2^n} \sum_{i,j,k,l} \ket{i}_M\bra{j} \tensor  \ket{i}_{M'}\bra{j} \tensor \ket{k}_A\bra{l} \tensor  \ket{k}_{A'}\bra{l} = \frac{1}{2^n} \sum_{i,j,k,l} \ket{i}_M\bra{j} \tensor  \ket{i k }_{M'A'}\bra{j l} \tensor  \ket{k}_{A'}.\bra{l}
\ee
After the application of the random circuit this gets mapped to
\be
\rho_{\mathrm{init}} \mapsto \rho(C) = \frac{1}{2^n} \sum_{i,j,k,l} \ket{i}_M\bra{j} \tensor  \ket{z }_{M'A'}\bra{w} \tensor  \ket{k}_{A'}\bra{l} C_{ik, z} C^\ast_{jl, w}.
\ee
Again we can use a bound similar to \eqref{eq:249} and similar to the proof of Theorem \ref{thm:54} we can show that tracing out a subsystem, the trace of the resulting density matrix squared can be written as $\Tr \left(C \Channel[G^{(2)}_\mu] D \right)$ for $C$ and $D$ psd.

As proved in theorem 3.5 of \cite{BF13-2}, the task is to show that starting with uniform distribution over all strings with weight $\leq m = O(n^{1/D})$, prove that the probability that after the application of the random circuit the weight of the string on the chain is $\geq n/2$ is at least $1-1/4^m$. It is enough to show that this is true for the initial state with Hamming weight $1$. Without loss of generality assume the nonzero digit in this string is in the first row of the first direction. After the application of $G_1$ the first row in this direction becomes $V$. Using Chernoff bound for independent Bernoulli trials, with probability at least $1- e^{-O(n^{1/D})}$ there are at most $1/4 \cdot n^{1/D} \cdot 2^{1/D}$ zeros on this row. After the application of $G_2$ with probability at least $1- e^{-O(n^{1/D})}$ there are $1/4 \cdot n^{2/D} \cdot 2^{2/D}$, and so on. Hence after the completion of $\prod_i G_i$ with probability at least $(1- e^{-O(n^{1/D})})^D$ there are at most $1/4 \cdot n^{D/D} \cdot 2^{D/D} = n/2$ zeros on the chain. For constant $D$ the failure probability is at most $e^{-O(n^{1/D})}$ and we can choose the constant $c'$ small enough so that if $m < c' n^{1/D}$ the probability of failure is at most $1/4^m$.

\end{proof}

\appendix

\section{Proof of Theorem \ref{thm:supremacy}}
\label{sec:supremacy}
In this section we prove Theorem \ref{thm:supremacy}. The proof is directly inspired by the work of Bremner, Montanaro and Shepherd (see Theorem 6 and 7 of \cite{BMS16}). {A similar theorem was also proved in \cite{HBSE18}.}

\begin{definition} \OldNormalFont{}
Let $\mu$ be a $\frac 1 {\poly(n)}$-approximate $2$-design over the $n$-qubit unitary group. $\mathcal{C}_x$ is the family of unitaries constructed by first applying a circuit $C\sim \mu$ and then sampling an $n$-bit string $x$ uniformly at random, and then applying an $X$ gate to qubit $j$ whenever $x_j =1$.
\label{def:Cx}
\end{definition}

\begin{proof}[\OldNormalFont{} Proof of Theorem \ref{thm:supremacy}] \OldNormalFont{} Let $C$ be a random quantum circuit $\sim \mu$, and define $p_x = |\braket{x|C|0}|^2$. Denote this output distribution with $p_C$. Suppose there exists a $\BPP$ algorithm that samples from a distribution $q_x$ that is within total variation distance $\epsilon$ of $p_x$. Therefore
\be
\sum_x |p_x -q_x| \leq \epsilon.
\ee 
Stockmeyer showed that given a $\BPP$ machine, there exists an 
$\FBPP^{\NP}$ 
algorithm that computes its output probabilities within (inverse polynomial) $\frac 1 {\poly(n)}$ multiplicative error. As a result, there is an $\FBPP^{\NP}$ algorithm that for each string $x$ computes a number $\hat{q}_x$ that satisfies
\be
|{q}_x -\hat q_x| = q_x \cdot \frac 1 {\poly(n)}.
\ee
Therefore using triangle inequality
\be
\sum_x |p_x -\hat{q}_x| \leq \sum_x |p_x -q_x| + \sum_x |q_x -\hat{q}_x|  \leq \epsilon + 1 / \poly(n).
\ee
Let $0 < \delta < 1$. Using Markov's inequality, for at least $1-\delta$ fraction of $n$-bit strings (such as $y$),
\be
|p_y -\hat{q}_y| \leq \frac{\epsilon + 1/\poly(n)}{2^n \delta}.
\label{eq:delta}
\ee


Using the definition of $\mathcal{C}_x$ with probability at least $1-\delta$ over a circuit $C'$ from the family $\mathcal{C}_x$, $p'_0 = |\braket{0|C'|0}|^2$ satisfies \eqref{eq:delta}. Furthermore, we will show below that given a $\frac 1{\poly(n)}$-approximate $2$-design $\mu$, for any output string $y \in \{0,1\}^n$, there exist a constant fraction $\geq 1/8 -\frac 1{\poly(n)}$ of unitaries $C\sim \mu$, such that $p_y \geq 1/2^{n+1}$. 
Therefore w.p. at least $1-\delta$ the $\FBPP^{\NP}$ algorithm computes $\hat {q'}_0$ that satisfies
\be
|\hat{q'}_0-p'_0| \leq \frac{\epsilon + \frac 1{\poly(n)}}{2^n \delta}\leq \frac{2(\epsilon + \frac 1 {\poly(n)})}{\delta} p'_0.
\ee
for $1/8 - \frac 1{\poly(n)}$ fraction of random unitaries $C'$ from the ensemble $\mathcal{C}_x$.
In the last line, we have used \eqref{eq:zyg} (which we are going to prove next).

Now we show that for any output string $y \in \{0,1\}^n$, there exist a constant fraction $\geq 1/8 -\frac{1}{\poly(n)}$ of unitaries $C\sim \mu$, such that $p_y \geq 1/2^{n+1}$. To see this first recall the following known moments of the Haar measure
\be
\E_{C \sim \Haar} |\braket {x|C|0}|^2 = \frac{1}{2^n}, \hspace{1cm} \E_{C \sim \Haar} |\braket {x|C|0}|^4 = \frac{2}{2^n (2^n+1)}.
\ee
Since $\mu$ is a $\frac 1 {\poly(n)}$-approximate $2$-design
\be
\E_{C \sim \mu} |\braket {x|C|0}|^2 = \frac{1+\frac 1 {\poly(n)}}{2^n}, \hspace{1cm} \E_{C \sim \mu} |\braket {x|C|0}|^4 = \frac{2 }{2^n (2^n+1)}\left(1+\frac 1 {\poly(n)}\right).
\ee
Using the Paley-Zygmund inequality and the moments of a $2$-design above
\be
\Pr_{C\sim \mu} \left [ |\braket{x|C|0} |^2 \geq \frac{1}{2^{n+1}} \right] \geq 1/4 \frac{ \left (\E_{C \sim \mu}  |\braket{x|C|0} |^2 \right )^2}{ (\E_{C\sim \mu}  |\braket{x|C|0} |^4 )} =  1/4 \frac{\frac{1+\frac 1 {\poly(n)}}{4^n}}{\frac{2\left(1+\frac1 {\poly(n)}\right)}{2^n \cdot (2^n+1)}} = 1/8 - \frac 1 {\poly(n)}.
\label{eq:zyg}
\ee

\end{proof}

\section {Basic properties of the Krawtchouk polynomials}
\label{section:krawtchuk}

\restatetheorem{lem:symmetry}

\begin{proof} 
This is implied by the observation that for all $i \in [t]$
\be 
{n-1 \choose x} {x \choose i}{n -x -1 \choose t-i} = \dfrac{(n-1)!}{ (x-i)! (t-i)! i! (n-x-t+i-1)!}
 \ee
is symmetric in $x$ and $t$. As a result
\be 
\dfrac{{n-1\choose x}}{3^t}K^{(t)}(x) = \sum_{i=0}^{t}  {n-1\choose x}{x \choose i} {n-x-1 \choose t-i}   3^{-i} (-1)^i.
 \ee
is also symmetric in $x$ and $t$.
\end{proof}

The second lemma we use here is the orthogonality of the Krawtchouk polynomials

\restatelemma{lem:orthogonality}

\begin{proof}
Consider the generating function
\begin{eqnarray}
g_{p,x}(z) &=& (1+ p z)^{N-x} (1- q z)^{x}\nonumber \\
&=& \sum_{i=0}^{n-x} {N -x \choose i} p^i z^i\sum_{j=0}^x {x \choose i} (-q)^j z^j\nonumber \\
&=& \sum_{t=0}^N z^t \sum_{i=0}^{t} {N -x \choose t-i} {x \choose i} p^{t-i} (-q)^i\nonumber \\
&=& \sum_{t=0}^N z^t k^{(t)}(x).
\end{eqnarray}

Define the binomial norm $( \cdot , \cdot ) : \mathcal{F} \times \mathcal{F}  \rightarrow \R$, where $\mathcal{F}$ is the set of functions $: [N]  \rightarrow \R$.
\be 
f,g \mapsto (f,g) := \E_{X \sim \Bin (N,P)}  [ f(X)g(X)  ]= \sum_{x=0}^N {N \choose x} p^x q^{N-x} f(x) g(x).
 \ee

Now for all real values $y$ and $z$ consider the overlap $(g_{p}(y),g_{p}(z))$. On the one hand
\begin{eqnarray}
 (g_{p}(y),g_{p}(z) ) &=& \sum_{x=0}^N {N \choose x} p^x q^{N-x} g_{p,x} (y) g_{p,x} (z),\nonumber \\
&=& \sum_{t,s=0}^N z^{t+s} \sum_{x=0}^N {N \choose x} p^x q^{N-x} k^{(t)}(x) k^{(s)}(x),\nonumber \\
&=& \sum_{t,s=0}^N z^{t+s}  (k^{(t)}, k^{(s)} ).
\end{eqnarray}
On the other hand
\begin{eqnarray}
(g_{p}(y),g_{p}(z)) &=& \sum_{x=0}^N {N \choose x} p^x q^{N-x} g_{p,x} (y) g_{p,x} (z),\nonumber \\
&=& \sum_{x=0}^N {N \choose x} p^x q^{N-x} (1+ p y)^{N-x} (1- q y)^{x} (1+ p z)^{N-x} (1- q z)^{x},\nonumber \\
&=& \sum_{x=0}^N {N \choose x}    (q (1+ p z) (1+ p y) )^{N-x}  (p (1- q y)  (1- q z) )^{x},\nonumber \\
&=&   (q (1+ p z) (1+ p y) + p (1- q y)  (1- q z)  )^N,\nonumber \\
&=&   (1  +qp yz  )^N,\nonumber \\
&=& \sum_{t=0}^N {N \choose t}(pq)^t y^t z^t.
\end{eqnarray}
Equating these two for all $y$ and $z$ we obtain
\be 
 (k^{(t)}, k^{(s)} ) =  \sum_{x=0}^n {N\choose x} p^x q^{N-x} k^{(t)} (x) k^{(s)} (x) = \delta_{t,s} {N \choose t}(pq)^t.
 \ee
\end{proof}

\section*{Acknowledgments}

We are grateful to Scott Aaronson and Yury
Polyanskiy for detailed comments and to Yuval Peres for telling us about \cite{BP17}.
We thank Sami Boulebnane for pointing us to some minor errors in the first version on arXiv. We also thank Matthew Khoury for numerically studying the collision probability and validating several theoretical time-scales proved in this paper. The first draft of this work was completed when SM was affiliated with CSAIL MIT.

\section*{Declarations}

The authors contributed equally to this work. AWH was funded by NSF
grants CCF-1452616, CCF-1729369, PHY-1818914, the NSF QLCI program
through grant number OMA-2016245 and ARO contract W911NF-17-1-0433.
SM was funded by NSF grant CCF-1729369. The authors have no relevant
financial or non-financial interests to disclose.

\bibliographystyle{hyperabbrv}

\end{document}